\def\dOi{12(4:13)2016}
\newcommand\Paper{article}
\newcommand\betweenaxs{1.5\smallskipamount}
\newcommand\betweentf{1\smallskipamount}
\newcommand\lvminusbaselineskip{0pt}
\theoremstyle{plain}
\newenvironment{acknowledgements}{\section*{Acknowledgement}}{}
\newcommand\omicron{\ensuremath{o}}
\newcommand\betwtypax{0pt}
\renewcommand\phi{\varphi}
\newcommand\VDOTS{\hbox{}\smash{\lower.5ex\hbox{\kern1.5em\vdots}}}
\newcommand\www{\ensuremath{\mathit{b}}}
\newcommand\LQ{``}\newcommand\RQ{''}
\newcommand\jjj{\ensuremath\vartheta}
\newcommand\betweenfigs{\vskip4.5\smallskipamount} 
\newcommand\afterDot{\;}
\newcommand\zooh{\vphantom{$A^{B^{C}}$}}
\DeclareSymbolFont{letters}{OML}{txmi}{m}{it} 
\newcommand\krn{\kern.125ex}
\def\pmbx#1{\setbox0=\hbox{#1}%
\kern+.00625em\raise.015em\box0\kern-\wd0\kern-.00625em}
\newbox\hackBoxB
\newcommand\tsfont[1]{\textsf{\upshape#1}}
\let\tsfontsuffix=\tsfont
\newcommand\mono[1]{\ensuremath{\widetilde{#1}}}
\newcommand\monobf[1]{\setbox\hackBoxB=\hbox{\mono#1}%
  \lower.05ex\hbox{\pmbx{\hbox{\mono{\color{white}{#1}}}}}\kern-\wd\hackBoxB\usebox\hackBoxB}
\newcommand\monom{\lower.8ex\hbox{\mono{\hphantom{m}}}}
\newcommand\erased{\mbox{\tsfont{\relsize{-1}\krn e\krn}}}
\newcommand\args{\mbox{\tsfont{\relsize{-1}\krn a\krn}}}
\newcommand\filtertxt{\textsuperscript{\vthinspace\itshape\filt}}
\newcommand\filter{\ensuremath{^\filt}}
\newcommand\guards{\mbox{\tsfont{\relsize{-1}\krn g\krn}}}
\newcommand\tags{\mbox{\tsfont{\relsize{-1}\krn t\krn}}}
\newcommand\native{\mbox{\tsfont{\relsize{-1}\krn n\krn }}}
\newcommand\query{\text{\relsize{-1}\tsfontsuffix{?}\@}}
\newcommand\qquery{\text{\relsize{-1}\tsfontsuffix{?\kern-.1ex?}\@}} 
\newcommand\at{\text{\relsize{-1}\tsfontsuffix{\kern-.1ex@\kern-.1ex}\@}} 
\newcommand\hAPP{\const{hAPP}}
\newcommand\hBOOL{\const{hBOOL}}
\newcommand\ti{\const{t}}
\def\is{\const{g}}
\newcommand{\typ}[1]{^{\kern.15ex\smash{#1}}} 
\def\a{\alpha}
\def\b{\beta}
\def\t{\sigma}
\def\u{\tau}
\def\bool{\omicron}
\def\vvthinspace{\kern+0.0416667em}
\def\vthinspace{\kern+0.083333em}
\def\negvthinspace{\kern-0.083333em}
\def\negvvthinspace{\kern-0.0416667em}
\newcommand\dash{\vvthinspace---\vvthinspace}
\def\vvthinspace{\kern+0.04em}
\newcommand\eq{\approx}
\def\const#1{\mbox{\relsize{-1}\textsf{#1}}}  
\newcommand\PHI{\ensuremath{\mathrm{\Phi}}}
\DeclareFontFamily{OT1}{pzc}{}
\DeclareFontShape{OT1}{pzc}{m}{it}{<-> s * [1.10] pzcmi7t}{}
\DeclareMathAlphabet{\mathcal}{OT1}{pzc}{m}{it}
\newcommand\shortsect{Section}
\newcommand\longsect{Section}
\def\LAN#1\RAN{\mskip.5mu\langle#1\rangle\mskip-.5mu}
\def\LANX#1\RAN{\mskip.5mu\langle#1\rangle\mskip-.5mu}
\def\LANY#1\RAN{\mskip.5mu\langle#1\rangle\mskip-.5mu}
\def\llan{\kern-.1ex\langle\kern-.55ex\langle} 
\def\rran{\rangle\kern-.55ex\rangle\kern-.1ex}
\newcommand\lenfilt[2]{\left|{#2}_{#1}\right|}
\newcommand\EQ{\hfill${} \,::=\, {}$}
\newcommand\OR{\hfill$|\;\;$}
\newcommand\syntaxarrayspec{@{}p{1em}@{}p{2em}@{}p{9.5em}@{\quad\enskip}l@{}}
\newcommand\tinycomma{\mskip+1mu;\mskip+1mu}
\newcommand\encode[2]{\smash{\displaystyle\llan{#1}\rran_{\smash{#2}\vphantom{tg}}}}
\newcommand\typex[1]{\llan{#1}\rran}
\newcommand\erasedx[1]{\encode{#1}{\erased}}
\newcommand\argsx[1]{\encode{#1}{\args}}
\newcommand\argsfx[1]{\encode{#1}{\args\filter}}
\newcommand\guardsx[1]{\encode{#1}{\guards}}
\newcommand\tagsx[1]{\encode{#1}{\tags}}
\newcommand\mguardsqx[1]{\encode{#1}{\mono\guards\query}}
\newcommand\mtagsqx[1]{\encode{#1}{\mono\tags\query}}
\newcommand\mtagsqqx[1]{\encode{#1}{\mono\tags\qquery}}
\newcommand\guardsqx[1]{\encode{#1}{\guards\query}}
\newcommand\guardsax[1]{\encode{#1}{\guards\at}}
\newcommand\tagsqx[1]{\encode{#1}{\tags\query}}
\newcommand\tagsqqx[1]{\encode{#1}{\tags\qquery}}
\newcommand\tagsax[1]{\encode{#1}{\tags\at}}
\newcommand\CT[2]{\lfloor#2\rfloor_{#1}}
\renewcommand\to{\shortrightarrow}
\renewcommand{\leftrightarrow}{\leftarrow\kern-1.75ex\rightarrow} 
\newcommand\filt{\tsfont{\relsize{-1}\textsl{x}}}
\renewcommand\inf{\tsfont{\relsize{-1}inf}}
\newcommand\ninf{\tsfont{\relsize{-1}ninf}}
\newcommand\phan{\tsfont{\relsize{-1}phn}}
\newcommand\ctor{\tsfont{\relsize{-1}ctr}}
\newcommand\gfilt[3]{{#2}_{#1}(#3)}
\newcommand\INF[2]{\gfilt{#1}{\text{\inf}}{#2}}
\newcommand\NINF[2]{\gfilt{#1}{\text{\ninf}}{#2}}
\newcommand\PHAN[2]{\gfilt{#1}{\phan}{#2}}
\newcommand\FILT[2]{\gfilt{#1}{\filt}{#2}}
\newcommand\SIGMA{\mathrm\Sigma}
\newcommand\MONO[1]{#1\mathrel\rhd}
\newcommand\NONMONO[1]{#1\not\mathrel\rhd}
\newcommand\NV{\mathrm{NV}}
\newcommand\UV{\mathrm{UV}}
\newcommand\Cover[1]{\mathrm{Cover\vvthinspace}_{#1}}
\newcommand\Inf{\mathrm{Inf}}
\newcommand\InfX{\mathrm{Inf}^*}
\newcommand\Ctor{\mathrm{Ctor}}
\newcommand\Pp{\mathcal{P}\mskip1.5mu'}
\newcommand\xx{\ensuremath{\kern+.2ex{\tsfont{\kern-.2ex{\slshape\relsize{-1} x}\kern.2ex}}}}
\newcommand\PHIii[1]{\encode{\PHI}{#1}}
\newcommand\PROP[1]{\hbox to 5.1em{\hfill\sc #1}\rm}
\renewcommand\labelitemi{\lower.1375ex\hbox{\large\textbullet}}
\newcommand{\betweenitems}{\vskip2\smallskipamount}
\def\win#1#2#3{\!{\bfseries#1#2#3}\!}
\def\midrule{\noalign{\ifnum0=`}\fi
  \@aboverulesep=\aboverulesep
  \global\@belowrulesep=\belowrulesep
  \global\@thisruleclass=\@ne
  \@ifnextchar[{\@BTrule}{\@BTrule[\lightrulewidth]}}
\def\bottomrule{\noalign{\ifnum0=`}\fi
  \@aboverulesep=\aboverulesep
  \global\@belowrulesep=\belowbottomsep
  \global\@thisruleclass=\@ne
  \@ifnextchar[{\@BTrule}{\@BTrule[\lightrulewidth]}}
\newcommand\ourfrac[2]{\frac{\,#1\,\strut}{\,#2\,\strut}}
\newcommand\btwhyps{\kern1.0em}
\newcommand\btwrules{\kern1.5em}
\newcommand\SEM[1]{\llbracket#1\rrbracket}
\newcommand\dom{\mathbb{D}}
\renewcommand\AX{\textrm{Ax}}
\def\tabularStuff{@{\kern.1ex}ll@{\kern1.5em}rrr@{\kern1em}rrrr@{\kern1em}rrrr@{\kern1em}r@{\kern.1ex}}
\newcommand\ismonkey{\is{}_{{\vthinspace\mathit{monkey}}}}
\let\oldS=\S
\renewcommand\S{\oldS\vthinspace}
\newcommand\QED{\qed}
\newcommand\IUS{\negvthinspace\raise.15ex\hbox{$\scriptsize\_$}\negvthinspace}
\newcommand\cmono{
\ensuremath{\small\infty}} 
\newcommand\cmonox[1]{\encode{#1}{\cmono}}
\newcommand\xmono{\mono{\xx}} 
\newcommand\leftOut[1]{}
\newcommand\Ax{\mathit{Ax}}
\newcommand\D{{D}} 
\newcommand\epssym{\varepsilon}
\newcommand\eps[1]{\epssym(#1)}
\newcommand\TVars{\mathrm{TVars}}
\newcommand\FTVars{\mathrm{FTVars}}
\newcommand\Vars{\mathrm{Vars}}
\newcommand\FVars{\mathrm{FVars}}
\newcommand\TypesOf{\Type}
\newcommand\GInst{\textrm{GInst}}
\newcommand\mgi{\textrm{mgi}}
\newcommand\Type{\mathit{Type}}
\newcommand\GType{\mathit{GType}}
\newcommand\ra{\rightarrow}
\newcommand\AAA{\mathcal A}
\newcommand\VV{\mathcal V}
\newcommand\Vun{\VV^{\,\forall}}
\newcommand\Vex{\VV^{\,\exists}}
\newcommand\Vt{\VV_{\textrm{typed}}}
\newcommand\Vtun{\Vt^{\,\forall}}
\newcommand\Vtex{\Vt^{\,\exists}}
\newcommand\MM{\mathcal M}
\newcommand\NN{\MM'}
\newcommand\KK{\mathcal K}
\newcommand\FF{\mathcal F}
\newcommand\PP{\mathcal P}
\newcommand\ov{\bar}
\newcommand\ff{\ensuremath{f}}
\newcommand\pp{\ensuremath{p}}
\let\QED=\relax
\renewcommand\EQ{\hfill$\;\,{::=}\,\;$}
\renewcommand\OR{\hfill$|\;\;\;$}
\renewcommand\syntaxarrayspec{@{}p{1em}@{}p{2.25em}@{}p{9.5em}@{\quad\enskip}l@{}}
\begin{document}

\title[Encoding Monomorphic and Polymorphic Types]{Encoding Monomorphic and Polymorphic Types}

\author[J.~C.~Blanchette]{Jasmin Christian Blanchette\rsuper a}
\address{{\lsuper a}Inria \& LORIA, Nancy, France; Max-Planck-Institut f\"ur Informatik, Saarbr\"ucken, Germany}
\email{jasmin.blanchette@\{inria.fr,mpi-inf.mpg.de\}}

\author[S.~B\"ohme]{Sascha B\"ohme\rsuper b}
\address{{\lsuper b}Fakult\"at f\"ur Informatik, Technische Universit\"at M\"unchen, Germany}
\email{boehmes@in.tum.de}

\author[A.~Popescu]{Andrei Popescu\rsuper c}
\address{{\lsuper c}Department of Computer Science, School of Science and Technology, Middlesex University, UK}
\email{a.popescu@mdx.ac.uk}

\author[N.~Smallbone]{Nicholas Smallbone\rsuper d}
\address{{\lsuper d}Dept.\ of CSE,\ Chalmers University of Technology, Gothenburg, Sweden}
\email{nicsma@chalmers.se}

\bibliographystyle{spmpsci}

\maketitle

\begin{abstract}
Many automatic theorem provers are restricted to untyped logics,
and existing translations from typed logics are bulky or unsound.
Recent research proposes monotonicity as a means to remove some clutter
when translating monomorphic to untyped first-order logic.
Here we pursue this approach systematically, analysing formally a variety of encodings
that further improve on efficiency while retaining soundness and completeness.
We extend the approach to rank-1 polymorphism and
present alternative schemes that lighten the translation of
polymorphic symbols based on the novel notion of ``cover''.
The new encodings are implemented in Isabelle\slash HOL\@ as part of the
Sledgehammer tool.
We include informal proofs of soundness and correctness, and have formalised
the monomorphic part of this work in Isabelle\slash HOL\@.
Our evaluation finds the new encodings vastly superior to previous
schemes.
\end{abstract}

\setcounter{footnote}{0}

\section{Introduction}
\label{sec:introduction}

Specification languages, proof assistants, and other theorem proving
applications are typically based on polymorphism formalisms,
but state-of-the-art automatic provers support only untyped or monomorphic logics.
The existing sound (proof-reflecting) and complete (proof-preserving) translation schemes for
polymorphic types, whether they revolve around functions (tags) or predicates (guards),
produce clutter that severely hampers the proof search \cite{meng-paulson-2008-trans},
and lighter approaches based on type arguments are unsound
\cite{meng-paulson-2008-trans,stickel-1986}. As a result, application authors face a difficult
choice between soundness and efficiency when interfacing with automatic provers.

The fourth author, together with Claessen and Lilliestr\"om
\cite{claessen-et-al-2011}, 
designed a pair of
sound, complete, and efficient
translations from monomorphic many-typed to untyped first-order logic with equality.
The key insight is that \emph{monotonic} types\dash types whose domain can 
be extended with new elements while preserving satisfiability\dash can be
merged. The remaining types can be made monotonic by introducing
suitable protectors. 

\begin{exa}[Monkey Village]\afterDot
\label{ex:monkey-village}%
Imagine a village of monkeys \cite{claessen-et-al-2011}
where each monkey owns at least two bananas.
The predicate $\const{owns} : \mathit{monkey} \times \mathit{banana} \to \bool$
(where $\bool$ denotes truth values)
associates monkeys with bananas, and the functions $\const{b}_1$, $\const{b}_2 :
\mathit{monkey} \to \mathit{banana}$ witness the existence of each monkey's
minimum supply of bananas:\strut
\begin{quotex}
$\forall M \mathbin: \mathit{monkey}.\;\, \const{owns}(M, \const{b}_1(M)) \mathrel{\land} \const{owns}(M, \const{b}_2(M))$ \\
$\forall M \mathbin: {\mathit{monkey}}.\;\, \const{b}_1(M) \not\eq \const{b}_2(M)$ \\
$\forall M_{1}, {M_2} \mathbin: {\mathit{monkey}},\; B \mathbin: {\textit{banana}}.\;\,
\const{owns}(M_{1}, B) \mathrel{\land} \const{owns}(M_2, B)
\rightarrow M_{1\!} \eq M_2$
\end{quotex}
The axioms are clearly satisfiable.
\end{exa}

In the monkey village of Example~\ref{ex:monkey-village},
the type \textit{banana} is monotonic,
because any model with $b$ bananas can be extended
to a model with $b' > b$ bananas. In contrast,
\textit{monkey} is nonmonotonic,
because there can live at most $\lfloor b/2\rfloor$ monkeys in a village with a
finite supply of $b$~bananas.
Syntactically, the monotonicity of \textit{banana} is inferable from the
absence of a positive equality $B \eq t$
or $t \eq B$, where $B$ is a variable of type \textit{banana} and $t$ is
arbitrary; such a literal would be needed to make the type nonmonotonic.

The example can be encoded as follows, using
the predicate $\ismonkey$ to guard against ill-typed instantiations
of $M$, $M_1$, and $M_2$:
\begin{quotex}
$\exists M.\;\, \ismonkey(M)$ \\[\betweenaxs]
$\forall M.\;\, \ismonkey(M) \rightarrow \const{owns}(M, \const{b}_1(M)) \mathrel{\land} \const{owns}(M, \const{b}_2(M))$ \\
$\forall M.\;\, \ismonkey(M) \rightarrow \const{b}_1(M) \not\eq \const{b}_2(M)$ \\
$\forall M_{1}, {M_2}, B.\;
\ismonkey(M_{1\!}) \mathrel\land
\ismonkey(M_2) \mathrel\land
\const{owns}(M_{1}, B) \mathrel{\land} \const{owns}(M_2, B) \rightarrow M_{1\!} \mathbin\eq M_{2\!}$
\end{quotex}
The first axiom states the existence of a monkey;
this is necessary for completeness, since model carriers are required to be nonempty.
Thanks to monotonicity, it is sound to omit all type information regarding
bananas. The intuition behind this is that the $\ismonkey$ predicate makes the
problem fully monotonic, and for such problems it it possible to synchronise the cardinalities of the
different types and to merge the types, yielding an equisatisfiable untyped (or singly typed) problem.
For example, a model $\MM$ of the typed problem with $m$ monkeys and $b$ bananas will give rise to
a model $\NN$ of the untyped problem with $b$ ``bananamonkeys,'' among which $m$ are monkeys
according to the interpretation of $\ismonkey$; conversely, from a model of the untyped problem
with $b$ ``bananamonkeys'' including $m$ values for which $\ismonkey$ is true (with $m > 0$ thanks to
the first axiom), it is easy to construct a model of the typed problem with $b$ bananas and $m$
monkeys.

Monotonicity is not decidable, but it can often be inferred using suitable calculi.
In this \Paper, we exploit this idea systematically, analysing a variety of encodings
based on monotonicity:\
some are minor adaptations of existing ones, while others are novel encodings
that further improve on the size of the translated formulae.

In addition, we
generalise the monotonicity approach to a rank-1 polymorphic logic, as
embodied by the
typed first-order form TFF1
of the TPTP (Thousand of Problems for Theorem Provers)
\cite{blanchette-paskevich-2013}, a de facto standard implemented by a number
of reasoning tools.
Unfortunately, the presence of a single equality literal
$\mathit{X} \eq t$ or $t \eq \mathit{X}$,
where $\mathit{X}$ is a polymorphic variable of type $\a$, will lead the analysis
to classify all types as possibly nonmonotonic and force the use of protectors
everywhere, as in the traditional encodings.
A typical example is the list axiom
$ \forall \mathit{X} \mathbin: {\a},\; \mathit{Xs} \mathbin: \mathit{list}(\a).\;\,
  \const{hd}(\const{cons}(\mathit{X}\!, \mathit{Xs})) \eq \mathit{X}$.
We solve this issue through a
novel scheme that reduces the clutter associated
with nonmonotonic types,
based on the observation that protectors are required
only when translating 
the particular formulae that prevent a type from being inferred monotonic.
This contribution improves the monomorphic case as well:\ for the
monkey village example, our scheme detects that the first two axioms are
harmless and translates them without the
$\ismonkey$ guards. (In fact, by appealing to a more general
notion of monotonicity, it is possible to eliminate all type information in the
monkey village problem in a sound fashion.)

Encoding types in an untyped logic is an old problem,
and several solutions have been proposed in the literature.
We start by reviewing four main
traditional approaches (\shortsect~\ref{sec:traditional-type-encodings}), which
prepare the ground for the more advanced encodings presented in this \Paper.
Next, we present improvements of the traditional encodings that
aim at reducing the clutter associated with polymorphic symbols, based on the
novel notion of ``cover''
(\shortsect{}~\ref{sec:alternative-cover-based-encoding-of-polymorphism}).
Then we move our attention to monotonicity-based encodings, which try to
minimise the number of added tags or guards.
We first present known and novel monotonicity-based schemes that handle only ground types
(\shortsect{}~\ref{sec:monotonicity-based-type-encodings-the-monomorphic-case});
these are interesting in their own right and serve as stepping stones for the full-blown polymorphic encodings
(\shortsect{}~\ref{sec:complete-monotonicity-based-encoding-of-polymorphism}).
Proofs of correctness accompany the descriptions of the new encodings. The
proofs explicitly relate models of unencoded and encoded problems.

Figure~\ref{fig:main-encodings} presents a brief overview of the main encodings.
The traditional encodings are identified by single letters (\erased{} for full type erasure, \args{} for type arguments,
\tags{} for type tags, \guards{} for type guards). The nontraditional encodings append a suffix to the letter:\
\at{} (= cover-based), \query{}~(=~monotonicity-based, lightweight),
or \qquery{} (= monotonicity-based, feather\-weight).
The decoration \smash{\monom{}} identifies
the monomorphic version of an encoding. Among the nontraditional schemes,
\mono\tags\query{} and \mono\guards\query{} are due to Claessen et al.\
\cite{claessen-et-al-2011}; the other encodings are novel.

\begin{figure}[t!]
\centerline{\newcommand\HX[1]{\hbox to 6.25em{#1\hfill}}%
\begin{tabular}{@{\kern.1ex}l@{\kern1.5em}l@{\kern1.25em}l@{\kern1em}l@{\kern1.25em}l@{\kern.1ex}}
& \HX{Traditional} & \HX{Cover-based} & \multicolumn{2}{@{}l@{}}{Monotonicity-based} \\[-.3ex] 
& \HX{(Polymorphic)} & \HX{(Polymorphic)} & \HX{Monomorphic} & \HX{Polymorphic} \\
\midrule
Full type erasure
  & \erased{} (\S\ref{ssec:full-type-erasure}) \\
Type arguments
  & \args{} (\S\ref{ssec:type-arguments})
    &
      & 
        \\
Type tags
  & \tags{} (\S\ref{ssec:type-tags})
    & \tags\at{} (\S\ref{ssec:cover-based-type-tags})
      & \mono\tags\query{}, \mono\tags\qquery{} (\S\ref{ssec:monotonicity-based-type-tags-monomorphic})
        & \tags\query{}, \tags\qquery{} (\S\ref{ssec:monotonicity-based-type-tags-polymorphic})
          \\
Type guards
  & \guards{} (\S\ref{ssec:type-guards})
    & \guards\at{} (\S\ref{ssec:cover-based-type-guards})
      & \mono\guards\query{}, \mono\guards\qquery{} (\S\ref{ssec:monotonicity-based-type-guards-monomorphic})
        & \HX{\guards\query{}, \guards\qquery{} (\S\ref{ssec:monotonicity-based-type-guards-polymorphic})}
\end{tabular}}%

\caption{\vthinspace The main encodings}%
\label{fig:main-encodings}%
\end{figure}

A formalisation \cite{blanchette-frocos2013} of the monomorphic part of our results has been developed
in the proof assistant
Isabelle\slash HOL \cite{nipkow-et-al-2002,nipkow-klein-2014}.
The encodings have been implemented in
Sledgehammer
\cite{meng-paulson-2008-trans,blanchette-et-al-2013-smt},
which provides a bridge between Isabelle\slash HOL
and automatic theorem provers (\shortsect~\ref{sec:implementation}).
They were evaluated with Alt-Ergo, E, SPASS, Vampire, and Z3 on a
benchmark suite consisting of proof goals from existing Isabelle
formalisations (\shortsect~\ref{sec:evaluation}).
Our comparisons include the traditional encodings as well as the provers' native
support for monomorphic types where it is available. Related work is considered
at the end (\shortsect~\ref{sec:related-work}).

We presented an earlier version of this work at the TACAS 2013 conference
\cite{blanchette-et-al-2013-types}. The current text extends the conference paper
with detailed proofs and a discussion of implementational issues
(\shortsect~\ref{sec:implementation}). It also corrects the
side conditions of two encoding definitions, which resulted in an unexpected
incompleteness. 

\begin{conv} \label{con-general}
Given a name, such as $t$, that ranges over a certain domain, such as the set of terms,
its overlined version $\ov{t}$ ranges over lists (tuples) of items in this domain, and $\left|\smash{\ov{t}}\right|$ denotes the
length of $\ov{t}$. We also write $\left|\smash{A}\right|$ for the cardinal of a set $A$.
A set is \emph{countable} if it is finite or countably infinite.
Given an element $u$, $(u)^n$ or $u^n$ denotes the list consisting of $n$ occurrences of $u$.
Given a nonempty set $A$, we write $\eps{A}$ for some arbitrary but fixed element of $A$. We use
$\eps{A}$ in definitions where the choice of the element does not matter, as long as it belongs to $A$.
\end{conv}

\section{Background: Logics}
\label{sec:background-logics}

This \Paper{} involves three versions of classical first-order logic with
equality:\ polymorphic, monomorphic, and untyped. They correspond to the TPTP
syntaxes TFF1 \cite{blanchette-paskevich-2013}, TFF0 \cite{sutcliffe-et-al-2012-tff},
and FOF \cite{sutcliffe-tptp}, respectively, excluding interpreted arithmetic.

\subsection{Polymorphic First-Order Logic}
\label{ssec:polymorphic-first-order-logic}

The source logic is a rank-1 polymorphic logic as specified 
by TFF1
\cite{blanchette-paskevich-2013}.

We fix $\mathcal A$, a countably infinite set of \emph{type variables} with typical
element $\a$, and $\mathcal V$, a countably infinite set of
\emph{term variables} with typical element $\mathit{X}$.

\begin{defi}[Syntax]\afterDot
A \emph{polymorphic signature} is a triple $\SIGMA = (\mathcal K, \mathcal F\!, \mathcal P)$,
where $\mathcal K$ is a countable 
set of type constructors $k$ with arities,
$\mathcal F$ is a countable
set of function symbols~$\ff$ with arities,
and $\mathcal P$ is a
countable
set of predicate symbols $\pp$ with arities.

For type constructors $k \in \KK$, the arity is a natural number $n$. This
association is written $k :: n$. Types, forming the set $\Type_\KK$, are then
defined inductively starting with type variables and applying type constructors
according to their arities:

\medskip

$\begin{array}{\syntaxarrayspec}
\multicolumn{4}{@{}l@{}}{\strut\hbox{\em Types:}} \\
\noindent\hfill$\t$ & \EQ & $k(\bar\t)$\quad where $k :: |\bar\t|$ & \text{constructor type} \\
& \OR & $\a$ & \text{type variable}
\end{array}$

\medskip

For function symbols $\ff \in \FF$, the arity is a
triple $(\ov{\alpha},\ov{\sigma},\sigma)$, where $\ov{\alpha}$ is
a list of distinct type variables, $\ov{\sigma}$ is a list of types, and $\sigma$ is a type such that all the variables appearing
in $\ov{\sigma}$ and $\sigma$ are among the ones in $\ov{\alpha}$. We write this association as
$\ff \, : \,\forall\bar\a.\; \ov{\sigma} \to \t$ or
$\ff \, : \,\forall\bar\a.\; \sigma_1 \times \cdots \times \sigma_n \to \t$.
Finally, for predicate symbols $\pp \in \PP$, the arity is a pair $(\ov{\alpha},\ov{\sigma})$, where $\ov{\alpha}$ is
a list of distinct type variables and $\ov{\sigma}$ is a list of types. We write this association as
$\pp \, : \,\forall\bar\a.\; \ov{\sigma} \to \bool$ or
$\pp \, : \,\forall\bar\a.\; \sigma_1 \times \cdots \times \sigma_n \to \bool$.

Overloading of different arities for the same symbol is excluded by definition.
The symbols $\forall$, $\times$, $\to$, and $\bool$ (omicron) are not type constructors but syntax.
The arity declarations for function and predicate symbols can be seen as instances of the general syntax
$s : \,\forall\bar\a.\; \bar\t \to \varsigma$,
where $s \in \mathcal{F} \mathrel{\uplus} \mathcal{P}$ and $\varsigma$ is either a type or $\bool$.
An application of~s will require $\left|\smash{\bar\a}\right|$ type arguments (written in angle brackets) and
$\left|\smash{\bar\t}\right|$ term arguments (written in parentheses).
In examples, we may 
omit type arguments from $\ov{\alpha}$
that are irrelevant or clear from the context.

A {\em typed} ({\em term\/}) {\em variable} is a pair of a variable and a type, written $\mathit{X}^\sigma$.
We fix $\Vt$, the set of type variables. The terms and formulae are defined below.

\medskip
$\begin{array}{\syntaxarrayspec}
\multicolumn{4}{@{}l@{}}{\strut\hbox{\em Terms:}} \\
\hfill$t$ & \EQ & $f\LAN\bar\t\RAN(\bar t\,)$ & \text{function term} \\
  & \OR & $\mathit{X}^\sigma$ & \text{typed variable}
\end{array}$

\smallskip

$\begin{array}{\syntaxarrayspec}
\multicolumn{4}{@{}l@{}}{\strut\hbox{\em Formulae:}} \\
\hfill$\phi$
        & \EQ & $p\LAN\bar\t\RAN(\bar t\,) \;\;|\;\; \lnot\,p\LAN\bar\t\RAN(\bar t\,)$ & \text{predicate literal} \\
        & \OR & $t_{1\!} \eq t_2 \;\;|\;\; t_{1\!} \not\eq t_2$ & \text{equality literal} \\
        & \OR & $\phi_{1\!} \mathrel\land \phi_2 \;\;|\;\;
                 \phi_{1\!} \mathrel\lor \phi_2$ & \text{binary connective} \\
        & \OR & $\forall \mathit{X} \mathbin: \t.\;\, \phi \;\;|\;\;
                 \exists \mathit{X} \mathbin: \t.\;\, \phi$ & \text{term quantification} \\
        & \OR & $\forall \a.\; \phi$ & \text{type quantification}
\end{array}$

\medskip

All type quantification is universal.
We also impose the following syntactic restriction,
not captured in the formation rules:\ type quantifiers
may only occur at the top of the formula, i.e.\ {not} underneath a term quantifier or a connective,
and formulae are in negation normal form (NNF), which is most suitable for
defining and reasoning about translations. We sometimes
use implication~$\phi_1 \mathrel\land \cdots \mathrel\land \phi_m
\rightarrow \psi_1 \mathrel\lor \cdots \mathrel\lor \psi_n$ as an abbreviation for
$\lnot\,\phi_1 \mathrel\lor \cdots \mathrel\lor \lnot\,\phi_m
\mathrel\lor \psi_1 \mathrel\lor \cdots \mathrel\lor \psi_n$ to enhance readability.
In examples, we freely nest quantifiers and connectives.

\begin{defi}[Free and Fresh Variables, Groundness, and Sentences]\afterDot
\label{def-vars}
$\TVars(\sigma)$, $\TVars(t)$, and $\FTVars(\phi)$ denote the sets of
type variables occurring in the type $\sigma$,
occurring in the term $t$, and occurring freely in the formula $\phi$,
respectively.
Similarly,
$\Vars(t)$ and $\FVars(\phi)$ denote the sets of typed term variables occurring in
the term $t$ and occurring freely in the formula $\phi$, respectively. For example,
$\TVars(\mathit{X}^\sigma) = \TVars(\sigma)$,
$\Vars(\mathit{X}^\sigma) = \{\mathit{X}^\sigma\}$,
$\FVars(\forall \mathit{X} \mathbin: \sigma.\;\phi) = \FVars(\exists \mathit{X} \mathbin: \sigma.\;\phi) = \FVars(\phi) - \{\mathit{X}^\sigma\}$.
A type $\sigma$ is {\em ground} if $\TVars(\sigma) = \emptyset$. We let $\GType_\Sigma$ denote the set of ground types.
A term $t$ is {\em ground} if $\TVars(t) = \emptyset$ and $\Vars(t) = \emptyset$.
A formula $\phi$ is a {\em sentence} if $\TVars(\phi) = \emptyset$ and
$\FVars(\phi) = \emptyset$.
\end{defi}

\begin{conv} \label{con-formulae}
We assume that the set of variables $\VV$ is partitioned in two infinite sets:\
$\Vun$, of \emph{universal} variables, and
$\Vex$, of \emph{existential} variables. The former
are the only ones allowed to be universally quantified, and the latter are the only ones allowed to be existentially quantified.
We let $\Vtun$ for the set of typed variable $\mathit{X}^\sigma$ with $\mathit{X} \in \Vun$; and similarly for $\Vtex$.
We also assume that each type or term
variable is bound only once in a formula, and that if the typed variable $\mathit{X}^\sigma$
appears in the scope of a quantification $\forall \mathit{X} \mathbin:\sigma'$ or $\exists \mathit{X} \mathbin:\sigma'$, then $\sigma = \sigma'$.
The last assumption allows us to omit the superscript $\sigma$ from $\mathit{X}^\sigma$ in examples.
\end{conv}


\end{defi}


The typing rules and semantics of the logic are modelled after those of TFF1.
Briefly, the type arguments completely
determine the types of the term arguments and, for functions, of the result.
Polymorphic symbols are interpreted as families of functions or predicates
indexed by domains corresponding to ground types. All types are inhabited (nonempty).

\begin{defi}[Type Substitution]\afterDot
A \emph{type substitution} $\rho$ is a function that maps every type variable $\alpha$ to a type, written $\alpha\vthinspace\rho$.
It is extended from type variables to types and type tuples in the standard way:\
$k(\ov{t})\vthinspace\rho = k (\ov{t}\vthinspace\rho)$ and $(t_1,\ldots,t_n)\vthinspace\rho = (t_1\vthinspace\rho,\ldots,t_n\vthinspace\rho)$.
\end{defi}

\begin{defi}[Typing Rules]\afterDot
%
%
A judgement
$t : \t$ expresses that the term $t$
is {\em well typed\/} and has type $\t$.  
A judgement 
$\phi : \bool$ expresses that the formula
$\phi$ is {\em well typed\/}.  
The \emph{typing rules} of polymorphic first-order logic are given below:\strut

\vskip\abovedisplayskip

\centerline{$\displaystyle \ourfrac{}{ \mathit{X}^\sigma : \sigma}$%
\btwrules
$\displaystyle
\ourfrac{f : \forall\bar\a.\; \bar\t \to \t
\btwrules
 t_{\negvthinspace j} : \t_{\negvthinspace j}\,\rho \text{\enskip for all~} j
}{
f\LAN\bar\a\vthinspace\rho\RAN(\bar t\,) : \t\vthinspace\rho}
$}

\medskip

\centerline{$\displaystyle
\ourfrac{p : \forall\bar\a.\; \bar\t \to \bool
\btwrules
 t_{\negvthinspace j} : \t_{\negvthinspace j}\,\rho \text{\enskip for all~} j
}{
p\LAN\bar\a\vthinspace\rho\RAN(\bar t\,) : \bool}
$
\btwrules
$\displaystyle
\ourfrac{p : \forall\bar\a.\; \bar\t \to \bool
\btwrules
 t_{\negvthinspace j} : \t_{\negvthinspace j}\,\rho \text{\enskip for all~} j
}{
\lnot\,p\LAN\bar\a\vthinspace\rho\RAN(\bar t\,) : \bool}
$}

\medskip

\centerline{%
$\displaystyle
\ourfrac{ t_1 : \t \btwhyps  t_2 : \t}{
 t_1 \eq t_2 : \bool}$%
\btwrules
$\displaystyle
\ourfrac{ t_1 : \t \btwhyps  t_2 : \t}{
 t_1 \not\eq t_2 : \bool}$%
\btwrules
$\displaystyle
\ourfrac{ \phi_1 : \bool \btwhyps  \phi_2 : \bool}{
 \phi_1 \mathrel\land \phi_2 : \bool}$%
}

\medskip

\centerline{%
$\displaystyle
\ourfrac{ \phi_1 : \bool \btwhyps  \phi_2 : \bool}{
 \phi_1 \mathrel\lor \phi_2 : \bool}$%
\btwrules
$\displaystyle
\ourfrac{ \phi : \bool}{
 \forall \mathit{X}\! \mathbin{:} \t .\; \phi \,:\, \bool}%
\btwrules
\ourfrac{ \phi : \bool}{
 \exists \mathit{X}\! \mathbin{:} \t .\; \phi \,:\, \bool}%
\btwrules
\ourfrac{ \phi : \bool}{
 \forall \a .\; \phi \,:\, \bool}$}

\vskip\belowdisplayskip
\end{defi}

\medskip

\begin{defi}[Problem]\label{def-problem}
A \emph{problem} is a set of (well-typed) sentences.
\end{defi}

\begin{lem}
For all terms $t$, there exists at most one type $\sigma$ such that $t : \sigma$.
\end{lem}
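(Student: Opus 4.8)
The plan is to argue by structural induction on the term $t$, exploiting the fact that the typing rules are syntax-directed:\ the outermost shape of $t$ determines which single rule can derive a judgement of the form $t : \sigma$. In fact the induction hypothesis will turn out to be unnecessary, so a plain case distinction on $t$ would also do; I phrase it as an induction only for uniformity.

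First I would dispatch the variable case. If $t = \mathit{X}^\sigma$, the only rule with a matching conclusion is the axiom $\mathit{X}^\sigma : \sigma$, and it forces the type to be exactly the annotation $\sigma$ carried by $t$. So there is at most one (indeed exactly one) type, read off from $t$ itself.

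Next, the function case $t = \ff\LAN\ov{\sigma}'\RAN(\ov t\,)$. Here the only applicable rule is the function rule. Since overloading of arities is excluded by definition, $\ff$ has a unique arity, say $\ff : \forall\ov\a.\; \ov\t \to \u$ with $\ov\a$ a list of distinct type variables and every type variable occurring in $\ov\t$ and $\u$ among those of $\ov\a$. Any derivation of a type for $t$ must use this rule with some type substitution $\rho$ satisfying $\ov\a\,\rho = \ov\sigma'$ (the type arguments written in $t$ must coincide with $\ov\a\,\rho$), and it concludes $t : \u\,\rho$. Because the $\ov\a$ are distinct, the constraint $\ov\a\,\rho = \ov\sigma'$ determines $\rho$ on all of $\ov\a$; and because $\TVars(\u) \subseteq \{\ov\a\}$, the type $\u\,\rho$ depends only on the restriction of $\rho$ to $\ov\a$. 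Hence the concluded type is the same in every derivation, so $t$ has at most one type.

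The main thing that needs care — and it is mild — is that the substitution $\rho$ is not syntactically present in $t$:\ one must check that the type $\u\,\rho$ it produces is nonetheless pinned down by the data visible in $t$, namely $\ff$ and $\ov\sigma'$. This is precisely what the distinctness of $\ov\a$ and the containment $\TVars(\u) \subseteq \{\ov\a\}$ deliver. I would also remark that the premises $t_{\negvthinspace j} : \t_{\negvthinspace j}\,\rho$ of the function rule matter only for $t$ to be well typed at all, not for the uniqueness of its type, which is why the induction hypothesis is never actually invoked.
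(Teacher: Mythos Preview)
Your proof is correct and follows the same approach as the paper, which simply says ``Immediate by induction on $t$.'' You have spelled out in detail why the induction is indeed immediate, correctly identifying that the type arguments $\ov{\sigma}'$ written explicitly in a function term determine the substitution $\rho$ on $\ov\a$ (by distinctness of $\ov\a$), and that this in turn fixes the result type (since all type variables of the result type are among $\ov\a$).
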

\begin{proof}
Immediate by induction on $t$.\QED
\end{proof}

\begin{conv}
We write $t\typ\sigma$ to indicate that the term $t$
is well typed and its (unique) type is $\sigma$. This convention is consistent with
the notation $\mathit{X}^\sigma$ for term variables of type $\sigma$. 
\end{conv}

\begin{defi}[Semantics]\afterDot
Let $\SIGMA = (\mathcal K, \mathcal F\!, \mathcal P)$ be a
polymorphic signature. A \emph{structure} $\MM$ for $\SIGMA$ is a tuple of families
$\bigl(\dom,(k^\MM)_{k \in \KK},(\ff^{\,\MM})_{\ff \in \FF},(\pp^\MM)_{\pp \in \PP}\bigr)$, where
\begin{itemize}
\item $\dom$ is a nonempty collection of nonempty sets called the \emph{domains};
\item if $k :: n$, then $k^\MM : \dom^n \ra \dom$. Given a \emph{type variable valuation} $\theta : \mathcal A \to \dom$,
this induces an \emph{interpretation} of types $\SEM{\phantom{i}}^\MM_\theta$ defined by the equations
$\SEM{k(\bar\t)}^\MM_\theta \,=\, k^\MM(\SEM{\bar\t}^\MM_\theta)$
and
$\SEM{\a}^\MM_\theta \,=\, \theta(\a)$;%
\hfill\hbox{}
\item if $\ff \, : \,\forall \alpha_1,\ldots,\alpha_m.\; \t_1 \times \cdots \times \t_{\!n} \to \t$,
then $\ff^{\,\MM} : \prod_{\ov{D} \in \dom^m}
  \SEM{\sigma_1}^\MM_{\theta_{\ov{D}}} \times \cdots \times \SEM{\sigma_n}^\MM_{\theta_{\ov{D}}} \ra \SEM{\sigma}^\MM_{\theta_{\ov{D}}}$,
where $\theta_{\ov{D}}$ is a type variable valuation mapping each $\alpha_i$ to $D_i$;
\item if $\pp \, : \,\forall \alpha_1,\ldots,\alpha_m.\; \t_1 \times \cdots \times \t_{\!n} \to \bool$,
then $\pp^\MM \subseteq \prod_{\ov{D} \in \dom^m}
  \SEM{\sigma_1}^\MM_{\theta_{\ov{D}}} \times \cdots \times \SEM{\sigma_n}^\MM_{\theta_{\ov{D}}}$,
where $\theta_{\ov{D}}$ is as above.
\end{itemize}
%
As expected from the arity of $\ff$, the interpretation $\ff^\MM$ is a function first taking $m$ type arguments and then $n$
data (element) arguments. Similarly, the interpretation $\pp^\MM$ is a predicate respecting the arity of $\pp$.

Given a type variable valuation $\theta$ and a compatible \emph{term variable
valuation} $\xi : {\mathcal V} \to \prod_{\sigma \in \Type} \SEM{\sigma}^\MM_\theta$, the \emph{interpretation} of terms and
formulae by the structure $\MM$ is as follows:
\newcommand\stuff{^\MM_{\theta,\xi}}
\newcommand\stuffT{^\MM_{\theta}}
\begin{align*}
\SEM{f\LAN\bar\t\RAN(\bar t\,)}\stuff
    & \,=\, f^{\vthinspace \mathcal M}(\SEM{\bar\t}\stuffT)\, (\SEM{\bar t\,}\stuff) &
\SEM{\mathit{X}^\sigma}\stuff
    & \,=\, \xi(\mathit{X})(\sigma) \\[\betweentf]
\SEM{p\LAN\bar\t\RAN(\bar t\,)}\stuff
    & \,=\, p^{\vthinspace \mathcal M}(\SEM{\bar\t}\stuffT)\, (\SEM{\bar t\,}\stuff) &
\SEM{t_1 \eq t_2}\stuff
    & \,=\, (\SEM{t_1}\stuff = \SEM{t_2}\stuff) \\
\SEM{\lnot\,p\LAN\bar\t\RAN(\bar t\,)}\stuff
    & \,=\, \lnot\,p^{\vthinspace \mathcal M}(\SEM{\bar\t}\stuffT)\, (\SEM{\bar t\,}\stuff) &
\SEM{t_1 \not\eq t_2}\stuff
    & \,=\, (\SEM{t_1}\stuff \not= \SEM{t_2}\stuff) \\
\SEM{\phi_1 \mathrel\land \phi_2}\stuff
    & \,=\, \SEM{\phi_1}\stuff \mathrel\land \SEM{\phi_2}\stuff &
\SEM{\forall \mathit{X} \mathbin: \t.\; \phi}\stuff
    & \,=\, \forall a \in \SEM{\t}\stuffT.\; \SEM{\phi}^\MM_{\theta,\xi\smash{[\mathit{X}\mapsto a]}} \\
\SEM{\phi_1 \mathrel\lor \phi_2}\stuff
    & \,=\, \SEM{\phi_1}\stuff \mathrel\lor \SEM{\phi_2}\stuff &
\SEM{\exists \mathit{X} \mathbin: \t.\; \phi}\stuff
    & \,=\, \exists a \in \SEM{\t}\stuffT.\; \SEM{\phi}^\MM_{\theta,\xi\smash{[\mathit{X}\mapsto a]}} \\
\SEM{\forall\a.\; \phi}\stuff
    & \,=\, \forall D \in \dom.\; \SEM{\phi}^\MM_{\theta\smash{[\a\mapsto D],\xi}}
\end{align*}
We omit irrelevant subscripts to $\SEM{\phantom{i}}$, writing $\SEM{\t}^\MM$ if
$\t$ is ground and $\SEM{\phi}^\MM$ if $\phi$ is a sentence.

A structure $\mathcal M$ is a \emph{model} of a problem $\PHI$ if
$\SEM{\phi}^{\mathcal M}$ is true for every $\phi \in \PHI$. A problem that has
a model is \emph{satisfiable}.
\end{defi}

\begin{exa}[Algebraic Lists]\afterDot
\label{ex:algebraic-lists}%
The following axioms induce a minimalistic first-order theory of algebraic
lists that will serve as our main running example:
\begin{quotex}
$\forall \a.\;
 \forall \mathit{X} \mathbin: \a,\; \mathit{Xs} \mathbin: \mathit{list}(\a).\;\,
  \const{nil} \not\eq \const{cons}(\mathit{X}\!, \mathit{Xs})$ \\
$\forall \a.\;
 \forall \mathit{Xs} \mathbin: \mathit{list}(\a).\;\,
  \mathit{Xs} \eq \const{nil} \mathrel{\lor}
  (\exists \mathit{Y} \mathbin: \a,\; \mathit{Ys} \mathbin: \mathit{list}(\a).\;\,
  \mathit{Xs} \eq \const{cons}(\mathit{Y}\!, \mathit{Ys})
  )$ \\
$\forall \a.\;
 \forall \mathit{X} \mathbin: {\a},\; \mathit{Xs} \mathbin: \mathit{list}(\a).\;\,
  \const{hd}(\const{cons}(\mathit{X}\!, \mathit{Xs})) \eq \mathit{X} \mathrel\land
  \const{tl}(\const{cons}(\mathit{X}\!, \mathit{Xs})) \eq \mathit{Xs}$
\end{quotex}
We conjecture that \const{cons} is injective.
The conjecture's negation can be expressed employing an unknown but fixed Skolem type $\www$:
\begin{quotex}
$\exists \mathit{X}\!, \mathit{Y} \mathbin: \www,\; \mathit{Xs}, \mathit{Ys} \mathbin: \mathit{list}(\www).\;\,
\const{cons}(\mathit{X}\!, \mathit{Xs}) \eq \const{cons}(\mathit{Y}\!, \mathit{Ys})
\mathrel\land (\mathit{X} \not\eq \mathit{Y} \mathrel{\lor} \mathit{Xs} \not\eq \mathit{Ys})$
\end{quotex}
Because the \const{hd} and \const{tl} equations force injectivity of
\const{cons} in both arguments, the problem consisting of the three axioms and the negated
conjecture is unsatisfiable. The conjecture is a consequence of the axioms.
\end{exa}

We are interested in encoding polymorphic problems $\PHI$ in a manner that
preserves and reflects their satisfiability.  It will be technically convenient to
assume that their signatures have at least one nullary type constructor, so that the set of ground types is
nonempty. It is obvious that this assumption is harmless:\ if it is not satisfied, we simply extend
the signature with a distinguished nullary type constructor $\iota :: 0$. Since $\iota$ does not appear in the formulae of $\PHI$
and since in models the set of domains is assumed to be nonempty, this signature extension does not
affect its satisfiability:\ given a model of $\PHI$ in the original signature, we obtain one in the extended signature
by interpreting $\iota$ as an arbitrary domain; given a model in the extended signature, we obtain one in the
original signature by omitting the interpretation of~$\iota$.

\begin{conv}
\label{con-iota}
For all polymorphic signatures $\Sigma = (\KK,\FF,\PP)$ that we consider, we assume that $\KK$ contains
at least one nullary type constructor.
\end{conv}

The following lemma shows that, in structures, the collection of domains can be
regarded as a copy of the ground types.

\begin{lem} \afterDot
\label{lem-syntactic-domains}
If a polymorphic $\Sigma$-problem $\PHI$ has a model, it also has a model $\MM =
\bigl(\dom,\allowbreak
(k^\MM)_{k \in \KK},\allowbreak\_,\_\bigr)$ such that the following conditions are met\/{\rm:}
\begin{enumerate}
%
%
\item each $k^\MM$ is injective, and \smash{$k^\MM(\ov{D}) \not= {k'}^\MM(\ov{E})$} whenever $k \not= k'${\rm;}
\item $\dom = \{\SEM{\tau}^\MM \mid \tau \in \GType_\Sigma\}${\rm;}
\item the type interpretation function\/ $\SEM{\phantom{i}}^\MM$ is a bijection between\/ $\GType$ and\/ $\dom${\rm;}
\item $\dom$ is countable\/{\rm;}
\item $\dom$ is disjoint from each $D \in \dom$, and any distinct $D_1,D_2 \in \dom$ are disjoint.
\end{enumerate}
\end{lem}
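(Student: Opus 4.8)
The plan is to take an arbitrary model of $\PHI$ (which exists by hypothesis) and transform it in stages, each stage establishing one of the five conditions while preserving the earlier ones and preserving modelhood. The key observation throughout is that the semantics of the logic is insensitive to the precise identity of the elements of the domains and to the precise identity of the domains themselves: only the cardinalities of the domains, together with the interpretations of the function and predicate symbols (which condition 2 onwards allows us to leave underscored), matter. So the whole argument is an exercise in transporting a structure along a bijection.

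First I would address conditions 1--3 together. Start from a model $\MM_0$ of $\PHI$. Consider the family of domains $(\SEM{\tau}^{\MM_0})_{\tau \in \GType_\Sigma}$ that actually arise as interpretations of ground types; note every such interpretation is some $D \in \dom$, but several ground types may collapse to the same domain, and some domains may not be named by any ground type. I would first discard the unnamed domains: a routine check on the semantics (terms and formulae only ever mention ground types via the type-variable valuations $\theta_{\ov D}$, whose values range over $\dom$, but the truth of a \emph{sentence} only ever instantiates type variables at domains that are values of $\SEM{\cdot}^{\MM_0}$ applied to the relevant ground instances) shows that restricting $\dom$ to $\{\SEM{\tau}^{\MM_0} \mid \tau \in \GType_\Sigma\}$, and restricting the $k^{\MM_0}$ accordingly, still yields a model. (Here Convention~\ref{con-iota} guarantees this set is nonempty.) This gives condition~2. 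For conditions~1 and~3, build a fresh set of domains indexed syntactically by ground types: pick, for each $\tau \in \GType_\Sigma$, a set $D_\tau$ with $|D_\tau| = |\SEM{\tau}^{\MM_0}|$, chosen so that the $D_\tau$ are pairwise disjoint and defined by recursion on $\tau$ in such a way that $k^{\MM}$, defined by $k^{\MM}(D_{\tau_1},\dots,D_{\tau_n}) = D_{k(\ov\tau)}$, comes out injective and with disjoint ranges for distinct $k$ (e.g.\ tag each element of $D_{k(\ov\tau)}$ with the syntactic head $k$, or simply take $D_\tau = \{\tau\} \times (\text{a }|\SEM{\tau}^{\MM_0}|\text{-element set})$). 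Transport all the function and predicate interpretations along the chosen bijections $\SEM{\tau}^{\MM_0} \cong D_\tau$; since the semantics is preserved under such transport, the result $\MM$ is still a model, $\SEM{\tau}^{\MM} = D_\tau$ by construction, and $\tau \mapsto D_\tau$ is a bijection, giving conditions~1 and~3.

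Condition~4 is then almost immediate: $\GType_\Sigma$ is countable because $\KK$ is countable and ground types are finite trees over $\KK$, and $\dom$ is in bijection with $\GType_\Sigma$ by condition~3, hence countable. Condition~5 — disjointness of $\dom$ from each of its members, and pairwise disjointness of distinct members — I would handle by one more harmless relabelling: the members of $\dom$ were already made pairwise disjoint in the construction above, and to separate $\dom$ itself from $\bigcup \dom$ one can, for instance, replace each element $a \in D \in \dom$ by the pair $(0,a)$ and leave $\dom$ as a set of such pairs-sets, or more cleanly observe that $\dom$ and its elements can always be pushed to disjoint ranks of the set-theoretic universe; this again is a transport along a bijection and preserves everything established so far.

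The main obstacle is the simultaneous satisfaction of conditions~1 and~3: we must choose the replacement domains $D_\tau$ so that the induced $k^{\MM}$ is genuinely injective with disjoint ranges \emph{and} the indexing $\tau \mapsto D_\tau$ is a bijection, all while matching the original cardinalities. The clean way to do this is to define $D_\tau$ by structural recursion on the ground type $\tau$, carrying the syntactic skeleton of $\tau$ inside the elements, so that injectivity of $k^{\MM}$ and the disjointness conditions become syntactic facts about finite trees; everything else in the proof is routine verification that transporting a structure along a family of bijections preserves the interpretation of every term and formula, which is a straightforward induction on the term/formula structure using the semantic clauses given above.
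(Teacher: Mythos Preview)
Your proposal is correct and takes essentially the same approach as the paper: both achieve conditions (1)--(3) by re-indexing the domains with the syntax of ground types (the paper writes this as the tagging $D \times \{\sigma\}$, which is exactly your suggestion $D_\tau = \{\tau\} \times (\text{a set of the right cardinality})$), so that injectivity and disjoint ranges of the $k^{\MM}$ become consequences of the syntactic tree structure of types, and both handle (5) by one further harmless relabelling. The only inessential difference is the order of operations---the paper tags first and then restricts $\dom$ to the image of $\SEM{\cdot}$, whereas you restrict first and then tag---but either order works.
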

\begin{proof}
%
%
Assume $\PHI$ has a model $\MM$.
%
%
To prove (1), we construct a model $\NN$ from
$\MM$ by tagging the domains with types, i.e.\ by defining $\dom' = \{ D \times \{\sigma\} \mid D \in \dom' \;\wedge\; \sigma \in \GType_\Sigma\}$
and maintaining types across the application of type constructors:
$k^{\NN}(D_1 \times \{\sigma_1\},\ldots,D_n \times \{\sigma_n\}) = k^{\MM}(D_1,\ldots,D_n) \times \{k(\sigma_1,\ldots,\sigma_n)\}$.
The interpretations of the function and predicate symbols are adjusted accordingly.
It is easy to prove that $\MM$ and $\NN$ satisfy the same polymorphic formulae; in particular, $\NN$ is a model of $\PHI$.
For (2), 
we define a model $\NN' = (\dom'',(k^{\NN'})_{k \in \KK},\_,\_)$ from $\NN$
by taking $\dom'' \subseteq \dom'$ to be the image of $\SEM{\phantom{i}}^{\NN} : \GType_\Sigma \ra \dom'$
and by taking $k^{\NN'}$ and $s^{\NN'}$ to be the restrictions of $k^{\NN}$ and $s^{\NN}$. Thanks to
Convention \ref{con-iota}, $\dom''$ is nonempty, and moreover
$k^{\NN'}$ is well defined on $\dom$.
Again, it is easy to prove that $\NN'$ satisfies all the polymorphic formulae that
$\NN$ satisfies. In particular, $\NN'$ is a model of $\PHI$.
Points (3) and (4) follow from (1) and (2).
Finally, we prove (5) by replacing the domains with disjoint copies of them.
\QED
\end{proof}

\subsection{Monomorphic First-Order Logic}
\label{ssec:monomorphic-first-order-logic}

Monomorphic first-order logic, more commonly known as many-sorted first-order logic and corresponding
to TPTP TFF0 \cite{sutcliffe-et-al-2012-tff},
has signatures $\Sigma = (\Type,\FF,\PP)$, where $\Type$ is a countable set of types (or sorts)
ranged over by $\sigma$,
$\FF$ is a countable set of function symbols $f : \ov{\sigma} \ra \sigma$ with arities, and
$\PP$ is a countable set of predicate symbols $p : \ov{\sigma} \ra \bool$ with arities.
$\Sigma$-structures
$\MM = \bigl((\D_\sigma)_{\sigma \in \Type},(\ff^{\,\MM})_{\ff \in \FF},(\pp^\MM)_{\pp \in \PP}\bigr)$ interpret the types as sets and the function and predicate symbols
as functions and predicates of the suitable arities.
Given a model $\MM$ and a valuation $\xi : \VV \ra \smash{\prod_{\sigma \in \Type} \D_\sigma}$, the interpretations of terms and formulae,
$\SEM{t}^\MM_{\xi}$ and $\SEM{\phi}^\MM_{\xi}$,
are defined as expected.

Monomorphic first-order logic can be viewed as a special case of polymorphic
first-order logic, with a polymorphic signature considered monomorphic
when all its type constructors are nullary and the arities of its function and predicate symbols
contain no type variables.

%

\begin{exa}\afterDot
\label{ex:algebraic-lists-monomorphised}%
A monomorphised version of the algebraic list problem of Example~\ref{ex:algebraic-lists},
with $\a$ instantiated by $\www$, follows:
\begin{quotex}
$\forall \mathit{X} \mathbin: \www,\; \mathit{Xs} \mathbin: \mathit{list\IUS}\www.\;\,
  \const{nil}_{\www} \not\eq \const{cons}_{\vvthinspace\www}(\mathit{X}\!, \mathit{Xs})$ \\
$\forall \mathit{Xs} \mathbin: \mathit{list\IUS}\www.\;\,
  \mathit{Xs} \eq \const{nil}_{\www} \mathrel{\lor}
  (\exists \mathit{Y} \mathbin: \www,\; \mathit{Ys} \mathbin: \mathit{list\IUS}\www.\;\,
  \mathit{Xs} \eq \const{cons}_{\vvthinspace\www}(\mathit{Y}\!, \mathit{Ys})
  )$ \\
$\forall \mathit{X} \mathbin: {\www},\; \mathit{Xs} \mathbin: \mathit{list\IUS}\www.\;\,
  \const{hd}_{\www}(\const{cons}_{\vvthinspace\www}(\mathit{X}\!, \mathit{Xs})) \eq \mathit{X} \mathrel\land
  \const{tl}_{\www}(\const{cons}_{\vvthinspace\www}(\mathit{X}\!, \mathit{Xs})) \eq \mathit{Xs}$ \\
$\exists \mathit{X}\!, \mathit{Y} \mathbin: \www,\; \mathit{Xs}, \mathit{Ys} \mathbin: \mathit{list\IUS}\www.\;\,
\const{cons}_{\vvthinspace\www}(\mathit{X}\!, \mathit{Xs}) \eq \const{cons}_{\vvthinspace\www}(\mathit{Y}\!, \mathit{Ys})
\mathrel\land (\mathit{X} \not\eq \mathit{Y} \mathrel{\lor} \mathit{Xs} \not\eq \mathit{Ys})$
\end{quotex}
Like the original polymorphic problem, it is unsatisfiable.
\end{exa}

\subsection{Untyped First-Order Logic}
\label{ssec:untyped-first-order-logic}

The final target logic for all our encodings, untyped first-order logic,
coincides with the TPTP first-order form FOF \cite{sutcliffe-tptp}.
This is the logic traditionally implemented in
automatic theorem provers and finite model finders.
An \emph{untyped signature} is a pair $\SIGMA =
(\mathcal F\!, \mathcal P)$, where $\mathcal F$ and $\mathcal P$ are countable sets of
function and predicate symbols with arities, where the arities are natural numbers.
The notation $s^n$ indicates that the symbol $s$ has arity $n$.
The untyped syntax is identical to that of the monomorphic
logic, except that variable terms do not contain types and quantification is written $\forall
\mathit{X}.\; \phi$ and $\exists \mathit{X}.\; \phi$.
%
The structures for $\SIGMA =
(\mathcal F\!, \mathcal P)$ are triples
$\MM = \smash{\bigl(\D,(\ff^{\,\MM})_{\ff \in \FF},(\pp^\MM)_{\pp \in \PP}\bigr)}$, where
$\D$ is the domain and $\ff^{\,\MM}$ and $\pp^{\,\MM}$ are
$n$-ary functions and predicates on $\dom$, with $n$ being the symbol's arity.

\subsection{Type Encodings}
\label{ssec:type-enc}

The type encodings discussed in this article are given by functions that
take problems $\PHI$ in a logic ${\mathcal L}$ to problems $\PHI'$ in a logic
${\mathcal L}'$, where ${\mathcal L}$ and ${\mathcal L}'$ are among the three
logics introduced above.

\begin{conv} \label{con-encodings}
Each of the considered encodings will be specified by the following data:
\begin{itemize}
\item a function that maps ${\mathcal L}$-signatures $\Sigma$ to ${\mathcal L}'$-signatures $\Sigma'$;
\item for all ${\mathcal L}$-signatures $\Sigma$ and problems $\PHI$ over $\Sigma$:
  \begin{itemize}
  \item a (possibly empty) set $\Ax_{\PHI}$ of sentences over $\Sigma$, the {\em axioms}, added by the translation;
  \item a function $\encode{\phantom{i}}{}$ between formulae over $\Sigma$ and formulae over $\Sigma'$,
the {\em formula translation}.
  \end{itemize}
\end{itemize}
The formula translation is typically based on a {\em term translation} $\encode{\phantom{i}}{}$.
The encoding $\encode{\PHI}{}$ of a problem $\PHI$ is given by the union between the axioms and the componentwise translations:
$\encode{\PHI}{} = \Ax_{\,\PHI} \,\cup\, \{\encode{\phi}{} \!\mid \phi \in \PHI\}$.
\end{conv}

Central to this \Paper{} are the notions of soundness and completeness of an encoding:

\begin{defi}[Correctness]\afterDot
An encoding as above is \emph{sound} for a class of problems $\mathcal C$ if satisfiability of $\PHI \in \mathcal C$
implies satisfiability of $\encode{\PHI}{}$; it is \emph{complete} for $\mathcal C$ if, given $\PHI \in \mathcal C$, satisfiability
of $\encode{\PHI}{}$ implies satisfiability of~$\PHI$; it is \emph{correct} for $\mathcal C$ if
it is both sound and complete (i.e.\ $\PHI$ and $\encode{\PHI}{}$ are equisatisfiable).
In case $\mathcal C$ is the class of all problems, we omit it and simply call the encoding sound, complete, or correct.
\end{defi}

\section{Traditional Type Encodings}
\label{sec:traditional-type-encodings}

There are four main traditional approaches to encoding polymorphic types:\ full
type erasure, type arguments, type tags, and type guards
\cite{enderton-1972,stickel-1986,wick-mccune-1989,meng-paulson-2008-trans}.
Before introducing them, we first establish some conventions that will be useful
throughout the article.

\begin{conv} \label{con-distinguished}
We will often need to extend signatures $\Sigma$ with one or more of the
following distinguished symbols. Whenever we employ them, we assume they are
fresh with respect to $\Sigma$ and have the indicated arities:
\begin{itemize}
\item a nullary type constructor $\jjj$; 
\item a function symbol $\ti : \forall\a.\; \a\to\a$;
\item a predicate symbol $\is : \forall\a.\; \a\to\bool$.
\end{itemize}
Terms of type $\jjj$ will be used to represent the types of $\Sigma$,
$\ti$ will be used to tag terms with type information,
and $\is$ will be used to guard formulae with type information.
\end{conv}

\begin{conv} \label{con-tvarAsVar}
Since the sets of type and term variables, $\AAA$ and $\VV$, are countably infinite, we can fix a function
from $\AAA$ to $\VV$, $\alpha \mapsto \VV(\alpha)$, such that
\begin{enumerate}
\item it is injective, i.e.\ $\VV(\alpha_1) = \VV(\alpha_2)$ implies $\alpha_1 = \alpha_2$;
\item it allows for an infinite supply of term variables that do not correspond to type variables,
i.e.\ the set of term variables not having the form $\VV(\alpha)$ is infinite;
\item each $\VV(\alpha)$ is a universal variable.
\end{enumerate}
This function can be used to encode types as terms. Thanks to (2),
we can safely assume that the source problems $\PHI$ do
not contain variables of the form $\VV(\alpha)$.
\end{conv}

\subsection{Full Type Erasure}
\label{ssec:full-type-erasure}

The easiest way to translate a typed problem into an untyped logic is to erase
all its type information, which means omitting
all type arguments, type quantifiers, and
types in term quantifiers.
We call this encoding \erased{}.

\begin{defi}[Full Erasure \erased{}]\afterDot%
\label{def-full-erasure-sv}%
The \emph{full type erasure} encoding \erased{} translates a polymorphic problem
over $\SIGMA = (\KK, \FF, \PP)$ into an untyped problem over $\SIGMA' = (\FF', \PP')$, where the symbols in
$\SIGMA'$ have the same term arities as in $\SIGMA$
(but without type arguments).  Thus, if
$\ff \, : \,\forall\ov{\alpha}.\; \sigma_1 \times \ldots \times \sigma_{\!n} \to \sigma$ and
$\pp \, : \,\forall\ov{\alpha}.\; \sigma_1 \times \ldots \times \sigma_{\!n} \to \bool$
are in $\FF$ and $\PP$, respectively, then $\ff$ and $\pp$ have arities $n$ in $\FF'$ and $\PP'$, respectively.
The encoding adds no axioms, and the term and formula translations
$\erasedx{\phantom{i}}$ are defined as follows:
\begin{align*}
\erasedx{f\LAN\bar\t\RAN(\bar t\,)} & \,=\, f(\erasedx{\bar t\,}) &
\erasedx{\mathit{X}^\sigma} &\,=\, \mathit{X}
\\[\betweentf]
\erasedx{p\LAN\bar\t\RAN(\bar t\,)} & \,=\, p(\erasedx{\bar t\,}) &
  \erasedx{\forall \mathit{X} \mathbin: \t.\;\, \phi} & \,=\, \forall \mathit{X}.\; \erasedx{\phi}
\\
\erasedx{\lnot\,p\LAN\bar\t\RAN(\bar t\,)} & \,=\, \lnot\,p(\erasedx{\bar t\,})
& \erasedx{\exists \mathit{X} \mathbin: \t.\;\, \phi} & \,=\, \exists \mathit{X}.\; \erasedx{\phi} \\
\erasedx{\forall \a.\; \phi} & \,=\, \erasedx{\phi}
\end{align*}
Here and elsewhere, we omit the trivial cases where the function is simply
applied to its subterms or subformulae, as in $\vthinspace\erasedx{\phi_1
\mathrel\land \phi_2} \vthinspace=\vthinspace \erasedx{\phi_1} \mathrel\land \erasedx{\phi_2}$.
Recall that, according to Section~\ref{ssec:type-enc},
the \erased{} translation of a problem $\PHI$ is simply the componentwise translation
of its formulae:\ $\erasedx{\PHI} \,=\,
\{\erasedx{\phi} \mid \phi\in\PHI\}$.
\end{defi}

%

%


\begin{exa}%
\label{ex:monkey-village-erased}%
Encoded using \erased{}, the monkey village axioms of
Example~\ref{ex:monkey-village} become
\begin{quotex}
$\forall M.\;\, \const{owns}(M, \const{b}_1(M)) \mathrel{\land} \const{owns}(M, \const{b}_2(M))$ \\
$\forall M.\;\, \const{b}_1(M) \not\eq \const{b}_2(M)$ \\
$\forall M_{1}, {M_2}, B.\;\,
\const{owns}(M_{1}, B) \mathrel{\land} \const{owns}(M_2, B)
\rightarrow M_{1\!} \eq M_2$
\end{quotex}
Like the original axioms, the encoded axioms are satisfiable:\ the requirement
that each monkey possesses two bananas of its own can be met by taking an
infinite domain (since $2k = k$ for any infinite cardinal $k$).
\end{exa}

However, full type erasure is generally
unsound in the presence of equality
because equality can be used to encode cardinality constraints on domains.
For example,
the axiom $\forall \mathit{U}
\mathbin: \mathit{unit}.\; \mathit{U} \eq \const{unity}$ forces the domain of
$\mathit{unit}$ to have only one element. Its erasure, $\forall \mathit{U}.\; \mathit{U} \eq
\const{unity}$, effectively restricts \emph{all} types to one element%
; a contradiction is derivable from any disequality $t \not\eq u$ or any pair of
clauses $\pp(\bar{t})$ and~$\lnot\,\pp(\bar u)$.
An expedient proposed by Meng and Paulson \cite[\S2.8]{meng-paulson-2008-trans},
which they implemented in Sledgehammer,
is to filter out all axioms of the form
$\forall \mathit{X} \mathbin: \t.\;\, \mathit{X} \eq \const{a}_{1\!} \mathrel{\lor} \cdots
\mathrel{\lor} \mathit{X} \eq \const{a}_n$, but this makes the translation incomplete
and generally does not suffice to prevent unsound cardinality reasoning.

An additional issue
with full type erasure is that it
confuses distinct monomorphic instances of polymorphic symbols. The formula
$\const{q}\LAN a\RAN(\const{f}\LANX a\RAN) \mathrel\land
 \lnot\,\const{q}\LAN b\RAN({\const{f}\LANX b\RAN})$
is satisfiable, but its type erasure
$\const q(\const{f}) \mathrel\land \lnot\, \const q(\const{f})$
is unsatisfiable.
A more intuitive example might be 
$N \not\eq 0 \rightarrow N > 0$, which we would expect to
hold for the natural number versions of $0$ and~$>$ but not for
integers or real numbers.


Nonetheless, full type erasure is complete, and this property will be useful
later.

\begin{thm}[Completeness of \erased]\afterDot
\label{thm:completeness-of-erased}%
Full type erasure 
is complete.
\end{thm}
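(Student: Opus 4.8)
The plan is to exploit the fact that the encoding \erased{} discards the entire type structure, so that any untyped model can be lifted to a polymorphic model in which all ground types share a single domain. Concretely, suppose $\erasedx{\PHI}$ has a model $\NN = \bigl(\D,(\ff^{\,\NN})_{\ff\in\FF'},(\pp^{\NN})_{\pp\in\PP'}\bigr)$ over the erased signature $\SIGMA' = (\FF'\!,\PP')$, with (nonempty) domain $\D$, and let $\SIGMA = (\KK,\FF,\PP)$ be the original polymorphic signature. I would build a $\SIGMA$-structure $\MM$ with singleton domain collection $\dom = \{\D\}$, interpreting every type constructor $k :: n$ as the constant map sending the unique $\ov E \in \dom^n$ to $\D$; a one-line induction on $\sigma$ then gives $\SEM{\sigma}^\MM_\theta = \D$ for every type $\sigma$ and every type variable valuation $\theta$. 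For $\ff : \forall\ov\a.\; \sigma_1\times\cdots\times\sigma_n\ra\sigma$ in $\FF$, the interpretation $\ff^{\,\MM}$ must have type $\prod_{\ov D\in\dom^m}\SEM{\sigma_1}^\MM_{\theta_{\ov D}}\times\cdots\times\SEM{\sigma_n}^\MM_{\theta_{\ov D}}\ra\SEM{\sigma}^\MM_{\theta_{\ov D}}$, which collapses to $\D^n\ra\D$ because $\dom^m$ is a singleton and each $\SEM{\sigma_i}^\MM$ is $\D$; so I simply set $\ff^{\,\MM} := \ff^{\,\NN}$, which has the matching term arity $n$ in $\FF'$ by Definition~\ref{def-full-erasure-sv}, and likewise $\pp^\MM := \pp^\NN \subseteq \D^n$. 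It is immediate that $\MM$ is a well-formed structure.

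The next step is the semantic correspondence between $\MM$ and $\NN$. Fix a well-typed $\SIGMA$-formula $\phi$; by Convention~\ref{con-formulae} every term variable occurs in $\phi$ at a single type, which I write $\sigma_X$ (chosen arbitrarily for variables not occurring in $\phi$). To a polymorphic valuation $(\theta,\xi)$ I associate the untyped valuation $\xi'$ given by $\xi'(X) := \xi(X)(\sigma_X) \in \D$. The core claim, proved by mutual structural induction, is that $\SEM{t}^\MM_{\theta,\xi} = \SEM{\erasedx{t}}^\NN_{\xi'}$ for every subterm $t$ of $\phi$ and that $\SEM{\psi}^\MM_{\theta,\xi}$ and $\SEM{\erasedx{\psi}}^\NN_{\xi'}$ have the same truth value for every subformula $\psi$ of $\phi$. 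The variable case is exactly the definition of $\xi'$; the function-application case uses $\ff^{\,\MM} = \ff^{\,\NN}$ and the fact that \erased{} drops the (now irrelevant) type arguments, together with the induction hypothesis on the argument list. For formulae, literals reduce to the term claim and the binary connectives are immediate from the induction hypothesis; for a term quantifier $\forall X\mathbin:\sigma.\; \psi$ (and symmetrically for $\exists$) one uses $\SEM{\sigma}^\MM_\theta = \D$ together with the identity $(\xi[X\mapsto a])' = \xi'[X\mapsto a]$, which is valid under the standard reading of valuation updates since within $\psi$ only the value of the bound variable at the type $\sigma$ matters; this matches $\erasedx{\forall X\mathbin:\sigma.\; \psi} = \forall X.\; \erasedx{\psi}$. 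For a type quantifier $\forall\a.\; \psi$, since $\dom = \{\D\}$ is a singleton and $\theta(\a)$ is necessarily $\D$, we get $\SEM{\forall\a.\; \psi}^\MM_{\theta,\xi} = \SEM{\psi}^\MM_{\theta,\xi}$, matching $\erasedx{\forall\a.\; \psi} = \erasedx{\psi}$.

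Finally, applying the formula claim to each sentence $\phi \in \PHI$ (for which $\theta$ and $\xi$ play no role) yields $\SEM{\phi}^\MM = \SEM{\erasedx{\phi}}^\NN$, which is true because $\NN$ is a model of $\erasedx{\PHI} \supseteq \{\erasedx{\phi}\}$; hence $\MM \models \PHI$ and $\PHI$ is satisfiable. The only delicate point — and the step I expect to absorb most of the write-up — is the bookkeeping in the formula induction: making the reading of the polymorphic valuation update $\xi[X\mapsto a]$ precise enough that $(\xi[X\mapsto a])' = \xi'[X\mapsto a]$ holds on the nose, and noting that type quantification is vacuous over a one-point universe of types. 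Everything else is routine.
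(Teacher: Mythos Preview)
Your proposal is correct and follows essentially the same approach as the paper: build a $\SIGMA$-structure $\MM$ with the single domain $\D$, let every $k^\MM$ be constant, copy $f^\MM$ and $p^\MM$ from $\NN$, and prove the semantic correspondence by structural induction. The only cosmetic difference is that the paper defines $\xi'(X) = \xi(X)(\alpha)$ for an arbitrary type variable~$\alpha$, whereas you use the per-formula type $\sigma_X$ guaranteed by Convention~\ref{con-formulae}; your version is arguably cleaner for the variable base case, but the overall argument is the same.
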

\begin{proof}
From a model $\NN = \smash{\bigl(\D,(\ff^{\,\NN})_{\ff \in \FF'},(\pp^{\NN})_{\pp \in \PP'}\bigr)}$ of $\erasedx{\PHI}$, we construct a
structure $\MM = \smash{\bigl(\dom,(k^\MM)_{k \in \KK},(\ff^{\,\MM})_{\ff \in \FF},(\pp^\MM)_{\pp \in \PP}\bigr)}$ for the signature of
$\PHI$ by taking the same domain for all types and interpreting all instances
of each polymorphic symbol in the same way as~$\mathcal M$:
\begin{itemize}
\item $\dom = \{\D\}$, and $k^\MM$ maps everything to $\D$;
\item if $s \in \FF \mathrel{\uplus} \PP$, then $s^{\,\MM}(\ov{D})(\ov{d}) = s^{\,\NN}(\ov{d})$.
\end{itemize}
Given $\theta : \mathcal A \ra \dom$ and $\xi : {\mathcal V} \ra \prod_{\sigma \in \Type_\Sigma} \SEM{\sigma}^\MM_\theta$,
we define $\xi' : {\mathcal V} \ra D$ by $\xi'(\mathit{X}) = \xi(\mathit{X})(\alpha)$, where $\alpha$ is any type variable. (The choice of $\alpha$
is irrelevant because $\theta$ maps all type variables to $\D$,
the only element of $\dom$.)
The next facts follow by structural induction on $t$ and $\phi$ (for arbitrary $\theta$ and $\xi$):
\begin{itemize}
\item $\SEM{t}^\MM_{\theta,\xi} = \SEM{\erasedx{t}}^{\NN}_{\xi'}$;
\item $\SEM{\phi}^\MM_{\theta,\xi} = \SEM{\erasedx{\phi}}^{\NN}_{\xi'}$.
\end{itemize}
In particular, for sentences, we have $\SEM{\phi}^\MM = \SEM{\erasedx{\phi}}^{\NN}$;
and since $\NN$ is a model of $\erasedx{\PHI}$, it follows that $\MM$ is a model of $\PHI$.
\QED
\end{proof}

By way of composition, the
\erased{} encoding lies at the heart of all the
encodings presented in this \Paper. Given $n$~encodings $\xx_1, \ldots, \xx_n$,
we write
$\encode{\phantom{i}}{\xx_1\tinycomma\ldots\tinycomma\xx_n}$ for the composition
$\encode{\phantom{i}}{\xx_{n}} \circ \cdots \circ
\encode{\phantom{i}}{\xx_{1}}$.
Typically, $n$ will be $2$ or $3$ and $\xx_n$ will be $\erased$.
Moreover, $\xx_i$ will be correct and 
will transform the problem so that it belongs to a fragment for which
$\xx_{i+1}$ is also correct.
This will ensure that the whole composition is correct.
Finally, because $\xx_2,\ldots,\xx_n$ will always be fixed
for a given $\xx_1$,
we will call the entire composition
$\encode{\phantom{i}}{\xx_1\tinycomma\ldots\tinycomma\xx_n}$
the ``$\xx_1$ encoding''.

\subsection{Type Arguments}
\label{ssec:type-arguments}

A natural way to prevent the (unsoundness-causing) confusion arising with full
type erasure
is to encode types as terms in the untyped logic.
Instances of polymorphic symbols can be distinguished using explicit type
arguments, encoded as terms:
$n$-ary type constructors $k$ become $n$-ary function
symbols $\const k$,
and type variables~$\a$ become term variables~$A$.
A polymorphic symbol with $m$ type
arguments is passed $m$ additional term arguments. The example
given in the previous subsection
is translated to $\const q(\const{a}, \ff(\const{a})) \mathrel\land \allowbreak
\lnot\,\const q(\const{b}, \ff(\const{b}))$%
, and a fully polymorphic
instance $\const{f}\LANX\a\RAN$ would be encoded as $\const{f}(A)$ (with $A$ a term variable).
We call this encoding $\args$.
%
%

We now proceed with first encoding the types in isolation and then the typed terms.

\begin{defi}[Term Encoding of Types]\afterDot \label{def-term-enc-types}
Let $\mathcal K$ be a finite set of $n$-ary type constructors.
The \emph{term encoding} of a polymorphic type over $\mathcal K$ is a term
over the signature $(\{\jjj\}, \KK', \emptyset)$, where $\KK'$ contains a function symbol $\const k :
\jjj^{\,n} \to \jjj$ for each $k \in \KK$ with $k :: n$. The
encoding is specified by the following equations:
\begin{align*}
\typex{k(\bar\t)} & \,=\, \const k(\typex{\bar\t}) &
  \typex{\a} & \,=\, \mathcal{V}(\a)
\end{align*}
\end{defi}

\begin{defi}[Traditional Arguments \args{}]\afterDot
We first define the encoding function $\encode{\phantom{i}}{\args}$, which translates
polymorphic problems over $\SIGMA = (\KK, \FF, \PP)$ to polymorphic problems over
$(\KK \mathrel{\uplus} \{\jjj\},\allowbreak \FF \mathrel{\uplus} \KK',\allowbreak \PP)$,
where $\jjj$ and $\KK'$ are as in Definition~\ref{def-term-enc-types}.
It adds no axioms, and its term and formula translations
are defined as follows:
\begin{align*}
\argsx{f\LAN\bar\t\RAN(\bar t\,)} & \,=\,  f\LAN\bar\t\RAN(\typex{\bar\t}, \argsx{\bar t\,}) \\[\betweentf]
\argsx{p\LAN\bar\t\RAN(\bar t\,)} & \,=\,  p\LAN\bar\t\RAN(\typex{\bar\t}, \argsx{\bar t\,}) &
  \argsx{\forall \a.\; \phi} & \,=\, \forall \a.\;\forall\typex{\a} \mathbin: \jjj.\; \argsx{\phi} \\
\argsx{\lnot\,p\LAN\bar\t\RAN(\bar t\,)} & \,=\, \lnot\, p\LAN\bar\t\RAN(\typex{\bar\t}, \argsx{\bar t\,})
\end{align*}
(Again, we omit the trivial cases, e.g.\ $\argsx{\forall \mathit{X} \mathbin: \t.\;\, \phi} =
\forall \mathit{X} \mathbin: \t.\;\, \argsx{\phi}$.)
The \emph{traditional type arguments} encoding \args{}
is defined as the composition $\encode{\phantom{i}}{\args\tinycomma\erased}$.
It follows from the definition that
\args{} translates a polymorphic problem
over $\SIGMA = (\mathcal{K}, \mathcal F\!, \mathcal{P})$
into an untyped problem over $\SIGMA' = (\mathcal{F}' \mathrel{\uplus} \KK', \PP')$, where
the symbols in $\FF', \PP'$ are the same as those in
$\FF, \PP$; and for each symbol $s :
\forall\bar\a.\allowbreak\; \bar\t \to \varsigma
\in \FF \mathrel{\uplus} \PP$, the
arity of $s$ in $\SIGMA'$ is $\left|\smash{\bar\a}\right| + \left|\smash{\bar\t}\right|$.
\end{defi}

\begin{exa}%
\label{ex:algebraic-lists-args}%
The \args{} encoding translates the algebraic list
problem of Example~\ref{ex:algebraic-lists} into the following
untyped problem:
\begin{quotex}
$\forall A, \mathit{X}\!, \mathit{Xs}.\;\,
  \const{nil}(A) \not\eq \const{cons}(A, \mathit{X}\!, \mathit{Xs})$ \\
$\forall A, \mathit{Xs}.\;\,
  \mathit{Xs} \eq \const{nil}(A) \mathrel{\lor}
  (\exists \mathit{Y}\!, \mathit{Ys}.\;\,
  \mathit{Xs} \eq \const{cons}(A, \mathit{Y}\!, \mathit{Ys})
  )$ \\
$\forall A, \mathit{X}\!, \mathit{Xs}.\;\,
  \const{hd}(A, \const{cons}(A, \mathit{X}\!, \mathit{Xs})) \eq \mathit{X} \mathrel\land \const{tl}(A, \const{cons}(A, \mathit{X}\!, \mathit{Xs})) \eq \mathit{Xs}$ \\
$\exists \mathit{X}\!, \mathit{Y}\!, \mathit{Xs}, \mathit{Ys}.\;\,
\const{cons}(\const{\www}, \mathit{X}\!, \mathit{Xs}) \eq \const{cons}(\const{\www}, \mathit{Y}\!, \mathit{Ys})
\mathrel\land (\mathit{X} \not\eq \mathit{Y} \mathrel{\lor} \mathit{Xs} \not\eq \mathit{Ys})$
\end{quotex}
\end{exa}

\noindent The $\args$ encoding coincides with $\erased$ for monomorphic problems and
suffers from the same unsoundness with respect to equality and cardinality
constraints. Nonetheless, $\args$ will form the basis of all the sound
polymorphic encodings in a slightly generalised version, called \args\filter{},
for suitable instances of $\xx$.
%


\begin{defi}[Type Argument Filter]\afterDot
\label{def:type-argument-filter}%
Given a signature $\SIGMA = (\mathcal{K}, \mathcal F\!, \mathcal{P})$,
a \emph{type argument filter} $\filt$
maps any $s : \forall\a_1,\ldots,\a_m.\allowbreak\; \bar\t \to \varsigma$
to a subset $\tsfont{\textsl{x}}_{s} = \{i_1,\ldots,i_{m'}\} \subseteq \{1,\ldots,m\}$
of its type argument indices.
Given a list $\bar z$ of length $m$, $\FILT{s}{\bar z}$ denotes the sublist $z_{i_1}, \ldots,z_{i_{\smash{m'}}}$,
where $i_1 < \cdots < i_{\smash{m'}}$.
Filters are implicitly extended to $\{1\}$ 
for the distinguished symbols
$\ti, \is \notin \mathcal{F} \mathrel{\uplus} \mathcal{P}$ introduced in Convention \ref{con-distinguished}.
\end{defi}

\begin{defi}[Generic Arguments \args\negvthinspace\filtertxt]\afterDot
\label{def:generic-arguments}%
Given a type argument filter $\filt$,
we first define the encoding $\encode{\phantom{i}}{\args\filter}$ that translates polymorphic problems over
$\SIGMA = (\KK, \FF, \PP)$ to polymorphic problems over
$(\KK \mathrel{\uplus} \{\jjj\}, \FF \mathrel{\uplus} \KK', \PP)$,
where $\jjj$ and $\KK'$ are as in Definition~\ref{def-term-enc-types}.
It adds no axioms, and its term and formula translations are defined as follows:
\begin{align*}
\argsfx{ f\LAN\bar\t\RAN(\bar t\,)} & \,=\,  f\LAN\bar\t\RAN(\typex{\FILT{f}{\bar\t}}, \argsfx{\bar t\,}) \\[\betweentf]
\argsfx{ p\LAN\bar\t\RAN(\bar t\,)} & \,=\,  p\LAN\bar\t\RAN(\typex{\FILT{p}{\bar\t}}, \argsfx{\bar t\,}) &
  \argsfx{\forall \a.\; \phi} & \,=\, \forall \a.\;\forall\typex{\a} \mathbin: \jjj.\; \argsfx{\phi} \\
\argsfx{\lnot\,p\LAN\bar\t\RAN(\bar t\,)} & \,=\, \lnot\,p\LAN\bar\t\RAN(\typex{\FILT{p}{\bar\t}}, \argsfx{\bar t\,})
\end{align*}
The \emph{generic type arguments} encoding \args\filter{}
is the composition $\encode{\phantom{i}}{\args\filter\tinycomma\erased}$.
It translates a polymorphic problem
over $\SIGMA = (\mathcal{K}, \mathcal F\!, \mathcal{P})$
into an untyped problem over $\SIGMA' = (\mathcal{F}' \mathrel{\uplus} \mathcal{K}, \Pp)$, where
the symbols in $\mathcal{F}', \Pp$ are the same as those in
$\mathcal F\!, \mathcal{P}$; and for each symbol $s :
\forall\bar\a.\allowbreak\; \bar\t \to \varsigma
\in \mathcal{F} \mathrel{\uplus} \mathcal{P}$, the
arity of $s$ in $\SIGMA'$ is $\lenfilt{ s}{\filt} + \left|\smash{\bar\t}\right|$.
\end{defi}

The \erased{} and \args{} encodings correspond to the special cases of
\args$^\filt$ where $\filt$ returns none or all of the type arguments,
respectively.

\begin{thm}[Completeness of \args\negvthinspace\filtertxt]\afterDot
\label{thm:completeness-of-argsx}%
The type arguments encoding \args\filter{} is complete.
\end{thm}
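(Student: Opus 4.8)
The plan is to exploit that $\args\filter$ is by definition the composite $\encode{\phantom{i}}{\args\filter\tinycomma\erased}$ and that completeness is preserved under composition. Concretely, a model of the (untyped) $\args\filter$ encoding of $\PHI$ is by definition a model of $\encode{\phantom{i}}{\erased}$ applied to the polymorphic problem $\encode{\PHI}{\args\filter}$, so Theorem~\ref{thm:completeness-of-erased} turns it into a model of $\encode{\PHI}{\args\filter}$ itself; it then suffices to produce from such a model a model of $\PHI$. Fix $\SIGMA=(\KK,\FF,\PP)$, write $\SIGMA^+=(\KK \mathrel{\uplus} \{\jjj\},\,\FF \mathrel{\uplus} \KK',\,\PP)$ for the target signature of Definition~\ref{def:generic-arguments}, and suppose $\encode{\PHI}{\args\filter}$ has a model. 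By Lemma~\ref{lem-syntactic-domains} applied to this $\SIGMA^+$-problem, we may assume the model is an $\NN$ whose type interpretation $\SEM{\phantom i}^\NN$ is a bijection from $\GType_{\SIGMA^+}$ onto its domain family; write $D\mapsto\lceil D\rceil$ for the inverse. For a domain $D$ with $\lceil D\rceil\in\GType_\SIGMA$, put $[D]\mathrel{:=}\SEM{\typex{\lceil D\rceil}}^\NN$; since $\typex{\lceil D\rceil}$ is then a ground term of type $\jjj$ (Definition~\ref{def-term-enc-types}), $[D]\in\SEM{\jjj}^\NN$, and $[\,\cdot\,]$ is compositional: $[k^\NN(\ov D)]=\const k^\NN\bigl([\ov D\,]\bigr)$ for $k\in\KK$.

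Next I would build a $\SIGMA$-structure $\MM$ that \emph{forgets} the filtered type-argument slots, filling them with these canonical encodings. Its domains are the $D$ of $\NN$ with $\lceil D\rceil\in\GType_\SIGMA$ --- a nonempty family, by Convention~\ref{con-iota}; $k^\MM$ is the restriction of $k^\NN$ for each $k\in\KK$, so that $\SEM{\sigma}^\MM_\theta=\SEM{\sigma}^\NN_\theta$ for every $\SIGMA$-type $\sigma$ and every $\theta$ valued in $\dom^\MM$; and for $s:\forall\a_1,\ldots,\a_m.\;\ov\sigma\to\varsigma$ in $\FF \mathrel{\uplus} \PP$, with $\filt$ selecting the indices $i_1<\cdots<i_{m'}$, I set
\[
s^\MM(D_1,\ldots,D_m)(\ov d)\;\mathrel{:=}\;s^\NN(D_1,\ldots,D_m)\bigl([D_{i_1}],\ldots,[D_{i_{m'}}],\,\ov d\,\bigr).
\]
This is well typed by the agreement of the two type interpretations, and well defined precisely because $[\,\cdot\,]$ depends only on the domains, not on the syntactic types encoding them. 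Ensuring this --- which amounts to the injectivity of $\SEM{\phantom i}^\NN$, the reason Lemma~\ref{lem-syntactic-domains} is invoked up front --- is the step I expect to require the most care; everything afterwards is routine bookkeeping.

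The remaining work is a pair of inductions. First, $\SEM{\typex{\sigma}}^\NN_{\theta,\xi^+}=\bigl[\SEM{\sigma}^\MM_\theta\bigr]$ for every $\SIGMA$-type $\sigma$, by induction on $\sigma$, where $\xi^+$ maps each $\VV(\a)$ to $[\theta(\a)]$ (the variable case unwinds this definition; the constructor case uses that $k^\MM$ restricts $k^\NN$ together with compositionality of $[\,\cdot\,]$). Second, $\SEM{\argsfx{\phi_0}}^\NN_{\theta,\xi^+}=\SEM{\phi_0}^\MM_{\theta,\xi}$ for every $\SIGMA$-formula $\phi_0$ with no top-level type quantifier, every $\theta$ valued in $\dom^\MM$, and every $\xi$, where now $\xi^+$ additionally agrees with $\xi$ on the original variables --- a consistent choice since, by Convention~\ref{con-tvarAsVar}, source problems contain no variable of the form $\VV(\a)$. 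In the function-term case $f\LAN\ov\sigma\RAN(\ov t\,)$ the extra arguments $\typex{\FILT{f}{\ov\sigma}}$ are interpreted in $\NN$ as exactly the tuple $\bigl([\SEM{\sigma_{i_j}}^\MM_\theta]\bigr)_j$ supplied by the definition of $f^\MM$ (via the first induction), the predicate case is identical, and all other cases are immediate. To finish, every $\phi\in\PHI$ has the form $\forall\ov\a.\;\phi_0$ with $\phi_0$ type-quantifier-free (by the syntactic restriction on type quantifiers), so $\argsfx{\phi}$ is $\forall\ov\a.\;\forall\typex{\ov\a}\mathbin:\jjj.\;\argsfx{\phi_0}$ (iterating the type-quantifier clause); instantiating the $\ov\a$-quantifiers of $\NN\models\argsfx{\phi}$ with an arbitrary tuple $\ov D$ over $\dom^\MM$ and the $\typex{\ov\a}$-quantifiers with $\bigl([D_1],\ldots,[D_m]\bigr)$ yields $\SEM{\phi_0}^\MM_{\theta[\ov\a\mapsto\ov D]}$ by the second induction, and since the set of added axioms is empty, $\MM\models\PHI$.
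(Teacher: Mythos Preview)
Your proposal is correct and follows essentially the same route as the paper: reduce via Theorem~\ref{thm:completeness-of-erased} to completeness of $\argsfx{\phantom{i}}$, invoke Lemma~\ref{lem-syntactic-domains} to obtain a model whose type interpretation is bijective onto the domain family, and build the $\Sigma$-structure by feeding each symbol's filtered type-argument slots with the term encodings of the corresponding domains. The only organisational difference is that you restrict the domain family to the $\GType_\Sigma$-indexed part and handle the outer type quantifiers separately (getting an equality on type-quantifier-free formulae), whereas the paper keeps all domains and proves a one-directional implication precisely because of the $\forall\alpha$ case; both choices work.
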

\begin{proof}
%
%
Recall that \args\filter{} is defined as the composition of
$\argsfx{\phantom{i}}$ and $\erasedx{\phantom{i}}$. Since
$\erasedx{\phantom{i}}$ is complete by Theorem~\ref{thm:completeness-of-erased},
it suffices to show that $\argsfx{\phantom{i}}$ is complete.
Let
$\NN = \bigl(\dom',\allowbreak(k^{\NN})_{k \in \KK' \mathrel{\uplus} \{\jjj\}},\allowbreak(\ff^{\,\NN})_{\ff \in \FF \mathrel{\uplus} \KK'},(\pp^{\NN})_{\pp \in \PP}\bigr)$
be a model of $\argsfx{\PHI}$.
We will construct a
structure $\MM = \bigl(\dom,\allowbreak (k^\MM)_{k \in \KK},\allowbreak(\ff^{\,\MM})_{\ff \in \FF},\allowbreak(\pp^\MM)_{\pp \in \PP}\bigr)$ for the signature of
$\PHI$ by taking the same domains as $\MM'$,  interpreting the type constructors other than $\jjj$ in the same way,
and interpreting the function and predicate symbols $s$ as in $\NN$, but supplying, for the extra arguments,
a suitable tuple from $\SEM{\jjj}^{\NN}$
that reflects the type arguments.

To this end, we first apply Lemma \ref{lem-syntactic-domains} to obtain
that every element of $\dom'$ is uniquely represented as $\SEM{\tau}^{\NN}$
with $\tau \in \GType_{\Sigma'}$.
We define
\begin{itemize}
\item $\dom = \dom'$ and $k^\MM = k^{\NN}$;
\item if $s \in \FF \mathrel{\uplus} \PP$, then
$$s^{\,\MM}(\SEM{\ov{\tau}}^{\NN})(\ov{d}) =
\begin{cases}
 s^{\,\NN}
  (\SEM{\ov{\tau}}^{\NN})
  (\SEM{\typex{\tsfont{\textsl{x}}_{s}(\ov{\tau})}}^{\NN},
   \ov{d}) & \mbox{if $\tsfont{\textsl{x}}_{s}(\ov{\tau}) \in \GType_{\Sigma}^n$}
\\
 \mbox{anything} & \mbox{otherwise} 
\end{cases}
$$
where $n$ is the length of $\tsfont{\textsl{x}}_{s}(\ov{\tau})$.
\end{itemize}
Note how the tuple $\SEM{\typex{\tsfont{\textsl{x}}_{s}(\ov{\tau})}}^{\NN}$ reflects the $\tsfont{\textsl{x}}_{s}$-selection from the
type arguments $\SEM{\ov{\tau}}^{\NN}$.

Given $\theta : \mathcal A \ra \dom$ and $\xi : {\mathcal V} \ra \prod_{\sigma \in \Type_{\Sigma}} \SEM{\sigma}^\MM_\theta$,
we define $\xi' : {\mathcal V} \ra \prod_{\sigma' \in \Type_{\Sigma'}} \SEM{\sigma'}^{\NN}_\theta$ by
$$\xi'(X)(\sigma') =
\begin{cases}
 \xi'(X)(\sigma') & \mbox{if $\sigma' \in \Type_{\Sigma}$}
\\
 \SEM{\typex{\tau}}^{\NN}
 & \mbox{if $\sigma' = \jjj$, $X = \VV(\alpha)$, and $\theta(\alpha) = \SEM{\tau}^{\NN}$ for $\tau \in \GType_\Sigma$}
\\
 \mbox{anything}
 & \mbox{otherwise}
\end{cases}
$$
The next facts follow by structural induction on $t$ and $\phi$ (for arbitrary $\theta$ and $\xi$):
\begin{itemize}
\item $\SEM{t}^\MM_{\theta,\xi} = \SEM{\argsfx{t}}^{\NN}_{\theta,\xi'}$;
\item $\SEM{\argsfx{\phi}}^{\NN}_{\theta,\xi'}$ implies $\SEM{\phi}^\MM_{\theta,\xi}$.
\end{itemize}
(The reason why for formula interpretation we have only implication and not equality is the $\forall\alpha$ case:\
when encoding a universally quantified formula
$\forall \a.\; \phi$, the result
$\forall \a.\> \forall\typex{\a} \mathbin: \jjj.\; \argsfx{\phi}$
introduces quantification over two variables, $\a$ and $\typex{\a}$, whose interpretations need not be synchronised.
As a result,
$\forall \a.\>\forall\typex{\a} \mathbin: \jjj.\; \argsfx{\phi}$ could be stronger than $\forall \a.\,\phi$.)
In particular, for a sentence $\phi$, $\SEM{\argsfx{\phi}}^{\NN}$ implies $\SEM{\phi}^\MM$,
and hence $\MM$ is a model of $\PHI$ because $\NN$ is a model of $\argsfx{\PHI}$.
\QED
%
\end{proof}


\subsection{Type Tags}
\label{ssec:type-tags}

An intuitive approach to encode type information soundly
(and completely)
is to wrap each term and subterm with its type using type tags.
For polymorphic type systems, this scheme relies on a distinguished binary
function $\ti(\typex{\t}, t)$ that ``annotates'' each term~$t$ with its
type~$\t$ encoded as a term $\typex{\t}$. The tags make most type arguments
superfluous. We call this
scheme $\tags$, after the tag function of the same name. It
is defined as a two-stage process:\ the first stage adds tags $\ti\LAN\t\RAN(t)$
while preserving the polymorphism; the second stage encodes $\ti$'s type
argument as well as any phantom type arguments.

\begin{defi}[Phantom Type Argument]\afterDot
\label{def:phantom-type-argument}%
Let $s : \forall\a_1,\ldots,\a_m.\allowbreak\; \bar\t
\to \varsigma \in \mathcal{F} \mathrel{\uplus} \mathcal{P}$.
The $i$\vthinspace th type argument is a \emph{phantom} if $\a_i$ does not occur
in $\bar\t$ or $\varsigma$. Given a list $\bar z \equiv z_1,\ldots,z_m$,
$\PHAN{s}{\bar z}$ denotes the sublist $z_{i_1}, \ldots,
z_{i_{\smash{m'}}}$ corresponding to the positions in $\ov{\alpha}$ of the phantom type arguments.
\end{defi}

\begin{defi}[Traditional Tags \tags{}]\afterDot
We first define the encoding $\tagsx{\phantom{i}}$ that translates polymorphic problems over
$\SIGMA = (\KK, \FF, \PP)$ to polymorphic problems over $(\KK, \FF \mathrel{\uplus} \{\ti : \forall \alpha.\;\alpha \ra \alpha\}, \PP)$.
It adds no axioms, and its term and formula translations are defined as follows:
\begin{align*}
\tagsx{f\LAN\t\RAN(\bar t\,)} & \,=\, \CT{}{f\LAN\t\RAN(\tagsx{\bar t\,})} &
\tagsx{\mathit{X}^\sigma} & \,=\, \CT{}{\tagsx{\mathit{X}^\sigma}} &
  \text{with}\;\,\CT{}{t\typ\t} & \,=\, \ti\LAN\t\RAN(t)
\end{align*}
The \emph{traditional type tags} encoding \tags{}
is the composition $\encode{\phantom{i}}{\tags\tinycomma\args^\phan\tinycomma\erased}$.
It translates a polymorphic problem over $\SIGMA$ into an untyped problem
over $\SIGMA' = (\mathcal{F}' \mathrel{\uplus} \mathcal{K}' \mathrel{\uplus} \{\ti^2\}, \PP')$, where
$\KK',\FF',\PP'$ are as for \args$^\phan$ (i.e.\ $\args\filter$ with $\filt = \phan$).
\end{defi}

\begin{exa}%
\label{ex:algebraic-lists-tags}%
The \tags{} encoding translates the algebraic list problem of
Example~\ref{ex:algebraic-lists} into
the following equisatisfiable untyped problem:%
\begin{quotex}
$\forall A, \mathit{X}\!, \mathit{Xs}.\;\,
  \ti(\const{list}(A), \const{nil}) \not\eq \ti(\const{list}(A),
    \const{cons}(\ti(A, \mathit{X}), \ti(\const{list}(A), \mathit{Xs})))$ \\
$\forall A, \mathit{Xs}.\;\,
  \ti(\const{list}(A), \mathit{Xs}) \eq \ti(\const{list}(A), \const{nil}) \mathrel{\lor} {}$ \\
$\phantom{\forall A, \mathit{Xs}.\;\,}
  (\exists \mathit{Y}\!, \mathit{Ys}.\;\, \ti(\const{list}(A), \mathit{Xs}) \eq \ti(\const{list}(A), \const{cons}(\ti(A, \mathit{Y}), \ti(\const{list}(A), \mathit{Ys}))))$ \\
$\forall A, \mathit{X}\!, \mathit{Xs}.\;\,
  \ti(A, \const{hd}(\ti(\const{list}(A), \const{cons}(\ti(A, \mathit{X}), \ti(\const{list}(A), \mathit{Xs}))))) \eq \ti(A, \mathit{X})
  \mathrel\land {}$ \\
$\phantom{\forall A, \mathit{X}\!, \mathit{Xs}.\;\,}
 \ti(\const{list}(A), \const{tl}(\ti(\const{list}(A), \const{cons}(\ti(A, \mathit{X}), \ti(\const{list}(A), \mathit{Xs}))))) \eq \ti(\const{list}(A), \mathit{Xs})$ \\
$\exists \mathit{X}\!, \mathit{Y}\!, \mathit{Xs}, \mathit{Ys}.\;\,
\ti(\const{list}(\const{\www}), \const{cons}(\ti(\const{\www}, \mathit{X}), \ti(\const{list}(\const{\www}), \mathit{Xs}))) \eq {}$ \\
$\phantom{\exists \mathit{X}\!, \mathit{Y}\!, \mathit{Xs}, \mathit{Ys}.\;\,}
  \quad \ti(\const{list}(\const{\www}), \const{cons}(\ti(\const{\www}, \mathit{Y}), \ti(\const{list}(\const{\www}), \mathit{Ys}))) \mathrel\land {}$ \\
$\phantom{\exists \mathit{X}\!, \mathit{Y}\!, \mathit{Xs}, \mathit{Ys}.\;\,}
 (\ti(\const{\www}, \mathit{X}) \not\eq \ti(\const{\www}, \mathit{Y}) \mathrel{\lor} \ti(\const{list}(\const{\www}), \mathit{Xs}) \not\eq \ti(\const{list}(\const{\www}), \mathit{Ys}))$
\end{quotex}
Since there are no phantoms in this example, \args$^\phan$ adds no extra arguments.
All type information is carried
by the $\ti$ function's first argument.
\end{exa}

\begin{exa}\label{ex:linorder}
Consider the following formula, with $\const{linorder} : \forall\a.\>\bool$
and $\const{less\_eq} : \forall\a.\; \a \times \a \to \bool$:
\begin{quotex}
$\forall \a.\> \forall \mathit{X} : \a, \mathit{Y} : \a.\;\, \const{linorder}\LAN\a\RAN \rightarrow \const{less\_eq}(\mathit{X}, Y) \mathrel\lor \const{less\_eq}(Y, \mathit{X})$
\end{quotex}
The $\a$ variable in $\const{linorder}$'s arity declaration is a
phantom. The \tags{} encoding preserves it as an explicit term
argument:
\begin{quotex}
$\forall A, \mathit{X}\!, \mathit{Y}.\;\, \const{linorder}(A) \rightarrow \const{less\_eq}(\ti(A, \mathit{X}),\, \ti(A, Y)) \mathrel\lor \const{less\_eq}(\ti(A, Y),\, \ti(A, \mathit{X}))$
\end{quotex}
As the formula suggests, phantom type arguments can be used to encode predicates on types,
mimicking type classes \cite{wenzel-1997}.
\end{exa}

We postpone the proof of the following theorem to
Section~\ref{ssec:cover-based-type-tags}, in the context of our improved encodings:

\begin{sloppy}
\begin{thm}[Correctness of \tags]\afterDot
\label{thm:correctness-of-tags}%
The traditional type tags encoding \hbox{\rm\tags} is \hbox{correct}.
\end{thm}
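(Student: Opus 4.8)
Recall that $\tags$ is the composition $\encode{\phantom{i}}{\tags\tinycomma\args^\phan\tinycomma\erased}$, whose first stage $\tagsx{\phantom{i}}$ preserves polymorphism while wrapping every term and subterm in a $\ti$-application. I would prove completeness and soundness separately, in each case peeling off this layering.

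\emph{Completeness.} Since $\erasedx{\phantom{i}}$ is complete (Theorem~\ref{thm:completeness-of-erased}) and $\argsfx{\phantom{i}}$, hence its instance $\argspx{\phantom{i}}$, is complete (Theorem~\ref{thm:completeness-of-argsx}), it suffices to show that $\tagsx{\phantom{i}}$ is complete. So let $\NN$ be a model of $\tagsx{\PHI}$ over $\SIGMA$ extended with $\ti : \forall\a.\;\a\to\a$; by Lemma~\ref{lem-syntactic-domains} I may assume its domains are pairwise disjoint and in bijection with the ground types. I would build a model $\MM$ of $\PHI$ whose domain for each ground $\tau$ is the image $E_\tau$ of $\ti^\NN$ at the type argument $\SEM{\tau}^\NN$ (a subset of $\SEM{\tau}^\NN$, nonempty since $\SEM{\tau}^\NN$ is), interpret the type constructors through the bijection, and interpret each $f \in \FF$ (resp.\ $\pp \in \PP$) by post-composing (resp.\ restricting) its $\NN$-interpretation with the appropriate tag maps so that all values stay inside the sets $E_\tau$. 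The heart is a routine structural induction showing that for every $\MM$-valuation $(\theta,\xi)$ there is an $\NN$-valuation under which $\SEM{t}^\MM_{\theta,\xi} = \SEM{\tagsx{t}}^\NN$ and the truth value of every formula is preserved. Crucially this needs no idempotence of $\ti^\NN$: a variable $\mathit{X}^\sigma$ translates to $\ti\LAN\sigma\RAN(\mathit{X}^\sigma)$, so whatever value $x \in E_\sigma$ an $\MM$-valuation assigns to it can be matched by assigning any $\NN$-preimage of $x$ to $\mathit{X}$, and a term quantifier $\forall \mathit{X} \mathbin: \sigma.\;\phi$ over $E_\sigma$ matches the $\NN$-side quantifier over $\SEM{\sigma}^\NN$ because the tag map surjects onto $E_\sigma$ and $\tagsx{\phi}$ sees $\mathit{X}$ only through $\ti\LAN\sigma\RAN(\mathit{X})$.

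\emph{Soundness.} The tag-insertion stage is trivially sound: given $\MM \models \PHI$, interpreting $\ti$ as the identity on each domain makes the inserted tags semantic no-ops, so the same $\MM$ (with this interpretation of $\ti$) models $\tagsx{\PHI}$. The substantive claim is that $\argspx{\phantom{i}}$ followed by $\erasedx{\phantom{i}}$ sends the tagged polymorphic problem $\tagsx{\PHI}$ to a satisfiable untyped problem. Again normalising $\MM$ via Lemma~\ref{lem-syntactic-domains}, I would take the untyped domain to be the disjoint union of all the $\SEM{\tau}^\MM$ together with a fresh copy $T$ of $\GType_\Sigma$ of ``type representatives,'' and fix a total surjection $\tau(\cdot)$ onto $\GType_\Sigma$ sending each representative to its type and each value to the (unique) ground type of its home domain. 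Then: $\const k$ is interpreted so as to produce the representative of $k(\tau(\ov{d}))$; the tag symbol by $\ti^\NN(d,e) = e$ if $e \in \SEM{\tau(d)}^\MM$ and $\ti^\NN(d,e) = \eps{\SEM{\tau(d)}^\MM}$ otherwise; and each $f \in \FF$ by recovering the non-phantom type arguments from the home domains of its value arguments, reading the phantom ones off its explicit leading arguments via $\tau(\cdot)$, and applying $f^\MM$ at the resulting ground instance (with an arbitrary default on type-inconsistent inputs); $\pp \in \PP$ similarly. A structural induction then yields that $\SEM{\typex{\sigma}}^\NN$ is the representative of the relevant ground instance of $\sigma$, that the interpretation of the full translation of a term $t$ equals $\SEM{t}^\MM$ seen inside the big domain, and that the truth of every sentence is preserved; hence $\NN \models \encode{\PHI}{\tags\tinycomma\args^\phan\tinycomma\erased}$.

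The main obstacle lies entirely in the soundness direction, and within it two points need care. First, type quantification: a translated $\forall \a.\;\phi$ becomes an ordinary quantifier over the term variable $\VV(\a)$, so the translation must still hold when $\VV(\a)$ is instantiated by a ``junk'' element that is not a genuine type representative. Making $\tau(\cdot)$ \emph{total}\dash so that every element, junk included, denotes \emph{some} fixed ground type\dash is exactly what reduces such an instance to a bona fide ground instance of $\phi$, which holds because $\MM \models \forall \a.\;\phi$. Second, since $\args^\phan$ keeps only the phantom type arguments of $f$, the interpretation $f^\NN$ must reconstruct the remaining type arguments from its value arguments; this is well defined precisely because $\MM$ was normalised to have pairwise disjoint domains in bijection with the ground types, and, since every argument of a translated application is itself $\ti$-wrapped, the reconstruction always succeeds inside a translated formula and never hits the default branch. (This theorem is in fact subsumed by the correctness of the cover-based tags encoding $\tags\at{}$ proved in Section~\ref{ssec:cover-based-type-tags}, whose argument carries out exactly this construction in a more general setting.)
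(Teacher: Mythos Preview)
Your completeness argument is essentially the paper's: restrict each domain to the image of the tag function and push the function and predicate interpretations through the tag ``interface''. That part is fine.

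The soundness direction, however, has a real gap, and your final parenthetical is factually wrong. The traditional encoding \tags{} is \emph{not} a special case of \tags\at{}: \tags\at{} uses $\args^\ninf$ and adds typing axioms, whereas \tags{} uses $\args^\phan$ and adds none. The paper states this explicitly before giving a separate proof for \tags{}.

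The gap in your construction is precisely the mismatch between ``non-phantom'' and ``inferable''. You propose to interpret each $f$ in the untyped model by reading the phantom type arguments off the explicit extra arguments and ``recovering the non-phantom type arguments from the home domains of its value arguments''. But a type argument can be non-phantom yet non-inferable: it occurs in the result type $\sigma$ but not in any argument type $\sigma_i$. The running example already exhibits this: for $\const{nil} : \forall\a.\;\mathit{list}(\a)$, the variable $\a$ is not a phantom (it occurs in the result), so $\args^\phan$ drops it, and it is not inferable (there are no term arguments), so there is no ``home domain'' to read it from. Your $\const{nil}^\NN$ is therefore a single fixed element, and your tag interpretation $\ti^\NN(d,e)$ either returns $e$ or the default $\eps{\SEM{\tau(d)}^\MM}$; in neither case do you obtain $\const{nil}^\MM$ at the intended instance. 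Your claimed invariant ``$\SEM{\encode{t}{\tags\tinycomma\args^\phan\tinycomma\erased}}^\NN = \SEM{t}^\MM$'' then fails already for $t = \const{nil}\LAN\a\RAN$, and with it the soundness argument for axioms such as $\const{nil} \not\eq \const{cons}(X,\mathit{Xs})$.

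The paper's fix is to enlarge the untyped domain with \emph{polymorphic values}: elements of $\prod_{D\in\dom} D$. A well-typed application $f^\NN(\ldots)$ returns such a value, which records ``what $f^\MM$ would give once the missing result-type information is supplied''; the tag interpretation then has a third clause, $\ti^\NN(a,b) = b(a)$ when $b$ is a polymorphic value, which resolves it using the surrounding tag's type argument. This works for \tags{} precisely because \emph{every} term is tagged, so every polymorphic value is immediately resolved; it would not work for \tags\at{}, which is why the two proofs must differ.
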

\end{sloppy}

\subsection{Type Guards}
\label{ssec:type-guards}

Type tags heavily burden the terms. An alternative is to
introduce type guards, which are predicates that restrict the range
of variables. They take the form of a
distinguished predicate $\is(\typex{\t}, t)$ that checks whether $t$
has type~$\t$. The terms are smaller than with tags, but the formulae
contain more disjuncts.

With the type tags encoding, only phantom type arguments need to be encoded;
here, we must encode any type arguments that cannot be read off the types of the
term arguments. Thus, the type argument is encoded for
$\const{nil}\LAN\a\RAN$ (which has no term arguments) but omitted for
$\const{cons}\LAN\a\RAN(\mathit{X}\!,\mathit{Xs})$,
$\const{hd}\LAN\a\RAN(\mathit{Xs})$, and
$\const{tl}\LAN\a\RAN(\mathit{Xs})$.

\begin{defi}[Inferable Type Argument]\afterDot
\label{def:inferable-type-argument}%
Let $s : \forall\a_1,\ldots,\a_m.\allowbreak\;
\bar\t \to \varsigma \in \mathcal{F} \mathrel{\uplus} \mathcal{P}$.
A type argument $\alpha_j$ is \emph{inferable} if it occurs in some of the
term arguments' types, i.e.\ if there exists
an index $i$ 
such that
$\alpha_j$ occurs in $\sigma_i$.
Given a list $\bar z \equiv z_1,\ldots,z_m$,
let $\INF{s}{\bar z}$ denote the sublist
$z_{i_1}, \ldots, z_{i_{\smash{m'}}}$ corresponding to the positions in $\ov{\alpha}$ of the inferable type
arguments, and let $\NINF{s}{\bar z}$ denote the sublist for noninferable
type arguments.
\end{defi}

Observe that a phantom type argument is in particular noninferable, and a
noninferable nonphantom type argument is one that appears in $\varsigma$ but not
in
$\ov{\sigma}$.

\begin{sloppy}
\begin{defi}[Traditional Guards \guards{}]\afterDot
We first define the encoding $\guardsx{\phantom{i}}$, which translates
a polymorphic problem over $\SIGMA = (\KK,\FF,\PP)$
into an untyped problem over $(\KK,\FF,\PP \mathrel{\uplus} \{\is : \forall \alpha.\;\alpha \ra \bool\})$.
Its term and formula translations are defined as follows:
\begin{align*}
\guardsx{\forall \mathit{X} \mathbin: \t.\;\, \phi} & \,=\, \forall \mathit{X} \mathbin: \t.\;\, \is\LAN\t\RAN(\mathit{X}) \rightarrow \guardsx{\phi} &\enskip 
\guardsx{\exists \mathit{X} \mathbin: \t.\;\, \phi} & \,=\, \exists \mathit{X} \mathbin: \t.\;\, \is\LAN\t\RAN(\mathit{X}) \mathrel\land \guardsx{\phi}
\end{align*}
The encoding also adds the following \emph{typing axioms}:
\[\!\begin{aligned}[t]
& \textstyle \forall \bar\a.\; \mathit{\bar X} \mathbin: \bar\t.\;\,
\bigl(\bigwedge\nolimits_{\smash{j}}\; \is\LAN\t_{\!j}\RAN(X_{\negvvthinspace j})\bigr) \rightarrow
\is\LAN\t\RAN(f\LAN\bar\a\RAN(\mathit{\bar X}))
  &\enskip& \text{for~}\ff : \forall\bar\a.\; \bar\t \to \t \in \mathcal{F} \\[\betwtypax]
& \forall \a.\; \exists \mathit{X} \mathbin: \a.\;\, \is\LAN\a\RAN(\mathit{X})
\end{aligned}\]
(Following Convention \ref{con-encodings},
the translation of a problem is
given by \vthinspace$\guardsx{\PHI} \vthinspace=\vthinspace \AX \mathrel\cup  \{\guardsx{\phi} \mid \phi\in\PHI\}$,
where \AX{} are the typing axioms.)
The \emph{traditional type guards} encoding~\guards{}
is defined as the composition $\encode{\phantom{i}}{\guards\tinycomma\args^\ninf\tinycomma\erased}$.
It translates a polymorphic problem over~$\SIGMA$
into an untyped problem over $\SIGMA' = (\FF' \mathrel{\uplus} \KK',
\PP' \mathrel{\uplus} \{\is^2\})$, where
$\KK',\FF',\PP'$ are as for \args$^\ninf$.
\end{defi}
\end{sloppy}

The last typing axiom in the above definition
witnesses inhabitation of every type. It is necessary for
completeness, in case some of the types do not appear in the result
type of any function symbol.


\begin{exa}%
\label{ex:algebraic-lists-guards}%
The \guards{} encoding translates the algebraic list problem of
Example~\ref{ex:algebraic-lists} into the following:
\begin{quotex}
$\forall A.\;\, \is(\const{list}(A), \const{nil}(A))$ \\
$\forall A, \mathit{X}\!, \mathit{Xs}.\;\, \is(A, \mathit{X}) \mathrel\land \is(\const{list}(A), \mathit{Xs}) \rightarrow \is(\const{list}(A), \const{cons}(\mathit{X}\!, \mathit{Xs}))$ \\
$\forall A, \mathit{Xs}.\;\, \is(\const{list}(A), \mathit{Xs}) \rightarrow \is(A, \const{hd}(\mathit{Xs}))$ \\
$\forall A, \mathit{Xs}.\;\, \is(\const{list}(A), \mathit{Xs}) \rightarrow \is(\const{list}(A), \const{tl}(\mathit{Xs}))$ \\
$\forall A.\; \exists \mathit{X}.\;\, \is(A, \mathit{X})$ \\[\betweenaxs]
$\forall A, \mathit{X}\!, \mathit{Xs}.\;\,
  \is(A, \mathit{X}) \mathrel\land \is(\const{list}(A), \mathit{Xs}) \rightarrow
  \const{nil}(A) \not\eq \const{cons}(\mathit{X}\!, \mathit{Xs})$ \\
$\forall A, \mathit{Xs}.\;\,
  \is(\const{list}(A), \mathit{Xs}) \rightarrow {}$ \\
$\phantom{\forall A, \mathit{Xs}.\;\,}
  \mathit{Xs} \eq \const{nil}(A) \mathrel{\lor}
  (\exists \mathit{Y}\!, \mathit{Ys}.\;\,
  \is(A, \mathit{Y}) \mathrel\land \is(\const{list}(A), \mathit{Ys}) \mathrel\land
  \mathit{Xs} \eq \const{cons}(\mathit{Y}\!, \mathit{Ys})
  )$ \\
$\forall A, \mathit{X}\!, \mathit{Xs}.\;\,
  \is(A, \mathit{X}) \mathrel\land \is(\const{list}(A), \mathit{Xs}) \rightarrow
  \const{hd}(\const{cons}(\mathit{X}\!, \mathit{Xs})) \eq \mathit{X} \mathrel\land \const{tl}(\const{cons}(\mathit{X}\!, \mathit{Xs})) \eq \mathit{Xs}$ \\
$\exists \mathit{X}\!, \mathit{Y}\!, \mathit{Xs}, \mathit{Ys}.\;\,
\is(\const{\www}, \mathit{X}) \mathrel\land \is(\const{\www}, \mathit{Y}) \mathrel\land \is(\const{list}(\const{\www}), \mathit{Xs}) \mathrel\land
\is(\const{list}(\const{\www}), \mathit{Ys}) \mathrel\land {}$ \\
$\phantom{\exists \mathit{X}\!, \mathit{Y}\!, \mathit{Xs}, \mathit{Ys}.\;\,}
\const{cons}(\mathit{X}\!, \mathit{Xs}) \eq \const{cons}(\mathit{Y}\!, \mathit{Ys})
\mathrel\land (\mathit{X} \not\eq \mathit{Y} \mathrel{\lor} \mathit{Xs} \not\eq \mathit{Ys})$
\end{quotex}
In this and later examples, we push guards inside past quantifiers and group
them in a conjunction to enhance readability.

The typing axioms must in general be guarded. Without the guards, any
\const{cons}, \const{hd}, or \const{tl} term could be typed with any type,
defeating the purpose of the guard predicates.
\end{exa}

\begin{exa}
Consider the following formula, where
$\const{inl} : \forall \a,\, \b.\; \a \to \mathit{sum}(\a, \b)$ and
$\const{inr} : \forall \a,\, \b.\allowbreak\; \b \to \mathit{sum}(\a, \b)$:
\begin{quotex}
$\forall \a, \b.\; \forall \mathit{X} : \a,\, \mathit{Y} : \b.\;\, \const{inl}\LAN \a,\b\RAN(\mathit{X}) \not\eq \const{inr}\LAN \a,\b\RAN(\mathit{Y})$
\end{quotex}
The $\b$ variable in \const{inl}'s arity declaration and the $\a$ in
\const{inr}'s are noninferable. The \guards{} encoding preserves them as explicit term
arguments:
\begin{quotex}
$\forall A, B, \mathit{X}\!, \mathit{Y}\!.\;\,
  \is(A, \mathit{X}) \mathrel\land \is(B, \mathit{Y}) \rightarrow
  \const{inl}(B, \mathit{X}) \not\eq \const{inr}(A, \mathit{Y})$
\end{quotex}
\end{exa}

\begin{thm}[Correctness of \guards]\afterDot
\label{thm:correctness-of-guards}%
The traditional type guards encoding\/ \hbox{\rm\guards} is correct.
\end{thm}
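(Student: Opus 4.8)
The plan is to exploit the two-stage structure of the \guards{} encoding, which by definition factors as the polymorphism-preserving guard insertion $\guardsx{\phantom{i}}$ followed by the generic type-arguments encoding \args\filter{} for the filter $\filt = \ninf$, and to treat completeness and soundness asymmetrically, since \args{} and \erased{} are unsound on arbitrary problems and it is the guards that rescue soundness.

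\emph{Completeness.} The tail $\encode{\phantom{i}}{\args^\ninf\tinycomma\erased}$ is exactly the \args\filter{} encoding for $\filt = \ninf$ and is therefore complete by Theorem~\ref{thm:completeness-of-argsx}, so it suffices to prove that $\guardsx{\phantom{i}}$ is complete. Let $\NN$ be a model of $\guardsx{\PHI}$ (so $\NN$ satisfies the typing axioms as well as the translated formulae); by Lemma~\ref{lem-syntactic-domains} we may assume its domains are pairwise disjoint and indexed bijectively by $\GType$. Build a structure $\MM$ over the original signature by shrinking the domain of each ground type $\tau$ to its \emph{guarded subset} $G_\tau$, namely the set of $d \in \SEM{\tau}^\NN$ for which $\is\LAN\tau\RAN(d)$ holds in $\NN$, setting $k^\MM(G_{\tau_1},\dots,G_{\tau_n}) = G_{k(\tau_1,\dots,\tau_n)}$, and taking $f^\MM$ and $p^\MM$ to be the restrictions of $f^\NN$ and $p^\NN$. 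The inhabitation axiom $\forall\a.\,\exists X \mathbin: \a.\,\is\LAN\a\RAN(X)$ makes every $G_\tau$ nonempty, the guarded typing axioms make the restrictions of the $f^\NN$ land in the appropriate $G$'s, and the $G_\tau$ are distinct (nonempty subsets of pairwise disjoint sets), so $\MM$ is a well-defined structure. Since the term translation of \guards{} is the identity, a straightforward induction on formulae gives $\SEM{\phi}^\MM_{\theta,\xi} = \SEM{\guardsx{\phi}}^\NN_{\theta,\xi}$; the only nontrivial case is the term quantifier, where $\SEM{\guardsx{\forall X \mathbin: \tau.\,\phi}}^\NN$ asserts that $\SEM{\guardsx{\phi}}^\NN$ holds for every guarded $a \in \SEM{\tau}^\NN$, i.e.\ for every $a \in G_\tau$, which by the induction hypothesis is precisely $\SEM{\forall X \mathbin: \tau.\,\phi}^\MM$ (dually for $\exists$, and trivially for type quantification using the bijection between the domain collections of $\MM$ and $\NN$). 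Hence $\MM \models \PHI$.

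\emph{Soundness.} Here I would construct an untyped model directly from a polymorphic model $\MM$ of $\PHI$. Normalise $\MM$ by Lemma~\ref{lem-syntactic-domains} so its domains are pairwise disjoint and indexed bijectively by $\GType$. Take the untyped domain to be the disjoint union of all these value domains together with a fresh copy $\widehat{\GType}$ of the ground types and one extra sentinel $\bot \in \widehat{\GType}$; interpret each encoded type constructor $\const k$ so that it acts as $k$ on $\widehat{\GType}$ and maps any tuple containing a value element to $\bot$; interpret $\is$ so that $\is(a,d)$ holds exactly when $a = \widehat{\tau}$ for some ground type $\tau$ and $d$ lies in the value domain of $\tau$; and interpret each function symbol $f$ (resp.\ predicate symbol $p$) by recovering its type substitution from the domains of its term arguments (for inferable type variables) and from the extra term arguments supplied by $\args^\ninf$ (for noninferable ones), then applying $f^\MM$ (resp.\ $p^\MM$) when those data are mutually consistent and returning a fixed default otherwise. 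One then checks, via a substitution lemma relating the value of the encoded type term $\typex{\sigma}$ under an arbitrary untyped valuation to the ground type it denotes (or to $\bot$ as soon as some relevant type variable is bound to a value element), that the encoded typing axioms hold and that the \guards{}-translation of every $\phi \in \PHI$ holds. An alternative is to first prove $\guardsx{\phantom{i}}$ sound — trivially, by interpreting $\is\LAN\tau\RAN$ as the whole of $\SEM{\tau}^\MM$ — and then observe that in $\guardsx{\PHI}$ every quantified variable is protected by a guard, which is exactly what neutralises the cardinality reasoning that would otherwise defeat $\args^\ninf$ and \erased{}; but the verification of that step amounts to the same disjoint-union construction.

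\emph{Main obstacle.} The real work is entirely in the soundness construction, and lies in the fact that after $\args^\ninf$ and \erased{} the term variables that encode type variables are ordinary untyped variables ranging over the whole domain, value elements included. The interpretations of the $\const k$'s, of $\is$, and of the defaulted $f$'s and $p$'s must be arranged so that any garbage binding of such a variable forces the relevant guard literal to \emph{false} (via the $\bot$ sentinel) and hence cannot falsify a translated implication or a typing axiom; making this precise requires a careful substitution lemma for $\typex{\sigma}$ together with a case analysis on whether each type variable of a symbol's arity occurs inferably, noninferably, or only as a phantom. The completeness direction, by contrast, is essentially just the restriction-to-the-guarded-subdomain argument above.
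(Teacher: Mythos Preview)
Your proposal is correct, but it takes a different route from the paper. The paper's proof is a single line: it observes that the traditional encoding \guards{} is the special case of the cover-based encoding \guards\at{} in which every symbol is assigned its maximal cover, and then invokes Theorem~\ref{thm:correctness-of-guards-at} (correctness of \guards\at{}). All the work is deferred to that later theorem.

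What you do instead is give a direct argument, and the constructions you sketch are precisely those that appear in the paper's proof of Theorem~\ref{thm:correctness-of-guards-at}, specialised to maximal covers. Your completeness half---restricting each domain to its guarded subset $G_\tau$, with nonemptiness and closure supplied by the typing axioms---is exactly the \textsc{Complete} part of that proof. Your soundness half---a disjoint union of the value domains together with a copy of the ground types, $\is$ as membership, and function and predicate interpretations that recover the inferable type arguments from the domains of the term arguments and the noninferable ones from the explicit $\args^\ninf$ arguments, defaulting on inconsistent input---is the \textsc{Sound} part. The paper uses $\eps{\dom}$ as the garbage default for type-constructor applications rather than an explicit sentinel $\bot$, but this is cosmetic. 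You also correctly identify the genuine difficulty: after erasure the encoded type variables $\VV(\alpha)$ range over the whole untyped domain, so one must argue that binding them to value elements cannot falsify a guarded implication or a typing axiom. The paper handles this via a careful choice of the valuation $\xi'$ (exploiting that the encoded formulae are sentences) together with the induction establishing its fact~(3); your sentinel-based phrasing makes the same point slightly more explicitly.

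In short: the paper factors through a generalisation you have not seen, while you reprove the special case from scratch. Both arguments rest on the same two model constructions.
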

\begin{proof}
This will be a consequence of Theorem \ref{thm:correctness-of-guards-at} for our parameterised
cover-based encoding \guards\at{}, since \guards{} is a particular case of
\guards\at{}. 
\QED
\end{proof}




\begin{rem}
The above encodings, as well as those discussed in the next sections, all lead
to an untyped problem. An increasing number of automatic provers support
monomorphic types, and it may seem desirable to exploit such support when it is
available. With such provers, we can replace the $\erased$ encoding with a
variant that enforces a basic type discipline by distinguishing two
types, $\jjj$ (for encoded types) and $\iota$ (for encoded terms). An incomplete
(non-proof-preserving) alternative is to perform heuristic monomorphisation
(Section~\ref{ssec:heuristic-monomorphisation}). Hybrid schemes that exploit
monomorphic types, including interpreted types, are also possible and have been
studied by other researchers (Section~\ref{sec:related-work}).
\end{rem}

\section{Cover-Based Encodings of Polymorphism}
\label{sec:alternative-cover-based-encoding-of-polymorphism}


%


Type tags and guards considerably increase the size of the problems passed to
the automatic provers, with a dramatic impact on their performance. A lot of the
type information generated by the traditional encodings \tags{} and \guards{} is
redundant. For example, \tags{} translates
$\const{cons}\LAN\a\RAN(\mathit{X}\!, \mathit{Xs})$ to
$\ti(\const{list}(A), \const{cons}(\ti(A, \mathit{X}), \ti(\const{list}(A), \mathit{Xs})))$,
but intuitively only one of the three tags is necessary to specify the right type
for the expression if we know the arity of \const{cons}. The cover-based encodings
capitalise on this, by supplying only a minimum of protectors and adding typing
axioms that effectively compute the type of function symbols from a selection
of their term arguments' types---the ``cover''.

%
Let us first rigorously define this notion of term arguments
``covering'' type arguments.

\begin{defi}[Cover]\afterDot
\label{def-cover}
Let $s : \forall\bar\a.\; \bar\t \to \varsigma \in \mathcal{F} \mathrel{\uplus} \mathcal{P}$.
A (\emph{type argument}) \emph{cover} $C \subseteq \{1,\ldots,\left|\smash{\bar\t}\right|\}$ for $s$ is a
set of term argument indices such that
any inferable type argument can be inferred from a term argument whose index belongs to $C$,
i.e.\ for all $j$, 
if $\alpha_j$ appears in $\ov{\sigma}$, it also appears
in some $\sigma_i$ such that $i \in C$.
A cover $C$ of $s$
is \emph{minimal} if no proper subset of $C$ is a cover for~$s$; it
is \emph{maximal} if $C = \{1,\ldots,\left|\smash{\bar\t}\right|\}$.
We let $\Cover{s}$ denote an arbitrary but fixed 
cover for $s$.
\end{defi}

In practice, we would normally take a minimal cover for $\Cover{s}$ to reduce
clutter.
Accordingly, $\{1\}$ and $\{2\}$ are minimal covers for $\const{cons} :
\forall\a.\; \a \times \mathit{list}(\a) \to \mathit{list}(\a)$,
whereas $\{1, 2\}$ is a maximal cover.

\begin{conv}\label{cover-cons}
As canonical cover, we arbitrarily choose
$\Cover{\const{cons}} = \{1\}$.
\end{conv}

The cover-based encoding \guards\at{} introduced below
is a generalisation of the traditional encoding \guards{}.
The two encodings coincide if
$\Cover{s}$ is chosen to be maximal for all symbols~$s$.
In contrast, the cover-based encoding \tags\at{} is not exactly
a generalisation of \tags{}, although they share many ideas. For this reason,
we momentarily depart from our general policy of considering tags before guards
so that we can present the easier case first.

Intuitively, \guards\at{} and \tags\at{} ensure
that each term argument position that is part of its enclosing function or
predicate symbol's cover has a unique type associated with it, from
which the omitted type arguments can be inferred. Thus, \tags\at{}
translates $\const{cons}\LAN\a\RAN(\mathit{X}\!, \mathit{Xs})$ to $\const{cons}(\ti(A,
\mathit{X}), \mathit{Xs})$ with a type tag around $\mathit{X}$, effectively
protecting the term from an
ill-typed instantiation of $\mathit{X}$ that would result in the wrong type
argument being inferred for \const{cons}.

There is no need to protect the second argument, $\mathit{Xs}$,
since it is not part of the cover.
We call variables that occur in their enclosing symbol's cover
(and hence that ``carry'' some type arguments) ``undercover variables''.
It may seem dangerous to allow ill-typed terms to instantiate $\mathit{Xs}$,
but this is not an issue because such terms cannot contribute meaningfully to a
proof. At most, they can act as witnesses for the existence of terms of given
types, but even in that capacity they are not necessary.

\begin{defi}[Undercover Variable]\afterDot
The set of \emph{undercover variables} $\UV(\phi)$ of a formula~$\phi$ is
defined by the equations
\begin{align*}
\UV(f\LAN\bar\t\RAN(\bar t\,)) & \,=\,
    \CT{f}{\bar t\,} \mathrel\cup \UV(\bar t\,) &
  \UV(\mathit{X}) & \,=\, \emptyset \\[\betweentf]
\UV(p\LAN\bar\t\RAN(\bar t\,)) & \,=\,
    \CT{p}{\bar t\,} \mathrel\cup \UV(\bar t\,) &
  \UV(t_1 \eq t_2) & \,=\,
      (\{t_1, t_2\} \mathrel\cap \Vt) \mathrel\cup \UV(t_1, t_2) \\
\UV(\lnot\,p\LAN\bar\t\RAN(\bar t\,)) & \,=\,
    \CT{p}{\bar t\,} \mathrel\cup \UV(\bar t\,) &
  \UV(t_1 \not\eq t_2) & \,=\, \UV(t_1, t_2) \\
\UV(\phi_{1\!} \mathrel\land \phi_2) & \,=\, \UV(\phi_1, \phi_2) &
  \UV(\forall \mathit{X}\mathbin:\t.\;\, \phi) & \,=\, \UV(\phi)
\\
\UV(\phi_{1\!} \mathrel\lor \phi_2) & \,=\, \UV(\phi_1, \phi_2) &
  \UV(\exists \mathit{X}\mathbin:\t.\;\, \phi) & \,=\, \UV(\phi) - \{\mathit{X}^\sigma\}
\\[\lvminusbaselineskip]
\UV(\forall\a.\; \phi) & \,=\, \UV(\phi)
\end{align*}
where
$\CT{s}{\bar t\,} \,=\,
\{t_{\negvthinspace j} \mid j \in \Cover{s}\} \mathrel\cap \Vt$
and
$\UV(\bar t\,) = {\textstyle\bigcup_j {\UV(t_{\negvthinspace j})}}$.
\end{defi}

\subsection{Cover-Based Type Guards}
\label{ssec:cover-based-type-guards}

The cover-based encoding \guards\at{} is similar to the traditional encoding
\guards{}, except that it guards only undercover occurrences of variables.

%





\begin{defi}[Cover Guards \guards\at]\afterDot
The encoding $\guardsax{\phantom{i}}$ is defined similarly to the encoding
$\guardsx{\phantom{i}}$ (used for the traditional \guards{} encoding) except for the $\forall$ case in its formula
translation and the typing axioms.  Namely, the $\forall$ case adds guards
only for universally quantified variables that are undercover:
\begin{align*}
\guardsax{\forall \mathit{X} \mathbin: \t.\;\, \phi} & \,=\,
\forall \mathit{X} \mathbin: \t.\;
\begin{cases}
\guardsax{\phi} & \!\!\text{if $\mathit{X} \notin \UV(\phi)$} \\
\is\LAN\t\RAN(\mathit{X}) \rightarrow \guardsax{\phi} & \!\!\text{otherwise}
\end{cases}
\end{align*}
Moreover, the typing axioms take the cover into consideration:
\[\!\begin{aligned}[t]
& \textstyle \forall \bar\a.\; \mathit{\bar X} \mathbin: \bar\t.\;\,
\bigl(\bigwedge\nolimits_{\smash{j\in \Cover{f}}}\; \is\LAN\t_{\!j}\RAN(X_{\negvvthinspace j})\bigr) \rightarrow
\is\LAN\t\RAN(f\LAN\bar\a\RAN(\mathit{\bar X}))
  &\enskip& \text{for~}\ff : \forall\bar\a.\; \bar\t \to \t \in \mathcal{F} \\[\betwtypax]
& \forall \a.\; \exists \mathit{X} \mathbin: \a.\;\, \is\LAN\a\RAN(\mathit{X})
\end{aligned}\]
The \emph{cover-based type guards} encoding~\guards\at\
is defined as the composition $\encode{\phantom{i}}{\guards\at\tinycomma\args^\ninf\tinycomma\erased}$.
It translates a polymorphic problem $\PHI$ over $\SIGMA$
into an untyped problem $\encode{\PHI}{\guards\at\tinycomma\args^\ninf\tinycomma\erased}$
over $\SIGMA' = (\mathcal{F}' \mathrel{\uplus} \mathcal{K}' , \PP' \mathrel{\uplus} \{\is^2\})$, where
$\KK',\FF',\PP'$ are as for \args$^\ninf$. 
\end{defi}


\begin{exa}%
\label{ex:algebraic-lists-guards-at}%
If we choose the cover for $\const{cons}$ as in Convention \ref{cover-cons},
the \guards\at{} encoding of the algebraic list problem is identical
to the \guards{} encoding (Example~\ref{ex:algebraic-lists-guards}),
except that the
guard $\is(\const{list}(A), \mathit{Xs})$
is omitted in typing axiom and one of the problem axioms:
\begin{quotex}
$\forall A, \mathit{X}\!, \mathit{Xs}.\;\, \is(A, \mathit{X}) \rightarrow \is(\const{list}(A), \const{cons}(\mathit{X}\!, \mathit{Xs}))$ \\[\betweenaxs]
$\forall A, \mathit{X}\!, \mathit{Xs}.\;\,
  \is(A, \mathit{X}) \rightarrow
  \const{nil}(A) \not\eq \const{cons}(\mathit{X}\!, \mathit{Xs})$
\end{quotex}
By leaving $\mathit{Xs}$ unconstrained, the typing axiom for $\const{cons}$
gives a type to some
``ill-typed'' terms, such as $\const{cons}(0\typ{\mathit{nat}}, 0\typ{\mathit{nat}})$.
Intuitively, this is safe because such terms cannot be used to prove anything
useful that could not be proved with a ``well-typed'' term. What matters is that
``well-typed'' terms are associated with their correct type and that
``ill-typed'' terms are given at most one type.
\end{exa}

\begin{lem} \label{lem-infer-domains}
Let $\MM = \bigl(\dom,(k^\MM)_{k \in \KK},(\ff^{\,\MM})_{\ff \in \FF},(\pp^\MM)_{\pp \in \PP}\bigr)$ be such
that the domains in\/ $\dom$ are mutually disjoint and the type constructors $k^\MM$ are injective.
Assume\/ $\ov{\alpha} = (\alpha_1,\ldots,\alpha_m)$,
$\ov{\beta} = (\beta_1,\ldots,\beta_n)$,
$\ov{\sigma} = (\sigma_1,\ldots,\sigma_u)$,
$\ov{\tau} = (\tau_1,\ldots,\tau_v)$,
and\/
$s : \forall (\ov{\alpha},\ov{\beta}
).\;(\ov{\sigma},\ov{\tau}) \ra \varsigma$ in $\FF \mathrel{\uplus} \PP$,
such that
the last $n$ type arguments {\rm(}corresponding to $\ov{\beta}${\rm)} are inferable
and the first $u$ term arguments {\rm(}corresponding to $\ov{\sigma}${\rm)} constitute a cover for $s$.
Let\/
$(d_1,\ldots,d_u) \in (\bigcup_{D \in \dom} D)^u$.
Then there exists at most one tuple\/ $\ov{E} = (E_1,\ldots,E_n) \in \dom^n$ such that
each $d_i$ is in\/ $\SEM{\sigma_i}_\theta^\MM$
for some $\theta$ that maps $\ov{\beta}$ to $\ov{E}$.
%
\end{lem}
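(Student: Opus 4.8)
The plan is to reduce the statement to a single claim about type interpretations and then read the lemma off it. Call the claim $(*)$: for every type $\rho$, all type variable valuations $\theta,\theta'$, and every $d \in \bigcup_{D \in \dom} D$, if $d \in \SEM{\rho}^\MM_\theta$ and $d \in \SEM{\rho}^\MM_{\theta'}$, then $\theta(\gamma) = \theta'(\gamma)$ for all $\gamma \in \TVars(\rho)$. I would prove $(*)$ by structural induction on $\rho$. If $\rho$ is a type variable $\gamma$, then $\SEM{\rho}^\MM_\theta = \theta(\gamma)$ and $\SEM{\rho}^\MM_{\theta'} = \theta'(\gamma)$ are both members of $\dom$; since $d$ lies in both and the domains in $\dom$ are mutually disjoint, $\theta(\gamma) = \theta'(\gamma)$, which suffices as $\TVars(\rho) = \{\gamma\}$. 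If $\rho = k(\rho_1,\ldots,\rho_r)$, then $d$ lies in $k^\MM(\SEM{\rho_1}^\MM_\theta,\ldots,\SEM{\rho_r}^\MM_\theta)$ and in $k^\MM(\SEM{\rho_1}^\MM_{\theta'},\ldots,\SEM{\rho_r}^\MM_{\theta'})$, both of which are in $\dom$; mutual disjointness forces these two sets to coincide, and injectivity of $k^\MM$ then yields $\SEM{\rho_l}^\MM_\theta = \SEM{\rho_l}^\MM_{\theta'}$ for every $l$. Picking any element $d_l$ of the nonempty domain $\SEM{\rho_l}^\MM_\theta = \SEM{\rho_l}^\MM_{\theta'}$ and applying the induction hypothesis to $\rho_l$ gives that $\theta$ and $\theta'$ agree on $\TVars(\rho_l)$; since $\TVars(\rho) = \bigcup_l \TVars(\rho_l)$, they agree on $\TVars(\rho)$.

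Granting $(*)$, the lemma follows quickly. First observe that each $\beta_j$ (for $1 \le j \le n$) occurs in $\sigma_i$ for some $i \in \{1,\ldots,u\}$: inferability of $\ov\beta$ places the occurrence of $\beta_j$ in some term argument type, and, since $\{1,\ldots,u\}$ is a cover for $s$, that occurrence can be found inside one of $\sigma_1,\ldots,\sigma_u$. Now suppose $\ov E = (E_1,\ldots,E_n)$ and $\ov E' = (E'_1,\ldots,E'_n)$ in $\dom^n$ both satisfy the condition of the lemma, witnessed by valuations $\theta$ and $\theta'$ with $\theta(\beta_j) = E_j$, $\theta'(\beta_j) = E'_j$ for all $j$, and $d_i \in \SEM{\sigma_i}^\MM_\theta \cap \SEM{\sigma_i}^\MM_{\theta'}$ for all $i \in \{1,\ldots,u\}$. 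Fix $j$ and choose $i \in \{1,\ldots,u\}$ with $\beta_j \in \TVars(\sigma_i)$. Applying $(*)$ with $\rho := \sigma_i$ and $d := d_i$ gives $\theta(\beta_j) = \theta'(\beta_j)$, i.e.\ $E_j = E'_j$. As $j$ was arbitrary, $\ov E = \ov E'$.

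All the substance sits in $(*)$, and it is powered precisely by the two standing hypotheses on $\MM$: mutual disjointness of the domains (to pin down, in both cases of the induction, the unique domain containing $d$, and so turn ``$d$ is in both interpretations'' into ``the two interpretations are the same set'') and injectivity of each $k^\MM$ (to descend from an equality of constructor applications to equalities of the argument interpretations). The only point needing a moment's care is that the induction hypothesis speaks of a single witnessing element, so in the constructor step one must produce a fresh $d_l$ in the common value $\SEM{\rho_l}^\MM_\theta = \SEM{\rho_l}^\MM_{\theta'}$; this is harmless since every type interpretation is a domain in $\dom$, hence nonempty. Note that the result sort $\varsigma$ and the non-cover arguments $\ov\tau$ play no role in the argument.
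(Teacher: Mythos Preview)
Your proof is correct and follows essentially the same approach as the paper: both arguments locate, for each $\beta_j$, an index $i$ in the cover with $\beta_j \in \TVars(\sigma_i)$, use disjointness of domains to conclude $\SEM{\sigma_i}^\MM_\theta = \SEM{\sigma_i}^\MM_{\theta'}$ from $d_i$ lying in both, and then use injectivity of the type constructors to obtain $\theta(\beta_j) = \theta'(\beta_j)$. The only difference is expository: the paper compresses the last step into a single sentence, whereas you isolate it as the auxiliary claim $(*)$ and prove it by an explicit structural induction on the type, which makes the role of injectivity and disjointness more transparent.
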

\begin{proof}
From the inferability and cover assumptions, we have the condition
$\TVars(\ov{\beta}) \subseteq \TVars(\ov{\sigma},\ov{\tau}) = \TVars(\ov{\sigma})$
on type variables.
Assume another such tuple
$\ov{E'}$ exists. Let $\theta'$ be its corresponding substitution, and let $j \in \{1,\ldots,n\}$.
By the type variable condition, there exists $i \in \{1,\ldots,u\}$ such that $\beta_j \in \TVars(\sigma_i)$;
since $d_i \in \SEM{\sigma_i}_\theta^\MM \,\cap\, \SEM{\sigma_i}_{\theta'}^\MM$ and distinct domains are disjoint,
we have $\SEM{\sigma_i}_\theta^\MM = \SEM{\sigma_i}_{\theta'}^\MM$, which, together with $\beta_j \in \TVars(\sigma_i)$ and
the injectivity of the type constructors, implies $\theta(\beta_j) = \theta'(\beta_j)$, hence $E_j = E'_{j}$;
and since $j$ was arbitrary, we obtain $\ov{E} = \ov{E'}$, as desired.
\end{proof}

\begin{thm}[Correctness of \guards\at]\afterDot
\label{thm:correctness-of-guards-at}%
The cover-based type guards encoding \hbox{\rm\guards\at} is correct.
\end{thm}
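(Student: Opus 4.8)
\emph{Completeness.} The plan is to treat soundness and completeness separately. Since \guards\at{} is by definition the composition $\encode{\phantom{i}}{\guards\at\tinycomma\args^\ninf\tinycomma\erased}$ and both \erased{} (Theorem~\ref{thm:completeness-of-erased}) and \args\filter{} (Theorem~\ref{thm:completeness-of-argsx}) are complete, for completeness it suffices to show that the polymorphic-to-polymorphic stage $\guardsax{\phantom{i}}$ is complete. Given a model $\mathcal N$ of $\guardsax{\PHI}$---which, by Lemma~\ref{lem-syntactic-domains}, may be assumed to have injective type constructors and pairwise disjoint domains---I would build a $\Sigma$-structure $\mathcal M$ whose domain for a ground type $\tau$ is the slice $\{d \in \SEM{\tau}^{\mathcal N} \mid \mathcal N \models \is\LAN\tau\RAN(d)\}$, nonempty by the inhabitation typing axiom, and which interprets every function and predicate symbol by restriction to the slices. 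The restriction is well defined on a function $f$ precisely because the cover typing axiom for $f$ guarantees that the value of $f$ lies in the slice of its result type once the \emph{cover} arguments lie in their slices. A structural induction then shows that $\mathcal M$ and $\mathcal N$ agree on terms and formulae for valuations ranging over the slices; the only clause needing care is the universal quantifier over a variable outside $\UV(\phi)$, where the translation drops the guard, but there $\mathcal M$ quantifies over a subset of $\SEM{\tau}^{\mathcal N}$, so the stronger $\mathcal N$-statement carries over. Hence $\mathcal M \models \PHI$.

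\emph{Soundness.} Here \args$^\ninf$ and \erased{} are not individually sound, so the untyped model must be constructed directly from a model $\mathcal M$ of \PHI{}, which by Lemma~\ref{lem-syntactic-domains} may be taken with injective, pairwise-range-disjoint type constructors, with $\dom$ in bijection with $\GType_\Sigma$ via $\SEM{\phantom{i}}^{\mathcal M}$, and with mutually disjoint domains. Let the untyped carrier be the disjoint union of $\GType_\Sigma$ (the ``encoded types'') with all domains of $\mathcal M$, and fix a retraction $\widehat{\cdot}$ sending every non-type element to one fixed nullary type (available by Convention~\ref{con-iota}). Interpret each encoded type constructor $\const k$ by $\const k(\bar x) = k(\widehat{\bar x})$; the guard predicate by $\is(t, d) \iff d \in \SEM{\widehat{t}}^{\mathcal M}$; and each symbol $s : \forall(\bar\a, \bar\b).\,(\bar\sigma, \bar\tau) \to \varsigma$, written in the shape required by Lemma~\ref{lem-infer-domains} so that $\bar\b$ lists the inferable type arguments and $\bar\sigma$ a cover, on an argument tuple $(\bar e; \bar c, \bar d)$---$\bar e$ encoding the non-inferable type arguments, $\bar c$ the cover term arguments, $\bar d$ the rest---by reading the non-inferable types off $\widehat{\bar e}$, recovering the inferable types uniquely from $\bar c$ by Lemma~\ref{lem-infer-domains}, and applying $s^{\mathcal M}$ at those types after replacing any ``ill-typed'' $d_j$ by $\eps{\SEM{\tau_j}^{\mathcal M}}$ (an arbitrary fixed value being used when no compatible tuple of inferable types exists).

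With this structure the cover typing axioms hold: when a premise $\bigwedge_{j \in \Cover{f}} \is\LAN\sigma_j\RAN(X_j)$ fires, the guard predicate forces its arguments into exactly the domains that Lemma~\ref{lem-infer-domains} singles out, so $f$'s value lands in the domain of its declared result type, which is what $\is$ of the encoded result type demands; the inhabitation axioms hold because every domain is nonempty. For the translated formulae I would prove a simulation lemma: for every $\phi \in \PHI$ and every untyped valuation $\zeta$, if $\mathcal M$ satisfies $\phi$ under the ``repair'' of $\zeta$ to a well-typed valuation (garbage type arguments sent to the fixed nullary type, ill-typed variable values sent to the corresponding $\eps{\cdot}$), then the untyped structure satisfies $\encode{\phi}{\guards\at\tinycomma\args^\ninf\tinycomma\erased}$ under $\zeta$. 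Because formulae are in NNF and all connectives are monotone, this follows by structural induction: the guard placed on undercover universal variables keeps every term occurring in a cover position well-typed as one descends, so that Lemma~\ref{lem-infer-domains} genuinely recovers the intended type arguments, while ill-typed values of the non-undercover variables do no harm---normalised away inside the symbol interpretations, they can at worst turn a disequality literal true.

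\emph{Main obstacle.} The hard part is the soundness simulation lemma, and within it the choice of induction invariant: it must simultaneously record that every cover-position subterm is evaluated at a well-typed tuple---which is exactly what the $\UV$-driven placement of guards secures---and that everything else tolerates arbitrary values. Verifying that the non-undercover variables, which by definition occur only in non-cover positions and as sides of disequalities, can be assigned junk without affecting the truth of the encoded formula is where essentially all of the case analysis resides.
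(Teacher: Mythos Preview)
Your proposal is correct and follows essentially the same approach as the paper: completeness via restriction of a model of $\guardsax{\PHI}$ to the $\is$-slices (relying on the cover typing axioms for well-definedness), and soundness by building an untyped structure whose carrier combines the type-level domains with their elements, interpreting each symbol by recovering the inferable type arguments from the cover positions via Lemma~\ref{lem-infer-domains} and normalising ill-typed non-cover arguments to $\eps{\cdot}$. Your identification of the key invariant---that the $\UV$-driven guard placement keeps cover-position subterms well-typed throughout the induction---matches the paper's facts (2) and (3), and your observation that only an implication (not equality) is needed in the unguarded $\forall$ case of completeness is in fact slightly more careful than the paper's own statement.
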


\begin{proof}
Let $\Sigma = (\KK,\FF,\PP)$ be the signature of a polymorphic problem $\PHI$.

\betweenitems\noindent
\textsc{Sound}:\enskip
Let $\MM = \bigl(\dom,(k^\MM)_{k \in \KK},(\ff^{\,\MM})_{\ff \in \FF},(\pp^\MM)_{\pp \in \PP}\bigr)$
be a model of $\PHI$.
By Lemma~\ref{lem-syntactic-domains}, we may assume that $\dom$ is disjoint from each of its elements, its elements are mutually disjoint,
and the type constructors $k^\MM$ satisfy distinctness and injectivity.


We define a structure $\NN = \bigl(\D',(\ff^{\,\NN})_{\ff \in \FF' \mathrel{\uplus} \KK'},(\pp^{\NN})_{\pp \in \PP' \mathrel{\uplus} \{\is^2\}}\bigr)$
for the untyped signature $\SIGMA' = (\FF' \mathrel{\uplus} \KK' , \PP' \mathrel{\uplus} \{\is^2\})$ as follows.
$\D'$ is the union of the domains of $\MM$ and their elements,
$\is^{\NN}$ is the set membership relation, and the symbols of $\NN$ common to those of $\MM$ try to emulate the $\MM$ interpretation
as closely as possible:\ $k^{\NN}$ acts like $k^{\MM}$ on the $\dom$ subset of $\D'$, and similarly for
function and predicate symbols~$s$ whose noninferable type arguments in $\MM$ become regular (term) arguments in $\NN$.
The reason why we can omit the inferable type arguments of $s$ is their recoverability from the type of $s$ and the domains
of the genuine term arguments.
Additionally, when applying $s^{\NN}$ to inputs outside $s$'s cover that do not belong to their proper
domains in $\MM$, we correct these inputs by replacing them with arbitrary inputs from the proper domains. We do this because,
in formulae, these inputs will not be guarded by the encoding, but we will nevertheless want to infer the encoded formula holding in $\NN$
from the original formula holding in $\MM$.

Formally,
we define the components of $\NN$ as follows.
First, $\D' = \dom \mathrel{\uplus} (\bigcup_{D \in \dom} D)$ 
and $\is^{\NN}(a,b) = (a \in \dom \mathrel\land b \in a)$.
Assume $k :: n$ is in $\KK$, meaning that $k$ is an $n$-ary function symbol in $\KK'$.
Then
$$
k^{\NN}(\ov{D}) = \left\{
\begin{array}{@{}ll@{}}
 k^{\MM}(\ov{D}) & \mbox{if $\ov{D} \in \dom^n$}
\\
 \eps{\dom} & \mbox{otherwise}
\end{array}
\right.
$$
Assume $\ov{\alpha} = (\alpha_1,\ldots,\alpha_m)$,
$\ov{\beta} = (\beta_1,\ldots,\beta_n)$,
$\ov{\sigma} = (\sigma_1,\ldots,\sigma_u)$,
$\ov{\tau} = (\tau_1,\ldots,\tau_v)$,
and
$s : \forall \ov{\alpha}, \ov{\beta}.\; \ov{\sigma} \times \ov{\tau} \ra \varsigma$ is in $\FF \mathrel{\uplus} \PP$,
such that
the first $m$ type arguments are noninferable, the last $n$ type arguments
are inferable, and the first $u$ term arguments constitute $s$'s cover.
(The general case, with arbitrary permutations of noninferable and cover arguments,
can be handled similarly, albeit with heavier notation.)
Then $s$ is an $(m+u+v)$-ary symbol in $\FF' \mathrel{\uplus} \PP'$.
Let $\ov{D} = (D_1,\ldots,D_m) \in {\D'}^m$, $\ov{d} = (d_1,\ldots,d_u) \in {\D'}^u$ and $\ov{e} = (e_1,\ldots,e_v) \in {\D'}^v$, and consider the
following condition on domains:
\begin{center}
$(D_1,\ldots,D_m) \in \dom^m$ and there exists $\ov{E} = (E_1,\ldots,E_n) \in \dom^n$ such that
each $d_i$ is in $\SEM{\sigma_i}_\theta^\MM$, where $\theta$ maps $(\ov{\alpha},\ov{\beta})$ to $(\ov{D},\ov{E})$
\end{center}
Assuming the condition holds, by Lemma~\ref{lem-infer-domains} there exists precisely one tuple $\ov{E}$ satisfying it.
We let $\ov{e}' = (e_1',\ldots,e_v')$, where
$$
e_i' =
\left\{
\begin{array}{@{}ll@{}}
 e_i & \mbox{if $e_i \in \SEM{\tau_i}^\MM_\theta$}
\\
 \eps{\SEM{\tau_i}^\MM_\theta} & \mbox{otherwise}
\end{array}
\right.
$$
Finally, we define
$$
s^{\NN}(\ov{D},\ov{d},\ov{e}) = \left\{
\begin{array}{@{}ll@{}}
 s^{\MM}(\ov{D},\ov{E})(\ov{d},\ov{e}') & \mbox{if the domain condition holds}
\\
 \eps{\dom} 
& \mbox{if the domain condition fails and $s \in \FF$}
\\
 \eps{\bool} 
& \mbox{if the domain condition fails and $s \in \PP$}
\end{array}
\right.
$$
We will show that $\NN$ is a model of $\encode{\PHI}{\guards\at\tinycomma\args^\ninf\tinycomma\erased}$.
Let $\xi' : \VV \ra \D'$ be a valuation that respects types, in the sense that $\xi'(\VV(\alpha)) \in \dom$ for all $\alpha \in \AAA$.
We define $\theta : \AAA \ra \dom$ by $\theta(\alpha) = \xi'(\VV(\alpha))$ and $\xi : \VV \ra \prod_{\sigma \in \Type_\Sigma} \SEM{\sigma}^\MM_\theta$ by
$\xi(\mathit{X})(\sigma) = \xi'(\mathit{X})$ if $\xi'(\mathit{X}) \in \SEM{\sigma}^\MM_\theta$ and $= \eps{\SEM{\sigma}^\MM_\theta}$ otherwise.
Facts (1)--(3) below follow by induction
on $\sigma$, $t$ or $\phi$ (for arbitrary $\xi'$), and (1$'$) is a consequence of (1) and the definition of $\is^{\NN}$:
\begin{enumerate}
\item[(1)] $\SEM{\typex{\sigma}}^{\NN}_{\xi'} = \SEM{\sigma}^\MM_\theta$;
\item[(1$'$)] $\SEM{\is(\typex{\sigma},\mathit{X})}^{\NN}_{\xi'} = (\xi'(\mathit{X}) \in \SEM{\sigma}^\MM_\theta)$;
\item[(2)] if $t \notin \Vt$ and $\xi(\mathit{X}) \in \SEM{\sigma}^\MM_\theta$ for all $\mathit{X}^\sigma \in \UV (t)$,
then $\SEM{\encode{t}{\guards\at\tinycomma\args^\ninf\tinycomma\erased}}^{\NN}_{\xi'} = \SEM{t}^\MM_{\theta,\xi}$;
\item[(3)] if $\xi'(\mathit{X}) \in \SEM{\sigma}^\MM_\theta$ for all $\mathit{X}^\sigma \in \UV (\phi)$,
then $\SEM{\phi}^\MM_{\theta,\xi}$ implies $\SEM{\encode{\phi}{\guards\at\tinycomma\args^\ninf\tinycomma\erased}}^{\NN}_{\xi'}$.
\end{enumerate}

Let us detail a few interesting cases in these proofs:

\begin{quote}
\textsc{Inductive case for (2):}\enskip
Assume $t = f\LAN\ov{\tau}\RAN(t_1,\ldots,t_n)$, and
let $i \in \{1,\ldots,n\}$.  If $t_i \not\in \Vt$, then the induction hypothesis applies to it,
yielding \[\SEM{\encode{t_i}{\guards\at\tinycomma\args^\ninf\tinycomma\erased}}^{\NN}_{\xi'} = \SEM{t_i}^\MM_{\theta,\xi}\]
If $t_i = \mathit{X}^\sigma$, then $\SEM{\encode{t_i}{\guards\at\tinycomma\args^\ninf\tinycomma\erased}}^{\NN}_{\xi'} = \xi'(\mathit{X})$,
$\SEM{t_i}^\MM_{\theta,\xi} = \xi(\mathit{X})(\sigma)$,
and we have two cases:
\begin{itemize}
\item If $\xi'(\mathit{X}) \in \SEM{\sigma}^\MM_\theta$, then $\xi'(\mathit{X})(\sigma) = \xi(\mathit{X})$.
\item Otherwise, by the assumptions $i \notin \Cover{f}$, and hence the definition of $f^{\NN}$ effectively
replaces the argument $\xi'(\mathit{X})$ by $\eps{\SEM{\sigma}^\MM_\theta}$, which also equals $\xi(\mathit{X})(\sigma)$.
\end{itemize}
The above shows $\SEM{\encode{t}{\guards\at\tinycomma\args^\ninf\tinycomma\erased}}^{\NN}_{\xi'} = \SEM{t}^\MM_{\theta,\xi}$.

\betweenitems\noindent
\textsc{Disequality case for (3):}\enskip
The subcase when one of the terms is a variable and the other is not.
Assume $\phi$ has the form $\mathit{X}^\sigma \not\eq t$.  If $\xi'(\mathit{X}) \in \SEM{\sigma}^\MM_\theta$, then
$\xi'(\mathit{X})(\sigma) = \xi(\mathit{X})$, hence
$\SEM{\encode{\phi}{\guards\at\tinycomma\args^\ninf\tinycomma\erased}}^{\NN}_{\xi'} = \SEM{\phi}^\MM_{\theta,\xi}$.
Otherwise, $\SEM{\encode{\phi}{\guards\at\tinycomma\args^\ninf\tinycomma\erased}}^{\NN}_{\xi'}$ since
$\SEM{\encode{t}{\guards\at\tinycomma\args^\ninf\tinycomma\erased}}^{\NN}_{\xi'} \in \SEM{\sigma}^\MM_\theta$
(and hence is different from $\xi'(\mathit{X})$).

\betweenitems\noindent
\textsc{Universal quantifier case for (3):}\enskip
Assume (A)~$\SEM{\forall \mathit{X} \mathbin: \sigma.\; \phi}^\MM_{\theta,\xi}$.
We must show $\SEM{\encode{\forall \mathit{X} \mathbin: \sigma.\; \phi}{\guards\at\tinycomma\args^\ninf\tinycomma\erased}}^{\NN}_{\xi'}$.
Fix $d \in \D$ and
let $\xi_1' = \xi'[\mathit{X} \ra d]$.
\begin{itemize}
\item Assume $\mathit{X} \in \UV(\phi)$. Then we assume $\SEM{\is(\typex{\sigma},\mathit{X})}^{\NN}_{\xi'}$, i.e.\
(B)~$d \in \SEM{\sigma}^\MM_\theta$, and need to show
$\SEM{\encode{\phi}{\guards\at\tinycomma\args^\ninf\tinycomma\erased}}^{\NN}_{\xi_1'}$.
From (B), we have $\xi(\mathit{X})(\sigma) = \xi'(\mathit{X})$, hence $\xi_1 = \xi[\mathit{X} \mapsto d]$;
with (A), this implies $\SEM{\phi}^\MM_{\theta,\xi_1}$, and hence the desired fact follows from the induction hypothesis.
\item Assume $\mathit{X} \not\in \UV(\phi)$.
We must show $\SEM{\encode{\phi}{\guards\at\tinycomma\args^\ninf\tinycomma\erased}}^{\NN}_{\xi_1'}$.
If $d \in \SEM{\sigma}^\MM_\theta$, then the argument goes the same as above. 
So assume $d \not\in \SEM{\sigma}^\MM_\theta$.
Then $\xi(\mathit{X})(\sigma) = \eps{\SEM{\sigma}^\MM_\theta}$, hence
$\xi_1 = \xi[\mathit{X} \mapsto \eps{\SEM{\sigma}^\MM_\theta}]$;
with (A), this implies $\SEM{\phi}^\MM_{\theta,\xi_1}$, and hence the desired fact follows from the induction hypothesis.
\end{itemize}
\end{quote}

\noindent From (1$'$) and the definition
of $f^{\NN}$, it follows that $\NN$ satisfies the necessary
axioms, namely, $\encode{\phi}{\args^\ninf\tinycomma\erased}$ for each typing axiom $\phi$.
It remains to show that $\NN$ satisfies $\encode{\phi}{\guards\at\tinycomma\args^\ninf\tinycomma\erased}$ for each $\phi \in \PHI$.
So let $\phi \in \PHI$.  Then $\SEM{\phi}^\MM$.
We pick any $\xi'$ that respects types and has the property
that $\xi'(\mathit{X}) \in \SEM{\sigma}^\MM_\theta$ for all $\mathit{X}^\sigma \in \UV (\phi)$;
by (3) and the fact that $\phi$ and $\encode{\phi}{\guards\at\tinycomma\args^\ninf\tinycomma\erased}$ are sentences,
it follows that $\SEM{\phi}^\MM_{\xi,\theta}$, hence
$\SEM{\encode{\phi}{\guards\at\tinycomma\args^\ninf\tinycomma\erased} }^{\NN}_{\xi'}$,
hence $\SEM{\encode{\phi}{\guards\at\tinycomma\args^\ninf\tinycomma\erased}}^{\NN}$.
\betweenitems\noindent
\textsc{Complete}:\enskip
This part is easy.
By Theorem \ref{thm:completeness-of-argsx}, it suffices to show that $\encode{\phantom{i}}{\guards\at}$
is complete. Let $\NN = \bigl(\dom',(k^{\NN})_{k \in \KK},(\ff^{\,\NN})_{\ff \in \FF},(\pp^{\NN})_{\pp \in \PP \mathrel{\uplus} \{\is\}}\bigr)$
be a model of $\encode{\PHI}{\guards\at}$, for which again we may assume that the domains are mutually disjoint.
We define $\MM = \bigl(\dom,(k^\MM)_{k \in \KK},(\ff^{\,\MM})_{\ff \in \FF},\allowbreak(\pp^\MM)_{\pp \in \PP}\bigr)$
by restricting the domains according to the guards and restricting the operations correspondingly.
Namely, for each $D \in \dom'$, let $D' = \{d \in D \mid \is^{\NN}(D)(d) \}$.  We take
\begin{itemize}
\item $\dom = \{D' \mid D \in \dom'\}$;
\item $k^{\MM}(\ov{D'}) = k^{\NN}(\ov{D})$;
\item $s^{\MM}(\ov{D'})(\ov{d}) = s^{\NN}(\ov{D})(\ov{d})$.
\end{itemize}
Thanks to the typing axioms, each $D'$ is nonempty.
Since the domains are disjoint, each $D'$ determines its $D$ uniquely; together with the typing axioms,
this also ensures that each $k^{\MM}$ and $s^{\MM}$ are well defined.
Now, $\SEM{U}^\MM_{\theta,\xi} = \SEM{\encode{U}{\guards\at}}^{\NN}_{\theta,\xi}$,
where $U$
is first a term and then a formula,
follows by induction on $U$ (for arbitrary $\theta$ and $\xi$).  From this, we obtain that $\MM$ is a model of
$\encode{\PHI}{\guards\at}$ by the usual route.
\QED
\end{proof}

\subsection{Cover-Based Type Tags}
\label{ssec:cover-based-type-tags}

The cover-based encoding \tags\at{} is similar to the traditional encoding
\tags{}, except that it tags only undercover occurrences of variables and
requires typing axioms to add or remove tags around function terms.

\begin{defi}[Cover Tags \tags\at]\afterDot
The encoding $\tagsax{\phantom{i}}$ translates polymorphic problems over
$\SIGMA = (\KK, \FF, \PP)$ to polymorphic problems over $(\KK, \FF \mathrel{\uplus} \{\ti : \forall \alpha.\;\alpha \ra \alpha\}, \PP)$.
Its term and formula translations are the following:
\begin{align*}
\tagsax{f\LAN\bar\t\RAN(\bar t\,)} & \,=\,
    f\LAN\bar\t\RAN({\CT{f}{\tagsax{\bar t\,}}}) \\[\betweentf]
\tagsax{p\LAN\bar\t\RAN(\bar t\,)}{} & \,=\,
    p\LAN\bar\t\RAN({\CT{p}{\tagsax{\bar t\,}}}) &
  \tagsax{t_{1\!} \eq t_2}{} & \,=\, \CT{\eq}{\tagsax{t_{1\!}}} \eq \CT{\eq}{\tagsax{t_{2}}} \\
\tagsax{\lnot\,p\LAN\bar\t\RAN(\bar t\,)}{} & \,=\,
    \lnot\,p\LAN\bar\t\RAN({\CT{p}{\tagsax{\bar t\,}}}) &
  \tagsax{\exists \mathit{X} \mathbin: \t.\;\, \phi} & \,=\,
    \exists \mathit{X} \mathbin: \t.\;\, \ti\LAN\t\RAN(\mathit{X}) \eq \mathit{X} \mathrel\land \tagsax{\phi}
\end{align*}
The auxiliary function
$\CT{s}{(t_1\typ{\t_1},\allowbreak\ldots,t_n\typ{\t_n})}$ returns a vector
$(u_1,\ldots,u_n)$ such that
\[u_{\negvvthinspace j} \,=\,
  \begin{cases}
    \ti\LAN\t_{\!j}\RAN(t_{\negvthinspace j}) & \!\!\text{if $j \in \Cover{s}$ and $t_{\negvthinspace j} \in \Vtun$
} \\
    t_{\negvthinspace j} & \!\!\text{otherwise}

  \end{cases}\]
taking $\Cover{\eq} = \{1, 2\}$.
The encoding adds the following typing axioms:
\[\!\begin{aligned}[t]
& \forall \bar\a.\; \forall \mathit{\bar X} \mathbin: \bar\t.\;\,
\ti\LAN\t\RAN(f\LAN\bar\a\RAN({\CT{f}{\mathit{\bar X}}}))
\eq f\LAN\bar\a\RAN({\CT{f}{\mathit{\bar X}}})
  &\enskip& \text{for~}\ff : \forall\bar\a.\; \bar\t \to \t \in \mathcal{F} \\[\betwtypax]
& \forall \a.\; \exists \mathit{X} \mathbin: \a.\;\, \ti\LAN\a\RAN(\mathit{X}) \eq \mathit{X}
\end{aligned}\]
The \emph{cover-based type tags} encoding \tags{}
is the composition $\encode{\phantom{i}}{\tags\at\tinycomma\args^\ninf\tinycomma\erased}$.
It translates a polymorphic problem over $\SIGMA$
into an untyped problem over $\SIGMA' = (\FF' \uplus
\KK' \mathrel{\uplus} \{\ti^2\},\PP')$, where
$\KK', \FF', \PP'$ are as for \args$^\ninf$.
\end{defi}

\begin{exa}%
\label{ex:algebraic-lists-tags-at}%
The \tags\at{} encoding of Example~\ref{ex:algebraic-lists} is as follows (again, choosing the cover for $\const{cons}$ as in
Convention \ref{cover-cons}):
\begin{quotex}
$\forall A.\;\, \ti(\const{list}(A), \const{nil}(A)) \eq \const{nil}(A)$ \\
$\forall A, \mathit{X}\!, \mathit{Xs}.\;\, \ti(\const{list}(A), \const{cons}(\ti(A, \mathit{X}), \mathit{Xs})) \eq
   \const{cons}(\ti(A, \mathit{X}), \mathit{Xs})$ \\
$\forall A, \mathit{Xs}.\;\, \ti(\const{list}(A), \const{hd}(\ti(\const{list}(A), \mathit{Xs}))) \eq \const{hd}(\ti(\const{list}(A), \mathit{Xs}))$ \\
$\forall A, \mathit{Xs}.\;\, \ti(A, \const{tl}(\ti(\const{list}(A), \mathit{Xs}))) \eq \const{tl}(\ti(\const{list}(A), \mathit{Xs}))$
  \\[\betweenaxs]
$\forall A, \mathit{X}\!, \mathit{Xs}.\;\,
  \const{nil}(A) \not\eq \const{cons}(\ti(A, \mathit{X}), \mathit{Xs})$ \\
$\forall A, \mathit{Xs}.\;\,
  \ti(\const{list}(A), \mathit{Xs}) \eq \const{nil}(A) \mathrel{\lor} {}$ \\
$\phantom{\forall A, \mathit{Xs}.\;\,}
  (\exists \mathit{Y}\!, \mathit{Ys}.\;\,
    \ti(A, \mathit{Y}) \eq \mathit{Y} \mathrel\land
    \ti(\const{list}(A), \mathit{Ys}) \eq \mathit{Ys} \mathrel\land \ti(\const{list}(A), \mathit{Xs}) \eq \const{cons}(\mathit{Y}\!, \mathit{Ys}))$ \\
$\forall A, \mathit{X}\!, \mathit{Xs}.\;\,
  \const{hd}(\const{cons}(\ti(A, \mathit{X}), \mathit{Xs})) \eq \ti(A, \mathit{X})
  \mathrel\land \const{tl}(\const{cons}(\ti(A, \mathit{X}), \mathit{Xs})) \eq \ti(\const{list}(A), \mathit{Xs})$ \\
$\exists \mathit{X}\!, \mathit{Y}\!, \mathit{Xs}, \mathit{Ys}.\;\,
  \ti(\const{\www}, \mathit{X}) \eq \mathit{X} \mathrel\land \ti(\const{\www}, \mathit{Y}) \eq \mathit{Y} \mathrel\land
  \ti(\const{list}(\const{\www}), \mathit{Xs}) \eq \mathit{Xs} \mathrel\land
  \ti(\const{list}(\const{\www}), \mathit{Ys}) \eq \mathit{Ys} \mathrel\land {}$ \\
$\phantom{\exists \mathit{X}\!, \mathit{Y}\!, \mathit{Xs}, \mathit{Ys}.\;\,}
\const{cons}(\mathit{X}\!, \mathit{Xs}) \eq \const{cons}(\mathit{Y}\!, \mathit{Ys})
\mathrel\land (\mathit{X} \not\eq \mathit{Y} \mathrel{\lor} \mathit{Xs} \not\eq \mathit{Ys})$
\end{quotex}
\end{exa}

\leftOut{%
In \longsect~\ref{sec:complete-monotonicity-based-encoding-of-polymorphism}, we
showed that it is sound to omit noninferable type arguments to constructors for
monotonicity-based encodings, yielding $\const{nil}$ instead of
$\const{nil}(A)$ in the translation.
For cover-based encodings, this would be unsound, as shown by the following
counterexample:
$\const{q}\LAN a\RAN(\const{nil}\LANX a\RAN) \mathrel\land \lnot\,\const{q}\LAN
b\RAN({\const{nil}\LANX b\RAN})$ is satisfiable, but the naive translation to
$\const q(\const{nil}) \mathrel\land \lnot\, \const q(\const{nil})$ is
unsatisfiable. The counterexample does not apply to monotonicity-based
encodings, because $\const q$ would be passed an encoded type argument:\
$\const q(\const a, \const{nil})\mathrel\land \lnot\, \const q(\const b, \const{nil})$
is satisfiable.
}

\begin{thm}[Correctness of \tags\at]\afterDot
\label{thm:correctness-of-tags-at}%
The cover-based type tags encoding\/ \hbox{\rm\tags\at} is correct.
\end{thm}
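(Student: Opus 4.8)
The plan is to follow the pattern of the proof of Theorem~\ref{thm:correctness-of-guards-at}: establish soundness \emph{directly}, by building an untyped model out of a polymorphic one, and reduce completeness --- via Theorem~\ref{thm:completeness-of-argsx} --- to the completeness of the polymorphic-to-polymorphic stage $\encode{\phantom{i}}{\tags\at}$ alone (completeness composes, and $\args^\ninf$ is complete; soundness, by contrast, cannot be factored this way since $\args^\ninf\tinycomma\erased$ is not sound).

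\emph{Soundness.} Let $\MM$ be a model of $\PHI$ which, by Lemma~\ref{lem-syntactic-domains}, has $\dom$ disjoint from each of its members, has distinct members of $\dom$ disjoint, and has the $k^\MM$ injective with pairwise disjoint ranges. I would build an untyped structure $\NN$ over the target signature with carrier $\D' = \dom \uplus (\bigcup_{D\in\dom} D)$, interpret each encoded type constructor $\const{k}$ as $k^\MM$ on $\dom^n$ (and as $\eps\dom$ off it), and interpret $\ti$ as the type-correction map sending $(D,d)$ to $d$ when $d\in D$ or $D\notin\dom$, and to $\eps{D}$ when $D\in\dom$ but $d\notin D$. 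Each $s\in\FF\uplus\PP$ would be interpreted exactly as in the \guards\at{} soundness proof: when the noninferable type arguments lie in $\dom$ and the cover term arguments lie in the domains fixed by the (unique, by Lemma~\ref{lem-infer-domains}) choice of inferable type arguments, $s^\NN$ runs $s^\MM$ after coercing the non-cover term arguments into their proper domains; otherwise it returns a junk value. The one genuinely new point is the choice of that junk value: the $\tags\at$ typing axioms $\ti\LAN\t\RAN(f\LAN\bar\a\RAN(\CT{f}{\mathit{\bar X}}))\eq f\LAN\bar\a\RAN(\CT{f}{\mathit{\bar X}})$ are \emph{equations}, not the (vacuously dischargeable) implications used for guards, so $f^\NN$'s junk value must still be a fixpoint of the map $\ti^\NN(\SEM{\typex{\t}}^\NN,\,\cdot\,)$; I would arrange this by having $f^\NN$, in the degenerate case, return $\eps{E}$ where $E$ is the domain obtained by evaluating the encoded result type from whatever type arguments are available. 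With $\NN$ fixed, I would prove by induction on types, terms, and formulae the analogues of facts~(1)--(3) of the \guards\at{} proof: $\SEM{\typex{\sigma}}^\NN_{\xi'}=\SEM{\sigma}^\MM_\theta$; $\ti^\NN(\SEM{\typex{\sigma}}^\NN_{\xi'},\,\xi'(\mathit{X}))=\xi'(\mathit{X})$ exactly when $\xi'(\mathit{X})\in\SEM{\sigma}^\MM_\theta$; the $\NN$-value of the translation of a non-variable term $t$ equals $\SEM{t}^\MM_{\theta,\xi}$ provided the undercover variables of $t$ are valued type-respectingly; and $\SEM{\phi}^\MM_{\theta,\xi}$ implies the $\NN$-value of the translation of $\phi$ under the same proviso. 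The tag around an undercover variable repairs the single out-of-domain case, while non-cover variables may be left unconstrained because $s^\NN$ coerces them. It then remains to check $\NN$ against the (further-translated) typing axioms --- this is where the junk-value choice earns its keep --- and to conclude, from the last fact and the fact that source formulae are sentences, that $\NN$ models the translation of every $\phi\in\PHI$.

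\emph{Completeness.} By Theorem~\ref{thm:completeness-of-argsx} it suffices to show that $\encode{\phantom{i}}{\tags\at}$ is complete. From a polymorphic model $\NN$ of $\encode{\PHI}{\tags\at}$ --- with mutually disjoint domains, as we may assume --- I would define a model $\MM$ of $\PHI$ by replacing each domain $D$ of $\NN$ by its correctly tagged part, namely the set of fixpoints of $\ti^\NN(D,\,\cdot\,)$ (equivalently, the image of that map), and restricting every operation accordingly; the inhabitation axiom $\forall\a.\,\exists \mathit{X}\mathbin{:}\a.\;\ti\LAN\a\RAN(\mathit{X})\eq \mathit{X}$ keeps these carved-out domains nonempty, and the function typing axioms make the restricted operations well defined and range-respecting. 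A routine induction on terms and formulae $U$ gives $\SEM{U}^\MM_{\theta,\xi}=\SEM{\encode{U}{\tags\at}}^\NN_{\theta,\xi}$, whence $\MM$ is a model of $\PHI$.

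\emph{Expected main obstacle.} The work is concentrated in the soundness direction, specifically in the equational typing axioms: with guards, an ill-typed instantiation falsifies a hypothesis and so discharges the implication for free, whereas here the equation must hold outright, so the interpretation of $\ti$ and the off-domain junk values of the function symbols have to be chosen coherently --- notably with respect to phantom type arguments, which $\args^\ninf$ still carries as term arguments. Pinning down these degenerate cases, and making the inductive invariants in facts~(2)--(3) mesh with the tag-insertion scheme, is essentially all the real work; the remainder follows the \guards\at{} template.
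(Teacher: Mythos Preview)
Your proposal tracks the paper's proof: soundness by reusing the $\NN$ construction from the \guards\at{} proof on the carrier $\dom\uplus\bigcup_{D\in\dom}D$ together with a tag interpretation, and completeness by carving out the fixpoints of $\ti^{\NN}(D,\cdot)$ after invoking Theorem~\ref{thm:completeness-of-argsx}. Your flagging of the equational typing axioms as the main obstacle is on point---the paper dispatches this in one line, whereas you rightly see that the off-domain behaviour needs care.

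Three deviations deserve comment. First, the paper sets $\ti^{\NN}(a,b)=\eps{\dom}$ when $a\notin\dom$, not $b$; your identity choice lets a cover argument tagged with a non-domain ``type'' slip through as a genuine element, so that $f^{\NN}$'s domain condition may spuriously \emph{hold} with a misinferred inferable type parameter, and then $f^{\NN}$ returns a real value that your junk-value patch never gets to touch---the paper's choice at least funnels every bad-type instance into the junk branch you already plan to repair. Second, the paper strengthens the side condition in facts~(2) and~(3) from $\UV(\cdot)$ to $\UV(\cdot)\cup\Vtex$, because $\CT{s}$ tags only \emph{universal} variables in cover positions; existential cover variables are left untagged and must therefore be assumed well-typed for the induction to go through. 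Third, your parenthetical ``equivalently, the image of that map'' in the completeness part is not correct in general---nothing forces $\ti^{\NN}(D,\cdot)$ to be idempotent---but this is harmless since your actual construction uses the fixpoint set, as does the paper.
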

\begin{proof}
Let $\Sigma = (\KK,\FF,\PP)$ be the signature of a polymorphic problem $\PHI$.

\betweenitems\noindent
\textsc{Sound}:\enskip
Let $\MM = \bigl(\dom,(k^\MM)_{k \in \KK},(\ff^{\,\MM})_{\ff \in \FF},(\pp^\MM)_{\pp \in \PP}\bigr)$
be a model of $\PHI$.
By Lemma~\ref{lem-syntactic-domains}, we may assume that $\dom$ is disjoint from each of its elements, its elements are mutually disjoint,
and the type constructors $k^\MM$ satisfy distinctness and injectivity.

We define a structure $\NN = \bigl(\D',(\ff^{\,\NN})_{\ff \in \FF' \mathrel{\uplus} \KK' \mathrel{\uplus} \{\ti^2\}},(\pp^{\NN})_{\pp \in \PP'}\bigr)$
for $\SIGMA' = (\FF' \mathrel{\uplus} \KK' , \PP' \mathrel{\uplus} \{\is^2\})$ intended to be a model of $\encode{\PHI}{\tags\at\tinycomma\args^\ninf\tinycomma\erased}$.
The construction of $\NN$ proceeds very similarly to the case of guards from the proof of Theorem \ref{thm:correctness-of-guards-at}:\ $\D'$, $k^{\NN}$ for $k \in \KK'$,
and $s^{\NN}$ for $s \in \FF' \mathrel{\uplus} \PP'$ are all defined in the same way as in that proof.  It remains to define $\ti^{\NN} : \D' \times \D' \ra \D'$:
$$
\ti^{\NN}(a,b) =
\left\{
\begin{array}{@{}ll@{}}
 b & \mbox{if $a \in \dom$ and $b \in a$}
\\
 \eps{a} & \mbox{if $a \in \dom$ and $b \not\in a$}
\\
 \eps{\dom} & \mbox{if $a \not\in \dom$}
\end{array}
\right.
$$
In the proof of Theorem \ref{thm:correctness-of-guards-at}, $\is^{\NN}(a,b)$ captured the notion
that $a$ is a domain of $\MM$ and $b$ an element in it. The same can now be expressed by $\ti^{\NN}(a,b) = b$.
Additionally,
here $\ti^{\NN}$ is useful for redirecting any ill-typed element $b$
(one not in the first argument~$a$) to a well-typed element $\eps{a} \in a$.

To show that $\NN$ is a model of $\encode{\PHI}{\tags\at\tinycomma\args^\ninf\tinycomma\erased}$,
we proceed almost identically to the proof of Theorem \ref{thm:correctness-of-guards-at}. Starting
with a valuation $\xi' : \VV \ra \D'$ that respects types, we define $\xi$ and $\theta$ in the same way and state
facts (1)--(3) as for guards, but adding a condition about the ``well-typedness'' of the existential variables interpretation,
since only universal variables (i.e.\ variables in $\Vtun$) are tagged by \tags\at: 
\begin{enumerate}
\item[(1)] $\SEM{\typex{\sigma}}^{\NN}_{\xi'} = \SEM{\sigma}^\MM_\theta$;
\item[(2)] if $t \notin \Vt$ and $\xi(\mathit{X}) \in \SEM{\sigma}^\MM_\theta$ for all $\mathit{X}^\sigma \in \UV (t) \,\cup\, \Vtex$,
then $\SEM{\encode{t}{\tags\at\tinycomma\args^\ninf\tinycomma\erased}}^{\NN}_{\xi'} = \SEM{t}^\MM_{\theta,\xi}$;
\item[(3)] if $\xi'(\mathit{X}) \in \SEM{\sigma}^\MM_\theta$ for all $\mathit{X}^\sigma \in \UV (\phi) \,\cup\, \Vtex$,
then $\SEM{\phi}^\MM_{\theta,\xi}$ implies $\SEM{\encode{\phi}{\guards\at\tinycomma\args^\ninf\tinycomma\erased}}^{\NN}_{\xi'}$.
\end{enumerate}
Moreover,
(1$'$) from the proof of Theorem \ref{thm:correctness-of-guards-at} is replaced with the following two facts:
\begin{itemize}[label=(1'')]
\item[(1$'$)] $(\SEM{\ti(\typex{\sigma},\mathit{X})}^{\NN}_{\xi'} = \xi'(\mathit{X})) = (\xi'(\mathit{X}) \in \SEM{\sigma}^\MM_\theta)$;
\item[(1$''$)] $\SEM{\ti(\typex{\sigma},\mathit{X})}^{\NN}_{\xi'} \in \SEM{\sigma}^\MM_\theta\!$.
\end{itemize}
Again, (1)--(3) follow by induction on the involved type, term, or formula;
(1$'$) and (1$''$) follow from (1) and
the definition of $\ti^{\NN}$ (and are in turn used to establish (2) and (3)).
The fact that $\NN$ satisfies $\encode{\phi}{\tags\at\tinycomma\args^\ninf\tinycomma\erased}$ for each $\phi \in \PHI$ follows
in the same way as in Theorem \ref{thm:correctness-of-guards-at}.
It remains to show that
$\NN$ satisfies the necessary axioms, namely, the $\encode{\phantom{i}}{\args^\ninf\tinycomma\erased}$-translations
of the $\encode{\phantom{i}}{\tags\at}$ typing axioms. This follows from (1$'$)
and the definition of $f^{\NN}$.
\betweenitems\noindent
\textsc{Complete}:\enskip
This part is again similar to the corresponding one for guards.
By Theorem~\ref{thm:completeness-of-argsx}, it suffices to show $\encode{\phantom{i}}{\tags\at}$ complete.
Let $\NN = \bigl(\dom',(k^{\NN})_{k \in \KK},(\ff^{\,\NN})_{\ff \in \FF \mathrel{\uplus} \{\ti\}},(\pp^{\NN})_{\pp \in \PP}\bigr)$
be a model of $\encode{\PHI}{\tags\at}$, for which we may assume that the domains are mutually disjoint.
We define $\MM = \bigl(\dom,(k^\MM)_{k \in \KK},(\ff^{\,\MM})_{\ff \in \FF},(\pp^\MM)_{\pp \in \PP}\bigr)$
by restricting the domains not according to the guards as in the proof of Theorem \ref{thm:correctness-of-guards-at},
but according to the property that tags be identity:\
for each $D \in \dom'$, let $D' = \{d \in D \mid \ti^{\NN}(D)(d) = d \}$;
we define $\dom = \{D' \mid D \in \dom'\}$.  The other components of $\MM$ are defined as in Theorem \ref{thm:completeness-of-argsx},
and the proof is analogous to that for guards.
\leftOut{
Again, the typing axioms ensure that each $D'$ is nonempty and that the interpretations in $\MM$ are well defined,
again $\SEM{\gamma}^\MM_{\theta,\xi} = \SEM{\encode{\gamma}{\guards\at}}^{\NN}_{\theta,\xi}$,
where $\gamma$
is first a term and then a formula,
follow by induction on $\gamma$ (for arbitrary $\theta$ and $\xi$).  Again, we obtain that $\MM$ is a model of
$\encode{\PHI}{\guards\at}$ by the usual route.
}
\QED
\end{proof}

Unlike in the case of guards, the traditional type tag encoding \tags{} is not a
particular case of the cover-based encoding \tags\at, since only \tags\at{}
introduces typing axioms, and also \tags{} further restricts the type arguments
to phantom arguments. Nevertheless, we can provide a somewhat similar argument
for \tags{}'s correctness. Although the encoding is well known, we are not aware
of any soundness proof in the literature.

\medskip

\noindent
\emph{Proof of Theorem \ref{thm:correctness-of-tags}} (Correctness of \tags).

\betweenitems\noindent
\textsc{Sound}:\enskip
Let $\MM$ be as in the proof of Theorem \ref{thm:correctness-of-tags-at}.
We define $\NN$ similar to there, but with the following difference:\
$\D'$ contains not only the domains in $\dom$ and their elements, but also the set $P$ of ``polymorphic values'', i.e.\
functions that take a domain $D \in \dom$ and return an element of $D$.
The reason for the polymorphic values is that, due to the switch from noninferable arguments to phantom arguments,
the domain of the result for $f^{\NN}$
with $f : \forall\ov{\alpha}.\;\ov{\sigma} \ra \sigma$ in $\FF$
will no longer be completely inferable from the arguments, but will miss precisely the
nonphantom noninfrebale arguments, which correspond to type variables belonging to the result type $\sigma$ but not to the argument types $\ov{\sigma}$.
Consequently, the result of applying $f^{\NN}$ to ``well-typed'' arguments
will be polymorphic values that wait for the result domain $D$ and, if it has the
form $\SEM{\sigma}_{\theta}$ for an appropriate $\theta$, return the result from $\SEM{\sigma}_{\theta}$ according to the
interpretation of $f^{\NN}$.  Then the interpretation of the tag applied to polymorphic values
will select the desired element by providing the domain.
The reason why this approach works with \tags{} but not with \tags\at{} is that in \tags{} tags are applied everywhere,
thus resolving immediately any polymorphic value emerging from an application of $f^{\NN}$.

Formally,
we define the components of $\NN$ as follows.
%
First, $k^{\NN}$ and $p^{\NN}$ are as in the proofs of Theorems
\ref{thm:correctness-of-guards-at} and \ref{thm:correctness-of-tags-at}.
(For predicate symbols, the notions of phantom and noninferable type arguments coincide.)
$\D' = \dom \,\cup\, E \,\cup\, P$, where $E = \bigcup_{D \in \dom} D$ (the elements) and $P = \prod_{D \in \dom} D$ (the polymorphic values).
Moreover, $\ti^{\NN} : \D' \times \D' \ra \D'$ is defined as follows:
$$
\ti^{\NN}(a,b) =
\left\{
\begin{array}{@{}ll@{}}
 b & \mbox{if $a \in \dom$ and $b \in a$}
\\
 \eps{a} & \mbox{if $a \in \dom$, $b \in E$ and $b \not\in a$}
\\
 b(a) & \mbox{if $a \in \dom$ and $b \in P$}
\\
 \eps{\dom} & \mbox{otherwise}
\end{array}
\right.
$$
Assume $\ov{\alpha} = (\alpha_1,\ldots,\alpha_m)$,
$\ov{\gamma} = (\gamma_1,\ldots,\gamma_r)$,
$\ov{\beta} = (\beta_1,\ldots,\beta_n)$,
$\ov{\sigma} = (\sigma_1,\ldots,\sigma_u)$,
$\ov{\tau} = (\tau_1,\ldots,\tau_v)$,
and
$s : \forall (\ov{\alpha},\ov{\gamma},\ov{\beta}).\;(\ov{\sigma},\ov{\tau}) \ra \varsigma$ is in $\FF \mathrel{\uplus} \PP$,
such that
the first $m$ type arguments are phantom,
the middle $r$ arguments 
are noninferable nonphantoms,
the last $n$ type arguments are inferable,
and the first $u$ term arguments constitute $s$'s cover.
(Again, the general case with arbitrary permutations of the arguments
can be handled similarly.)
Then $s$ is an $(m+u+v)$-ary symbol in $\FF' \mathrel{\uplus} \PP'$.
Let $\ov{D} = (D_1,\ldots,D_m) \in {\D'}^m$, $\ov{d} = (d_1,\ldots,d_u) \in {\D'}^u$ and $\ov{e} = (e_1,\ldots,e_v) \in {\D'}^v$, and consider the
following condition on domains:
\begin{center}
$(D_1,\ldots,D_m) \in \dom^m$ and there exists $\ov{E} = (E_1,\ldots,E_n) \in \dom^n$ such that
each $d_i$ is in $\SEM{\sigma_i}_\theta^\MM$, where $\theta$ maps $(\ov{\alpha},\ov{\beta})$ to $(\ov{D},\ov{E})$
\end{center}
Assuming the condition holds, again by Lemma~\ref{lem-infer-domains} (taking $\ov{\alpha}$ from there to be $(\ov{\alpha},\ov{\gamma})$)
there exists precisely one tuple $\ov{E}$ satisfying it,
and we define the ``correction'' vector $\ov{e}'$ from $\ov{e}$ as in the proof of
Theorems \ref{thm:correctness-of-guards-at} and \ref{thm:correctness-of-tags-at}.
We now define
$$
f^{\NN}(\ov{D},\ov{d},\ov{e}) = \left\{
\begin{array}{@{}ll@{}}
 \pi_{\ov{D},\ov{d},\ov{e}} & \mbox{if the domain condition holds}
\\
 \eps{\dom} 
& \mbox{otherwise}
\end{array}
\right.
$$
where the polymorphic value $\pi_{\ov{D},\ov{d},\ov{e}} \in P$ is defined as follows:
$$
\pi_{\ov{D},\ov{d},\ov{e}}(D) = \left\{
\begin{array}{@{}ll@{}}
 s^{\MM}(\ov{D},\ov{G},\ov{E})(\ov{d},\ov{e}') & \mbox{if $D$ has the form $\SEM{\sigma}_\theta^\MM$ and $\theta$
assigns $(\ov{\alpha},\ov{\gamma},\ov{\beta})$ to $(\ov{D},\ov{G},\ov{E})$}
\\
 \eps{\dom} 
& \mbox{otherwise}
\end{array}
\right.
$$
To show that $\NN$ is a model of $\encode{\PHI}{\tags\tinycomma\args^\phan\tinycomma\erased}$,
we proceed almost identically to the proof of Theorem \ref{thm:correctness-of-tags-at}. Starting
with a valuation $\xi' : \VV \ra \D'$ that respects types (as in the proofs of Theorems \ref{thm:correctness-of-guards-at} and \ref{thm:correctness-of-tags-at}).
We define $\theta : \AAA \ra \dom$ by $\theta(\alpha) = \xi'(\VV(\alpha))$ and $\xi : \VV \ra \prod_{\sigma \in \Type_\Sigma} \SEM{\sigma}^\MM_\theta$ by
$\xi(\mathit{X})(\sigma) = \ti^{\NN}(\SEM{\sigma}^{\MM}_\theta,\xi'(\mathit{X}))$.
Facts (1)--(3) below follow by induction
on $\sigma$, $t$, or $\phi$ (for arbitrary $\xi'$): 
\begin{enumerate}
\item[(1)] $\SEM{\typex{\sigma}}^{\NN}_{\xi'} = \SEM{\sigma}^\MM_\theta$;
\item[(2)] $\SEM{\encode{t}{\tags\tinycomma\args^\phan\tinycomma\erased}}^{\NN}_{\xi'} = \SEM{t}^\MM_{\theta,\xi}$;
\item[(3)] $\SEM{\phi}^\MM_{\theta,\xi} = \SEM{\encode{\phi}{\tags\tinycomma\args^\phan\tinycomma\erased}}^{\NN}_{\xi'}$\!.
\end{enumerate}
It follows by the usual route that $\NN$ is a model of each $\encode{\phi}{\tags\tinycomma\args^\phan\tinycomma\erased}$ with $\phi \in \PHI$,
hence of $\encode{\PHI}{\tags\tinycomma\args^\phan\tinycomma\erased}$.
\betweenitems\noindent
\textsc{Complete}:\enskip
By Theorem \ref{thm:completeness-of-argsx}, it suffices to show $\encode{\phantom{i}}{\tags}$ complete.
Let $\NN = \bigl(\dom',(k^{\NN})_{k \in \KK},$ $(\ff^{\,\NN})_{\ff \in \FF \mathrel{\uplus} \{\ti\}},(\pp^{\NN})_{\pp \in \PP}\bigr)$
be a model of $\encode{\PHI}{\tags}$, for which we may assume that the domains are mutually disjoint.
We must construct a model $\MM = \bigl(\dom,(k^\MM)_{k \in \KK},(\ff^{\,\MM})_{\ff \in \FF},(\pp^\MM)_{\pp \in \PP}\bigr)$ of $\PHI$
by somehow removing the tags from $\NN$.
Unlike for \tags\at{}, here $\encode{\phantom{i}}{\tags}$ does not contain typing axioms ensuring that
a the restriction of $\NN$ to elements for which the tag interpretation is the identity forms a valid submodel; thus $\NN$
cannot be defined that way.
On the other hand, we know that the formulae in $\encode{\phantom{i}}{\tags}$ are fully tagged; in particular,
all variables are accessed through tags. This means we can take the domains of $\MM$ by restricting those of $\NN$ to the images
of the tag interpretations. The result will not be a submodel, so cannot take the functions and predicates $s^\MM$ of $\MM$ to be restrictions of those
of $\NN$. Instead, $s^\MM$ will be defined by applying $s^{\NN}$ through the tag ``interface''.

Formally, let, for each $D \in \dom'$, let $D'$ be the image of $\ti^{\NN}(D)$ (which is a subset of~$D$).
Each $D'$ uniquely determines its $D$.
We define the auxiliary function $\smash{T : \prod_{D \in \dom'} D \ra D}$, which only applies $\ti^{\NN}$ if the element is not already in the image:
$$
T_{D}(d) = \left\{
\begin{array}{@{}ll@{}}
 d & \mbox{if $d \in D'$} \\
 \ti^{\NN}(D)(d) & \mbox{otherwise}
\end{array}
\right.
$$
Notice that $T_D$ adds no tags around an existing tag:
$T_D(\ti^{\NN}(D)(d)) = \ti^{\NN}(D)(d)$.
%
We define the components of $\MM$ as follows:
%
$\dom = \{D' \mid D \in \dom'\}$;
$k^{\MM}(\ov{D'}) = k^{\NN}(\ov{D})$;
if $f : \forall\ov{\alpha}.\; \ov{\sigma} \ra \sigma$ is in $\FF$, then
$f^{\MM}(\ov{D'})(d_1,\ldots,d_n) =
\ti^{\NN}(\SEM{\sigma}^{\NN}_\theta)(f^{\NN}(\ov{D})(T_{\SEM{\sigma_1}^{\NN}_\theta}(d_1),\allowbreak\ldots,\allowbreak T_{\SEM{\sigma_n}^{\NN}_\theta}(d_n)))$,
where $\theta$ maps each $\alpha_i$ to $D_i$;
if $p : \forall\ov{\alpha}.\; \ov{\sigma} \ra \bool$ is in $\PP$, then
$p^{\MM}(\ov{D'})(d_1,\ldots,d_n) =
p^{\NN}(\ov{D})(T_{\SEM{\sigma_1}^{\NN}_\theta}(d_1),\ldots,T_{\SEM{\sigma_n}^{\NN}_\theta}(d_n))$,
where $\theta$ maps each $\alpha_i$ to $D_i$.
%
The interpretation $f^{\MM}$ applies tags $\ti^{\NN}$ at the top,
whereas the interpretations of $f^{\MM}$ and $p^{\MM}$ apply the modified tag $T$ at the bottom. This is to ensure
that the interpretations of terms and atomic formulae in $\MM$ do not add several consecutive layers of tags.
For example, when interpreted in $\MM$ within $f(\mathit{X}^\sigma)$, $f$ should add a tag around $\mathit{X}^\sigma$, as well as one around $f(\mathit{X}^\sigma)$;
however, when interpreted in $\MM$ within $f(g(\mathit{X}^\sigma))$,
$f$ should not add a tag around $g(\mathit{X}^\sigma)$, since $g$ already adds one.

Given $\theta : \AAA \ra \dom$ and
$\xi' : \VV \ra \prod_{\sigma \in \Type_{\Sigma'}} \SEM{\sigma}^{\NN}_{\theta'}$,
we define $\theta' : \AAA \ra \dom'$ by letting $\theta'(\alpha)$ be the unique $D \in \dom'$ such that $D' = \theta(\alpha)$,
and $\xi : \VV \ra \prod_{\sigma \in \Type_\Sigma} \SEM{\sigma}^{\MM}_\theta$ by
$\xi(\mathit{X})(\sigma) = \ti^{\NN}(\SEM{\sigma}^{\NN}_\theta)(\xi'(\mathit{X})(\sigma))$.
The next facts follow by induction on $\sigma$, $t$, or $\phi$ (for arbitrary $\theta$ and $\xi'$):
\begin{enumerate}
\item[(1)] $\SEM{\sigma}^{\MM}_\theta$ is in the image of $\ti^{\NN}(\SEM{\sigma}^{\NN}_{\theta'})$;
\item[(2)] $\SEM{t}^{\MM}_{\theta,\xi} = \SEM{\encode{t}{\tags}}^{\NN}_{\theta',\xi}$;
\item[(3)] $\SEM{\phi}^{\MM}_{\theta,\xi} = \SEM{\encode{\phi}{\tags}}^{\NN}_{\theta',\xi}$.
\end{enumerate}
Let $\phi \in \PHI$.  To show that $\MM$ is a model of $\phi$, let $\theta : \AAA \ra \dom$ and
$\xi : \VV \ra \prod_{\sigma \in \Type_\Sigma} \SEM{\sigma}^{\MM}_\theta$.  By (1),
there exists $\xi' : \VV \ra \prod_{\sigma \in \Type_{\Sigma'}} \SEM{\sigma}^{\NN}_{\theta'}$
such that $\xi(\mathit{X})(\sigma) = \ti^{\NN}(\SEM{\sigma}^{\NN}_\theta)(\xi'(\mathit{X})(\sigma))$ for all $\mathit{X}$ and $\sigma$---simply take
$\xi'(\mathit{X})(\sigma)$ to be any element $d'$ in $\SEM{\sigma}^{\NN}_{\theta'}$ such that $\ti^{\NN}(\SEM{\sigma}^{\NN}_{\theta'})(d') = \xi(\mathit{X})(\sigma)$.
Since $\SEM{\encode{\phi}{\tags}}^{\NN}_{\theta',\xi}$ holds, it follows from (3) that $\SEM{\phi}^{\MM}_{\theta,\xi}$, i.e.\
$\SEM{\phi}^{\MM}$, also holds, as desired.
\qed

\subsection{Polymorphic L\"owenheim--Skolem}

The previous completeness results can be used to prove
L\"owenheim--Skolem-like properties for polymorphic first-order logic by
appealing to the more manageable (and better known) monomorphic first-order
logic. 

\begin{lem} \afterDot
\label{lem-low-sko-aux-poly}
If a polymorphic $\Sigma$-problem $\PHI$ has a model, it also has a model $\MM = \bigl(\dom,\allowbreak
(k^\MM)_{k \in \KK},\allowbreak\_,\_\bigr)$ such that the following conditions are met\/{\rm:}\footnote{These
correspond to the
conditions from Lemma \ref{lem-syntactic-domains} plus countability of each $D \in \dom$.}
\begin{enumerate}
\item each $D \in \dom$ is countable\/{\rm;}
\item each $k^\MM$ is injective, and \smash{$k^\MM(\ov{D}) \not= {k'}^\MM(\ov{E})$} whenever $k \not= k'${\rm;}
\item $\dom = \{\SEM{\tau}^\MM \mid \tau \in \GType_\Sigma\}${\rm;}
\item the type interpretation function\/ $\SEM{\phantom{i}}^\MM$ is a bijection between\/ $\GType$ and\/ $\dom${\rm;}
\item $\dom$ is countable\/{\rm;}
\item $\dom$ is disjoint from each $D \in \dom$, and any distinct $D_1,D_2 \in \dom$ are disjoint.
\end{enumerate}
\end{lem}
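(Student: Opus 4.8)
The plan is to push the problem down into ordinary (untyped) first-order logic, where the classical downward L\"owenheim--Skolem theorem is available, pull the resulting countable model back through a correct encoding, and then recover the structural conditions by one final application of Lemma~\ref{lem-syntactic-domains}. Observe first that conditions (2)--(6) are verbatim those of Lemma~\ref{lem-syntactic-domains}, so the only genuinely new content is condition~(1), the countability of each individual domain $D \in \dom$; the whole argument is essentially a bookkeeping exercise that tracks cardinalities through constructions we already have.

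Concretely: over a countable signature, $\PHI$ is a countable set of sentences, hence so is any of its translations. Fix a correct polymorphic-to-untyped encoding, say $\guards$ (Theorem~\ref{thm:correctness-of-guards}). If $\PHI$ has a model, soundness gives that the untyped problem $\encode{\PHI}{\guards}$ is satisfiable, and since it lives over a countable signature the classical downward L\"owenheim--Skolem theorem supplies a model $\NN$ of $\encode{\PHI}{\guards}$ whose domain $\D$ is \emph{countable}. I would then apply the completeness half of $\guards$ to $\NN$. That half factors as completeness of $\erased$ (Theorem~\ref{thm:completeness-of-erased}), then of $\args^{\ninf}$ (Theorem~\ref{thm:completeness-of-argsx}), then of $\encode{\phantom{i}}{\guards\at}$ (inside the proof of Theorem~\ref{thm:correctness-of-guards-at}, as $\guards$ is a particular case of $\guards\at$). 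Inspecting those three decodings, the reconstructed polymorphic model has a single domain, obtained from $\D$ by at most restricting to a subset; so the model $\MM_0 \models \PHI$ it produces already satisfies condition~(1).

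Finally I would apply Lemma~\ref{lem-syntactic-domains} to $\MM_0$ to obtain a model $\MM$ satisfying conditions (2)--(6), and check that this does not destroy condition~(1). The construction in that lemma's proof replaces each domain $D$ first by copies of the form $D \times \{\sigma\}$ for ground types $\sigma$, then restricts to the image of the type interpretation $\SEM{\phantom{i}}$, then passes to pairwise disjoint copies; each of these steps is a bijection on the underlying set, so countability is preserved and every domain of $\MM$ is again countable. The one delicate point, and the main obstacle, is exactly this cardinality bookkeeping: one must verify that neither the composite decoding behind $\guards$ nor the normalisation inside Lemma~\ref{lem-syntactic-domains} ever enlarges a domain, which holds because every domain occurring in either construction is in bijection with a subset of the single countable untyped domain~$\D$.
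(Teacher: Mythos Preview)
Your proposal is correct and essentially identical to the paper's proof: encode $\PHI$ via a sound and complete guards encoding, apply classical L\"owenheim--Skolem to the resulting untyped problem to obtain a countable model, trace the completeness construction back to a polymorphic model whose domains are subsets of that countable carrier, and finally invoke Lemma~\ref{lem-syntactic-domains} while checking that its domain manipulations preserve countability. The only cosmetic difference is that the paper works with $\guards\at$ directly whereas you use $\guards$ and then note it is a special case of $\guards\at$; the decoding chain and the cardinality bookkeeping are the same.
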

\begin{proof}
%
%
Assume $\PHI$ has a model, and recall that by definition the polymorphic signatures that we consider
are countable.  By the soundness of \hbox{\rm\guards\at} (Theorem \ref{thm:correctness-of-guards-at}),
$\PHIii{\guards\at}$ also has a model and its signature is countable; hence by classical
L\"owenheim--Skolem \cite{hodges-1993}, $\PHIii{\guards\at}$ has a countable model $\NN$. It follows
from the proof of completeness of \hbox{\rm\guards\at} that $\PHI$ has a model
$\MM = (\dom,\_,\_,\_)$ constructed from $\NN$ in such a way
that 
each $D \in \dom$ is countable---hence $\NN$ satisfies (1).
%
%
Now, (2)--(6) follow by applying the same reasoning as in Lemma \ref{lem-syntactic-domains} and noticing that
all the structure modifications from there do not alter the countability of the domains $D \in \dom$.
\QED
\end{proof}

\begin{lem}\afterDot
\label{lem-low-sko-poly} \label{lem:downward-loewenheim-skolem}
If\/ $\PHI$ has a model $\MM = (\dom,\_,\_,\_)$ where all\/ $\SEM{\tau}^\MM$ are infinite for all\/ $\tau \in \GType$,
it also has a model $\NN = (\dom',\_,\_,\_)$ where\/ $\dom'$
and all\/ $D' \in \dom'$ are countably infinite.
\end{lem}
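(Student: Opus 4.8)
The plan is to follow the proof of Lemma~\ref{lem-low-sko-aux-poly} --- encode into untyped logic with the $\guards\at$ encoding, apply classical L\"owenheim--Skolem, and decode back --- but to throw in enough first-order sentences that the countable model obtained from L\"owenheim--Skolem keeps every ground-type ``slice'' infinite, and to track cardinalities carefully when decoding. First a preliminary reduction: if $\KK$ contains no type constructor of arity $\geq 1$, then $\GType_\Sigma$ is finite and every model of $\PHI$ in the normal form of Lemma~\ref{lem-syntactic-domains} has only finitely many domains, so I would first extend $\Sigma$ with a fresh unary type constructor $c$ not occurring in $\PHI$ and extend $\MM$ --- which by Lemma~\ref{lem-syntactic-domains} may be assumed to have all its domains of the form $\SEM{\tau}^\MM$, hence (by hypothesis) all infinite --- by setting $c^\MM(D)=\SEM{\iota}^\MM$ for every $D$, where $\iota$ is a nullary constructor of $\KK$. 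This adds no domains, all ground-type interpretations of the enlarged structure stay infinite, and a model of $\PHI$ over the enlarged signature with the desired properties restricts (forgetting $c$) to one over $\Sigma$ with the same domain collection; so I may henceforth assume $\GType_\Sigma$ is countably infinite.

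Next I would invoke the soundness half of Theorem~\ref{thm:correctness-of-guards-at} to get an untyped model $\NN_0$ of $\encode{\PHI}{\guards\at\tinycomma\args^\ninf\tinycomma\erased}$. The point is that the model constructed there has carrier $\dom \uplus \bigcup_{D\in\dom}D$ with $\is$ interpreted as set membership and, by fact~(1) of that proof, $\SEM{\typex{\tau}}^{\NN_0}=\SEM{\tau}^\MM$ for every ground type $\tau$; hence $\{a : \NN_0\models\is(\typex{\tau},a)\}=\SEM{\tau}^\MM$, which is infinite. I would then put $\Psi := \encode{\PHI}{\guards\at\tinycomma\args^\ninf\tinycomma\erased}\cup\{\psi_{\tau,n} : \tau\in\GType_\Sigma,\ n\geq 1\}$, where $\psi_{\tau,n}$ is the untyped sentence asserting that at least $n$ pairwise distinct $x$ satisfy $\is(\typex{\tau},x)$ --- a genuine sentence, since $\typex{\tau}$ is a closed term of the untyped signature. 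As $\GType_\Sigma$ is countable, $\Psi$ is a countable set of sentences over a countable signature and $\NN_0\models\Psi$; so by classical downward L\"owenheim--Skolem $\Psi$ has a countable model $\NN_1$, in which each slice $\{a : \NN_1\models\is(\typex{\tau},a)\}$ is infinite (by the $\psi_{\tau,n}$) yet countable, hence countably infinite.

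Finally I would run the completeness construction for the $\guards\at$ encoding on $\NN_1$ --- undoing $\erased$ (Theorem~\ref{thm:completeness-of-erased}), then $\args^\ninf$ (Theorem~\ref{thm:completeness-of-argsx}), then $\guards\at$ (the completeness half of Theorem~\ref{thm:correctness-of-guards-at}) --- to obtain a polymorphic model $\NN$ of $\PHI$, keeping track of its domains. Undoing $\erased$ yields a single-domain structure; the $\args^\ninf$-completeness step re-normalises it via Lemma~\ref{lem-syntactic-domains} into one domain per ground type of the intermediate signature, each a copy of the countably infinite carrier of $\NN_1$; and undoing $\guards\at$ then replaces each such domain $D$ by $\{d\in D : \is(D)(d)\}$. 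For domains indexed by genuine ground types $\tau$ of $\Sigma$, this restriction is (a copy of) the slice $\{a:\NN_1\models\is(\typex{\tau},a)\}$, hence countably infinite; for the remaining ``junk'' domains the construction leaves $\is(D)$ unconstrained, and I would simply take it to hold on all of $D$, so those domains stay countably infinite too. The domain collection of $\NN$ is then in bijection with the countably infinite set of ground types of the intermediate signature, so $\dom'$ and every $D'\in\dom'$ are countably infinite; forgetting $c^\NN$ if it was added gives the claim. I expect the bookkeeping of this last paragraph to be the main obstacle --- in particular, checking that the decoding assigns to $\tau$ precisely the slice of $\is$ at $\typex{\tau}$, and that the unconstrained ``junk'' choices can be made uniformly so that no domain of $\NN$ comes out finite.
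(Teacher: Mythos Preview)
Your approach works, but it is considerably more elaborate than the paper's. The paper stays at the polymorphic level throughout: it extends $\PHI$ with the typed sentences
$\Ax = \{\exists X_1,\ldots,X_n:\tau.\;\bigwedge_{i<j}X_i\not\eq X_j \mid \tau\in\GType_\Sigma,\ n\in\mathbb{N}\}$,
observes that the given model $\MM$ already satisfies $\PHI\cup\Ax$, and then invokes Lemma~\ref{lem-low-sko-aux-poly} directly on $\PHI' = \PHI\cup\Ax$ as a black box. The resulting $\NN$ has each domain countable by the lemma, and since $\NN\models\Ax$ and (again by the lemma) every $D'\in\dom'$ equals $\SEM{\tau}^{\NN}$ for some ground $\tau$, each domain is also infinite---hence countably infinite. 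This sidesteps entirely the delicate tracking of $\is$-slices through the three decoding stages that you rightly flag as the main obstacle in your version; there is no need to touch the untyped encoding at all.

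What your route does buy is an explicit treatment of the case where $\GType_\Sigma$ is finite (i.e.\ $\KK$ contains only finitely many nullary constructors and nothing else): the paper's parenthetical ``which is, in particular, countably infinite'' for $\dom'$ tacitly presupposes $|\GType_\Sigma|=\aleph_0$. Your device of adjoining a fresh unary constructor handles this cleanly, and the same one-line patch applies verbatim to the paper's shorter argument. So that observation is worth keeping, but the full encode--L\"owenheim--Skolem--decode detour with slice bookkeeping is unnecessary once you see that the infinity constraints can be expressed as polymorphic sentences and carried along by Lemma~\ref{lem-low-sko-aux-poly}.
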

\begin{proof}
Let $\PHI'$ be $\PHI$ extended with axioms $\Ax =
\{ \exists X_1 : \t,\,\ldots,\,X_n : \t.\;\, \bigwedge_{i<j} X_i \not\eq X_j \mid\allowbreak \t \in \GType_\Sigma,\allowbreak\, n \in \mathbb{N} \}$
that state that all ground types are infinite. Then $\MM =
(\dom,\_,\_,\_)$ is a model of $\PHI'$.
By Lemma~\ref{lem-low-sko-aux-poly}, $\PHI'$ also has a model $\NN = (\dom',\_,\_,\_)$
with $\dom' = \{\SEM{\tau}^{\NN} \mid \tau \in \GType_\Sigma\}$
(which is, in particular, countably infinite)
and each $D' \in \dom'$ {\relax at most} countably infinite, i.e.\
due to $\Ax$ and the fact that each $D' \in \dom$ has the form $\SEM{\tau}^{\NN}$, {\relax actually} countably infinite.
$\NN$ is the desired model of $\PHI$.
\QED
\end{proof}

\section{Monotonicity-Based Type Encodings --- The Monomorphic Case}
\label{sec:monotonicity-based-type-encodings-the-monomorphic-case}

The cover-based encodings of
Section~\ref{sec:alternative-cover-based-encoding-of-polymorphism} removed some of the
clutter associated with type arguments, which are in general necessary to
encode polymorphism soundly. Another family of encodings focuses on the quantifiers
and exploit monotonicity.
For types that are inferred monotonic, 
the translation can omit the type information on term variables of these types.
Informally, a type is monotonic in a problem when, for any model of that
problem, we can increase the size of the domain that interprets the type while
preserving satisfiability.

This section focuses on the monomorphic case, where the input problem contains
no type variables or polymorphic symbols. This case is interesting in its own
right and serves as a stepping stone towards polymorphic monotonicity-based
encodings (\longsect~\ref{sec:complete-monotonicity-based-encoding-of-polymorphism}).

Before we start, we need to define variants of the traditional \erased{}, \tags{}, and \guards{}
encodings that operate on monomorphic problems.
Since the monomorphic \erased{} is essentially identical to the polymorphic \erased{} restricted to
monomorphic signatures, we use the same notation for both.
The monomorphic encodings \mono\tags{} and \mono\guards{}
coincide with \tags{}~and~\guards{}
except that the polymorphic function $\ti\LAN\t\RAN(t)$ and
predicate $\is\LAN\t\RAN(t)$ are replaced by type-indexed families of unary
functions $\ti_{\,\t}(t)$ and predicates $\is_{\,\t}(t)$, 
as is customary in the literature \cite[\S4]{wick-mccune-1989}.

\subsection{Monotonicity}
\label{ssec:monotonicity-monomorphic}

The concept of monotonicity used by Claessen et al.\
\cite[\S2.2]{claessen-et-al-2011}
declares a type $\tau$ monotonic for a finite problem $\PHI$ if for any model $\MM = \bigl((\D_\sigma)_{\sigma \in \Type},\_,\_\bigr)$ of $\PHI$
such that $\D_\tau$ is finite, there exists another model $\NN = \bigl((\D'_\sigma)_{\sigma \in \Type},\_,\_\bigr)$ of $\PHI$
such that $\left|\smash{\D'_\tau}\right| = \left|\smash{\D_\tau}\right| + 1$ and
$\left|\smash{\D'_\sigma}\right| = \left|\smash{\D_\sigma}\right|$ for all $\sigma \not= \tau$.
%
%
\leftOut{
\begin{defi}[Finite Monotonicity]\afterDot
\label{def:finite-monotonicity}%
Let $\t$ be a ground type and $\PHI$ be a 
problem. The type $\t$ is \emph{\vthinspace finitely monotonic} in $\PHI$ if for
all models
$\mathcal M$ of $\PHI$ such that $\SEM{\t}^{\mathcal M}$ is finite,
there exists a model $\mathcal M'$ where $\left|\smash{\SEM{\t}^{\mathcal M\smash{'}}}\right|
= \left|\smash{\SEM{\t}^{\mathcal M}}\right| + 1$ and
$\left|\smash{\SEM{\u}^{\mathcal M\smash{'}}}\right|
= \left|\smash{\SEM{\u}^{\mathcal M}}\right|$ for all $\u \not= \t$.
The problem $\PHI$ is \emph{\vthinspace finitely monotonic}
if all its types are monotonic.
\end{defi}
}
Their notion, which we call {\em finite monotonicity},
is designed to ensure that it is possible to produce a model having all types interpreted by countably infinite sets,
and finally a model having all types interpreted as the same set, so that type information can be soundly erased.
In this article, we take directly the infinite-interpretation property as
definition of monotonicity and extend the notion to sets of types:
%

\begin{defi}[Monotonicity]\afterDot
\label{def:infinite-monotonicity}%
Let $S$ be a set of 
types and $\PHI$ be a 
problem.
The set $S$ is 
\emph{monotonic} in
$\PHI$ if for all models $\MM = \bigl((\D_\sigma)_{\sigma \in \Type},\_,\_\bigr)$ of $\PHI$,
there exists a model $\NN = \bigl((\D'_\sigma)_{\sigma \in \Type},\_,\_\bigr)$ of $\PHI$  such that
for all 
types $\sigma$,
$\D'_\sigma$ is infinite if $\sigma \in S$ and
$\left|\smash{\D'_\sigma}\right| = \left|\smash{\D_\sigma}\right|$ otherwise.
A type $\t$ is 
\emph{monotonic} if $\{\t\}$ is monotonic.
The problem~$\PHI$ is 
\emph{monotonic} if the set $\Type$ of all types is monotonic.
\end{defi}

Full type erasure is sound for monotonic monomorphic problems.
The intuition is that a model of such a problem can be extended
into a model where all types are interpreted as sets of the same cardinality,
which can be merged to yield an untyped model. 

\begin{exa}%
\label{ex:monkey-village-monotonicity}%
The monkey village of Example~\ref{ex:monkey-village} is monotonic
because any model with finitely many bananas can be extended to a model with
infinitely many, and any model with infinitely many bananas and finitely many
monkeys can be extended to one where monkeys and bananas have the same infinite
cardinality (cf.\ Example~\ref{ex:monkey-village-erased}). However, the type of
monkeys is not monotonic on its own, as we argued intuitively in
Example~\ref{ex:monkey-village}, nor is it finitely monotonic.
\end{exa}

\leftOut{
The next result allows us to focus on infinite monotonicity.
Because of it, we will refer to infinite monotonicity as just
\emph{monotonicity} from now on.
\begin{thm}[Subsumption of Finite Monotonicity]\afterDot
\label{thm:subsumption-of-finite-monotonicity}%
Let $\t$ be a ground type and $\PHI$ be a problem. If $\t$ is finitely
monotonic in $\PHI$, then
$\t$ is 
monotonic in~$\PHI$.
\end{thm}
\begin{proof}
Assume that $\mathcal M$ is a model of $\PHI$:\ we will find a model of
$\PHI$ where $\SEM{\t}$ is infinite and all other types have the same
cardinality as in $\mathcal M$.

Let $\PHI'$ be $\PHI$ extended with axioms that state that for all types
$\u \neq \t$, the cardinality of $\u$ is the same as in
$\mathcal M$. Then $\PHI'$ remains finitely monotonic,
and $M$ is a model of it. It remains to show that there is a model
of $\PHI'$ where $\SEM{\t}$ is infinite.

Let $K$ be the set of cardinality constraints
$\left\{ \left|\smash{ \SEM{\t} }\right| \geq k \;\middle\vert\; k \in \mathbb{N} \right\}$.
Any finite subset of $K$ asserts that $\left|\smash{ \SEM{\t} }\right| \geq k$
for some $k \in \mathbb{N}$, while $K$ as a whole asserts that
$\left|\smash{ \SEM{\t} }\right| \geq k$ for \emph{all} $k$, i.e.\
$\SEM{\t}$ is infinite.

Since $\t$ is finitely monotonic in $\PHI'$, there are models of $\PHI'$
where $\SEM{\t}$ is arbitrarily large (but finite). This means that
$\PHI'$ taken together with any finite subset of $K$ is
satisfiable. By the compactness theorem \cite{hodges-1993}, $\PHI' \cup K$ as a whole must be
satisfiable. But $K$ is only satisfiable when $\SEM{t}$ is infinite,
so in this model $\SEM{t}$ must be infinite. \QED
\end{proof}

We need to introduce a few lemmas before we can reestablish the key result of
Claessen et al.\ for our notion of monotonicity.%
\footnote{%
Another reason for reproving the result is that their proof implicitly relies on
a flawed statement of the L\"owenheim--Skolem theorem for monomorphic
first-order logic (in their Lemma~3).}
}


\leftOut{
\begin{lem}[Same-Cardinality Erasure]\afterDot
\label{lem:same-cardinality-erasure}%
Let $\PHI$ be a monomorphic problem.
If $\PHI$ has a model where all domains have cardinality $k$,
$\erasedx{\PHI}$ has a model where the unique domain has cardinality $k$.
\end{lem}
\begin{proof}
See Theorem~1 in Bouillaguet et al.~\cite[\S4]{bouillaguet-et-al-2007}
or Lemma~1 in Claessen et al.\ \cite[\S1]{claessen-et-al-2011}.\,\hbox{}\QED
\end{proof}
}

\begin{thm}[Monotonic Erasure]\afterDot
\label{thm:monotonic-erasure-monomorphic}%
Full type erasure is sound for monotonic monomorphic problems.
\end{thm}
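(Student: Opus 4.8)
The plan is to chain together three facts: monotonicity inflates every domain to infinite size, Löwenheim--Skolem deflates them to a common countable size, and a model in which all domains are equinumerous collapses to an untyped model. I would phrase the argument as a sequence of model transformations, keeping the final bookkeeping routine.

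First, suppose the monomorphic problem $\PHI$ has a model $\MM = \bigl((\D_\sigma)_{\sigma \in \Type},\_,\_\bigr)$. Since $\PHI$ is monotonic, the set $\Type$ of all its types is monotonic in $\PHI$, so Definition~\ref{def:infinite-monotonicity} applied with $S = \Type$ gives a model of $\PHI$ in which every domain is infinite. A monomorphic problem is a polymorphic one all of whose type constructors are nullary, so here $\GType = \Type$ and Lemma~\ref{lem:downward-loewenheim-skolem}, applied to this model (whose domains are precisely its ground-type interpretations, all infinite), yields a model $\NN = \bigl((\D'_\sigma)_{\sigma \in \Type},\_,\_\bigr)$ of $\PHI$ in which every $\D'_\sigma$ is countably infinite.

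Second, I would collapse $\NN$ to an untyped structure of a common cardinality. Fix a countably infinite set $\D$ and, for each type $\sigma$, a bijection $h_\sigma : \D'_\sigma \to \D$; define $\mathcal U = \bigl(\D,(\ff^{\,\mathcal U})_{\ff \in \FF},(\pp^{\mathcal U})_{\pp \in \PP}\bigr)$ by transporting the interpretations of $\NN$ along the $h_\sigma$: for $f : \sigma_1 \times \cdots \times \sigma_n \to \sigma$ in $\FF$, set $f^{\mathcal U}(a_1,\ldots,a_n) = h_\sigma\bigl(f^{\NN}(h_{\sigma_1}^{-1}(a_1),\ldots,h_{\sigma_n}^{-1}(a_n))\bigr)$, and for $p : \sigma_1 \times \cdots \times \sigma_n \to \bool$ in $\PP$, let $p^{\mathcal U}$ hold of $(a_1,\ldots,a_n)$ exactly when $p^{\NN}$ holds of $(h_{\sigma_1}^{-1}(a_1),\ldots,h_{\sigma_n}^{-1}(a_n))$. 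For a sentence $\phi \in \PHI$, comparing $\NN$ and $\mathcal U$ along valuations related by $\xi'(\mathit{X}) = h_\sigma\bigl(\xi(\mathit{X})(\sigma)\bigr)$, where $\sigma$ is the (by Convention~\ref{con-formulae} within-$\phi$ unique) type of $\mathit{X}$, a routine structural induction shows $h_\sigma\bigl(\SEM{t}^{\NN}_\xi\bigr) = \SEM{\erasedx{t}}^{\mathcal U}_{\xi'}$ for every subterm $t : \sigma$ and $\SEM{\psi}^{\NN}_\xi = \SEM{\erasedx{\psi}}^{\mathcal U}_{\xi'}$ for every subformula $\psi$; the quantifier cases use precisely the surjectivity of the $h_\sigma$, so that ranging over $\D$ is interchangeable with ranging over $\D'_\sigma$. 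Instantiating at the sentences of $\PHI$ shows $\mathcal U$ is a model of $\erasedx{\PHI} = \{\erasedx{\phi} \mid \phi \in \PHI\}$, hence $\erasedx{\PHI}$ is satisfiable.

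I expect the first half to be the real substance: monotonicity as defined here only makes the domains simultaneously infinite, not equinumerous, so the detour through the Löwenheim--Skolem property of monomorphic (equivalently, polymorphic-with-nullary-constructors) first-order logic is what does the genuine work, and one must check that its side conditions---in particular Convention~\ref{con-iota}, which is vacuous here since every type constructor of a monomorphic signature is nullary---are met under the polymorphic reading. The collapsing step is standard (it is the same-cardinality erasure used by Claessen et al.~\cite{claessen-et-al-2011}), the only care needed being the per-formula assignment of types to variables so that a single untyped valuation faithfully mirrors the typed one.
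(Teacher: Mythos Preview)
Your proposal is correct and follows essentially the same three-step approach as the paper: use monotonicity to make all domains infinite, use L\"owenheim--Skolem to make them countably infinite, then collapse to a single untyped domain via bijections. The only cosmetic differences are that the paper invokes the classical many-sorted L\"owenheim--Skolem theorem directly (rather than the polymorphic Lemma~\ref{lem:downward-loewenheim-skolem}) and first passes to a model where all domains are literally the same set $D$, making the final induction slightly simpler since no bijections need to be carried through.
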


\begin{proof}
Let\/ $\PHI$ be such a problem, let $\Sigma = (\Type,\FF,\PP)$ be its signature, and let $\Sigma' = (\FF',\PP')$
be the target signature of \erased{}.
If $\PHI$ is satisfiable, it has a model where all domains are infinite
by definition of monotonicity.
Since $\Sigma$ is countable, by the downward L\"owenheim--Skolem theorem \cite{hodges-1993}, $\PHI$ also has a model
where all domains are countably infinite, hence also a model
$\MM = \bigl((\D_\sigma)_{\sigma \in \Type},(\ff^{\,\MM})_{\ff \in \FF},(\pp^\MM)_{\pp \in \PP}\bigr)$
where all domains are interpreted as the same set $D$, i.e.\ each $\D_\sigma$ is $D$.
We define the $\Sigma'$-structure $\NN = \bigl(\D,(\ff^{\,\NN})_{\ff \in \FF'},(\pp^{\NN})_{\pp \in \PP'}\bigr)$
by taking $\D$ to be $D$ and each $s^{\NN}$ to be $s^\MM$.
For each $\xi' : \VV \ra D$, we define $\xi : \VV \ra \prod_{\sigma \in \Type} \D_\sigma$,
i.e.\ $\xi : \VV \ra \Type \ra D$, by $\xi(\mathit{X})(\sigma) = \xi'(\mathit{X})$.
The next facts follow by induction on $t$ and $\phi$ (for arbitrary $\xi$):
\begin{enumerate}
\item[(1)] $\SEM{\encode{t}{\erased}}^{\NN}_{\xi'} = \SEM{t}^\MM_{\xi}$;
\item[(2)] $\SEM{\encode{\phi}{\erased}}^{\NN}_{\xi'} = \SEM{\phi}^\MM_{\xi}$\!.
\end{enumerate}
It follows by the usual route that $\NN$ is a model of $\encode{\PHI}{\erased}$.
%
\QED
\end{proof}

\begin{rem}
An alternative to invoking the L\"owenheim--Skolem theorem
would have been to require countable infinity in the definition of monotonicity.
Although it makes no difference in this article, the more general definition
would also apply in the presence of uncountable interpreted types (e.g.\ for the
real numbers).
The proof of
Theorem~\ref{thm:monotonic-erasure-monomorphic} can be adapted to go beyond
countable infinity if desired.
\end{rem}

It is often convenient to determine the monotonicity of single types separately,
viewed as singleton sets.
This is enough to make the set of all types, i.e.\ the problem, monotonic:

\begin{lem}[Global Monotonicity from Separate Monotonicity]\afterDot
\label{lem:monotonicity-preservation-by-union}%
If $\sigma$ is monotonic in $\PHI$ for each type $\sigma$, then $\PHI$ is monotonic.
\end{lem}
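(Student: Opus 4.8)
The plan is to reduce the statement to the compactness theorem for (monomorphic) first-order logic. First I would unfold the definitions. By Definition~\ref{def:infinite-monotonicity}, saying that $\PHI$ is monotonic means that the set $\Type$ of all types is monotonic in $\PHI$; and for $S = \Type$ the side condition $\left|\smash{\D'_\sigma}\right| = \left|\smash{\D_\sigma}\right|$ for $\sigma \notin S$ is vacuous, so this is equivalent to the assertion that whenever $\PHI$ is satisfiable it has a model in which every type is interpreted by an infinite set. Thus, assuming $\PHI$ has a model, it suffices to produce a model of $\PHI$ all of whose domains are infinite.

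Next I would introduce the auxiliary problem $\PHI' = \PHI \cup \Ax$, where $\Ax = \{\exists X_1 : \sigma,\ldots,X_n : \sigma.\;\, \bigwedge_{i<j} X_i \not\eq X_j \mid \sigma \in \Type,\; n \in \mathbb{N}\}$ is the set of sentences expressing that each type is infinite (the same device used in the proof of Lemma~\ref{lem-low-sko-poly}). A model of $\PHI'$ is exactly a model of $\PHI$ all of whose domains are infinite; these sentences also force every sort to be nonempty, so such a model is a genuine structure in the sense of Section~\ref{ssec:monomorphic-first-order-logic}. Hence the goal becomes showing that $\PHI'$ is satisfiable, and by compactness~\cite{hodges-1993} it is enough to show that every finite subset $\Psi \subseteq \PHI'$ is satisfiable. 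Such a $\Psi$ mentions only finitely many types $\sigma_1,\ldots,\sigma_k$, and its members lying in $\Ax$ impose, for each $i$, at most a lower cardinality bound of the form ``$\D_{\sigma_i}$ has at least $n_i$ elements''.

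Finally I would satisfy $\Psi$ by iterating the hypothesis $k$ times. Starting from a model $\MM_0$ of $\PHI$ and given $\MM_{i-1}$, apply the assumed monotonicity of the single type $\sigma_i$ in $\PHI$ to obtain a model $\MM_i$ of $\PHI$ in which $\sigma_i$ is interpreted by an infinite set while the cardinalities of all other types are unchanged from $\MM_{i-1}$. A routine induction then yields that in $\MM_k$ each of $\sigma_1,\ldots,\sigma_k$ is infinite, since once a type's interpretation has been made infinite at some stage every later stage preserves its (infinite) cardinality; hence $\MM_k \models \PHI$ and $\MM_k \models \Psi$, which completes the compactness argument, and the resulting model of $\PHI'$ is the desired witness. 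I do not expect a real obstacle; the one point requiring care is that $\Type$ may be countably infinite, so one cannot simply iterate the hypothesis over all types simultaneously---the role of compactness (or, alternatively, of an ultraproduct of the chain $(\MM_i)_i$ along a nonprincipal ultrafilter) is exactly to lift the ``finitely many prescribed types'' case to the full statement.
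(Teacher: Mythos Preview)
Your proposal is correct and follows essentially the same route as the paper: reduce to compactness by adjoining axioms that force every type to be infinite, and discharge each finite subset by iterating the per-type monotonicity assumption over the finitely many types involved. The only cosmetic difference is that the paper packages the axioms as a single sentence $\phi_k$ asserting that each of the first $k$ types in a fixed enumeration has at least $k$ elements, whereas you use the more standard family $\{\exists X_1,\ldots,X_n:\sigma.\;\bigwedge_{i<j} X_i\not\eq X_j\}$; this makes no material difference.
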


%
\begin{proof}
Let $\Sigma = (\Type,\FF,\PP)$ be the signature of $\PHI$, and assume $\PHI$ is satisfiable.
Let $\sigma_1,\sigma_2,\ldots$ 
be an enumeration of $\Type$, and let $\phi_k$ be a formula stating that for all $i \in \{1,\ldots,k\}$,
$\sigma_i$'s domain has at least $k$ elements. Finally,
let $\PHI' = \{\phi_i \mid i \in \mathbb{N}\}$.
By the monotonicity of the individual types,
it follows by induction on $k$ that each $\PHI \,\cup\, \{\phi_i \mid i \in \{1,\ldots,k\}\}$ is satisfiable,
and hence that each finite subset of $\PHI \,\cup\, \PHI'$ is satisfiable.
By the compactness theorem \cite{hodges-1993}, it follows that $\PHI \,\cup\, \PHI'$ is itself satisfiable, and hence
$\PHI$ has a model with only infinite domains. 
\QED
\end{proof}

\leftOut{
\begin{lem}[Monotonicity Preservation by Union]\afterDot
\label{lem:monotonicity-preservation-by-union}%
Let\/ $\mathcal S$ be a set of sets of ground types and\/ $\PHI$ be a problem.
If every\/ $S \in \mathcal S$ is monotonic in\/ $\PHI$, then\/
$\bigcup \mathcal S$ is monotonic in~$\PHI$.
\end{lem}

\begin{proof}
If $\mathcal S$ is empty or has 1 element, the lemma is trivial.

Suppose $\left|\smash{\mathcal S}\right| = 2$, that is $\mathcal S
= \left\{\t, \u\right\}$. Then given a model of $\PHI$, we first use
$\t$'s monotonicity to get a model where $\SEM{\t}$ is infinite, and
all other domains are unchanged. Then we use $\u$'s monotonicity
to get a model where $\SEM{\t}$ and $\SEM{\u}$ are infinite, and
all other domains are unchanged. This is the required model. By
induction, this establishes the lemma for finite $\mathcal S$.

If $\mathcal S$ is infinite, we use the same compactness argument
as in Theorem~\ref{thm:subsumption-of-finite-monotonicity}.
Suppose $\mathcal M$ is a model of $\PHI$.
Let the set of formulae $K$ consist of, for all types
$\t \in \mathcal S$, the statement that $\SEM{\t}$ is infinite, and
for all types $\t \not\in \mathcal S$, the statement that the
cardinality of $\SEM{\t}$ is the same as in $\mathcal M$.

To show that S is monotonic in $\PHI$, we need to show that
$\PHI \cup K$ is satisfiable. Since any finite subset of
$\mathcal S$ is monotonic in $\PHI$, the union of $\PHI$ and any
finite subset of $K$ is satisfiable. Hence, by compactness,
$\PHI \cup K$ is satisfiable. \QED
\end{proof}
}

\subsection{Monotonicity Inference}
\label{ssec:monotonicity-inference-monomorphic}

Claessen et al.\ introduced a simple calculus to infer finite monotonicity for
monomorphic first-order logic \cite[\S2.3]{claessen-et-al-2011}. The
definition below generalises it from clause normal form to negation normal form.
The generalisation is straightforward; we present it because we later
adapt it to polymorphism.
The calculus is based on the observation that a type $\t$ must be
monotonic if the problem expressed in NNF contains no
positive literal of the form $\mathit{X}\typ{\t} \eq t$ or $t \eq \mathit{X}\typ{\t\!}$,
where $\mathit{X}$ is universal. We call such an occurrence of $\mathit{X}$ a naked occurrence.
Naked variables are unavoidable to express upper bounds on the cardinality of
types in first-order logic.

\begin{defi}[Naked Variable]\afterDot
\label{def:naked-variable}%
The set of \emph{naked variables} $\NV(\phi)$
of a formula $\phi$ is defined as follows:
\begin{align*}
\NV(p(\bar t\,)) & \,=\, \emptyset &
  \NV(t_{1\!} \eq t_2) & \,=\, \{t_1, t_2\} \mathrel\cap \Vt \\
\NV(\lnot\,p(\bar t\,)) & \,=\, \emptyset &
  \NV(t_{1\!} \not\eq t_2) & \,=\, \emptyset \\
\NV(\phi_{1\!} \mathrel\land \phi_2) & \,=\, \NV(\phi_1) \mathrel\cup \NV(\phi_2) &
  \NV(\forall \mathit{X}\mathbin:\t.\;\, \phi) & \,=\, \NV(\phi) \\
\NV(\phi_{1\!} \mathrel\lor \phi_2) & \,=\, \NV(\phi_1) \mathrel\cup \NV(\phi_2) &
  \NV(\exists \mathit{X}\mathbin:\t.\;\, \phi) & \,=\, \NV(\phi) - \left\{ \mathit{X}^\sigma \right\}
%
\end{align*}
For a problem $\PHI$, we define $\NV(\PHI) = \bigcup_{\phi \in \PHI} \NV(\phi)$.
%
\end{defi}

%
We see from the $\exists$ case that existential variables never occur naked in
sentences.

Variables of types other than $\t$ are irrelevant when inferring whether
$\t$ is monotonic; a variable is problematic only if it occurs naked and
has type $\t$. Annoyingly, a single naked variable of type $\t$,
such as $\mathit{X}$ on the right-hand side of the equation
$\const{hd}_{\www}(\const{cons}_{\vvthinspace\www}(\mathit{X}\!, \mathit{Xs})) \eq \mathit{X}$
from Example~\ref{ex:algebraic-lists-monomorphised},
will cause us to classify $\t$ as possibly nonmonotonic.
We regain some precision by extending the calculus with an
infinity analysis, as suggested by Claessen et al\hbox{.}:\ trivially, all types
with no finite models are monotonic.
%
\newcommand{\rhdbf}{{\rhd}\hspace{-1.71ex}\lower.01ex\hbox{$\rhd$}\hspace{-1.685ex}\raise.01ex\hbox{$\rhd$}}

\begin{conv}\label{inftypes-mono}
Abstracting over the specific analysis used to detect infinite types
(e.g.\ Infinox \cite{claessen-lilliestrom-2011}), we fix a set $\Inf(\PHI)$ of
types whose interpretations are guaranteed to be infinite in all models of
$\PHI$, More precisely, the following property is assumed to hold:\
if $\tau \in \Inf(\PHI)$ and $\MM = \bigl((\D_\sigma)_{\sigma \in \Type},\_,\_\bigr)$ is a model of $\PHI$,
then $\D_\tau$ is infinite.
\end{conv}

Our monotonicity calculus takes $\Inf(\PHI)$ into account:


\begin{defi}[Monotonicity Calculus $\MONO{}$]\afterDot
\label{def:monotonicity-calculus-monomorphic}%
Let $\PHI$ be a monomorphic
problem over $\SIGMA = (\mathcal K, \mathcal F, \mathcal P)$.
A judgement $\MONO{\t} \PHI$ indicates
that the ground type $\t$ is inferred monotonic in~$\PHI$.
The \emph{monotonicity calculus} consists of the following rules:\strut

\vskip\abovedisplayskip

\noindent
\hfill
\AXC{$\strut\t \in \Inf(\PHI)$\strut}
\UIC{$\MONO{\t} \PHI$\strut}
\DP
\btwrules
\AXC{$\NV(\PHI) \mathrel\cap \mathcal \{\mathit{\mathit{X}^\sigma} \mid \mathit{X} \in \VV\} = \emptyset$\strut}
\UIC{$\MONO{\t} \PHI$\strut}
\DP
\hfill
\hbox{}

\vskip\abovedisplayskip

\noindent
We write $\MONO{\t} \PHI$ to indicate
that the judgement is derivable and $\NONMONO{\t} \PHI$ otherwise.
\end{defi}

Claessen et al.\ designed a second, more
powerful calculus that extends their first calculus to detect predicates that
act as guards for naked variables.
Whilst the calculus proved successful on a
subset of the TPTP benchmarks \cite{sutcliffe-tptp},
we assessed its suitability on about 1000 problems generated by Sledgehammer
and found no improvement on the simple calculus.
For this reason, we restrict our attention to the first calculus.

\begin{thm}[Soundness of $\MONO{}$]\afterDot
\label{thm:soundness-of-calculus-monomorphic}%
Let\/ $\PHI$ be a monomorphic problem. If\/ $\MONO{\tau} \PHI$, then\/ $\tau$ is
monotonic in\/ $\PHI$.
\end{thm}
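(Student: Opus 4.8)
The plan is to treat the two inference rules of $\MONO{}$ separately. If $\MONO{\tau}\,\PHI$ follows from the first rule, then $\tau \in \Inf(\PHI)$, so by Convention~\ref{inftypes-mono} every model of $\PHI$ already interprets $\tau$ by an infinite set; hence, given any model $\MM$ of $\PHI$, the model $\NN := \MM$ witnesses monotonicity of $\{\tau\}$ in the sense of Definition~\ref{def:infinite-monotonicity} (all domains, $\tau$ included, keep their cardinality and $\SEM{\tau}^{\MM}$ is infinite). So the substance of the proof is the second rule, under which we may assume that $\NV(\PHI)$ contains no variable of type~$\tau$, i.e.\ no universal variable of type~$\tau$ occurs naked in~$\PHI$.

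Fix a model $\MM$ of $\PHI$ over $\Sigma = (\Type,\FF,\PP)$ and an element $\star$ of the nonempty domain $\SEM{\tau}^{\MM}$. I would build $\NN$ by enlarging only the interpretation of~$\tau$:\ pick a countably infinite set $I$ of fresh elements, set $\SEM{\tau}^{\NN} = \SEM{\tau}^{\MM} \uplus I$ and $\SEM{\sigma}^{\NN} = \SEM{\sigma}^{\MM}$ for $\sigma \neq \tau$, and let $\pi$ collapse $I$ onto $\star$ while acting as the identity on every $\SEM{\sigma}^{\MM}$. Each function or predicate symbol $s$ is then interpreted by $s^{\NN}(\bar d) = s^{\MM}(\pi\,\bar d)$; this is well defined, it makes $s^{\NN} = s^{\MM}$ for any symbol not mentioning~$\tau$, and it yields a structure in which $\SEM{\tau}^{\NN}$ is infinite while every other domain has the same cardinality as in $\MM$\dash exactly the shape required by Definition~\ref{def:infinite-monotonicity}. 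A first lemma, by a routine induction on terms, records that for every $\NN$-valuation $\xi'$, writing $\xi := \pi \circ \xi'$, one has $\pi(\SEM{t}^{\NN}_{\xi'}) = \SEM{t}^{\MM}_{\xi}$, and in fact $\SEM{t}^{\NN}_{\xi'} = \SEM{t}^{\MM}_{\xi}$ whenever $t$ is not a variable\dash a non-variable term can only take an ``old'' value, so fresh elements reach a term's value only through naked variables.

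The heart of the argument is then a lemma proved by induction on a subformula $\phi$ of some sentence of $\PHI$:\ provided no variable of type $\tau$ that is universally quantified within $\phi$ belongs to $\NV(\phi)$, and $\xi'$ assigns an element of $\SEM{\tau}^{\MM}$ to every type-$\tau$ variable in $\NV(\phi)$, one has $\SEM{\phi}^{\MM}_{\xi} \Longrightarrow \SEM{\phi}^{\NN}_{\xi'}$ with $\xi := \pi \circ \xi'$. Predicate literals and negative equalities are immediate from the term lemma; the positive equality $t_1 \eq t_2$ is where the hypothesis is used\dash when the common type is $\tau$, each $t_i$ is either non-variable, hence old-valued by the term lemma, or a variable in $\NV(\phi)$, hence old-valued by the assumption on $\xi'$, so the two sides coincide in $\MM$ exactly when they coincide in $\NN$. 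For $\exists X\mathbin:\sigma.\,\psi$ one re-uses the same witness $a \in \SEM{\sigma}^{\MM} \subseteq \SEM{\sigma}^{\NN}$ (so $\pi a = a$), which keeps the invariant on $\xi'$ intact for $\psi$; for $\forall X\mathbin:\sigma.\,\psi$ the invariant about universally quantified type-$\tau$ variables forces $X^{\tau} \notin \NV(\psi)$ when $\sigma = \tau$, which is precisely what lets the inductive step go through for an arbitrary $a' \in \SEM{\sigma}^{\NN}$ (by passing to $a := \pi a'$). Applying the lemma to each $\phi \in \PHI$, where the side conditions hold vacuously because $\NV(\phi)$ has no type-$\tau$ variable at all, gives $\NN \models \PHI$, which finishes the proof. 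I expect the main obstacle to be exactly this quantifier bookkeeping:\ making the induction invariant strong enough to survive descent under $\forall$ over $\tau$ (which is where the naked-variable hypothesis pays off) yet still re-established when descending under $\exists$ over $\tau$ (using the freedom to instantiate with an old witness).
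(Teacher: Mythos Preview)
Your proof is correct and follows essentially the same approach as the paper: enlarge $\SEM{\tau}^{\MM}$ by a countably infinite set of fresh elements that all collapse (via your $\pi$, the paper's $\epssym$) to a fixed old element, interpret symbols through this collapse, and show by induction on terms and formulae that truth transfers from $\MM$ to $\NN$. Your inductive invariant for the formula lemma is in fact stated more carefully than the paper's---the paper simply assumes $\NV(\phi)$ contains no type-$\tau$ variable, whereas your weaker hypothesis (naked type-$\tau$ variables are allowed provided $\xi'$ gives them old values) is what is actually needed to handle the $\exists X\mathbin:\tau$ case, where $X^\tau$ may well lie in $\NV(\psi)$.
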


\begin{proof}
Let $\Sigma = (\Type,\FF,\PP)$ be the signature of $\PHI$, and let
$\tau \in \Type$ such that $\MONO{\tau} \PHI$.
Assume $\PHI$ has a model 
$\MM = \bigl((\D_\sigma)_{\sigma \in \Type},(\ff^{\,\MM})_{\ff \in \FF},(\pp^\MM)_{\pp \in \PP}\bigr)$.
We will construct a model $\NN = \bigl((\D'_\sigma)_{\sigma \in \Type},(\ff^{\,\NN})_{\ff \in \FF},(\pp^{\NN})_{\pp \in \PP}\bigr)$
with the same domains as $\MM$ for $\sigma \not= \tau$
and with $\D'_\tau$ infinite.  If $\tau \in \Inf(\Phi)$, then $\D_\tau$ is already infinite, so can we take $\NN = \MM$.
Otherwise, $\tau \not\in \Inf(\Phi)$, which means the second rule of the calculus must have been applied
and no variables of type $\tau$ occur naked,
i.e.\
$\NV(\PHI) \mathrel\cap \mathcal \{\mathit{\mathit{X}^\tau} \mid \mathit{X} \in \VV\} = \emptyset$.
We define $\D'_\tau$ by extending $\D_\tau$ with an infinite number of
fresh elements $e_1,e_2,\ldots$
and define the functions and predicates of $\NN$ to treat these in the same
way as $\eps{\D_\tau}$.
Intuitively, $\NN$ also satisfies $\PHI$ because,
since no variables of type $\tau$ occur naked, the formulae of $\PHI$ cannot tell the difference
between a ``clone'' $e_i$ and the original $\eps{\D_\tau}$ except for the case of disequality---and there the formula
instantiated with clones is more likely to be true than the one instantiated with $\eps{\D_\tau}$
(since clones are mutually disequal and disequal to $\eps{\D_\tau}$).

Formally, we let $E = \{e_1,e_2,\ldots\}$ be an infinite set disjoint from $\D_\tau$, and define
$$\D'_\sigma = \left\{
\begin{array}{@{}ll@{}}
 \D_\sigma \,\cup\, E & \mbox{if $\sigma = \tau$}
\\
 \D_\sigma & \mbox{otherwise}
\end{array}
\right.$$
Given $s : \ov{\sigma} \ra \varsigma$, we let
$s^{\NN}(\ov{d}) = s^{\MM}(\ov{d}')$,
where
$$
d'_i =
\left\{
\begin{array}{@{}ll@{}}
 \eps{\D_\tau} & \mbox{if $\sigma_i = \tau$ and $d_i \in E$}
\\
 d_i & \mbox{otherwise}
\end{array}
\right.$$
Given $\xi' : \VV \ra \prod_{\sigma \in \Type} \D'_\sigma$, we define
$\xi : \VV \ra \prod_{\sigma \in \Type} \D_\sigma$ by
$$
\xi(\mathit{X})(\sigma)(d) =
\left\{
\begin{array}{@{}ll@{}}
 \eps{\D_\tau} & \mbox{if $\sigma = \tau$ and $d \in E$}
\\
 \xi'(\mathit{X})(\sigma)(d) & \mbox{otherwise}
\end{array}
\right.
$$
The next facts follow by induction on $t$ or $\phi$ (for arbitrary $\xi'$):
\begin{enumerate}
\item[(1)] $t \not\in \{\mathit{X}^\tau \mid \mathit{X} \in \VV\}$ implies $\SEM{t}^{\NN}_{\xi'} = \SEM{t}^{\MM}_{\xi}$;
\item[(2)] $\NV(\phi) \mathrel\cap \mathcal \{\mathit{\mathit{X}^\sigma} \mid \mathit{X} \in \VV\} = \emptyset$ and
$\SEM{\phi}^{\MM}_{\xi}$ imply $\SEM{\phi}^{\NN}_{\xi'}\!$.
\end{enumerate}
(Notice that (2) is an implication, not an equivalence. An inductive proof of the converse would fail
for the case of disequalities.)
In particular, thanks to the absence of naked variables of type $\tau$ in all $\phi \in \PHI$, $\NN$ is the desired model of $\PHI$.
\QED
\end{proof}

In the light of the above soundness result, we will
allow ourselves to
write that $\t$ is monotonic if $\MONO{\t} \PHI$ and possibly nonmonotonic if
$\NONMONO{\t} \PHI$.

\subsection{Encoding Nonmonotonic Types}
\label{ssec:encoding-nonmonotonic-types}

Monotonic types can be soundly erased when
translating to untyped first-order logic,
by Theorem~\ref{thm:monotonic-erasure-monomorphic}.
Nonmonotonic types in general cannot.
Claessen et al.\ \cite[\S3.2]{claessen-et-al-2011} point out that adding
sufficiently many protectors to a nonmonotonic problem will make it monotonic,
at which point its types can be erased.
Thus the following general two-stage procedure translates monomorphic problems
to untyped first-order logic:
\begin{enumerate}[label=\arabic*.]
\item[1.]\strut Selectively introduce protectors (tags or guards) without erasing any types:
\begin{enumerate}[label=1.\arabic*.]
  \item[1.1.]\strut Infer monotonicity to identify the possibly nonmonotonic types
            in the problem.
  \item[1.2.]\strut Introduce protectors for the universal variables of possibly nonmonotonic types.
  \item[1.3.]\strut If necessary (depending on the encoding), generate \relax{typing axioms} for any function symbol whose
            result type is possibly nonmonotonic, to make it possible to remove
            protectors for terms with the right type.
\end{enumerate}

\betweenitems

\item[2.]\strut Erase all the types.
\end{enumerate}

The purpose of stage~1 is to make the problem monotonic while preserving
satisfiability. This paves the way for the sound type erasure of
stage~2. The following lemmas will help us prove such two-stage encodings
correct.

\begin{lem}[Correctness Conditions]\afterDot
\label{lem:correctness-conditions-monomorphic}%
Let\/ $\PHI$ be a monomorphic problem, and let\/ $\xx$ be a monomorphic encoding.
The problems\/ $\PHI$\/ and\/ $\encode{\PHI}{\xx\tinycomma\erased}$ are
equisatisfiable provided that the following conditions hold\/{\rm:}
\begin{itemize}
\item[] \textsc{Mono}{\rm:}\enskip $\PHIii\xx$\/ is monotonic.

\betweenitems

\item[] \textsc{Sound}{\rm:}\enskip If\/ $\PHI$ is satisfiable, so is\/ $\PHIii\xx$.

\betweenitems

\item[] \textsc{Complete}{\rm:}\enskip If\/ $\PHIii\xx$ is satisfiable, so is\/ $\PHI$.
\end{itemize}
\end{lem}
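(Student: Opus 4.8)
The plan is to obtain the equivalence by composing the three hypotheses with the two basic properties of full type erasure proved earlier:\ that $\erased$ is complete unconditionally (Theorem~\ref{thm:completeness-of-erased}) and that it is sound on monotonic monomorphic problems (Theorem~\ref{thm:monotonic-erasure-monomorphic}). By the definition of encoding composition, $\encode{\PHI}{\xx\tinycomma\erased}$ is exactly $\erasedx{\PHIii\xx}$, and since $\xx$ is a monomorphic encoding, the intermediate problem $\PHIii\xx$ is again a monomorphic problem, so both erasure theorems are applicable to it.

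For the forward (soundness) direction, I would assume $\PHI$ is satisfiable. Then \textsc{Sound} gives that $\PHIii\xx$ is satisfiable, and \textsc{Mono} gives that $\PHIii\xx$ is monotonic. Theorem~\ref{thm:monotonic-erasure-monomorphic} then applies to the satisfiable monotonic monomorphic problem $\PHIii\xx$ and yields satisfiability of $\erasedx{\PHIii\xx} = \encode{\PHI}{\xx\tinycomma\erased}$.

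For the backward (completeness) direction, I would assume $\encode{\PHI}{\xx\tinycomma\erased} = \erasedx{\PHIii\xx}$ is satisfiable. Completeness of $\erased$ (Theorem~\ref{thm:completeness-of-erased}) immediately gives that $\PHIii\xx$ is satisfiable, and then \textsc{Complete} yields satisfiability of $\PHI$. Putting the two directions together, $\PHI$ and $\encode{\PHI}{\xx\tinycomma\erased}$ are equisatisfiable.

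There is no real obstacle here:\ the statement is a packaging lemma whose entire content is the bookkeeping around the composition. The only points to watch are (i)~that $\PHIii\xx$ is genuinely a monomorphic problem, so that the monomorphic form of monotonic erasure applies, and (ii)~that \textsc{Mono} is invoked only in the forward direction, after \textsc{Sound} has already produced a model --- monotonicity of $\PHIii\xx$ becomes relevant only once $\PHIii\xx$ is known to be satisfiable. The remaining work, namely instantiating $\xx$ by the concrete monotonicity-based encodings and checking \textsc{Mono}, \textsc{Sound}, and \textsc{Complete} for each, is deferred to the subsequent subsections.
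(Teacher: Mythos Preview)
Your proposal is correct and follows exactly the same approach as the paper, which simply cites Theorems~\ref{thm:completeness-of-erased} and~\ref{thm:monotonic-erasure-monomorphic} as making the result immediate. You have merely unpacked the two directions and the role of each hypothesis in more detail than the paper bothers to.
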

\begin{proof}
Immediate from Theorems \ref{thm:completeness-of-erased} and
\ref{thm:monotonic-erasure-monomorphic}.\QED
\end{proof}

\leftOut{
\begin{lem}[Submodel]\afterDot
\label{lem:submodel}%
Let $\PHI$ be a 
problem, let\/ $\mathcal M$ be a model of $\PHI$, and let\/ $\mathcal M'$ be a substructure of $\mathcal M$
{\upshape(}i.e.\ a structure constructed from\/ $\mathcal M$ by removing some domain
elements while leaving the interpretations of functions and predicates intact
for the remaining elements{\upshape)}. This\/ $\mathcal M'$ is a model of\/ \PHI{}
provided that it does not remove any witness for an existential variable.
\end{lem}
\begin{proof}
Suppose $\PHI$ is skolemised. If it is \emph{false} in $\mathcal M'$
then there must be a valuation---an assignment of domain elements to
the variables of $\PHI$---which makes it false. This same valuation
makes $\PHI$ false in $\mathcal M$.

If $\PHI$ is not skolemised, let $\Psi$ be its skolemisation.
From $\mathcal M$ we can construct a model $\mathcal M_\Psi$ of $\Psi$
by giving interpretations to all the Skolem functions in $\Psi$.
We remove domain elements from $\mathcal M$ to get $\mathcal M'$;
we can remove the same domain elements from $\mathcal M_\Psi$
to get $\mathcal M'_\Psi$. This gives us a substructure of $\mathcal M_\Psi$
as long as we do not remove any domain elements that are in the image
of a Skolem function in $\mathcal M_\Psi$, which is to say, we do not
remove any witness for an existential variable in $\mathcal M$.
By the result above, $\mathcal M'_\Psi$ is a model of $\Psi$,
hence $\mathcal M'$ is a model of $\PHI$.
\end{proof}
}

\subsection{Monotonicity-Based Type Tags}
\label{ssec:monotonicity-based-type-tags-monomorphic}

The monotonicity-based encoding \mono\tags\query{}
specialises the above
procedure for tags. It is similar to the traditional encoding
\mono\tags{} (the monomorphic version of \tags),
except that it omits the tags for types that are inferred
monotonic. By wrapping all naked variables (in fact, all terms) of possibly
nonmonotonic types in a function term, stage~1 yields a monotonic problem.

\begin{defi}[Lightweight Tags \monobf\tags\query{}]\afterDot
The encoding $\encode{\phantom{i}}{\mono\tags\query}$ translates monomorphic problems over
$\SIGMA = (\Type, \FF, \PP)$ to monomorphic problems over $(\Type, \FF \mathrel{\uplus} \{\ti_\sigma : \sigma \ra \sigma \mid \sigma \in \Type \},\allowbreak \PP)$.
It adds no axioms, and its term and formula translations are defined as follows:
\begin{align*}
\mtagsqx{f(\bar t\,)} & \,=\, \CT{}{f(\mtagsqx{\bar t\,})} &
\mtagsqx{\mathit{X}} & \,=\, \CT{}{\mathit{X}} &
\text{with}\;\,\CT{}{t\typ\t} & \,=\, {\begin{cases}
  t & \!\!\text{if $\MONO{\t} \PHI$} \\
  \ti_{\,\t}(t) & \!\!\text{otherwise}
\end{cases}}
\end{align*}
The \emph{monomorphic lightweight type tags} encoding \mono\tags\query{}
is the composition $\encode{\phantom{i}}{\mono\tags\query\tinycomma\erased}$.
It translates a monomorphic problem over $\SIGMA$
into an untyped problem over $\SIGMA' = (\FF'
\uplus \{\ti^1_{\,\t} \mid \sigma \in \Type\}, \PP')$,
where $\mathcal{F}', \Pp$ are as for \erased.
\end{defi}

\begin{exa}%
\label{ex:algebraic-lists-mono-tags-query}%
For the algebraic list problem of Example~\ref{ex:algebraic-lists-monomorphised},
the type $\mathit{list\IUS}\www$
is monotonic by virtue of being infinite, whereas $\www$
cannot be inferred monotonic. The \mono\tags\query{} encoding of the problem
follows:
\begin{quotex}
$\forall \mathit{X}\!, \mathit{Xs}.\;\,
  \const{nil}_{\www} \not\eq \const{cons}_{\vvthinspace\www}(\ti_{\vvthinspace\www}(\mathit{X}), \mathit{Xs})$ \\
$\forall \mathit{Xs}.\;\,
  \mathit{Xs} \eq \const{nil}_{\www} \mathrel{\lor}
  (\exists \mathit{Y}\!, \mathit{Ys}.\;\,
  \mathit{Xs} \eq \const{cons}_{\vvthinspace\www}(\ti_{\vvthinspace\www}(\mathit{Y}), \mathit{Ys})
  )$ \\
$\forall \mathit{X}\!, \mathit{Xs}.\;\,
  \ti_{\vvthinspace\www}(\const{hd}_{\www}(\const{cons}_{\vvthinspace\www}(\ti_{\vvthinspace\www}(\mathit{X}), \mathit{Xs}))) \eq \ti_{\vvthinspace\www}(\mathit{X})
\mathrel\land
  \const{tl}_{\www}(\const{cons}_{\vvthinspace\www}(\ti_{\vvthinspace\www}(\mathit{X}), \mathit{Xs})) \eq \mathit{Xs}$ \\
$\exists \mathit{X}\!, \mathit{Y}\!, \mathit{Xs}, \mathit{Ys}.\;\,
  \const{cons}_{\vvthinspace\www}(\ti_{\vvthinspace\www}(\mathit{X}), \mathit{Xs}) \eq \const{cons}_{\vvthinspace\www}(\ti_{\vvthinspace\www}(\mathit{Y}), \mathit{Ys}) \mathrel\land (\ti_{\vvthinspace\www}(\mathit{X}) \not\eq \ti_{\vvthinspace\www}(\mathit{Y}) \mathrel\lor \mathit{Xs} \not\eq \mathit{Ys})$
\end{quotex}
\end{exa}

The \mono\tags\query{} encoding treats all variables of the same type uniformly.
Hundreds of axioms can suffer because of one unhappy formula that uses a type
nonmonotonically (or in a way that cannot be inferred monotonic).
To address this, we introduce a lighter encoding:\
if a universal variable does not occur naked in a formula, its tag
can safely be omitted.%
\footnote{This is related to the observation that only
paramodulation from or into a variable can cause ill-typed instantiations in a
resolution prover \cite[\S4]{wick-mccune-1989}.}

This new encoding, called \mono\tags\qquery{}, protects only naked variables
and introduces equations $\ti_{\,\t}(\ff(\mathit{\bar X})\typ\t) \eq
\ff(\mathit{\bar X})$ to add or remove tags around each function
symbol $\ff$ whose result type $\t$ is possibly nonmonotonic, and
similarly for existential variables.

\begin{defi}[Featherweight Tags \monobf\tags\qquery]\afterDot
The encoding $\encode{\phantom{i}}{\mono\tags\qquery}$
translates monomorphic problems over
$\SIGMA = (\Type, \FF, \PP)$ to monomorphic problems over $(\Type, \FF \mathrel{\uplus} \{\ti_\sigma : \sigma \ra \sigma \mid \sigma \in \Type \},\allowbreak \PP)$.
Its term and formula translations are defined as follows:
\begin{align*}
\mtagsqqx{t_{1\!} \eq t_2} & \,=\, \CT{}{\mtagsqqx{t_{1\!}}{}} \eq \CT{}{\mtagsqqx{t_2}{}} \\
\mtagsqqx{\exists \mathit{X} \mathbin: \t.\;\, \phi}{} & \,=\,
  \exists \mathit{X} \mathbin: \t.\;
  \begin{cases}
    \mtagsqqx{\phi}{} & \!\!\text{if~$\MONO{\t} \PHI$} \\
    \ti_{\,\t}(\mathit{X}) \eq \mathit{X} \mathrel\land \mtagsqqx{\phi}{} & \!\!\text{otherwise} \\
  \end{cases}
\end{align*}
\\[-.5\baselineskip] with \\[-.5\baselineskip] 
\[\CT{}{t\typ\t} \,=\, \begin{cases}
  t & \!\!\text{if $\MONO{\t} \PHI$ or $t \notin \Vtun$} \\
  \ti_{\,\t}(t) & \!\!\text{otherwise}
\end{cases}\]
The encoding adds the following typing axioms:
\[\!\begin{aligned}[t]
& \textstyle \forall \mathit{\bar X} \mathbin: \bar\t.\; \ti_{\,\t}(f(\mathit{\bar X})) \eq f(\mathit{\bar X})
  &\enskip& \text{for~}\ff : \bar\t \to \t \in \mathcal{F}
     \text{~such that~} \NONMONO{\t} \PHI \\[\betwtypax]
& \exists \mathit{X} \mathbin: \t.\; \ti_{\,\t}(\mathit{X}) \eq \mathit{X}
  && \text{for~} \NONMONO\t \PHI \text{~that is not the
result type of a symbol in}~\mathcal F
\end{aligned}\]
The \emph{monomorphic featherweight type tags} encoding \mono\tags\qquery{}
is the composition $\encode{\phantom{i}}{\mono\tags\qquery\tinycomma\erased}$.
The target signature for \mono\tags\qquery{} is the same as for \mono\tags\query.
\end{defi}


The axioms are necessary for the completeness of \mono\tags\qquery{}.
They would have been harmless for \mono\tags\query{}:\ for soundness,
we can think of the $\ti_{\,\t}$ functions as identities.
The side condition of the last axiom is a minor optimisation:\ it
avoids asserting that $\t$ is inhabited if the symbols in $\mathcal F$
already witness $\t$'s inhabitation.

\begin{exa}%
\label{ex:algebraic-lists-mono-tags-qquery}%
The \mono\tags\qquery{} encoding of Example~\ref{ex:algebraic-lists-monomorphised}
requires fewer tags than \mono\tags\query{},
at the cost of a typing axiom for \const{hd} and typing equations for the
existential variables of type $\www$:
\begin{quotex}
$\forall \mathit{Xs}.\;\, \ti_{\vvthinspace\www}(\const{hd}_{\www}(\mathit{Xs})) \eq \const{hd}_{\www}(\mathit{Xs})$ \\[\betweenaxs]
$\forall \mathit{X}\!, \mathit{Xs}.\;\,
  \const{nil}_{\www} \not\eq \const{cons}_{\vvthinspace\www}(\mathit{X}\!, \mathit{Xs})$ \\
$\forall \mathit{Xs}.\;\,
  \mathit{Xs} \eq \const{nil}_{\www} \mathrel{\lor}
  (\exists \mathit{Y}\!, \mathit{Ys}.\;\,
  \ti_{\vvthinspace\www}(\mathit{Y}) \eq \mathit{Y} \mathrel\land \mathit{Xs} \eq \const{cons}_{\vvthinspace\www}(\mathit{Y}\!, \mathit{Ys})
  )$ \\
$\forall \mathit{X}\!, \mathit{Xs}.\;\,
  \const{hd}_{\www}(\const{cons}_{\vvthinspace\www}(\mathit{X}\!, \mathit{Xs})) \eq \ti_{\vvthinspace\www}(\mathit{X})
  \mathrel\land \const{tl}_{\www}(\const{cons}_{\vvthinspace\www}(\mathit{X}\!, \mathit{Xs})) \eq \mathit{Xs}$ \\
$\exists \mathit{X}\!, \mathit{Y}\!, \mathit{Xs}, \mathit{Ys}.\;\,
  \ti_{\vvthinspace\www}(\mathit{X}) \eq \mathit{X} \mathrel\land \ti_{\vvthinspace\www}(\mathit{Y}) \eq \mathit{Y} \mathrel\land
  \const{cons}_{\vvthinspace\www}(\mathit{X}\!, \mathit{Xs}) \eq \const{cons}_{\vvthinspace\www}(\mathit{Y}\!, \mathit{Ys}) \mathrel\land {}$ \\
$\phantom{\exists \mathit{X}\!, \mathit{Y}\!, \mathit{Xs}, \mathit{Ys}.\;\,}
  (\mathit{X} \not\eq \mathit{Y} \mathrel\lor \mathit{Xs} \not\eq \mathit{Ys})$
\end{quotex}
\end{exa}

\begin{thm}[Correctness of \monobf\tags\query, \monobf\tags\qquery]\afterDot
\label{thm:correctness-of-mono-tags-query}%
The monomorphic type tags encodings\/ \hbox{\rm\mono\tags\query} and\/ \hbox{\rm\mono\tags\qquery}
are correct.
\end{thm}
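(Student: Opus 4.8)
\noindent The plan is to invoke the Correctness Conditions Lemma~\ref{lem:correctness-conditions-monomorphic} twice: once with $\xx$ taken to be the first stage $\encode{\phantom{i}}{\mono\tags\query}$ of \mono\tags\query{}, and once with $\xx = \encode{\phantom{i}}{\mono\tags\qquery}$. Since \mono\tags\query{} is by definition $\encode{\phantom{i}}{\mono\tags\query\tinycomma\erased}$ (and similarly for \mono\tags\qquery{}), it suffices to verify the three conditions \textsc{Mono}, \textsc{Sound}, and \textsc{Complete} for $\PHIii\xx$ in each case. The \textsc{Sound} condition is the easy one: given a model $\MM$ of $\PHI$, expand it to the enlarged signature by interpreting every new symbol $\ti_\t$ as the identity on the domain of $\t$, leaving all domains unchanged. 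A routine induction on terms and formulae then gives $\SEM{\mtagsqx{U}}^\MM_\xi = \SEM{U}^\MM_\xi$ (respectively with $\mtagsqqx{\cdot}$), and for \mono\tags\qquery{} the typing axioms $\ti_\t(f(\bar X))\eq f(\bar X)$ and $\exists X.\,\ti_\t(X)\eq X$ hold trivially when $\ti_\t$ is the identity. I record that this expansion keeps all domains unchanged, a fact reused below.

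The heart of the proof is \textsc{Mono}: $\PHIii\xx$ must be monotonic in the sense of Definition~\ref{def:infinite-monotonicity}. By Lemma~\ref{lem:monotonicity-preservation-by-union} it is enough to show that every type $\t$ is monotonic in $\PHIii\xx$, and I would argue this by cases on the status of $\t$ in $\PHI$. If $\NONMONO{\t}\PHI$, the encoding wraps in $\ti_\t$ exactly the occurrences that could keep $\t$ nonmonotonic: \mono\tags\query{} tags \emph{every} term of type $\t$, while \mono\tags\qquery{} tags every naked occurrence of a universal variable of type $\t$; moreover the \mono\tags\qquery{} typing axioms introduce no naked variable occurrence, because in $\ti_\t(f(\bar X))\eq f(\bar X)$ neither side is a variable and the $X_j$ occur only as arguments of $f$, while in $\exists X.\,\ti_\t(X)\eq X$ the variable $X$ is existential and hence excluded from $\NV$. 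Consequently $\NV(\PHIii\xx)$ contains no variable of type $\t$, so the second rule of the calculus $\MONO{}$ applies and Theorem~\ref{thm:soundness-of-calculus-monomorphic} makes $\t$ monotonic in $\PHIii\xx$. If instead $\MONO{\t}\PHI$ was obtained from the absence of naked variables of type $\t$ in $\PHI$, then---since the encoding only ever \emph{adds} tags and never moves a variable onto a side of an equality---$\PHIii\xx$ still has no naked variable of type $\t$, and the same argument applies. Finally, if $\MONO{\t}\PHI$ came from $\t \in \Inf(\PHI)$, then $\t$ is left untagged by the encoding, so the \textsc{Complete} construction below sends any model $\NN$ of $\PHIii\xx$ to a model $\MM$ of $\PHI$ with the same interpretation of $\t$; that interpretation is infinite because $\t \in \Inf(\PHI)$, hence $\t$ is interpreted by an infinite set in every model of $\PHIii\xx$ and is trivially monotonic there.

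For \textsc{Complete} I would build, from a model $\NN$ of $\PHIii\xx$ with mutually disjoint domains, a model $\MM$ of $\PHI$: I keep $\D_\t$ equal to $\NN$'s domain of $\t$ for every type with $\MONO{\t}\PHI$ and shrink the domains of the possibly nonmonotonic types. For \mono\tags\qquery{} the typing axioms let me take $\D_\t$ to be the set of fixed points $\{d \in \D'_\t \mid \ti_{\t}^{\NN}(d) = d\}$; this set is nonempty and closed under the function symbols precisely because of those axioms, so the remaining symbols of $\MM$ can be defined as their $\NN$-counterparts with inputs and outputs routed through $\ti_{\t}^{\NN}$, and an induction yields $\SEM{U}^\MM = \SEM{\mtagsqqx{U}}^{\NN}$ on terms (and the corresponding implication on formulae), exactly as in the completeness half of Theorem~\ref{thm:correctness-of-tags-at}. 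For \mono\tags\query{} there are no typing axioms to lean on, so I instead take $\D_\t$ to be the \emph{image} $\ti_{\t}^{\NN}(\D'_\t)$---which is legitimate only because \mono\tags\query{} tags every term of type $\t$, so every value named by an encoded term of that type already lies in this image---and I route each argument through the correction map that applies $\ti_{\t}^{\NN}$ only to elements not already in the image, reusing the device from the completeness half of Theorem~\ref{thm:correctness-of-tags} to avoid piling up tag layers. This last point---coping with a possibly non-idempotent tag interpretation in the absence of typing axioms---is the main technical obstacle; everything else is bookkeeping inductions and feeding the three verified conditions into Lemma~\ref{lem:correctness-conditions-monomorphic}.
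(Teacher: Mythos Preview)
Your proposal is correct and follows essentially the same route as the paper: invoke Lemma~\ref{lem:correctness-conditions-monomorphic}, verify \textsc{Sound} by interpreting the tags as identities, verify \textsc{Complete} for \mono\tags\query{} via the image-of-$\ti_\t^{\NN}$ construction with the correction map $T_\t$ (as in Theorem~\ref{thm:correctness-of-tags}) and for \mono\tags\qquery{} via the fixed-point construction (as in Theorem~\ref{thm:correctness-of-tags-at}), and verify \textsc{Mono} by checking that the encoded problem has no naked variables of any type not already known infinite.

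One presentational difference worth noting: for \textsc{Mono} the paper argues uniformly that every naked variable in the encoded problem has a type in $\Inf(\PHI)$, and then invokes the calculus and Theorem~\ref{thm:soundness-of-calculus-monomorphic} for all types at once. You instead split on how $\MONO{\t}\PHI$ was (or was not) derived and, in the $\t\in\Inf(\PHI)$ case, bypass the calculus entirely by appealing to the domain-preserving \textsc{Complete} construction to conclude directly that $\t$ is infinite in every model of $\PHIii\xx$. This is arguably more careful than the paper, which tacitly takes $\Inf(\PHIii\xx)\supseteq\Inf(\PHI)$; your detour through \textsc{Complete} supplies exactly the missing justification, at the cost of a forward reference that is harmless since \textsc{Complete} is established independently of \textsc{Mono}.
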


\begin{proof}
It suffices to show the three conditions of Lemma~\ref{lem:correctness-conditions-monomorphic}.

\betweenitems\noindent
\textsc{Mono}:\enskip
  By induction on $\phi$, it follows that
\begin{enumerate}
\item[(1)] $\mathit{X}^\sigma \in \NV(\encode{\phi}{\mono\tags\query})$ implies $\sigma \in \Inf(\PHI)$;
\item[(2)] $\mathit{X}^\sigma \in \NV(\encode{\phi}{\mono\tags\qquery})$ implies $\sigma \in \Inf(\PHI)$.
\end{enumerate}
(Indeed, while transforming $\phi$ into $\encode{\phi}{\mono\tags\qquery}$, \mono\tags\qquery{} tags precisely the variables that would cause
the condition (1) to fail; and \mono\tags\query{} tags even more.)
Moreover, the typing axioms of \mono\tags\qquery{} have no naked variables, and hence
$\MONO{\sigma} \encode{\phi}{\mono\tags\query}$ and
$\MONO{\sigma} \encode{\phi}{\mono\tags\qquery}$ hold for all types $\sigma$.
Therefore,
by Theorem~\ref{thm:soundness-of-calculus-monomorphic} and Lemma~\ref{lem:monotonicity-preservation-by-union},
$\MONO{\sigma} \encode{\phi}{\mono\tags\query}$ and
$\MONO{\sigma} \encode{\phi}{\mono\tags\qquery}$ are monotonic.

\betweenitems\noindent
\textsc{Sound}:\enskip
  This is immediate for both $\encode{\phantom{i}}{\mono\tags\query}$ and $\encode{\phantom{i}}{\mono\tags\qquery}$:
  given a model of $\PHI$, we extend it to a model
  of the encoded $\PHI$ by interpreting all type tags as the identity.\strut

\betweenitems\noindent
\textsc{Complete for $\encode{\phantom{i}}{\mono\tags\query}$}:\enskip
The proof is analogous to the corresponding case for \tags{} (Theorem~\ref{thm:correctness-of-tags}),
the differences being that here we do not face the complication of interpreting polymorphic types,
and only terms of certain types are tagged.
Let $\NN = \bigl((\D'_\sigma)_{\sigma \in \Type},\allowbreak
(\ff^{\,\NN})_{\ff \in \FF \mathrel{\uplus} \{\ti_\sigma \mid \NONMONO{\sigma} \PHI\}},\allowbreak
(\pp^{\NN})_{\pp \in \PP}\bigr)$
be a model of $\encode{\PHI}{\mono\tags\query}$.
For each $\sigma \in \Type$, we let
$$
\D_\sigma = \left\{
\begin{array}{@{}ll@{}}
 D'_\sigma & \mbox{if $\MONO{\sigma} \PHI$} \\
 \{d \in \D'_\sigma \mid d \mbox{ is in the image of $\ti_\sigma^{\NN}$}\} & \mbox{otherwise}
\end{array}
\right.
$$
We define
$T_\sigma : \D'_\sigma \ra \D'_\sigma$ so as to apply $\ti_\sigma^{\NN}$ only if the element is not already in its image:
$$
T_\sigma(d) = \left\{
\begin{array}{@{}ll@{}}
 d & \mbox{if $d \in \D_\sigma$} \\
 \ti_\sigma^{\NN}(d) & \mbox{otherwise}
\end{array}
\right.
$$
%
%
%
Let $\MM = \bigl((\D_\sigma)_{\sigma \in \Type},(\ff^{\,\MM})_{\ff \in \FF},(\pp^\MM)_{\pp \in \PP}\bigr)$, where
\begin{itemize}
\item
$f^{\MM}(d_1,\ldots,d_n) =
\ti_\sigma^{\NN}(f^{\NN}(T_{\sigma_1}(d_1),\ldots,T_{\sigma_n}(d_n)))$
for $f : \ov{\sigma} \ra \sigma$ is in $\FF$;
\item $p^{\MM}(d_1,\ldots,d_n) =
p^{\NN}(T_{\sigma_1}(d_1),\ldots,T_{\sigma_n}(d_n))$
if $p : \ov{\sigma} \ra \bool$ is in $\FF$.
\end{itemize}
Given
$\xi' : \VV \ra \prod_{\sigma \in \Type} \D'_\sigma$,
we define $\xi : \VV \ra \smash{\prod_{\sigma \in \Type} \D_\sigma}$ as follows:
$$\xi(\mathit{X})(\sigma) =
\left\{
\begin{array}{@{}ll@{}}
 \xi'(\mathit{X})(\sigma) & \mbox{if $\MONO{\sigma} \PHI$} \\
 \ti_\sigma(\xi'(\mathit{X})(\sigma)) & \mbox{otherwise}
\end{array}
\right.$$
The next facts follow by induction on $t$ or $\phi$ (for arbitrary $\xi'$):
\begin{enumerate}
\item[(1)] $\SEM{t}^{\MM}_{\xi} = \SEM{\encode{t}{\tags}}^{\NN}_{\xi'}\!$;
\item[(2)] $\SEM{\phi}^{\MM}_{\xi} = \SEM{\encode{\phi}{\tags}}^{\NN}_{\xi'}\!$.
\end{enumerate}
Let $\phi \in \PHI$.  To show that $\MM$ is a model of $\phi$, let
$\xi : \VV \ra \prod_{\sigma \in \Type} \D_\sigma$.  By the definition of $\D_\sigma$,
there exists $\xi' : \VV \ra \prod_{\sigma \in \Type} \D'_\sigma$
such that the defining property of $\xi$ holds. 
Now, since $\SEM{\encode{\phi}{\mono\tags\query}}^{\NN}_{\xi'}$ holds, it follows from (2) that $\SEM{\phi}^{\MM}_{\xi}$, i.e.\
$\SEM{\phi}^{\MM}$ also holds, as desired.

\betweenitems\noindent
\textsc{Complete for $\encode{\phantom{i}}{\mono\tags\qquery}$}:\enskip
The proof is analogous to the corresponding case for \tags\at{} (Theorem~\ref{thm:correctness-of-tags-at}).
Starting with a model $\NN = \bigl((\D'_\sigma)_{\sigma \in \Type},(\ff^{\,\NN})_{\ff \in \FF \mathrel{\uplus} \{\ti_\sigma \mid \NONMONO{\sigma} \PHI\}},(\pp^{\NN})_{\pp \in \PP}\bigr)$
of $\encode{\PHI}{\mono\tags\qquery}$, we define a model
$\MM$ of $\PHI$ as follows, where the typing axioms ensure that each $D_\sigma$ is nonempty and each $s^{\MM}$ is well defined:
\begin{itemize}
\item $\D_\sigma = \left\{
\begin{array}{@{}ll@{}}
 D'_\sigma & \mbox{if $\MONO{\sigma} \PHI$} \\
 \{d \in \D'_\sigma \mid \ti_\sigma^{\NN}(d) = d\} & \mbox{otherwise}
\end{array}
\right.$
\item $s^{\MM}$ is the restriction of $s^{\NN}\!$.
\end{itemize}
The next facts follow by induction on $t$ or $\phi$ (for arbitrary $\xi : \VV \ra \prod_{\sigma \in \Type} \D_\sigma$):
\begin{enumerate}
\item[(1)] $\SEM{t}^{\MM}_{\xi} = \SEM{\encode{t}{\tags}}^{\NN}_{\xi'}\!$;
\item[(2)] $\SEM{\phi}^{\MM}_{\xi} = \SEM{\encode{\phi}{\tags}}^{\NN}_{\xi'}\!$.
\end{enumerate}
Then, by the usual route, it follows that $\MM$ is a model of $\PHI$.
\end{proof}


\leftOut{
Moreover, we may add further typing axioms to $\PHIii\xx$\dash for example, equations
$\ff(\mathit{\bar U}\negvthinspace, \ti_{\,\t}(\mathit{X}),
\mathit{\bar V}) \eq \ff(\mathit{\bar U}\negvthinspace, \mathit{X}\!, \mathit{\bar V})$ to add or remove
tags around well-typed arguments of a 
symbol \ff, or the idempotence law
$\ti_{\,\t}(\ti_{\,\t}(\mathit{X})) \eq \ti_{\,\t}(\mathit{X})$\dash provided that they hold for
canonical models
and preserve monotonicity.
}


\subsection{Monotonicity-Based Type Guards}
\label{ssec:monotonicity-based-type-guards-monomorphic}

The \mono\guards\query{} and \mono\guards\qquery{} encodings are
defined analogously to \mono\tags\query{} and \mono\tags\qquery{} but
using type guards.
The \mono\guards\query{} encoding omits the guards
for types that are inferred monotonic, whereas \mono\guards\qquery{} omits more
guards that are not needed to make the intermediate problem monotonic.

\newcommand\MGSQXCHEAT{\kern.75ex} 

\begin{defi}[Lightweight Guards \monobf\guards\query]\afterDot
The encoding $\encode{\phantom{i}}{\mono\guards\query}$
translates monomorphic problems over
$\SIGMA = (\Type, \FF, \PP)$ to monomorphic problems over $(\Type, \FF , \PP \mathrel{\uplus} \{\is_\sigma : \sigma \ra \bool \mid \sigma \in \Type \})$.
Its term and formula translations are defined as follows:
\begin{align*}
\mguardsqx{\forall \mathit{X} \mathbin: \t.\;\, \phi} & \,=\,
\forall \mathit{X} \mathbin: \t.\;
\,\relax{\begin{cases}
\mguardsqx{\phi} & \!\!\text{if $\MONO\t \PHI$} \\
\is_{\,\t}(\mathit{X}^\sigma) \rightarrow \mguardsqx{\phi} & \!\!\text{otherwise}
\end{cases}} \\
\mguardsqx{\exists \mathit{X} \mathbin: \t.\;\, \phi} & \,=\,
\exists \mathit{X} \mathbin: \t.\;
\begin{cases}
\mguardsqx{\phi} & \MGSQXCHEAT\!\!\text{if $\MONO\t \PHI$} \\
\is_{\,\t}(\mathit{X}^\sigma) \mathrel\land \mguardsqx{\phi} & \MGSQXCHEAT\!\!\text{otherwise}
\end{cases}
\end{align*}
The encoding adds the following typing axioms:
\[\!\begin{aligned}[t]
& \textstyle \forall \mathit{\bar X} \mathbin: \bar\t.\;\, \is_{\,\t}(f(\mathit{\bar X}^{\ov{\sigma}}))
  &\enskip& \text{for~}\ff : \bar\t \to \t \in \mathcal{F}
     \text{~such that~} \NONMONO\t \PHI \\[\betwtypax]
& \exists \mathit{X} : \t.\;\, \is_{\,\t}(\mathit{X}^\sigma)
  && \text{for~} \NONMONO\t \PHI \text{~that is not the result type of a symbol in}~\mathcal F
\end{aligned}\]
The \emph{monomorphic lightweight type guards} encoding \mono\guards\query{} is
the composition $\encode{\phantom{i}}{\mono\guards\query\tinycomma\erased}$.
It translates a monomorphic problem over $\SIGMA$
into an untyped problem over $\SIGMA' = (\FF', \PP' \mathrel{\uplus} \{\is^1_{\,\t} \mid \sigma \in \Type\})$,
where $\FF', \PP'$ are as for \erased.
\end{defi}


\begin{exa}%
\label{ex:algebraic-lists-mono-guards-query}%
The \mono\guards\query{} encoding of
Example~\ref{ex:algebraic-lists-monomorphised} is as follows:
\begin{quotex}
$\forall \mathit{Xs}.\;\, \is_{\vvthinspace\www}(\const{hd}_{\www}(\mathit{Xs}))$ \\[\betweenaxs]
$\forall \mathit{X}\!,\; \mathit{Xs}.\;\,
  \is_{\vvthinspace\www}(\mathit{X}) \rightarrow
  \const{nil}_{\www} \not\eq \const{cons}_{\vvthinspace\www}(\mathit{X}\!, \mathit{Xs})$ \\
$\forall \mathit{Xs}.\;\,
  \mathit{Xs} \eq \const{nil}_{\www} \mathrel{\lor}
  (\exists \mathit{Y}\!,\; \mathit{Ys}.\;\,
  \is_{\vvthinspace\www}(\mathit{Y}) \mathrel\land
  \mathit{Xs} \eq \const{cons}_{\vvthinspace\www}(\mathit{Y}\!, \mathit{Ys})
  )$ \\
$\forall \mathit{X},\; \mathit{Xs}.\;\,
  \is_{\vvthinspace\www}(\mathit{X}) \rightarrow
  \const{hd}_{\www}(\const{cons}_{\vvthinspace\www}(\mathit{X}\!, \mathit{Xs})) \eq \mathit{X} \mathrel\land
  \const{tl}_{\www}(\const{cons}_{\vvthinspace\www}(\mathit{X}\!, \mathit{Xs})) \eq \mathit{Xs}$ \\
$\exists \mathit{X}\!, \mathit{Y}\!, \mathit{Xs}, \mathit{Ys}.\;\,
  \is_{\vvthinspace\www}(\mathit{X}) \mathrel\land
  \is_{\vvthinspace\www}(\mathit{Y}) \mathrel\land
\const{cons}_{\vvthinspace\www}(\mathit{X}\!, \mathit{Xs}) \eq \const{cons}_{\vvthinspace\www}(\mathit{Y}\!, \mathit{Ys})
\mathrel\land (\mathit{X} \not\eq \mathit{Y} \mathrel{\lor} \mathit{Xs} \not\eq \mathit{Ys})$
\end{quotex}%
Notice that the $\const{tl}_\www$ equation is needlessly in the scope of the guard.
The encoding is more precise if the problem is clausified.
\end{exa}

Our novel encoding \mono\guards\qquery{} omits the guards for variables that do
not occur naked, regardless of whether they are of a monotonic type.

\begin{defi}[Featherweight Guards \monobf\guards\qquery]\afterDot
The \emph{monomorphic featherweight type guards} encoding \mono\guards\qquery{}
is identical to the lightweight encoding \mono\guards\query{} except that the
condition ``if~$\MONO\t \PHI$'' 
in the $\forall$ case is weakened to
``if~$\MONO\t \PHI$ or $\mathit{X} \notin \NV(\phi)$''.
\end{defi}

\begin{exa}%
\label{ex:algebraic-lists-mono-guards-qquery}%
The \mono\guards\qquery{} encoding of the algebraic list problem is identical
to \mono\guards\query{} 
except that the {}$\const{nil}_{\www} \not\eq \const{cons}_{\vvthinspace\www}${}
axiom does not have any guard.
\end{exa}

\begin{thm}[Correctness of \monobf\guards\query, \monobf\guards\qquery]\afterDot
\label{thm:correctness-of-mono-guards-query}%
The monomorphic type guards encodings\/ \hbox{\rm\mono\guards\query}
and\/ \hbox{\rm\mono\guards\qquery} are correct.
\end{thm}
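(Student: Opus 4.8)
The plan is to instantiate Lemma~\ref{lem:correctness-conditions-monomorphic} with $\xx$ equal to \mono\guards\query{} and then to \mono\guards\qquery{}, so that it suffices to establish the three conditions \textsc{Mono}, \textsc{Sound}, and \textsc{Complete} for each encoding. The whole argument runs parallel to the proof of Theorem~\ref{thm:correctness-of-mono-tags-query}, with type guards playing the role that type tags played there; as in the monomorphic tags case there is no polymorphic complication, so the only genuinely new ingredient is the way guards (rather than tags) restore monotonicity. I would treat the two encodings together, flagging the few places where \mono\guards\qquery{} differs from \mono\guards\query{}.

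For \textsc{Sound}, starting from a model $\MM$ of $\PHI$, I would build a model of $\PHIii{\mono\guards\query}$ (resp.\ $\PHIii{\mono\guards\qquery}$) over the same domains by interpreting every $\is_\sigma$ as the always-true predicate on $\D_\sigma$; then all inserted guards, all typing axioms $\forall \bar X.\; \is_\t(f(\bar X))$, and all inhabitation axioms hold trivially, the last because the domains are nonempty. For \textsc{Complete}, by Theorem~\ref{thm:completeness-of-erased} it is enough to show $\encode{\phantom{i}}{\mono\guards\query}$ (resp.\ $\encode{\phantom{i}}{\mono\guards\qquery}$) complete. Given a model $\NN = \bigl((\D'_\sigma)_{\sigma\in\Type},\_,\_\bigr)$ of the encoded problem, I would cut each domain down by its guard, setting $\D_\sigma = \{d\in\D'_\sigma\mid\is_\sigma^{\NN}(d)\}$ when $\NONMONO\sigma\PHI$ and $\D_\sigma = \D'_\sigma$ otherwise, restricting the function and predicate interpretations accordingly. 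The inhabitation axioms make each $\D_\sigma$ nonempty and the (unguarded) typing axioms $\forall\bar X.\;\is_\t(f(\bar X))$ make the restricted functions land back in the restricted domains, so $\MM$ is well defined, exactly as in the completeness parts of Theorems~\ref{thm:completeness-of-argsx} and~\ref{thm:correctness-of-guards-at}. A routine induction on a term and then a formula $U$ then yields $\SEM{U}^{\MM}_{\xi} = \SEM{\encode{U}{\mono\guards\query}}^{\NN}_{\xi'}$ (for \mono\guards\qquery{} one gets, in the formula case, only the implication from the encoded to the source interpretation, which suffices, because the single source of non-equality is that an unguarded universal quantifier ranges over a possibly larger domain in $\NN$); here one uses that existential quantifiers over possibly nonmonotonic types stay guarded under both encodings, so their $\NN$-witnesses already lie in the restricted domain. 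This gives $\MM\models\PHI$.

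The hard part will be \textsc{Mono}: showing that $\PHIii{\mono\guards\query}$ (resp.\ $\PHIii{\mono\guards\qquery}$) is monotonic. By Lemma~\ref{lem:monotonicity-preservation-by-union} it suffices to treat one type $\sigma$ at a time. If $\MONO\sigma\PHI$ holds because $\sigma\in\Inf(\PHI)$, then $\D_\sigma$ is infinite in every model of $\PHI$, hence, through the \textsc{Complete} construction (which leaves such domains untouched), in every model of the encoded problem, so nothing need be done. If $\MONO\sigma\PHI$ holds because no variable of type $\sigma$ occurs naked in $\PHI$, I would observe that the guards encoding leaves all equation occurrences unchanged and adds no naked variable of any type (in the typing axioms every variable sits inside a function application), so $\NV(\encode{\phi}{\mono\guards\query}) = \NV(\phi)$ up to the harmless typing axioms; hence the encoded problem still has no naked $\sigma$-variable, Definition~\ref{def:monotonicity-calculus-monomorphic} gives $\MONO\sigma\PHIii{\mono\guards\query}$, and Theorem~\ref{thm:soundness-of-calculus-monomorphic} upgrades this to monotonicity of $\sigma$. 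The remaining and most delicate case is $\NONMONO\sigma\PHI$, where naked $\sigma$-variables survive the encoding so the calculus is of no help: here every quantifier over $\sigma$ that could expose a naked occurrence is guarded by $\is_\sigma$, and the typing axioms make $\is_\sigma$ closed under the relevant function symbols and witness an $\is_\sigma$-element. Given a model of the encoded problem I would enlarge $\D_\sigma$ by a countably infinite set of fresh elements, keeping $\is_\sigma$ false on them and redirecting each of them to $\eps{\D_\sigma}$ inside every function and predicate that takes a $\sigma$-argument. Guarded universal $\sigma$-quantifiers then simply ignore the new elements; the only formulae sensitive to the identity of such a variable are disequalities, which become \emph{more} true under fresh, mutually distinct elements, so satisfiability is preserved by essentially the clone argument already used in the proof of Theorem~\ref{thm:soundness-of-calculus-monomorphic}. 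For \mono\guards\qquery{} the same clone argument additionally covers the unguarded universal $\sigma$-quantifiers, since by construction those are exactly the ones whose bound variable is not naked. Combining the three cases and invoking Lemma~\ref{lem:monotonicity-preservation-by-union} establishes \textsc{Mono}, and the three conditions together yield the theorem via Lemma~\ref{lem:correctness-conditions-monomorphic}. The main obstacle is precisely this last construction: unlike for tags, the monotonicity calculus cannot certify the encoded problem, so one must argue semantically that padding a guarded type with junk elements preserves every guard, every typing axiom, and (for \qquery) every ungarded non-naked quantifier.
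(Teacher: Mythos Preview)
Your proposal is correct and follows essentially the same approach as the paper: invoke Lemma~\ref{lem:correctness-conditions-monomorphic}, handle \textsc{Sound} by interpreting all guards as true, handle \textsc{Complete} by restricting domains to the guard's extension (exactly the analogue of the \mono\tags\qquery{} completeness argument with $\is_\sigma^{\NN}(d)$ in place of $\ti_\sigma^{\NN}(d)=d$), and for \textsc{Mono} use the clone construction of Theorem~\ref{thm:soundness-of-calculus-monomorphic} with $\is_\sigma$ set to false on the fresh elements. Your three-way case split in \textsc{Mono} (on $\sigma\in\Inf(\PHI)$, on no naked $\sigma$-variable, and on $\NONMONO\sigma\PHI$) is more explicit than the paper's presentation, which jumps straight to the clone argument for the guarded types and leaves the $\MONO\sigma\PHI$ cases implicit; your observation that only an implication (not an equality) is obtained in the \textsc{Complete} induction for \mono\guards\qquery{} is likewise a bit more careful than the paper, but both are matters of exposition rather than substance.
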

\begin{proof}
It suffices to show the three conditions of Lemma~\ref{lem:correctness-conditions-monomorphic}.

\betweenitems\noindent
\textsc{Mono}:\enskip
  By Lemma \ref{lem:monotonicity-preservation-by-union}, it
 suffices to show that each type $\tau$ such that $\NONMONO{\sigma} \PHI$ is monotonic in the encoded problem.
  By Theorem~\ref{thm:soundness-of-calculus-monomorphic}, types are monotonic unless
  they are possibly finite and variables of their
  types occur naked in the original problem. Both encodings
  guard all such variables\dash \mono\guards\qquery{} guards exactly those
  variables, while \mono\guards\query{} guards more.
  The typing axioms contain no naked variables.
  We cannot use Theorem~\ref{thm:soundness-of-calculus-monomorphic} directly,
  because guarding a naked variable does not make it less naked---but we
  can generalise the proof slightly to exploit the guards.
  Given a model
  $\mathcal M$ of the encoded problem
  $\MM = \bigl((\D_\sigma)_{\sigma \in \Type},(\ff^{\,\MM})_{\ff \in \FF},(\pp^\MM)_{\pp \in \PP \mathrel{\uplus} \{\is_\sigma \mid \NONMONO{\sigma} \PHI\}}\bigr)$,
  we construct a model $\NN = \bigl((\D'_\sigma)_{\sigma \in \Type},\allowbreak(\ff^{\,\NN})_{\ff \in \FF},\allowbreak(\pp^{\NN})_{\pp \in \PP}\bigr)$ with all types infinite
  as in the proof of Theorem~\ref{thm:soundness-of-calculus-monomorphic}, but defining the $\tau$-guard interpretation to be false
  on newly added elements (recall that $\D'_\tau = \D_\tau \mathrel{\uplus} E$ for some countably infinite set $E$):
$$
\is_\sigma^{\NN}(d) =
\left\{
\begin{array}{@{}ll@{}}
 \is_\sigma^{\MM}(d) & \mbox{if $\sigma \not= \tau$ or $\sigma = \tau$ and $d \in \D_\sigma$}
\\
 \mbox{false} & \mbox{if $\sigma = \tau$ and $d \in E$}
\end{array}
\right.
$$
Given $\xi' : \VV \ra \prod_{\sigma \in \Type} \D'_\sigma$, we define
$\xi : \VV \ra \prod_{\sigma \in \Type} \D_\sigma$ by
$$
\xi(\sigma)(d) =
\left\{
\begin{array}{@{}ll@{}}
 \eps{D_\tau} & \mbox{if $\sigma = \tau$ and $d \in E$}
\\
 \xi'(\sigma)(d) & \mbox{otherwise}
\end{array}
\right.
$$
The encoded problem has the form $\PHI_1 \,\cup\, \PHI_2$, where $\PHI_1$ are the added axioms
and $\PHI_2$ are the formula translations from $\PHI$.  It is easy to check that $\NN$ satisfies $\PHI_1$.
We say that a formula is $\tau$-\emph{guarded} if
all its subformulae of the form $\forall \mathit{X} \mathbin: \tau.\; \phi$
have $\phi$ of the form $\lnot\, \is_\tau(\mathit{X}^\tau) \mathrel\lor \chi$
and
all its subformulae of the form $\exists \mathit{X} \mathbin: \tau.\; \phi$
have $\phi$ of the form $\is_\tau(\mathit{X}^\tau) \mathrel\land \chi$.
All the formulae in $\PHI_2$ are clearly guarded.
The next facts follow by induction on $t$ or $\phi$ (for arbitrary $\xi'$):
\begin{enumerate}
\item[(1)] $\SEM{t}^{\NN}_{\xi'} = \SEM{t}^{\MM}_{\xi}$;
\item[(2)] If $\phi$ is guarded and $\is_\tau\,(\xi\,(\tau)\,(\mathit{X}))$ is false whenever $\mathit{X}^\tau \in \NV(\phi) \,\cap\, \FVars(\phi)$,
then $\SEM{\phi}^{\MM}_{\xi}$ implies $\SEM{\phi}^{\NN}_{\xi'}$.%
\end{enumerate}
It follows that $\NN$ satisfies $\PHI_2$ as well. Hence it is the desired model of the encoded problem.

\betweenitems\noindent
\textsc{Sound}:\enskip
  Just like for tags, this is immediate for both $\encode{\phantom{i}}{\mono\guards\query}$ and $\encode{\phantom{i}}{\mono\guards\qquery}$:
  given a model of $\PHI$, we extend it to a model
  of the encoded $\PHI$ by interpreting all type guards as everywhere-true predicates.\strut

\betweenitems\noindent
\textsc{Complete}:\enskip
  The proofs for both $\encode{\phantom{i}}{\mono\guards\query}$ and $\encode{\phantom{i}}{\mono\guards\qquery}$ are very similar to that for
\tags\qquery{} (from Theorem \ref{thm:correctness-of-mono-tags-query})---the only change is the replacement of
the condition $\ti_\sigma^{\NN}(d) = d$ by $\is_\sigma^{\NN}\!(d)$ in the definition of the model $\MM$.  (Just like for \tags\qquery{},
the typing axioms ensure that the structure is well defined.)
\QED
\end{proof}


A simpler but less instructive way to prove \textsc{Mono} is to observe that the
second monotonicity calculus by Claessen et al.\
\cite[\S2.4]{claessen-et-al-2011} can infer monotonicity of all
problems generated by \mono\guards, \mono\guards\query, and \mono\guards\qquery.

\begin{rem} \label{rem-stronger-mono-tags-guards}
The proofs of Theorems \ref{thm:correctness-of-mono-tags-query} and \ref{thm:correctness-of-mono-guards-query}
remain valid as they are even if the generated problems contain \relax{more} tags or guards
than inferred as unnecessary by the monotonicity calculus, i.e.\ if in the definitions of
$\encode{\phantom{i}}{\mono\tags\query}$, $\encode{\phantom{i}}{\mono\tags\qquery}$,
$\encode{\phantom{i}}{\mono\guards\query}$, and $\encode{\phantom{i}}{\mono\guards\qquery}$
we replace the condition ``$\MONO{\sigma} \PHI$'' by ``$\sigma \in J$\vthinspace'', where
$J \subseteq \{\sigma \in \Type_\Sigma \mid \MONO{\sigma} \PHI\}$. (The replacement should be
done everywhere, including in the axioms.)
\end{rem}

\subsection{Heuristic Monomorphisation}
\label{ssec:heuristic-monomorphisation}

\longsect~\ref{sec:complete-monotonicity-based-encoding-of-polymorphism}
will show how to translate polymorphic types soundly and completely. If we are
willing to sacrifice completeness,
an easy way to extend \mono\tags\query{}, \mono\tags\qquery{},
\mono\guards\query{}, and \mono\guards\qquery{} to polymorphism is to perform
\emph{\vthinspace heuristic monomorphisation} on the polymorphic problem:\
\begin{enumerate}[label=\arabic*.]
\item[1.] Heuristically instantiate all type variables with suitable ground
types, taking finitely many copies of each formula if desired.

\betweenitems

\item[2.] Map each ground occurrence $s\LANY\bar\a\vthinspace\rho\RAN$ of a polymorphic symbol
$s : \forall\bar\a.\; \bar\t \to \t$ to a fresh monomorphic symbol
$s_{\vvthinspace\bar\a\vthinspace\rho} : \lfloor\bar\t\vthinspace\rho\rfloor \to \lfloor\t\vthinspace\rho\rfloor$,
where $\rho$ is a ground type substitution
(a~function from type variables to ground types)
and $\lfloor\phantom{i}\rfloor$ is an injection from ground types to nullary type
constructors (e.g.\ $\{\www \mapsto \www,\: \mathit{list}(\www) \mapsto \mathit{list\IUS}\www\}$).
\end{enumerate}
Heuristic monomorphisation is generally incomplete \cite[\S2]{bobot-paskevich-2011}
and often overlooked in the literature, but by eliminating
type variables it considerably simplifies the generated formulae, leading to
very efficient encodings.
It also provides a simple and effective way to exploit the native
support for monomorphic types in some automatic provers. 

%
%


\section{Complete Monotonicity-Based Encodings of Polymorphism}
\label{sec:complete-monotonicity-based-encoding-of-polymorphism}

Heuristic monomorphisation is simple and effective, but its incompleteness
can be a cause for worry, and its nonmodular nature makes it unsuitable for some applications
that need to export an entire polymorphic theory independently of any
conjecture. Here we adapt the
type encodings to a polymorphic setting.
%

We start by defining a correct but infinitary translation
of a polymorphic into a monomorphic problem, called complete monomorphisation. 
Then we address the genuinely polymorphic
issue of encoding type arguments, proving conditions under which
composition of an encoding $\xx$ with the type arguments encoding \args\  is correct:\ $\xx$ must be complete
and produce monotonic problems. 
Finally, we define polymorphic counterparts of the guard and tag encodings and show that
they satisfy the required conditions by reducing (most of) the problem to the monomorphic case
via complete monomorphisation.


%

\subsection{Complete Monomorphisation}
\label{subsec:complete-mono}

The main insight behind complete monomorphisation is that a
polymorphic formula having all its type quantification at the top
is equisatisfiable to the (generally infinite) set of its
monomorphic instances.
Complete monomorphisation does not obey our convention about encodings,
since it translates each polymorphic
formula not into a single monomorphic formula but into a set of formulae.

\begin{defi}[Instance Type and Most General Instance]\afterDot
\label{def-mgi}
A type $\tau$ is an \emph{instance} of a type $\sigma$ if there exists a type substitution
$\rho$ such that $\tau = \sigma\vthinspace\rho$. If this is the case, we also say that $\tau$ is
{\em less general} than $\sigma$ and that $\sigma$ is {\em more general} than $\tau$, and we write $\tau \leq \sigma$.
Given two types $\sigma$ and $\tau$,
if they have a common instance (i.e.\ a unifier), then they also have a most general common instance, which
we denote by $\mgi(\sigma,\tau)$. 
\end{defi}

\begin{defi}[Complete Monomorphisation \cmono]\afterDot
We define the encoding $\encode{\phantom{i}}{\cmono}$ that translates polymorphic problems over
$\SIGMA = (\KK, \FF, \PP)$ to monomorphic problems over $\Sigma' = (\Type',\FF',\PP')$, where
\begin{itemize}
\item $\Type' = \GType_\Sigma$;
\item for each $s : \forall\ov{\alpha}.\;\ov{\sigma} \ra \varsigma$ in $\FF \mathrel{\uplus} \PP$
and each $\rho$ such that each $\alpha_i$ is mapped to a ground type $\tau_i \in \GType_{\Sigma}$,
$\FF' \mathrel{\uplus} \PP'$ contains a symbol $s_{\,\ov{\tau}} : \ov{\sigma}\vthinspace\rho \ra \sigma\vthinspace\rho$.
\end{itemize}
We first define the translation of terms and formulae that contain no type variables, i.e.\
that have no type quantifiers and
such that all the occurring types in applications of the function or predicate symbols are ground:
\begin{align*}
\cmonox{f\LAN\bar\t\RAN(\bar t\,)} & \,=\, f_{\ov{\sigma}}(\cmonox{\bar t\,}) &
\cmonox{\mathit{X}^\sigma} &\,=\, \mathit{X}^\sigma
\\[\betweentf]
\cmonox{p\LAN\bar\t\RAN(\bar t\,)} & \,=\, p_{\ov{\sigma}}(\cmonox{\bar t\,}) &
  \cmonox{\forall \mathit{X} \mathbin: \t.\;\, \phi} & \,=\, \forall \mathit{X} \mathbin: \t.\; \cmonox{\phi}
\\
\cmonox{\lnot\,p\LAN\bar\t\RAN(\bar t\,)} & \,=\, \lnot\,p_{\ov{\sigma}}(\cmonox{\bar t\,})
& \cmonox{\exists \mathit{X} \mathbin: \t.\;\, \phi} & \,=\, \exists \mathit{X} \mathbin: \t.\; \cmonox{\phi}
\end{align*}
Now, given a sentence $\forall \ov{\alpha}.\; \phi$ where $\ov{\alpha}$ indicates all
its universally quantified types, we encode it as the set of encodings of its monomorphic instances (via ground type substitutions~$\rho$):
\begin{align*}
\cmonox{\forall \ov{\alpha}.\; \phi} & \,=\, \{\cmonox{\phi\vthinspace\rho} \mid \mbox{$\rho : \AAA \ra \GType$}\}
\end{align*}
Finally, the encoding of a problem is the union of the encoding of its formulae:
\begin{align*}
\textstyle\cmonox{\PHI} & \,=\, \textstyle\bigcup_{\phi \in \PHI} \cmonox{\phi}
\end{align*}
\end{defi}

\begin{conv} \label{con-alpha-phi}
Whenever we write a polymorphic formula as $\forall \ov{\alpha}.\; \phi$, we implicitly assume
that $\ov{\alpha}$ indicates all its universally quantified types, so that $\phi$ has no type quantifiers.
\end{conv}

\begin{lem}[Correctness of \cmono]\afterDot
\label{lem:correctness-of-cmono}%
The complete monomorphisation encoding \cmono{} is correct.
\end{lem}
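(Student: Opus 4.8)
The plan is to prove soundness and completeness separately, in both cases exploiting the syntactic restriction that all type quantification sits at the top of a formula, so that a polymorphic sentence is $\forall\ov\alpha.\;\phi$ with $\phi$ type-quantifier-free. First I would fix notation: given a polymorphic $\Sigma$-model $\MM$ satisfying the conditions of Lemma~\ref{lem-syntactic-domains} (which we may assume by that lemma), the domains of $\MM$ are in bijection with ground types via $\SEM{\phantom i}^\MM$. The natural monomorphic $\Sigma'$-structure $\MM^{\flat}$ to associate with $\MM$ takes, for each ground type $\tau\in\GType_\Sigma=\Type'$, the domain $\D_\tau = \SEM{\tau}^\MM$, and interprets each monomorphic symbol $s_{\ov\tau}$ (arising from $s:\forall\ov\alpha.\;\ov\sigma\to\varsigma$ and ground instance $\rho$ with $\alpha_i\mapsto\tau_i$) as the corresponding slice of $s^\MM$, namely $s^{\MM^\flat}_{\ov\tau} = s^\MM(\SEM{\ov\tau}^\MM)$ acting on $\SEM{\ov\sigma\rho}^\MM_{\,} = \prod_i\SEM{\sigma_i\rho}^\MM$. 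Conversely, from a $\Sigma'$-model $\NN$ of $\cmonox{\PHI}$ I would build a polymorphic $\Sigma$-model $\NN^{\sharp}$ by taking $\dom = \{\D_\tau \mid \tau\in\GType_\Sigma\}$ (after making the $\D_\tau$ disjoint if necessary), $k^{\NN^\sharp}(\D_{\tau_1},\dots,\D_{\tau_n}) = \D_{k(\ov\tau)}$, and $s^{\NN^\sharp}(\D_{\tau_1},\dots,\D_{\tau_m}) = s^{\NN}_{\ov\tau}$.

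For \textbf{soundness}, I would assume $\MM\models\PHI$, form $\MM^\flat$ as above, and prove by structural induction on type-quantifier-free terms $t$ and formulae $\phi$ the key translation lemma: for every ground type substitution $\rho$ and every term/type valuation $\theta,\xi$ where $\theta$ is compatible with $\rho$ (i.e.\ $\theta(\alpha)=\SEM{\alpha\rho}^\MM$ on the relevant variables), $\SEM{t\rho}^\MM_{\theta,\xi} = \SEM{\cmonox{t\rho}}^{\MM^\flat}_{\xi}$ and $\SEM{\phi\rho}^\MM_{\theta,\xi} = \SEM{\cmonox{\phi\rho}}^{\MM^\flat}_{\xi}$. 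The function-/predicate-application cases are where the definition $\cmonox{f\LAN\bar\t\RAN(\bar t\,)}=f_{\ov\sigma}(\cmonox{\bar t})$ is used, and they go through because $\ov\sigma$ in that clause is already ground (after applying $\rho$) and $f^{\MM^\flat}_{\ov\sigma}$ is by construction the slice $f^\MM(\SEM{\ov\sigma}^\MM)$. The quantifier cases are routine since $\D_\tau=\SEM{\tau}^\MM$. Then, given a sentence $\forall\ov\alpha.\;\phi\in\PHI$: for each $\rho:\AAA\to\GType$ the hypothesis $\MM\models\forall\ov\alpha.\;\phi$ gives $\SEM{\phi}^\MM_{\theta_\rho}$ for the valuation $\theta_\rho$ determined by $\rho$, whence $\SEM{\cmonox{\phi\rho}}^{\MM^\flat}$ by the translation lemma; since $\cmonox{\forall\ov\alpha.\;\phi}=\{\cmonox{\phi\rho}\mid\rho\}$, we conclude $\MM^\flat\models\cmonox{\PHI}$.

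For \textbf{completeness}, I would assume $\NN\models\cmonox{\PHI}$, build $\NN^\sharp$ as above, and prove the converse of the translation lemma by the same induction, then observe that $\NN^\sharp$ is a genuine $\Sigma$-structure (all domains nonempty since each $\D_\tau$ is; Convention~\ref{con-iota} guarantees $\GType_\Sigma\neq\emptyset$, and the type-constructor clauses are well defined because ground types determine their disjoint domains). For a polymorphic sentence $\forall\ov\alpha.\;\phi\in\PHI$, to check $\NN^\sharp\models\forall\ov\alpha.\;\phi$ I must verify $\SEM{\phi}^{\NN^\sharp}_\theta$ for every $\theta:\AAA\to\dom$; since $\dom=\{\D_\tau\}$, any such $\theta$ is $\theta_\rho$ for a suitable $\rho:\AAA\to\GType$, and the translation lemma plus $\NN\models\cmonox{\phi\rho}$ (which holds because $\cmonox{\phi\rho}\in\cmonox{\PHI}$) yield exactly $\SEM{\phi}^{\NN^\sharp}_{\theta_\rho}$. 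The main obstacle is not conceptual but bookkeeping: correctly threading the type substitution $\rho$, the type-variable valuation $\theta$, and the term valuation $\xi$ through the induction so that the monomorphic symbol indices $\ov\sigma$ match the ground instances the polymorphic model sees, and carefully invoking Lemma~\ref{lem-syntactic-domains} up front so that $\dom$ really is a faithful copy of $\GType_\Sigma$ (disjointness, injectivity, bijectivity of $\SEM{\phantom i}^\MM$)—without this, the passage $\NN\leftrightarrow\NN^\sharp$ would not be well defined. I would state these translation lemmas explicitly as the two halves of the induction and leave the routine connective and negative-literal cases to the reader, as is done for the earlier encodings in the paper.
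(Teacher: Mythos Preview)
Your proposal is correct and follows essentially the same route as the paper: build the monomorphic model from the polymorphic one by taking $\D_\tau=\SEM{\tau}^\MM$ and slicing the symbol interpretations, build the polymorphic model from the monomorphic one by taking $\dom=\{\D_\tau\}$ and reading off $k^\MM$ and $s^\MM$ from the indexed families, and in both directions prove a translation lemma by structural induction on type-quantifier-free terms and formulae. The only cosmetic differences are that the paper invokes Lemma~\ref{lem-low-sko-aux-poly} rather than Lemma~\ref{lem-syntactic-domains} (only the bijection between $\GType_\Sigma$ and $\dom$ is actually used, so your choice suffices), and the paper phrases the translation lemma by first observing that $\SEM{\phantom{i}}^\MM_{\theta,\xi}$ is $\theta$-independent on type-variable-free inputs, whereas you carry the ground substitution $\rho$ explicitly and require $\theta$ to be compatible with it---these amount to the same thing.
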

\begin{proof}

First, observe that in any model $\MM$ of a polymorphic signature, the interpretation
$\SEM{\phantom{i}}_{\theta,\xi}^\MM$ of a term or formula that does not contain type variables
does not depend on $\theta$ and, from $\xi$, it
only depends on the restriction of $\xi$ to ground types,
$\xi' : \VV \ra \prod_{\sigma \in \GType_{\Sigma}}\,\SEM{\sigma}^\MM$. We can therefore write
$\SEM{\phantom{i}}_{\xi'}^\MM$ instead of $\SEM{\phantom{i}}_{\theta,\xi'}^\MM$.

\betweenitems\noindent
\textsc{Sound}:\enskip
Assume $\PHI$ is satisfiable and let $\Sigma = (\KK,\FF,\PP)$ be its polymorphic signature.
By Lemma~\ref{lem-low-sko-aux-poly}, $\PHI$ also has a model
$\MM = \bigl(\dom,(\ff^{\,\MM})_{\ff \in \FF},(\pp^\MM)_{\pp \in \PP}\bigr)$ for which
$\SEM{\phantom{i}}^\MM : \GType_\Sigma \ra \dom$ is a bijection---let $v : \dom \ra \GType_\Sigma$
be its inverse.  We define a structure
$\NN = \bigl((\D'_\tau)_{\tau \in \Type'},(\ff^{\,\NN})_{\ff \in \FF},(\pp^{\NN})_{\pp \in \PP'}\bigr)$ for $\Sigma'$ as follows:
\begin{itemize}
\item $\D'_\tau = \SEM{\tau}^\MM$;
\item if $s : \forall\ov{\alpha}.\;\ov{\sigma} \ra \varsigma$ is in $\FF \mathrel{\uplus} \PP$
and $\ov{\tau} = (\tau_1,\ldots,\tau_m)$ are ground instances of $\ov{\alpha} = (\alpha_1,\ldots,\alpha_m)$ via $\rho$,
we define
by $s_{\ov{\tau}}^{\NN}(\ov{d}) = s^{\MM}(v(\tau_1),\ldots,v(\tau_m))(\ov{d})$.
\end{itemize}
The next fact follows by induction on the term or formula $\delta$ (for arbitrary $\xi : \VV \ra \prod_{\tau \in \Type_{\Sigma'}}\,\D'_\tau$):
\begin{enumerate}
\item[(1)] If $\delta$ contains no type variables, then $\SEM{\cmonox{\delta}}^{\NN}_{\xi'} = \SEM{\delta}^{\MM}_\xi$.
\end{enumerate}
%
%
In particular, for a sentence $\phi$, we have $\SEM{\cmonox{\phi}}^{\NN} = \SEM{\phi}^{\MM}$.

Now let $\forall \ov{\alpha}.\;\phi \in \PHI$.  Since $\MM$ is a model of $\forall \ov{\alpha}.\;\phi$, $\MM$ is also a model
of all its ground-type instances $\phi\vthinspace\rho$, and hence, since $\SEM{\cmonox{\phi}}^{\NN} = \SEM{\phi}^{\MM}$,
$\NN$ is a model of each formula in $\cmonox{\forall \ov{\alpha}.\;\phi}$.
Thus, $\NN$ is a model of $\cmonox{\PHI}$, as desired.

\betweenitems\noindent
\textsc{Complete}:\enskip
Let $\NN = \bigl((\D'_\tau)_{\tau \in \Type'},(\ff^{\,\NN})_{\ff \in \FF},(\pp^{\NN})_{\pp \in \PP'}\bigr)$ be a model of $\cmonox{\PHI}$,
for which we can assume without loss of generality that $\D'_\tau \,\cap \D'_{\tau'} = \emptyset$ if $\tau \not= \tau'$.
We define a structure $\MM = \bigl(\dom,(\ff^{\,\MM})_{\ff \in \FF},(\pp^\MM)_{\pp \in \PP}\bigr)$ for $\Sigma$ as follows:
\begin{itemize}
\item $\dom = \{\D'_\tau \mid \tau \in \Type'\}$---for each
$D \in \dom$, we let $v(D)$ be the unique $\tau \in \Type' = \GType_\Sigma$ such that $D = \D_\sigma$;
\item if $\ov{\alpha} = (\alpha_1,\ldots,\alpha_m)$, $s : \forall\ov{\alpha}.\;\ov{\sigma} \ra \varsigma \in \FF \mathrel{\uplus} \PP$, and
$\ov{D} \in \dom^m$,
we define $s^{\MM}(\ov{D}) = s_{(v(D_1),\ldots,v(D_m))}^{\NN}$.
\end{itemize}
The next fact follows by induction on $\tau$:
\begin{enumerate}
\item[(1)] $\tau \in \GType$ implies $\SEM{\tau}^{\MM} = \D'_\tau$.
\end{enumerate}
The next fact follows by induction on the term or formula $\delta$ (for arbitrary $\xi : \VV \ra \prod_{\tau \in \GType_{\Sigma}}\,\SEM{\tau}^{\MM}$):
\begin{enumerate}
\item[(2)] If $\delta$ contains no type variables, then $\SEM{\delta}^{\MM}_\xi = \SEM{\cmonox{\delta}}^{\NN}_{\xi'}$.
\end{enumerate}
%
%
In particular, for a sentence $\phi$, we have $\SEM{\phi}^{\MM} = \SEM{\cmonox{\phi}}^{\NN}$.
Now let $\forall \ov{\alpha}.\;\phi \in \PHI$, and assume by absurdity that $\MM$ is not a model of $\forall \ov{\alpha}.\;\phi$.
Then, since by (1) and the definition of $\dom$ we have that $\SEM{\phantom{i}}^\MM : \GType \ra \dom$ is surjective,
we obtain a ground-type instance $\phi\vthinspace\rho$ of $\phi$ such that $\MM$ is not a model of $\phi\vthinspace\rho$.
By $\SEM{\phi}^{\MM} = \SEM{\cmonox{\phi}}^{\NN}\!$, $\NN$ is not a model of $\SEM{\cmonox{\phi\vthinspace\rho}}^{\NN}$, hence not a model of $\cmonox{\forall \ov{\alpha}.\;\phi}$,
which is a contradiction.
Therefore $\MM$ is a model of $\forall \ov{\alpha}.\;\phi$.
We obtain that $\MM$ is a model $\PHI$, as desired.
\QED
\end{proof}

\subsection{Monotonicity}
\label{ssec:monotonicity-polymorphic}
The definition of monotonicity from \longsect~\ref{ssec:monotonicity-monomorphic}
(Definition~\ref{def:infinite-monotonicity}) must be adapted to the polymorphic
case.

\leftOut{
Definition \ref{def:infinite-monotonicity} and Lemmas
\ref{lem:downward-loewenheim-skolem},
\ref{lem:monotonicity-preservation-by-union}, and
\ref{lem:submodel}
cater for polymorphic problems. Other results from
\longsect~\ref{sec:monotonicity-based-type-encodings-the-monomorphic-case} must
be adapted to the polymorphic case.

\begin{defi}[Monotonicity of Polymorphic Type]\afterDot
A polymorphic type is (\emph{infinitely}) \emph{monotonic} if all its
ground instances are monotonic.
\end{defi}
}

\begin{defi}[Monotonicity]\afterDot
\label{def:infinite-monotonicity-polymorphic}%
Let $S$ be a set of types and $\PHI$ be a polymorphic
problem.
The set $S$ is 
\emph{monotonic} in
$\PHI$ if for all models $\MM$ of $\PHI$,
there exists a model $\NN$ of $\PHI$ such that
for all ground types $\sigma$,
$\SEM{\sigma}^{\NN}$
is infinite if $\sigma$ is an instance of a type in $S$ and
$\left|\smash{\SEM{\sigma}^{\NN}}\right| = \left|\smash{\SEM{\sigma}^\MM}\right|$ otherwise.
A type $\t$ is 
\emph{monotonic} if $\{\t\}$ is monotonic.
The problem~$\PHI$ is 
\emph{monotonic} if the set $\Type$ of all types is monotonic.
\end{defi}

\begin{exa}
For the algebraic list problem of Example~\ref{ex:algebraic-lists},
$S = \{\mathit{list}(\a)\}$ is monotonic because all models $\MM$ of the problem
necessarily interpret all of the ground instances of $\mathit{list}(\a)$
(e.g.\ $\mathit{list}(\www)$, $\mathit{list}(\mathit{list}(\www))$) by
infinite domains. Monotonicity is trivially witnessed by taking $\NN = \MM$.
\end{exa}

\begin{thm}[Monotonic Erasure]\afterDot
\label{thm:monotonic-erasure-polymorphic}%
The traditional type arguments encoding \args\
is sound for
monotonic polymorphic problems.
\end{thm}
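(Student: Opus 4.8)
The plan is to reduce this polymorphic statement to its monomorphic counterpart, Theorem~\ref{thm:monotonic-erasure-monomorphic}, by factoring the type-arguments encoding through complete monomorphisation. Recall that $\args$ denotes the composition $\encode{\phantom{i}}{\args\tinycomma\erased}$ and that $\cmono$ is correct (Lemma~\ref{lem:correctness-of-cmono}).

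First I would show that whenever $\PHI$ is a monotonic polymorphic problem, the monomorphic problem $\encode{\PHI}{\cmono}$ is monotonic in the sense of Definition~\ref{def:infinite-monotonicity}. Given an arbitrary model of $\encode{\PHI}{\cmono}$, completeness of $\cmono$ yields a model of $\PHI$; the polymorphic monotonicity of $\PHI$ (Definition~\ref{def:infinite-monotonicity-polymorphic}) then provides a model of $\PHI$ in which every ground type is interpreted by an infinite set; and Lemma~\ref{lem:downward-loewenheim-skolem} turns this into a model $\MM$ of $\PHI$ in which every ground domain is countably infinite, $\SEM{\phantom{i}}^\MM$ is a bijection onto $\dom$, and the type constructors are interpreted injectively. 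Since the sorts of $\encode{\PHI}{\cmono}$ are exactly $\GType_\Sigma$ and the soundness construction of $\cmono$, applied to such a model, sets the domain of the sort $\tau$ to be $\SEM{\tau}^\MM$, feeding $\MM$ through that construction produces a model of $\encode{\PHI}{\cmono}$ all of whose sorts are countably infinite; hence $\encode{\PHI}{\cmono}$ is monotonic. Now if $\PHI$ is satisfiable then so is $\encode{\PHI}{\cmono}$ (soundness of $\cmono$), and by Theorem~\ref{thm:monotonic-erasure-monomorphic} the untyped problem $\encode{\PHI}{\cmono\tinycomma\erased}$ is satisfiable as well; let $\NN$ be one of its models, with carrier $\D$, which is countably infinite (that is how monomorphic monotonic erasure produces it).

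The remaining work is to turn $\NN$ into an untyped model $\NN'$ of $\encode{\PHI}{\args\tinycomma\erased}$; this is the dual of the argument used for completeness of \args\filter{} (Theorem~\ref{thm:completeness-of-argsx}), run in the opposite direction. The conceptual point is that $\cmono$ separates the ground instances of a polymorphic symbol $s$ by a subscript $s_{\ov\tau}$, whereas $\args$ separates them by passing the encoded type arguments $\typex{\ov\tau}$ as extra term arguments; these devices carry the same information. Concretely, I would keep the carrier $\D$ and fix an injection of $\GType_\Sigma$ into it, so that every ground type $\tau$ acquires a canonical representative and one can arrange $\SEM{\typex{\tau}}^{\NN'}$ to be that representative; an element of $\D$ that is not such a representative is \emph{decoded} to the fixed ground type $\eps{\GType_\Sigma}$ (which exists by Convention~\ref{con-iota}) whenever it turns up in a type-argument position. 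Each encoded type constructor is interpreted by decoding its arguments, applying the constructor, and re-encoding; each $s \in \FF \uplus \PP$ with arity $\forall\ov\alpha.\,\ov\sigma \to \varsigma$ is interpreted by decoding its $\left|\smash{\ov\alpha}\right|$ type-argument positions to ground types $\ov\tau$ and applying $s_{\ov\tau}^{\NN}$ to the remaining arguments (which, since $\args$ introduces no tags or guards, already range over $\D$ and need no correction). A routine structural induction then shows that for every term or formula $\delta$ over $\Sigma$ and every untyped valuation $\xi'$, with $\rho$ the ground type substitution sending $\alpha$ to the decoding of $\xi'(\VV(\alpha))$, the value $\SEM{\encode{\delta}{\args\tinycomma\erased}}^{\NN'}_{\xi'}$ coincides with $\SEM{\encode{\delta\rho}{\cmono\tinycomma\erased}}^{\NN}_{\xi}$ for the transported valuation $\xi$. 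Since $\cmonox{\forall\ov\alpha.\,\phi}$ is by definition the set of \emph{all} ground instances of $\phi$ and $\NN \models \encode{\PHI}{\cmono\tinycomma\erased}$, it follows that $\encode{\forall\ov\alpha.\,\phi}{\args\tinycomma\erased}$ — which after erasing the type quantifiers has the form $\forall \VV(\ov\alpha).\,\psi$, with $\psi$ the erased $\args$-translation of the body — holds in $\NN'$ for \emph{every} assignment to the variables $\VV(\ov\alpha)$, so $\NN'$ is a model of $\encode{\PHI}{\args\tinycomma\erased}$.

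The step I expect to be the main obstacle is this last transport, and within it the $\forall\alpha$ clause of $\args$, for the reason flagged right after Theorem~\ref{thm:completeness-of-argsx}: because $\argsx{\forall\alpha.\,\phi}$ introduces an unconstrained fresh quantifier $\forall\typex{\alpha}\mathbin:\jjj$, the ``obvious'' model-preserving reasoning makes the encoded formula \emph{stronger}, which is the wrong direction for soundness. The construction above circumvents this precisely by making $\NN'$ collapse \emph{every} interpretation of $\typex{\alpha}$ — including elements of $\D$ that do not encode any ground type — to genuine ground-type behaviour via the decoding, so that the encoded universal over $\typex{\alpha}$ is satisfied exactly because $\PHI$ asserts $\phi$ at all ground instances of $\alpha$. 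A secondary, bookkeeping-level difficulty is to keep the L\"owenheim--Skolem and syntactic-domain normalisations of the first step from silently destroying infinitude of the domains, which is why I would route that step through Lemma~\ref{lem:downward-loewenheim-skolem} (which explicitly preserves countable infinity) rather than through Lemma~\ref{lem-low-sko-aux-poly} on its own.
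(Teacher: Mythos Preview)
Your proof is correct but takes a genuinely different route from the paper's. The paper argues directly: starting from a model of $\PHI$ with all domains countably infinite (via monotonicity and Lemma~\ref{lem-low-sko-poly}), it fixes a single countably infinite set $E$, bijections $u:\dom\to E$ and $u_D:D\to E$ for each $D\in\dom$, and defines the untyped structure for $\encode{\PHI}{\args\tinycomma\erased}$ by conjugating through these bijections; a structural induction then relates the two semantics term by term. No detour through $\cmono$ is needed.

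Your argument instead factors through complete monomorphisation: transfer monotonicity from $\PHI$ to $\cmonox{\PHI}$, invoke the monomorphic erasure theorem to obtain a model of $\encode{\PHI}{\cmono\tinycomma\erased}$, and then transport that model to one of $\encode{\PHI}{\args\tinycomma\erased}$ by reading off the type subscript $s_{\ov\tau}$ from the decoded type-argument positions of $s$. This is more modular---it reuses Theorem~\ref{thm:monotonic-erasure-monomorphic} and the correctness of $\cmono$ instead of redoing a bijection argument---and your handling of the $\forall\alpha$ clause via a total decoding map is exactly what is needed to make the direction come out right. The paper's approach, by contrast, is self-contained and avoids the infinitary intermediate problem $\cmonox{\PHI}$ altogether. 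It is worth noting that your factorisation strategy is precisely the one the paper adopts later (Lemma~\ref{lem-transport}) to lift the monomorphic tag and guard encodings to polymorphism; you have effectively anticipated that machinery and specialised it to the base case $\args$.
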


\begin{proof}
Let\/ \PHI{} be such a problem and $\Sigma = (\KK,\FF,\PP)$ be its signature, and assume
\PHI{} is satisfiable. By monotonicity and Lemma~\ref{lem-low-sko-poly},
it also has a model $\MM$ where all $D \in \dom$
are countably infinite.
From this model, we construct a
model $\mathcal M'$ of $\encode{\PHI{}}{\args
\tinycomma\erased}$ with a countably infinite domain $E$
that interprets the encoded types as distinct elements of $E$. The function and
predicate tables for $\mathcal M'$ are based on those
from~$\mathcal M$, with encoded type arguments corresponding to actual type
arguments.

More precisely, let $E$ be an countably infinite set and consider the
following functions:
\begin{itemize}
\item the mutually inverse bijections $u : \dom \ra E$ and $v : E \ra \dom$;
\item for each $D \in \dom$, the mutually inverse bijections $u_D : D \ra E$ and $v_D : E \ra D$.
\end{itemize}
Let $\Sigma' = (\FF' \mathrel{\uplus} \KK',\PP')$ be the target untyped signature of \args{}.
We define the structure $\NN$ for $\Sigma'$ as follows:
\begin{itemize}
\item $\D = E$.
\item Assume $k$ is an $n$-ary function in $\KK'$, meaning $k :: n$ is in $\KK$.
Then $k^{\NN}(\ov{e}) = u(k^\MM(v(\ov{e})))$.
\item Assume $\ov{\alpha} = \alpha_1, \ldots, \alpha_m$, $\ov{\sigma} = \sigma_1, \ldots, \sigma_n$,
and $s : \forall \ov{\alpha}.\;\ov{\sigma} \ra \varsigma \in \FF \mathrel\uplus \PP$, meaning that
$s$ is an $(m+n)$-ary symbol in $\FF' \mathrel{\uplus} \PP'$.
Given $\ov{c} = (c_1,\ldots,c_m) \in E^m$ and $\ov{e} = (e_1,\ldots,e_n) \in E^n$, we define
$s^{\NN}(\ov{c},\ov{e}) =
\smash{u_{\SEM{\sigma}^\MM_{\theta}}(s^\MM(v(\ov{c}))(v_{\SEM{\sigma_1}^\MM_{\theta}}(e_1),\ldots,v_{\SEM{\sigma_n}^\MM_{\theta}}(e_n)))}$,
where $\theta$ maps each $\alpha_i$ to $v(c_i)$.
\end{itemize}
Given $\xi' : \VV \ra E$, we define $\theta : \AAA \ra \dom$
by $\theta(\alpha) = v(\xi'(\VV(\alpha)))$
and
$\xi : {\mathcal V} \ra \smash{\prod_{\sigma \in \Type_\Sigma} \SEM{\sigma}^\MM_\theta}$
by $\xi(\mathit{X})(\sigma) = v_{\SEM{\typex{\sigma}}_{\xi'}^{\NN}}(\xi'(\mathit{X}))$, where
$\mathcal{V}(\a)$ and $\typex{\sigma}$ are as in Definition~\ref{def-term-enc-types}.

The next facts follow by structural induction on $\sigma$, $t$, and $\phi$ (for arbitrary $\xi'$):
\begin{itemize}
\item $\SEM{\typex{\sigma}}^{\NN}_{\xi'} = u(\SEM{\sigma}^\MM_{\theta})$;
\item $t : \sigma$ implies $\SEM{\encode{t}{\args
\tinycomma\erased}}^{\NN}_{\xi'} = u_{\SEM{\sigma}^\MM_{\theta}}(\SEM{t}^\MM_{\theta,\xi})$;
\item $\SEM{\encode{\phi}{\args
\tinycomma\erased}}^{\NN}_{\xi'} = \SEM{\phi}^\MM_{\theta,\xi}$.
\end{itemize}
In particular, for sentences, we have $\SEM{\encode{\phi}{\args
\tinycomma\erased}}^{\NN} = \SEM{\phi}^\MM$;
and since $\MM$ is a model of~$\PHI$, it follows that $\NN$ is a model of $\encode{\PHI}{\args
\tinycomma\erased}$.
\QED
\end{proof}

\begin{lem}[Correctness Conditions]\afterDot
\label{lem:correctness-conditions-polymorphic}%
Let\/ $\PHI$ be a polymorphic problem, and let\/ $\xx$ be a polymorphic encoding. The
problems\/ $\PHI$\/ and\/ $\encode{\PHI}{\xx\tinycomma\args
\tinycomma\erased}$ are
equisatisfiable provided that the following conditions hold\/{\rm:}\strut
\begin{itemize}
\item[] \textsc{Mono}{\rm:}\enskip $\PHIii\xx$\/ is monotonic.

\betweenitems

\item[] \textsc{Sound}{\rm:}\enskip If\/ $\PHI$ is satisfiable, so is\/ $\PHIii\xx$.

\betweenitems

\item[] \textsc{Complete}{\rm:}\enskip If\/ $\PHIii\xx$ is satisfiable, so is\/ $\PHI$.
\end{itemize}
\end{lem}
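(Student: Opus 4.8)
The plan is to obtain this lemma exactly as its monomorphic counterpart, Lemma~\ref{lem:correctness-conditions-monomorphic}, was obtained:\ as a bookkeeping corollary of the two facts already established for the traditional type arguments encoding \args{}, namely Theorem~\ref{thm:completeness-of-argsx} (completeness of \args\filter{}, hence of \args{}, which is the instance of \args\filter{} whose filter keeps every type argument) and Theorem~\ref{thm:monotonic-erasure-polymorphic} (soundness of \args{} when restricted to monotonic polymorphic problems). The one identity to keep in mind throughout is that, unwinding the composition notation, $\encode{\PHI}{\xx\tinycomma\args\tinycomma\erased}$ is precisely the traditional type arguments encoding $\encode{\phantom{i}}{\args\tinycomma\erased}$ applied to the polymorphic problem $\PHIii\xx$.

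For the forward (soundness) direction I would argue as follows. Suppose $\PHI$ is satisfiable. Then \textsc{Sound} gives that $\PHIii\xx$ is satisfiable, and \textsc{Mono} gives that $\PHIii\xx$ is monotonic. Hence Theorem~\ref{thm:monotonic-erasure-polymorphic} applies \emph{to the problem} $\PHIii\xx$ and yields a model of $\encode{\phantom{i}}{\args\tinycomma\erased}$ applied to $\PHIii\xx$, i.e.\ a model of $\encode{\PHI}{\xx\tinycomma\args\tinycomma\erased}$. For the backward (completeness) direction:\ suppose $\encode{\PHI}{\xx\tinycomma\args\tinycomma\erased}$ is satisfiable; this is a model of $\encode{\phantom{i}}{\args\tinycomma\erased}$ applied to $\PHIii\xx$, so by Theorem~\ref{thm:completeness-of-argsx} the problem $\PHIii\xx$ is satisfiable, and then \textsc{Complete} gives that $\PHI$ is satisfiable.

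I do not anticipate a genuine obstacle; the argument is two lines once the right theorems are assembled. The only point worth flagging is why the monotonicity hypothesis must be stated in the form \textsc{Mono}:\ the encoding \args{} is in general unsound (on monomorphic instances it degenerates to \erased{} and inherits its cardinality unsoundness), so Theorem~\ref{thm:monotonic-erasure-polymorphic} cannot be invoked on $\PHI$ directly---there is no reason for $\PHI$ itself to be monotonic. The role of the intermediate encoding $\xx$ is exactly to introduce enough protectors that its image becomes monotonic, which is what \textsc{Mono} records; \textsc{Sound} then transports satisfiability across the $\xx$-layer so that the monotonic erasure theorem can be applied one step inside the composition, and \textsc{Complete} closes the converse.
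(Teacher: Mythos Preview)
Your proposal is correct and follows exactly the paper's approach: the paper's proof reads ``Immediate from Theorems~\ref{thm:completeness-of-argsx} and~\ref{thm:monotonic-erasure-polymorphic}'', and you have simply spelled out the two-direction chase through those two theorems together with the hypotheses \textsc{Mono}, \textsc{Sound}, and \textsc{Complete}.
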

\begin{proof}
Immediate from Theorems \ref{thm:completeness-of-argsx} and
\ref{thm:monotonic-erasure-polymorphic}.\QED
\end{proof}

\begin{lem}[Monotonicity Preservation and Reflection by \cmono]\afterDot
\label{lem:pres-of-cmono}%
A polymorphic problem $\PHI$ is monotonic iff its monomorphic encoding $\cmonox{\PHI}$ is monotonic.
\end{lem}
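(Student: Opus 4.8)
The plan is to reduce the claim to the model correspondence already implicit in the proof of Lemma~\ref{lem:correctness-of-cmono}, together with the remark that for the \emph{full} set of types the polymorphic and monomorphic notions of monotonicity unfold to the same elementary statement. Since every ground type is an instance of some type in $\Type$ (e.g.\ of itself, or of a type variable), ``$\PHI$ is monotonic'' says exactly: every model of $\PHI$ can be replaced by a model of $\PHI$ in which $\SEM{\tau}^{\MM}$ is infinite for every $\tau \in \GType_\Sigma$. Likewise the sort set of the monomorphic problem $\cmonox{\PHI}$ is $\Type' = \GType_\Sigma$ and each of its sorts is trivially an instance of itself, so ``$\cmonox{\PHI}$ is monotonic'' says: every model of $\cmonox{\PHI}$ can be replaced by a model of $\cmonox{\PHI}$ in which every sort $\tau \in \GType_\Sigma$ has an infinite domain. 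So it is enough to transport models whose ground types are all infinite back and forth across $\cmono$.

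First I would record a cardinality-preserving model correspondence read off from the proof of Lemma~\ref{lem:correctness-of-cmono}. \textbf{(a)}~From any model $\MM$ of $\PHI$ one obtains a model $\NN$ of $\cmonox{\PHI}$ with $|\D^{\NN}_\tau| = |\SEM{\tau}^{\MM}|$ for every $\tau \in \GType_\Sigma$: canonicalise $\MM$ first---here via Lemma~\ref{lem-syntactic-domains} rather than Lemma~\ref{lem-low-sko-aux-poly}, since the \textsc{Sound} construction in the proof of Lemma~\ref{lem:correctness-of-cmono} only uses the bijectivity of $\SEM{\phantom{i}}^{\MM}$ (part~(3) of Lemma~\ref{lem-syntactic-domains}) and, unlike the passage to a countable model, the modifications of Lemma~\ref{lem-syntactic-domains} preserve cardinalities---obtaining $\MM^\circ$ with $\SEM{\phantom{i}}^{\MM^\circ}$ a bijection $\GType_\Sigma \ra \dom$ and $|\SEM{\tau}^{\MM^\circ}| = |\SEM{\tau}^{\MM}|$ for all ground $\tau$ (each of the three modifications in the proof of Lemma~\ref{lem-syntactic-domains}---tagging each domain $D$ with a type, discarding the domains not of the form $\SEM{\tau}^{\MM}$, and passing to pairwise disjoint copies---either acts bijectively on each domain or merely deletes redundant ones); then the \textsc{Sound} construction of Lemma~\ref{lem:correctness-of-cmono} produces $\NN$ with $\D^{\NN}_\tau = \SEM{\tau}^{\MM^\circ}$. \textbf{(b)}~Conversely, the \textsc{Complete} construction of Lemma~\ref{lem:correctness-of-cmono}, applied to any model $\NN$ of $\cmonox{\PHI}$ (after the harmless renaming making the $\D^{\NN}_\tau$ pairwise disjoint), yields a model $\MM$ of $\PHI$ with $\SEM{\tau}^{\MM} = \D^{\NN}_\tau$ for every $\tau \in \GType_\Sigma$. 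In particular, in both (a) and (b) a ground type is interpreted by an infinite set on one side iff it is on the other.

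Granting this, the ``only if'' direction is: assume $\PHI$ monotonic and let $\NN$ be a model of $\cmonox{\PHI}$; by (b) there is a model $\MM$ of $\PHI$ with $\SEM{\tau}^{\MM} = \D^{\NN}_\tau$ for all $\tau$; by monotonicity of $\PHI$ there is a model $\MM'$ of $\PHI$ with every $\SEM{\tau}^{\MM'}$ infinite; applying (a) to $\MM'$ gives a model $\NN'$ of $\cmonox{\PHI}$ with every $\D^{\NN'}_\tau$ infinite; hence every model of $\cmonox{\PHI}$ can be replaced by one with all sorts infinite, i.e.\ $\cmonox{\PHI}$ is monotonic. The ``if'' direction is symmetric: from a model $\MM$ of $\PHI$, use (a) to get a model $\NN$ of $\cmonox{\PHI}$ with $|\D^{\NN}_\tau| = |\SEM{\tau}^{\MM}|$, apply monotonicity of $\cmonox{\PHI}$ to obtain a model $\NN'$ of $\cmonox{\PHI}$ with all sorts infinite, and use (b) to get a model $\MM'$ of $\PHI$ with every $\SEM{\tau}^{\MM'} = \D^{\NN'}_\tau$ infinite, which witnesses monotonicity of $\PHI$.

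The one delicate point---and the step I would spend the most care on---is (a): verifying that the canonicalisation of Lemma~\ref{lem-syntactic-domains} leaves the cardinalities $|\SEM{\tau}^{\MM}|$ for ground $\tau$ unchanged, so that ``all ground types infinite'' survives the passage to a canonical model before the \textsc{Sound} half of Lemma~\ref{lem:correctness-of-cmono} is invoked, and in particular that one must invoke Lemma~\ref{lem-syntactic-domains} and not the weaker Lemma~\ref{lem-low-sko-aux-poly}, whose countability clause could collapse an infinite domain. This is routine---each transformation in the proof of Lemma~\ref{lem-syntactic-domains} is a bijection on each domain or a deletion of redundant domains---but it must be spelled out, since monotonicity is a statement about cardinalities; everything else is bookkeeping over the correspondence already established in the proof of Lemma~\ref{lem:correctness-of-cmono}.
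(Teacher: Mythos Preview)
Your argument is correct, but the paper takes a shorter and more indirect route. Rather than re-opening the proof of Lemma~\ref{lem:correctness-of-cmono} to verify that its model constructions preserve (or at least reflect) infinite cardinalities of ground types, the paper \emph{axiomatises} the infinity requirement: it introduces the polymorphic sentences $\phi_k = \forall\alpha.\,\exists X_1,\ldots,X_k\mathbin:\alpha.\,\bigwedge_{i<j} X_i \not\eq X_j$ and sets $\PHI' = \{\phi_k \mid k \in \mathbb N\}$. Then ``$\PHI$ has a model with all ground-type domains infinite'' is literally ``$\PHI \cup \PHI'$ is satisfiable'', and likewise ``$\cmonox{\PHI}$ has a model with all sorts infinite'' is ``$\cmonox{\PHI} \cup \cmonox{\PHI'}$ is satisfiable''. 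Since $\cmonox{\PHI \cup \PHI'} = \cmonox{\PHI} \cup \cmonox{\PHI'}$, the correctness of \cmono{} (Lemma~\ref{lem:correctness-of-cmono}) applied to $\PHI \cup \PHI'$ gives the equivalence directly, with no need to track cardinalities through the model transformations.

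What your approach buys is self-containment: you never leave the semantic level, and you make explicit the observation---which the paper's proof hides inside the black box of equisatisfiability---that the \textsc{Sound} and \textsc{Complete} constructions of Lemma~\ref{lem:correctness-of-cmono} can be made cardinality-preserving on ground types, provided one canonicalises via Lemma~\ref{lem-syntactic-domains} rather than Lemma~\ref{lem-low-sko-aux-poly}. That remark is genuinely useful and not recorded elsewhere in the paper. What the paper's approach buys is economy: once the infinity requirement is internalised as first-order axioms, the result is a one-line corollary of the already-proved correctness of \cmono{}, with no need to revisit any construction.
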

\begin{proof}
Let $\phi_k$ be the polymorphic formula stating that all type domains have at least $k$ elements,
and let $\PHI' = \{\phi_k \mid k \in \mathbb{N}\}$.
We note that the following four statements are equivalent:
\begin{enumerate}
\item[(1)] $\PHI$ has a model with all domains infinite;
\item[(2)] $\PHI \,\cup\, \PHI'$ has a model;
\item[(3)] $\cmonox{\PHI} \,\cup\, \cmonox{\PHI'}$ has a model;
\item[(4)] $\cmonox{\PHI}$ has a model with all domains infinite.
\end{enumerate}
The equivalences ``(1) iff (2)'' and ``(3) iff (4)'' are obvious, and ``(2) iff
(3)'' follows from the correctness of \cmono{} (Lemma
\ref{lem:correctness-of-cmono}).
%
\QED
\end{proof}

\subsection{Monotonicity Inference}
\label{ssec:monotonicity-inference-polymorphic}

The monotonicity inference of
\longsect~\ref{ssec:monotonicity-inference-monomorphic} must be adapted to the
polymorphic setting.
We start by generalising the notion of naked variable.
\begin{defi}[Polymorphic Naked Variable]\afterDot
\label{def:naked-variable-poly}%
The set of \emph{naked variables} $\NV(\phi)$
of a polymorphic formula $\phi$ is defined similarly to the monomorphic case
(Definition\ \ref{def:naked-variable}). The following equations are new or
slightly different from there:
\begin{align*}
\NV(p\LAN\bar\t\RAN(\bar t\,)) & \,=\, \emptyset &
  \NV(\lnot\,p\LAN\bar\t\RAN(\bar t\,)) & \,=\, \emptyset &
\NV(\forall\a.\;\phi) & \,=\, \NV(\phi)
\end{align*}
%
Again, we take $\NV(\PHI) = \bigcup_{\phi \in \PHI} \NV(\phi)$.
\end{defi}

\leftOut{
Note that the undercover variables from Section~\ref{sec:alternative-cover-based-encoding-of-polymorphism} are reminiscent of naked variables. This is no coincidence:\
naked variables effectively carry the implicit type argument of
${\eq} : \forall\a.\; \a \times \a \to \bool$, and
undercover variables generalise this to arbitrary predicate and function
symbols.
Equality is special in two respects, though:\ its only sound cover is
effectively $\{1, 2\}$ because of the equality axioms (which are built
into the provers, without any protectors), and the negative case is optimised to
take advantage of disequality's fixed semantics.
}


Note from Definitions \ref{def:naked-variable} and \ref{def:naked-variable-poly}
that the naked variables of a polymorphic formula occur at the same positions as
in all its monomorphic instances. Moreover, the types of the naked variables in
the completely monomorphised problem are simply the ground instances of the types
of the naked variables in the original problem:

\begin{lem}\label{lem-NV-transport}
$\NV(\cmonox{\PHI}) = \{\mathit{X}^\tau \,{\mid}\; \tau \in \negvthinspace\GType\negvthinspace \mbox{ and there exists $\sigma \mathbin\geq \tau$ such that $\mathit{X}^\sigma\negvthinspace \mathbin\in \NV(\PHI)$}\}$.
\end{lem}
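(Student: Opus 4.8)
The plan is to prove the set equality by analysing the structure of $\cmonox{-}$ and how it interacts with the $\NV$ operator. The key observation, already flagged in the text just before the lemma, is that complete monomorphisation does not move, delete, or introduce naked positions: it only (a)~instantiates type variables via ground substitutions $\rho$ and (b)~renames each polymorphic symbol occurrence $s\LAN\bar\t\RAN$ to a monomorphic symbol $s_{\ov\sigma}$, leaving the term/formula skeleton untouched. Since a variable occurs naked in a formula precisely because it sits as an argument of a positive equality literal (modulo the $\exists$ binding deletion), and none of these structural features change under $\cmonox{-}$, naked positions are in exact correspondence before and after.

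**Key steps.** First I would establish an auxiliary fact at the level of a single type-variable-free formula $\phi$ (one with no top type quantifiers, all symbol applications ground): by structural induction on $\phi$ one shows $\NV(\cmonox{\phi}) = \NV(\phi)$, since the defining equations for $\NV$ (Definition~\ref{def:naked-variable-poly}) and for $\cmonox{-}$ recurse on subformulae in lockstep, and the only base case that matters is the equality literal $t_1 \eq t_2$, where $\cmonox{-}$ merely recurses into $t_1,t_2$ without altering which of them are variables. Next, for a sentence $\forall\ov\alpha.\;\phi \in \PHI$ with $\phi$ type-quantifier-free, and for each ground substitution $\rho : \AAA \ra \GType$, I would prove $\NV(\phi\vthinspace\rho) = \{\mathit{X}^{\sigma\rho} \mid \mathit{X}^\sigma \in \NV(\phi)\}$: applying $\rho$ to a formula does not change its skeleton, so a variable occurs naked in $\phi\vthinspace\rho$ exactly when it occurred naked in $\phi$, with its type pushed through $\rho$. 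Combining these, $\NV(\cmonox{\forall\ov\alpha.\;\phi}) = \bigcup_\rho \NV(\cmonox{\phi\vthinspace\rho}) = \bigcup_\rho \NV(\phi\vthinspace\rho) = \{\mathit{X}^\tau \mid \exists \rho.\; \exists \mathit{X}^\sigma \in \NV(\phi).\; \tau = \sigma\rho\}$. The last set is exactly $\{\mathit{X}^\tau \mid \tau \in \GType \text{ and } \exists \sigma \geq \tau.\; \mathit{X}^\sigma \in \NV(\phi)\}$: membership in the right-hand side unfolds, via Definition~\ref{def-mgi}, to "$\tau$ ground and $\tau = \sigma\rho$ for some $\rho$ and some $\mathit{X}^\sigma \in \NV(\phi)$", and since $\sigma$ ground-instantiable to $\tau$ means precisely $\tau \leq \sigma$, the two match. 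Finally, I would take the union over all $\phi \in \PHI$ using $\NV(\cmonox{\PHI}) = \bigcup_{\phi}\NV(\cmonox\phi)$ and $\NV(\PHI) = \bigcup_\phi \NV(\phi)$ (Definitions~\ref{def:naked-variable-poly} and the $\cmonox{\PHI}$ clause), which distributes cleanly since everything in sight is a union.

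**Main obstacle.** There is no deep obstacle here; the proof is essentially bookkeeping. The one point requiring a little care is a naming/capture subtlety: in $\cmonox{\forall\ov\alpha.\;\phi}$ the term variables $\mathit{X}$ are not renamed across different instances $\rho$, so the "same" variable name $\mathit{X}$ can acquire different ground types $\sigma\rho$ for different $\rho$; one must be sure this is consistent with Convention~\ref{con-formulae} and with how typed variables $\mathit{X}^\tau$ are individuated in $\Vt$ for the monomorphic signature $\Sigma' = (\GType_\Sigma, \FF', \PP')$ — i.e.\ a typed variable is a pair $(\mathit{X},\tau)$, so distinct instances genuinely give distinct elements of $\NV(\cmonox{\PHI})$, which is exactly what the right-hand side of the lemma records. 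A secondary minor check is that the claimed equivalence "$\tau = \sigma\rho$ for some ground $\rho$" $\iff$ "$\tau \leq \sigma$ and $\tau$ ground" is immediate from Definition~\ref{def-mgi} once one notes that $\sigma$ may contain type variables not among $\ov\alpha$ only if $\phi$ were not a sentence — but by Convention~\ref{con-alpha-phi} all free type variables of $\phi$ are among $\ov\alpha$, so $\rho$ restricted to $\TVars(\sigma)$ suffices and any such $\rho$ makes $\sigma\rho$ ground.
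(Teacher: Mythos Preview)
Your proposal is correct and matches the paper's approach: the paper does not give an explicit proof of this lemma but simply states it as an observation following directly from Definitions~\ref{def:naked-variable} and~\ref{def:naked-variable-poly}, noting that naked variables occur at the same positions in a polymorphic formula as in all its monomorphic instances. Your structural-induction argument is exactly the bookkeeping one would write to make that observation rigorous.
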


The calculus presented below
captures the insight that a polymorphic type is
monotonic if each of its common instances with the type of any
naked variable is an instance of an infinite type. Similarly to its monomorphic counterpart,
it is parameterised by a fixed set $\Inf(\PHI)$ of (not necessarily ground) types for which any interpretation $\SEM{\sigma}^\MM_\theta$
(for any type valuation~$\theta$) in any model $\MM$
of $\PHI$ is known to be infinite. 

\begin{conv}\label{con-Inf-transport}
It is easy to see that if $\sigma \in \Inf(\PHI)$ and $\rho : \AAA \ra \GType_\Sigma$, then
$\sigma\vthinspace\rho$ is infinite in all models of $\cmonox{\PHI}$. We fix the ``known as infinite'' types of $\cmonox{\PHI}$, $\Inf(\cmonox{\PHI})$,
to be $\{\sigma\vthinspace\rho \mid \sigma \in \Inf(\PHI) \mbox{ and } \rho : \AAA \ra \GType_\Sigma\}$.
%
\end{conv}

\begin{defi}[Polymorphic Monotonicity Calculus $\MONO{}$]\afterDot
\label{def:monotonicity-calculus-polymorphic}%
A judgement $\MONO{\t} \PHI$ indicates
that the type $\t$ is inferred monotonic in~$\PHI$.
The \emph{monotonicity calculus} consists of the single rule\strut

\vskip\abovedisplayskip

\noindent
\hfill
\AXC{for all $\mathit{X}^{\sigma'} \mathbin\in \NV(\PHI)$, if $\sigma$ and $\sigma'$ have a common instance, then $\textsf{mgi}(\sigma, \sigma') \in \InfX(\PHI)$\strut}
\UIC{$\MONO{\t} \PHI$\strut}
\DP
\hfill
\hbox{}

\vskip\belowdisplayskip

\noindent
where
$\InfX(\PHI) = \{\sigma \in \Type_\Sigma \mid \mbox{there exists }\sigma' \in \Inf(\PHI) \mbox{ such that } \sigma \leq \sigma'\}$
consists of all instances of all types in $\Inf(\PHI)$.
\end{defi}

\begin{exa}
For the algebraic list problem of Example~\ref{ex:algebraic-lists},
the only naked variables are $\mathit{X}\typ{\a}$ and
$\mathit{Xs}\typ{\mathit{list(\a)}}$, i.e.\ $\,\NV(\PHI) = \{\mathit{X}\typ{\a}, \mathit{Xs}\typ{\mathit{list(\a)}}\}$.
Assume $\Inf(\PHI) = \{\mathit{list}(\a)\}$.
Then $\InfX(\PHI) = \{\mathit{list}(\sigma) \mid \sigma \in \TypesOf_\Sigma\}$.
The type $\mathit{list}(\a)$ is inferred monotonic by $\MONO{}$ because its most general
common instance with $\a$ (the type of $\mathit{X}$) and $\mathit{list}(\a)$
(the type of $\mathit{Xs}$) is in both cases $\mathit{list}(\a)$, which is
known to be infinite. In contrast, $\a$ and $\www$ cannot be
inferred monotonic:\ each of them is its most general common instance with $\a$,
and neither of them is among the types that are known to be infinite.
\end{exa}

\leftOut{
\begin{rem}
Although type variables
occurring in declarations and formulae are bound by a $\forall$ quantifier,
for monotonicity inference polymorphic types
$\t[\bar\a]$, where $\bar\a$
are the type variables occurring in $\t$,
are viewed as being implicitly bound (i.e.\
``$\forall\bar\a.\;\t[\bar\a]$'' to abuse notation). Type variables are assumed
to be fresh in distinct types, comparison between types is modulo $\a$-renaming,
and substitution is capture-avoiding. Thus, we have $\mathit{list}(\a) = \mathit{list}(\b)$ (since
``$\forall\a.\> \mathit{list}(\a) = \forall\b.\> \mathit{list}(\b)$''), and
$\mathit{map}(\a, d)$ can be unified
with $\mathit{map}(c, \a)$ by renaming the second $\a$ to $\b$ and applying
$\rho = \{\a \mapsto c,\> \b \mapsto d\}$.
\end{rem}

Our proof strategy is to reduce the polymorphic case to the already proved
monomorphic case. Our Herbrandian motto is,
\begin{quote}
A polymorphic formula is equisatisfiable to the (generally infinite) set of its
mono\-morphic instances.
\end{quote}
This complete form of monomorphisation is not to be confused with the finitary,
heuristic monomorphisation algorithm presented in
\longsect~\ref{ssec:heuristic-monomorphisation}.
}


\begin{lem}\label{lem-calc-transport}
The following properties hold for any polymorphic signature $\Sigma$
and any $\Sigma$-problem~$\PHI${\rm:}
\begin{enumerate}
\item[(1)] If $\MONO{\sigma} \PHI$ and $\tau \in \GType_\Sigma$ such that $\tau \leq \sigma$, then $\MONO{\tau} \cmonox{\PHI}$.
\item[(2)] If $\tau \in \GType_\Sigma$, then $\MONO{\tau} \cmonox{\PHI}$ iff $\MONO{\tau} \PHI$.
\item[(3)] If $\MONO{\sigma} \PHI$ and $\sigma' \leq \sigma$, then $\MONO{\sigma'} \PHI$.
\end{enumerate}
\end{lem}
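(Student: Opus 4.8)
The plan is to unfold both monotonicity calculi and reduce the three claims to elementary facts about the instance preorder $\leq$ on types: transitivity of $\leq$; the defining property of $\mgi(\sigma,\sigma')$, namely that it is the \emph{most general} common instance, so that every common instance of $\sigma$ and $\sigma'$ is itself an instance of $\mgi(\sigma,\sigma')$; and the observation that $\InfX(\PHI)$ is closed under taking instances, i.e.\ $\mu \leq \sigma$ and $\sigma \in \InfX(\PHI)$ imply $\mu \in \InfX(\PHI)$ (immediate from $\InfX(\PHI) = \{\sigma \mid \exists\,\sigma' \in \Inf(\PHI).\; \sigma \leq \sigma'\}$ together with transitivity). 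I would prove the parts in the order (2), then (1), then (3): part (1) is a quick corollary of (2), and part (3) is self-contained.

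For (2), note that $\cmonox{\PHI}$ is monomorphic, so by Definition~\ref{def:monotonicity-calculus-monomorphic} the judgement $\MONO{\tau} \cmonox{\PHI}$ holds iff $\tau \in \Inf(\cmonox{\PHI})$, or $\NV(\cmonox{\PHI})$ contains no variable of type $\tau$. Using Convention~\ref{con-Inf-transport} (together with Convention~\ref{con-iota}, which lets one extend a partial ground type substitution to a total one) the first disjunct is equivalent to $\tau \in \InfX(\PHI)$, and by Lemma~\ref{lem-NV-transport} the second is equivalent to the nonexistence of $\mathit{X}^{\sigma} \in \NV(\PHI)$ with $\tau \leq \sigma$. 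On the polymorphic side, since $\tau$ is ground, $\tau$ and $\sigma'$ have a common instance iff $\tau \leq \sigma'$, in which case $\mgi(\tau,\sigma') = \tau$; hence the rule of Definition~\ref{def:monotonicity-calculus-polymorphic} says that $\MONO{\tau} \PHI$ holds iff for every $\mathit{X}^{\sigma'} \in \NV(\PHI)$ with $\tau \leq \sigma'$ one has $\tau \in \InfX(\PHI)$. A case split on whether $\tau \in \InfX(\PHI)$ now shows that the two characterizations coincide.

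For (1), by part (2) it suffices to prove $\MONO{\tau} \PHI$. Let $\mathit{X}^{\sigma'} \in \NV(\PHI)$ with $\tau \leq \sigma'$; by the characterization just obtained we must show $\tau \in \InfX(\PHI)$. Since also $\tau \leq \sigma$, the ground type $\tau$ is a common instance of $\sigma$ and $\sigma'$, so $\MONO{\sigma} \PHI$ gives $\mgi(\sigma,\sigma') \in \InfX(\PHI)$; as $\tau$ is a common instance of $\sigma$ and $\sigma'$ we have $\tau \leq \mgi(\sigma,\sigma')$, and the closure property of $\InfX(\PHI)$ yields $\tau \in \InfX(\PHI)$.

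For (3), let $\mathit{X}^{\sigma''} \in \NV(\PHI)$ be such that $\sigma'$ and $\sigma''$ have a common instance, and put $\mu = \mgi(\sigma',\sigma'')$; we must show $\mu \in \InfX(\PHI)$. Now $\mu \leq \sigma' \leq \sigma$, so $\mu$ is a common instance of $\sigma$ and $\sigma''$, whence $\MONO{\sigma} \PHI$ gives $\mgi(\sigma,\sigma'') \in \InfX(\PHI)$; since $\mu$ is a common instance of $\sigma$ and $\sigma''$ it satisfies $\mu \leq \mgi(\sigma,\sigma'')$, so $\mu \in \InfX(\PHI)$ by closure. I do not expect a genuine obstacle here: the only points requiring care are the reconciliation of the monomorphic and polymorphic rules in (2) and the tacit $\alpha$-renaming convention that makes $\mgi$ well defined when the two argument types share type variables --- the rest is a diagram chase in the preorder of types.
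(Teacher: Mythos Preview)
Your proposal is correct and uses the same underlying ingredients as the paper: Lemma~\ref{lem-NV-transport}, Convention~\ref{con-Inf-transport}, the $\mgi$ property, and downward closure of $\InfX(\PHI)$. The only difference is organizational: you establish (2) first as a clean characterization and derive (1) as a corollary, whereas the paper proves (1) directly in the monomorphic calculus and then invokes (1) for the easy direction of (2); your spelled-out argument for (3) is what the paper compresses to ``immediate from the definition.''
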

\begin{proof}
(1):\enskip
Assume $\MONO{\sigma} \PHI$ and $\sigma \geq \tau \in \GType_\Sigma$.
To show $\MONO{\tau} \cmonox{\PHI}$, it suffices to
let $\mathit{X}^\tau \in \NV(\cmonox{\PHI})$ and prove $\tau \in \Inf(\cmonox{\PHI})$.
By Lemma~\ref{con-Inf-transport}, we obtain $\sigma' \geq \tau$ such that $\mathit{X}^{\sigma'} \in \NV(\PHI)$.
Since $\sigma$ and $\sigma'$ have a common instance, by the first assumption we have $\mgi(\sigma,\sigma') \in \InfX(\PHI)$,
and hence, since $\tau \leq \mgi(\sigma,\sigma')$, we have $\tau \in \InfX(\PHI)$---and since $\tau$ is
ground, we obtain $\tau \in \Inf(\cmonox{\PHI})$, as desired.

\betweenitems\noindent
(2):\enskip
Let (A) $\tau \in \GType_\Sigma$.  One implication follows immediately from (1).  For the second,
assume (B) $\MONO{\tau} \cmonox{\PHI}$.  To prove $\MONO{\tau} \PHI$, we fix
(C) $\mathit{X}^\sigma \in \NV(\PHI)$ such that
$\sigma$ and $\tau$ have a common instance (i.e.\ by (A), $\tau \leq \sigma$) and prove $\mgi(\sigma,\tau) \in \InfX(\PHI)$,
i.e.\ by (A), $\tau \in \InfX(\PHI)$, i.e.\ again by (A), $\tau \in \Inf(\cmonox{\PHI})$.
From (C), $\tau \leq \sigma$ and Lemma~\ref{con-Inf-transport}, we obtain $\mathit{X}^\tau \in \NV(\cmonox{\PHI})$; hence,
by (A), we have $\tau \in \Inf(\cmonox{\PHI})$, as desired.

\betweenitems\noindent
(3):\enskip
Immediate from the definition.
\QED
\end{proof}

Note that property (1) of the above lemma states that if a type is inferred monotonic in the polymorphic calculus,
all its ground instances can be inferred monotonic in the monomorphic calculus associated
to the completely monomorphised problem.  This allows us to prove soundness of the former from soundness of the latter:

\begin{thm}[Soundness of $\MONO{}$]\afterDot
\label{thm:soundness-of-calculus-polymorphic}%
Let\/ $\PHI$ be a polymorphic problem. If\/ $\MONO{\t} \PHI$ for all $\sigma \in \Type_\Sigma$,
then $\PHI$ is monotonic.
\end{thm}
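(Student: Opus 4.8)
The plan is to reduce the polymorphic statement to its monomorphic counterpart, Theorem~\ref{thm:soundness-of-calculus-monomorphic}, using complete monomorphisation as the bridge. So I would assume $\MONO{\sigma} \PHI$ for every $\sigma \in \Type_\Sigma$ and aim to show that $\PHI$ is monotonic. First, by Lemma~\ref{lem:pres-of-cmono}, $\PHI$ is monotonic exactly when the monomorphic problem $\cmonox{\PHI}$ is, so it suffices to establish the monotonicity of $\cmonox{\PHI}$.

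Next, since $\cmonox{\PHI}$ is a monomorphic problem whose set of sorts is $\GType_\Sigma$, I would invoke Lemma~\ref{lem:monotonicity-preservation-by-union}, which reduces the monotonicity of a monomorphic problem to the monotonicity of each of its sorts separately. The goal then becomes: for every ground type $\tau \in \GType_\Sigma$, the type $\tau$ is monotonic in $\cmonox{\PHI}$. Fixing such a $\tau$, I would note that $\GType_\Sigma \subseteq \Type_\Sigma$, so the hypothesis already gives $\MONO{\tau} \PHI$; since $\tau \leq \tau$, Lemma~\ref{lem-calc-transport}(1) transports this to $\MONO{\tau} \cmonox{\PHI}$ (equivalently, part~(2) of that lemma applies). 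Theorem~\ref{thm:soundness-of-calculus-monomorphic}, applied to the monomorphic problem $\cmonox{\PHI}$, then yields that $\tau$ is monotonic in $\cmonox{\PHI}$.

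Because $\tau$ was arbitrary, every sort of $\cmonox{\PHI}$ is monotonic in it, so $\cmonox{\PHI}$ is monotonic by Lemma~\ref{lem:monotonicity-preservation-by-union}, and hence $\PHI$ is monotonic by Lemma~\ref{lem:pres-of-cmono}, completing the argument.

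I do not expect a serious obstacle once the supporting machinery is in place: the real content sits in Lemma~\ref{lem-calc-transport}, which aligns the polymorphic calculus with the monomorphic one along complete monomorphisation, and in Lemma~\ref{lem:pres-of-cmono}, which shows that monomorphisation neither creates nor destroys monotonicity. The one point worth care is that polymorphic monotonicity of a single type $\sigma$ is strictly stronger than monomorphic monotonicity of any one ground instance of it (it demands infinite interpretations for all instances simultaneously), so the reduction must proceed through $\cmonox{\PHI}$ sort by sort rather than by naively matching the type lattices on the two sides; note also that the hypothesis ``$\MONO{\sigma} \PHI$ for all $\sigma$'' is more than is strictly used---only ground $\tau$ enter the argument---but this surplus is harmless.
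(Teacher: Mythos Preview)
Your proposal is correct and follows essentially the same route as the paper: transport the calculus judgements to $\cmonox{\PHI}$ via Lemma~\ref{lem-calc-transport}, apply the monomorphic soundness theorem together with Lemma~\ref{lem:monotonicity-preservation-by-union} to conclude that $\cmonox{\PHI}$ is monotonic, and then reflect monotonicity back to $\PHI$. The only cosmetic difference is that the paper cites Lemma~\ref{lem:correctness-of-cmono} for the final step whereas you cite Lemma~\ref{lem:pres-of-cmono}; your reference is the more apt one, since that lemma is precisely the monotonicity preservation\slash reflection statement needed.
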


\begin{proof}
Assume $\MONO{\sigma} \PHI$ for all $\sigma \in \Type_\Sigma$.
By Lemma~\ref{lem-calc-transport}, $\MONO{\tau} \cmonox{\PHI}$ for all ground instances $\tau$ of types in $\Type_\Sigma$,
i.e.\ for all $\tau$ in the signature of $\cmonox{\PHI}$. By Theorem \ref{thm:soundness-of-calculus-monomorphic}
and Lemma~\ref{lem:monotonicity-preservation-by-union}, this makes $\cmonox{\PHI}$ monotonic.
Then, by Lemma~\ref{lem:correctness-of-cmono}, $\PHI$ is also monotonic.
\QED
\end{proof}

\subsection{General Strategy}
\label{sec-genStrat}

We will define polymorphic versions of the featherweight and lightweight tags and guard encodings, altogether four encodings:\
\tags\query{}, \tags\qquery{}, \guards\query{}, and \guards\qquery{}.  If $\xx$ ranges over the polymorphic encodings,
$\xmono \in \{\mono\tags\query{}, \mono\tags\qquery{}, \mono\guards\query{}, \mono\guards\qquery{}\}$ denotes
its monomorphic counterpart.
We base the correctness of each $\xx$ on the correctness of $\xmono$, by proving the problems
$\encode{\PHI}{\cmono\tinycomma\xmono}$ and $\encode{\PHI}{\xx\tinycomma\cmono}$
equisatisfiable in a monotonicity-preserving way.
The following lemma implicitly assumes that the signatures
of $\encode{\PHI}{\cmono\tinycomma\xmono}$ and $\encode{\PHI}{\xx\tinycomma\cmono}$ coincide,
which is easy to check for each encoding $\xx$.

\begin{lem}[Correctness Conditions via Complete Monomorphisation]\afterDot
\label{lem-transport}%
Let\/ $\PHI$ be a polymorphic problem and let\/ $\xx \in \{\tags\query{}, \tags\qquery{}, \guards\query{}, \guards\qquery{}\}$.
The
problems\/ $\PHI$\/ and\/ $\encode{\PHI}{\xx\tinycomma\args
\tinycomma\erased}$ are
equisatisfiable provided that the following conditions hold\/{\rm:}
\begin{itemize}
\item[] \textsc{Sound}{\rm:}\enskip If\/ $\encode{\PHI}{\cmono\tinycomma\xmono}$ has a model, $\encode{\PHI}{\xx\tinycomma\cmono}$
has a model with the same interpretation of the type constructors.

\betweenitems

\item[] \textsc{Complete}{\rm:}\enskip If\/ $\encode{\PHI}{\xx\tinycomma\cmono}$ has a model, $\encode{\PHI}{\cmono\tinycomma\xmono}$
has a model with the same interpretation of the type constructors.
\end{itemize}
\end{lem}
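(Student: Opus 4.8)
The plan is to discharge the three hypotheses \textsc{Mono}, \textsc{Sound}, \textsc{Complete} of Lemma~\ref{lem:correctness-conditions-polymorphic} for the polymorphic encoding $\xx$, using complete monomorphisation as an intermediary and leaning on the already-established correctness of the monomorphic counterpart $\xmono$. Concretely, from the proofs of Theorems~\ref{thm:correctness-of-mono-tags-query} and~\ref{thm:correctness-of-mono-guards-query} via Lemma~\ref{lem:correctness-conditions-monomorphic} one reads off, for \emph{every} monomorphic problem $\Psi$: (i)~$\encode{\Psi}{\xmono}$ is monotonic; (ii)~if $\Psi$ is satisfiable, so is $\encode{\Psi}{\xmono}$ (interpret the auxiliary tag/guard symbols trivially); (iii)~if $\encode{\Psi}{\xmono}$ is satisfiable, so is $\Psi$. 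I would instantiate these with $\Psi := \cmonox{\PHI}$ and combine them with the correctness of \cmono{} (Lemma~\ref{lem:correctness-of-cmono}), recalling also the implicit assumption that the signatures of $\encode{\PHI}{\cmono\tinycomma\xmono}$ and $\encode{\PHI}{\xx\tinycomma\cmono}$ coincide, so that talk of ``the same interpretation of the type constructors'' is meaningful.

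First I would handle \textsc{Sound} and \textsc{Complete}, i.e.\ that $\PHI$ and $\PHIii{\xx}$ are equisatisfiable, by a chain of equivalences: $\PHI$ is satisfiable iff $\cmonox{\PHI}$ is (Lemma~\ref{lem:correctness-of-cmono}), iff $\encode{\cmonox{\PHI}}{\xmono}$ is (by (ii) and (iii)), iff $\encode{\PHI}{\xx\tinycomma\cmono}$ is (by the \textsc{Sound} and \textsc{Complete} hypotheses of the present lemma, discarding for now the extra information about the type-constructor interpretations), iff $\PHIii{\xx}$ is (Lemma~\ref{lem:correctness-of-cmono} once more, since $\encode{\PHI}{\xx\tinycomma\cmono} = \cmonox{\PHIii{\xx}}$).

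The more delicate hypothesis is \textsc{Mono}: that $\PHIii{\xx}$ is monotonic. By Lemma~\ref{lem:pres-of-cmono} applied to the polymorphic problem $\PHIii{\xx}$, this is equivalent to monotonicity of $\cmonox{\PHIii{\xx}} = \encode{\PHI}{\xx\tinycomma\cmono}$. Here the monotone set is the whole type universe, so monotonicity of a problem collapses to ``unsatisfiable, or else possessing a model in which every type is interpreted by an infinite set''. Thus assume $\encode{\PHI}{\xx\tinycomma\cmono}$ is satisfiable; then by the \textsc{Complete} hypothesis $\encode{\cmonox{\PHI}}{\xmono}$ is satisfiable, by (i) it is monotonic, hence it has a model with all domains infinite; feeding that model into the \textsc{Sound} hypothesis yields a model of $\encode{\PHI}{\xx\tinycomma\cmono}$ with the same interpretation of the type constructors, hence again with all domains infinite. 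This establishes monotonicity of $\encode{\PHI}{\xx\tinycomma\cmono}$, and therefore of $\PHIii{\xx}$. With all three hypotheses in hand, Lemma~\ref{lem:correctness-conditions-polymorphic} delivers the equisatisfiability of $\PHI$ and $\encode{\PHI}{\xx\tinycomma\args\tinycomma\erased}$.

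The step I expect to be the main obstacle is exactly \textsc{Mono}: the hypotheses \textsc{Sound}/\textsc{Complete} are purely existential (``has a model''), whereas monotonicity is a ``for all models'' property, so one must observe that when the monotone set is all of $\Type$ the two notions agree (a satisfiable problem being monotonic precisely when it admits \emph{some} all-infinite-domain model), and that the clause ``with the same interpretation of the type constructors'' is exactly what lets an all-infinite-domain model travel from $\encode{\cmonox{\PHI}}{\xmono}$ to $\encode{\PHI}{\xx\tinycomma\cmono}$. The remaining bookkeeping — verifying for each of the four encodings $\xx$ that $\xx$ introduces no new type constructors, so the two signatures really do coincide — is routine but should be checked.
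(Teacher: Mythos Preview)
Your proposal is correct and follows essentially the same route as the paper: reduce to Lemma~\ref{lem:correctness-conditions-polymorphic}, obtain \textsc{Sound}/\textsc{Complete} for $\PHIii{\xx}$ by chaining the correctness of $\cmono$, the correctness of $\xmono$, and the two hypotheses of the present lemma, and obtain \textsc{Mono} by transferring monotonicity from $\encode{\PHI}{\cmono\tinycomma\xmono}$ to $\encode{\PHI}{\xx\tinycomma\cmono}$ via the hypotheses and then to $\PHIii{\xx}$ via Lemma~\ref{lem:pres-of-cmono}. Your explicit observation that, when the monotone set is all of $\Type$, monotonicity collapses to ``unsatisfiable or has an all-infinite model''---and that this is precisely why the same-type-constructor-interpretation clause suffices to carry monotonicity across---spells out the one step the paper leaves implicit (``hence, by the assumptions, $\encode{\PHI}{\xx\tinycomma\cmono}$ is also monotonic'').
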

\begin{proof}
By the correctness of $\cmono$ (Lemma \ref{lem:correctness-of-cmono}) and
the corresponding correctness theorems for $\xmono$
(more precisely, from the \textsc{Sound} and \textsc{Complete} statements
from the proofs of Theorems \ref{thm:correctness-of-mono-tags-query} and \ref{thm:correctness-of-mono-guards-query}),
we have that $\PHI$ and $\encode{\PHI}{\cmono\tinycomma\xmono}$ are equisatisfiable.
Together with the assumptions, this implies that (A) $\PHI$ and
$\encode{\PHI}{\xx\tinycomma\cmono}$ are equisatisfiable.
Moreover, from Lemma \ref{lem:pres-of-cmono} and monotonicity of $\xmono$
(more precisely, from the \textsc{Mono} statements from the proofs of
Theorems \ref{thm:correctness-of-mono-tags-query} and \ref{thm:correctness-of-mono-guards-query}),
we know that $\encode{\PHI}{\cmono\tinycomma\xmono}$ is monotonic.
Hence, by the assumptions, (B) $\encode{\PHI}{\xx\tinycomma\cmono}$ is also monotonic.
The desired fact follows from
(A), (B), and Lemma \ref{lem:correctness-conditions-polymorphic}
(with $\xx$ instantiated with $\xx\tinycomma\cmono$).
\QED
\end{proof}

To apply the above lemma, we need to provide an $\xx$ such that the
equation $\encode{\PHI}{\xx\tinycomma\cmono} =
\encode{\PHI}{\cmono\tinycomma\xmono}$ almost holds, in the sense that the two problems
are equisatisfiable without changing the type constructor interpretation, but
possibly changing some of the function or predicate symbol interpretation.

Given $\xmono$, we will come up with an encoding $\xx$ in a systematic way. But
first we need to define some relevant sets of types for a polymorphic
$\Sigma$-problem $\PHI$:

\begin{defi}\label{def-S-T}
Let
\begin{itemize}
\item $\TypesOf_\PHI = \{\sigma \in \Type_\Sigma \mid \mbox{$\sigma$ is the type of a subterm occurring in $\PHI$}\}$;
\item $S_\PHI = \{\tau \in \GType_\Sigma \mid \NONMONO{\tau} \encode{\PHI}{\cmono}
\mbox{ and there exists } \sigma \in \TypesOf_\PHI \mbox{ such that } \tau \leq \sigma\}$;
\item $T_\PHI = \{\tau \in \GType_\Sigma \mid
\mbox{there exists } \sigma \in \TypesOf_\PHI \mbox{ such that } \NONMONO{\sigma} \PHI \mbox{ and } \tau \leq \sigma\}$.
\end{itemize}
\end{defi}

Here is how we proceed with the definition of $\xx$:
\begin{enumerate}
\item[(1)] We emulate the definition of $\xmono$, using the polymorphic monotonicity calculus instead of the monomorphic one.
\betweenitems
\item[(2)] Step (1) will cause $\encode{\PHI}{\xx\tinycomma\cmono}$ to introduce more protectors than
$\encode{\PHI}{\cmono\tinycomma\xmono}$. This is because the former protects all ground types $\tau$ that are instances of
types $\sigma \in \TypesOf_\PHI$ such that $\NONMONO{\sigma} \PHI$ (namely, $T_\PHI$),
whereas the latter protects all ground types $\tau$
that are instances of types in $\TypesOf_\PHI$
and satisfy $\NONMONO{\tau} \encode{\PHI}{\cmono}$ (namely, $S_\PHI$).
We have $S_\PHI \subseteq T_\PHI$ but generally not vice versa.  To repair this mismatch, we add axioms that semantically
eliminate the protectors
for the types in $T_\PHI$ but not in $S_\PHI$.  To achieve this, we must characterise $T_\PHI - S_\PHI$ (which is an infinite set
even for finite problems $\PHI$) as the set of ground instances of a set $U_\PHI$ of types so that $U_\PHI$ is finite whenever
$\PHI$ is.
\end{enumerate}

We define $U_\PHI$ next:

\begin{defi}[Cap]\afterDot
Given a set of types $S$, a \emph{cap} for it is a set $S' \subseteq S$ such that all types in $S$ are instances of types in $S'$.
A cap is
\emph{minimal} if it contains no two distinct types $\sigma$ and $\sigma'$ such that $\sigma \leq \sigma'$.
For each $\sigma$, let $U_\sigma$ be a minimal cap of the set
$$\begin{array}{l@{}l}
  U'_\sigma =  \{\sigma' \leq \sigma \mid {} & \sigma'\in \InfX(\PHI) \mbox{ or there exists no $\mathit{X}^{\sigma''} \in \NV(\PHI)$} \\
               & \mbox{such that $\sigma'$ and $\sigma''$ have a common instance}\}
\end{array}
$$
We let $U_\PHI$ be an arbitrary cap of $\bigcup \{U_\sigma \mid \sigma \in \TypesOf_\PHI \mbox{ and }\NONMONO{\sigma} \PHI \}$.
\end{defi}

The set $U_\PHI$ is both a subset of the monotonic types 
and a \relax{precise} characterisation of the sets of types whose ground instances give
our difference of interest, $T_\PHI - S_\PHI$.

\begin{lem}\label{lem-method}
The following properties hold\/{\rm:}
\begin{enumerate}
\item[(1)] $S_\PHI \subseteq T_\PHI${\rm;}
\item[(2)] If $\sigma' \in U'_\sigma$, then $\MONO{\sigma'} \PHI${\rm;}
\item[(3)] If $\sigma \in U_\PHI$, then $\MONO{\sigma} \PHI${\rm;}
\item[(4)] If $\tau \in T_\PHI - S_\PHI$, then $\MONO{\tau} \cmonox{\PHI}${\rm;}
\item[(5)] $T_\PHI - S_\PHI = \{\tau \in \GType_\Sigma \mid \mbox{there exists } \sigma \in U_\PHI \mbox{ such that } \tau \leq \sigma\}$.
\end{enumerate}
\end{lem}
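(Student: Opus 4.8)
The plan is to establish the five items in the order stated, each building on its predecessors; only (5) requires real work, and it reduces to a ground-type simplification of the polymorphic monotonicity calculus together with some bookkeeping about caps.

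For (1), suppose $\tau\in S_\PHI$ and pick the witness $\sigma\in\TypesOf_\PHI$ with $\tau\leq\sigma$. I would argue $\NONMONO{\sigma}\PHI$ by contradiction: if $\MONO{\sigma}\PHI$ held, then Lemma~\ref{lem-calc-transport}(1), applied to the ground type $\tau\leq\sigma$, would yield $\MONO{\tau}\cmonox{\PHI}$, contradicting $\NONMONO{\tau}\encode{\PHI}{\cmono}$; hence $\sigma$ also witnesses $\tau\in T_\PHI$. For (2), I would split on the definition of $U'_\sigma$: if $\sigma'\in\InfX(\PHI)$ then for any $\mathit{X}^{\sigma''}\in\NV(\PHI)$ sharing a common instance with $\sigma'$ we have $\mgi(\sigma',\sigma'')\leq\sigma'\in\InfX(\PHI)$, and since $\InfX(\PHI)$ is downward closed under $\leq$ this forces $\mgi(\sigma',\sigma'')\in\InfX(\PHI)$, so the premise of the rule of Definition~\ref{def:monotonicity-calculus-polymorphic} holds; if instead $\sigma'$ shares no common instance with any naked-variable type, that premise is vacuously true. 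Either way $\MONO{\sigma'}\PHI$. Part (3) is then immediate: $U_\PHI$ is a cap, hence a subset, of $\bigcup\{U_\sigma\mid\sigma\in\TypesOf_\PHI,\ \NONMONO{\sigma}\PHI\}$, and each $U_\sigma\subseteq U'_\sigma$, so any $\sigma\in U_\PHI$ lies in some $U'_{\sigma_1}$ and (2) applies. For (4), $\tau\in T_\PHI$ supplies a $\sigma\in\TypesOf_\PHI$ with $\tau\leq\sigma$, so the ``$\exists\,\sigma\in\TypesOf_\PHI$ with $\tau\leq\sigma$'' clause in the definition of $S_\PHI$ does hold; since $\tau\notin S_\PHI$, the other clause must fail, i.e.\ $\MONO{\tau}\cmonox{\PHI}$.

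The bulk of the argument is (5). Write $V=\{\tau\in\GType_\Sigma\mid\exists\,\sigma\in U_\PHI.\ \tau\leq\sigma\}$ for the right-hand side. For $V\subseteq T_\PHI-S_\PHI$: given $\tau\leq\sigma$ with $\sigma\in U_\PHI$, part (3) gives $\MONO{\sigma}\PHI$, so Lemma~\ref{lem-calc-transport}(1) gives $\MONO{\tau}\cmonox{\PHI}$ and hence $\tau\notin S_\PHI$; and since $\sigma\in U_\PHI\subseteq\bigcup\{U_{\sigma_1}\mid\sigma_1\in\TypesOf_\PHI,\ \NONMONO{\sigma_1}\PHI\}$ and $U_{\sigma_1}\subseteq U'_{\sigma_1}$, we get $\sigma\leq\sigma_1$ for some such $\sigma_1$, whence $\tau\leq\sigma_1$ and $\tau\in T_\PHI$. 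For the reverse inclusion, given $\tau\in T_\PHI-S_\PHI$ I would pick $\sigma_0\in\TypesOf_\PHI$ with $\NONMONO{\sigma_0}\PHI$ and $\tau\leq\sigma_0$, use (4) and Lemma~\ref{lem-calc-transport}(2) to get $\MONO{\tau}\PHI$, and then show $\tau\in U'_{\sigma_0}$. Since $\tau$ is ground, any type sharing a common instance with $\tau$ actually has $\tau$ itself as that instance, so $\mgi(\tau,\sigma'')=\tau$ for every such $\sigma''$; thus $\MONO{\tau}\PHI$ forces $\tau\in\InfX(\PHI)$ whenever $\tau$ shares a common instance with a naked-variable type, and $\tau$ meets the defining condition of $U'_{\sigma_0}$ either way. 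Finally I would chase caps: $\tau$ is an instance of an element of $U_{\sigma_0}$, which lies in $\bigcup\{U_\sigma\mid\sigma\in\TypesOf_\PHI,\ \NONMONO{\sigma}\PHI\}$, hence is an instance of some $\sigma^*\in U_\PHI$, so by transitivity of $\leq$ we get $\tau\leq\sigma^*$ and $\tau\in V$.

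The hard part is the ground-type observation inside (5): seeing that, for a \emph{ground} $\tau$, $\MONO{\tau}\PHI$ is equivalent to ``$\tau$ is an instance of some type in $\Inf(\PHI)$, or $\tau$ shares no common instance with any type of a naked variable'', which is exactly membership of $\tau$ in $U'_{\sigma_0}$. Everything else is routine unwinding of the definitions of $S_\PHI$, $T_\PHI$, cap, and $U_\PHI$, together with Lemma~\ref{lem-calc-transport}.
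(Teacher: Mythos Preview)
Your proof is correct and follows essentially the same route as the paper's own argument: the same definitions are unwound, the same parts of Lemma~\ref{lem-calc-transport} are invoked at the same places, and the key ground-type observation in (5) (that for ground $\tau$, $\MONO{\tau}\PHI$ collapses to the disjunction ``$\tau\in\InfX(\PHI)$ or $\tau$ is an instance of no naked-variable type'') is exactly what the paper uses. The only cosmetic difference is in (1), where you argue by contradiction via Lemma~\ref{lem-calc-transport}(1) while the paper goes forward via parts (2) and (3) of that lemma; both are immediate.
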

\begin{proof}
(1):\enskip Assume $\tau \in S_\PHI$ and let $\sigma \geq \tau$ as in the definition of $S_\PHI$.  By Lemma \ref{lem-calc-transport}(2),
we have $\NONMONO{\tau} \PHI$, and hence by Lemma \ref{lem-calc-transport}(3) we have $\NONMONO{\sigma} \PHI$, ensuring that
$\tau \in T_\PHI$.

\betweenitems\noindent
(2):\enskip It is clear that the condition defining $U'_\sigma$ is a strengthening of that defining $\MONO{\sigma} \PHI$.

\betweenitems\noindent
(3):\enskip Immediate from (2).

\betweenitems\noindent
(4):\enskip Immediate from the definitions of $S_\PHI$ and $T_\PHI$.

\betweenitems\noindent
(5):\enskip Let $\tau \in T_\PHI$ such that (A) $\tau \not\in S_\PHI$.
There exists $\sigma$ such that $\tau \leq \sigma \in \TypesOf_\PHI$ and $\NONMONO{\sigma} \PHI$.
From (4), we have $\MONO{\tau} \cmonox{\PHI}$; hence, by Lemma \ref{lem-calc-transport}(2),  $\MONO{\tau} \PHI$.
Since $\tau$ is ground,
either $\tau \in \InfX(\PHI)$ or there exists no $\mathit{X}^{\sigma''} \in \NV(\PHI)$ such that $\tau \leq \sigma''$,
implying $\tau \in U'_\sigma$.
It follows that $\tau$ is an instance of an element of $U_\sigma$, hence of an element of $U_\PHI$, as desired.
Now, assume $\tau$ is ground such that $\tau \leq \sigma' \in U_\PHI$.
We obtain $\sigma \in \TypesOf_\PHI$ such that $\NONMONO{\sigma} \PHI$ and $\sigma' \in U_\sigma$.
In particular, we have $\sigma' \leq \sigma$, hence $\tau \in T_\PHI$.
Moreover, by $\sigma' \in U_\sigma$ and (2), we have $\MONO{\sigma'} \PHI$; hence by Lemma \ref{lem-calc-transport}(1),
we have $\MONO{\tau} \cmonox{\PHI}$, implying $\tau \not\in S_\PHI$, as desired.
\QED
\end{proof}

The following results will hold for any choice of set $V_\PHI$
between $U_\PHI$ and $\{\sigma \in \Type_\Sigma \mid \MONO{\sigma} \PHI\}$. The smaller the chosen set, the lighter the
encoding.

\begin{conv}\label{con-V}
We fix $V_\PHI$ such that $U_\PHI \subseteq V_\phi \subseteq \{\sigma \in \Type_\Sigma \mid \MONO{\sigma} \PHI\}$.
\end{conv}

\subsection{Monotonicity-Based Type Tags}
\label{ssec:monotonicity-based-type-tags-polymorphic}

We are now equipped to present the definitions of the polymorphic
monotonicity-based encodings and prove their correctness.
The polymorphic \tags\query{} encoding can be seen as a hybrid between
traditional tags (\tags{}) and monomorphic lightweight tags (\mono\tags\query):\
as in \tags{}, tags take the form of a function $\ti\LAN\t\RAN(t)$;
as in \mono\tags\query{}, tags are omitted for types that are inferred
monotonic.

The main novelty concerns the typing axioms. The
\mono\tags\query{} encoding omits all typing axioms for monotonic types. In the
polymorphic case, the monotonic type $\t$ might be an instance of a more general,
potentially nonmonotonic type for which tags are generated. For example, if $\a$
is tagged (because it is possibly nonmonotonic) but its instance
$\mathit{list}(\a)$ is not (because it is infinite and hence monotonic),
there will be mismatches between tagged and untagged terms.
Our solution is to add the typing axiom
$\ti\LAN\mathit{list}(\a)\RAN(\mathit{Xs}) \eq \mathit{Xs}$, which allows the
prover to add or remove a tag for the infinite type $\mathit{list}(\a)$. Such
an axiom is sound for any monotonic~type.

\begin{defi}[Lightweight Tags \tags\query]\afterDot \label{def-light-tags-poly}
The encoding \tags\query{} translates polymorphic problems over
$\SIGMA = (\Type, \FF, \PP)$ to polymorphic problems over $(\Type, \FF \mathrel{\uplus} \{\ti : \forall \alpha.\; \alpha \ra \alpha\}, \PP)$.
Its term and formula translations are defined as follows:
\begin{align*}
\tagsqx{f\LAN\t\RAN(\bar t\,)\typ\t} & \,=\, \CT{}{f\LAN\t\RAN(\tagsqx{\bar t\,})} &
\tagsqx{\mathit{X}\typ\t} & \,=\, \CT{}{\mathit{X}} &
\text{with}\;\,\CT{}{t\typ\t} & \,=\, \begin{cases}
  t & \!\!\text{if $\MONO\t \PHI$} \\
  \ti\LAN\t\RAN(t) & \!\!\text{otherwise}
\end{cases}
\end{align*}
The encoding adds the following typing axioms: 
\[\!\begin{aligned}[t]
& \forall\vthinspace\TVars(\sigma).\allowbreak\; \forall \mathit{X} \mathbin: \sigma
.\;\, \ti\LAN \sigma \RAN(\mathit{X}^{\sigma}) \eq \mathit{X}^{\sigma}
  && \text{for~} \sigma \in V_\PHI
\end{aligned}
\]
The \emph{polymorphic lightweight type tags} encoding \tags\query{}
is the composition $\encode{\phantom{i}}{\tags\query\tinycomma\args\tinycomma\erased}$.
It translates a polymorphic problem
over $\SIGMA$
into an untyped problem over $\SIGMA' = (\mathcal{F}' \mathrel{\uplus} \{\ti^2\},\Pp)$, where $\mathcal{F}', \Pp$ are as for
\args.
%
\end{defi}

\begin{exa}%
\label{ex:algebraic-lists-tags-query}%
The \tags\query{} encoding of of Example~\ref{ex:algebraic-lists} follows:
\begin{quotex}
$\forall A, \mathit{Xs}.\;\, \ti(\const{list}(A), \mathit{Xs}) \eq \mathit{Xs}$ \\[\betweenaxs]
$\forall A, \mathit{X}\!, \mathit{Xs}.\;\,
  \const{nil}(A) \not\eq \const{cons}(A, \ti(A, \mathit{X}), \mathit{Xs})$ \\
$\forall A, \mathit{Xs}.\;\,
  \mathit{Xs} \eq \const{nil}(A) \mathrel{\lor}
  (\exists \mathit{Y}\!, \mathit{Ys}.\;\,
  \mathit{Xs} \eq \const{cons}(A, \ti(A, \mathit{Y}), \mathit{Ys})
  )$ \\
$\forall A, \mathit{X}\!, \mathit{Xs}.\;\,
  \ti(A, \const{hd}(A, \const{cons}(A, \ti(A, \mathit{X}), \mathit{Xs}))) \eq \ti(A, \mathit{X})
  \mathrel\land {}$ \\
$\phantom{\forall A, \mathit{X}\!, \mathit{Xs}.\;\,}
  \const{tl}(A, \const{cons}(A, \ti(A, \mathit{X}), \mathit{Xs})) \eq \mathit{Xs}$ \\
$\exists \mathit{X}\!, \mathit{Y}\!, \mathit{Xs}, \mathit{Ys}.\;\,
  \const{cons}(\const{\www}, \ti(\const{\www}, \mathit{X}), \mathit{Xs}) \eq \const{cons}(\const{\www}, \ti(\const{\www}, \mathit{Y}),
\mathit{Ys}) \mathrel\land {}$ \\
$\phantom{\exists \mathit{X}\!, \mathit{Y}\!, \mathit{Xs}, \mathit{Ys}.\;\,} (\ti(\const{\www}, \mathit{X}) \not\eq \ti(\const{\www}, \mathit{Y}) \mathrel\lor \mathit{Xs} \not\eq \mathit{Ys})$
\end{quotex}
The typing axiom allows any term to be typed as $\mathit{list}(\a)$,
which is sound because $\mathit{list}(\a)$ is infinite.
It would have been equally correct to provide separate axioms for \const{nil},
\const{cons}, and \const{tl}. Either way, the axioms are needed to remove
the $\ti(A, \,\mathit{X})$ tags in case the proof requires reasoning about
$\mathit{list}(\mathit{list}(\a))$.
\end{exa}

The lighter encoding \tags\qquery{} protects only naked variables and introduces
equations of the form
$\ti\LAN\t\RAN(\ff\LAN\bar\a\RAN(\mathit{\bar X})) \eq \ff\LAN\bar\a\RAN(\mathit{\bar X})$
to add or remove tags around each function symbol $\ff$ of a possibly
nonmonotonic type $\t$, and similarly for existential variables.

\begin{defi}[Featherweight Tags \tags\qquery]\afterDot \label{def-qquery-poy}
The encoding \tags\qquery{} translates polymorphic problems over
$\SIGMA = (\Type, \FF, \PP)$ to polymorphic problems over $(\Type, \FF \mathrel{\uplus} \{\ti : \forall \alpha.\; \alpha \ra \alpha\}, \PP)$.
Its term and formula translations are defined as follows:
\begin{align*}
\tagsqqx{t_{1\!} \eq t_2} & \,=\, \CT{}{\tagsqqx{t_{1\!}}{}} \eq \CT{}{\tagsqqx{t_2}{}} \\
\tagsqqx{\exists \mathit{X} \mathbin: \t.\;\, \phi}{} & \,=\,
  \exists \mathit{X} \mathbin: \t.\;
  \begin{cases}
    \tagsqqx{\phi}{} & \!\!\text{if~$\MONO\t \PHI$} \\
    \ti\LAN\t\RAN(\mathit{X}) \eq \mathit{X} \mathrel\land \tagsqqx{\phi}{} & \!\!\text{otherwise} \\
  \end{cases}
\end{align*}
\\[-.5\baselineskip] with \\[-.5\baselineskip] 
\[\CT{}{t\typ\t} \,=\, \begin{cases}
  t & \!\!\text{if $\MONO\t \PHI$ or $t \notin \Vtun$} \\
  \ti\LAN\t\RAN(t) & \!\!\text{otherwise}
\end{cases}\]
The encoding adds the typing axioms of \tags\query{} (from Definition \ref{def-light-tags-poly}) and the following:
\[\!\begin{aligned}[t]
& \textstyle \forall\bar\a.\; \forall \mathit{\bar X} \mathbin: \bar\t.\;\, \ti\LAN\t\RAN(f\LAN\bar\a\RAN(\mathit{\bar X}^{\ov{\sigma}}))
\eq f\LAN\bar\a\RAN(\mathit{\bar X}^{\ov{\sigma}})
  &\enskip& \text{for~}\ff : \forall\bar\a.\; \bar\t \to \t \in \mathcal{F}
  \text{~such that~} \NONMONO{\sigma} \PHI
\\[\betwtypax]
& \forall\vthinspace\TVars(\t).\allowbreak\; \exists \mathit{X} \mathbin: \nobreak \t.\;\, \ti\LAN\t\RAN(\mathit{X}^\sigma) \eq \mathit{X}^\sigma
  && \!\begin{aligned}[t] 
        & \text{for~} \t \in \TypesOf_\PHI \text{~such that } \NONMONO{\t} \PHI \text{~ and $\t$ is not } \\[-\jot]
        & \text{an instance of the result type of some~} f \in \mathcal F 
        \end{aligned}
\end{aligned}\]
The \emph{polymorphic featherweight type tags} encoding \tags\qquery{} is the composition
$\encode{\phantom{i}}{\tags\qquery\tinycomma\args
\tinycomma\erased}$.
The target signature of \tags\qquery{} is the same as that of \tags\query{}.
\end{defi}

\begin{exa}%
\label{ex:algebraic-lists-tags-qquery}%
The \tags\qquery{} encoding of Example~\ref{ex:algebraic-lists}
requires fewer tags than \mono\tags\query{}, at the cost of two additional typing axioms
and two typing equations for the existential variables of type~$\www$:
\begin{quotex}
$\forall A, \mathit{Xs}.\;\, \ti(A, \const{hd}(A, \mathit{Xs})) \eq \const{hd}(A, \mathit{Xs})$ \\
$\forall A, \mathit{Xs}.\;\, \ti(\const{list}(A), \mathit{Xs}) \eq \mathit{Xs}$ \\
$\forall A.\; \exists \mathit{X}.\;\, \ti(A, \mathit{X}) \eq \mathit{X}$ \\[\betweenaxs]
$\forall A, \mathit{X}\!, \mathit{Xs}.\;\,
  \const{nil}(A) \not\eq \const{cons}(A, \mathit{X}\!, \mathit{Xs})$ \\
$\forall A, \mathit{Xs}.\;\,
  \mathit{Xs} \eq \const{nil}(A) \mathrel{\lor}
  (\exists \mathit{Y}\!, \mathit{Ys}.\;\,
  \ti(A, \mathit{Y}) \eq \mathit{Y} \mathrel\land \mathit{Xs} \eq \const{cons}(A, \mathit{Y}\!, \mathit{Ys})
  )$ \\
$\forall A, \mathit{X}\!, \mathit{Xs}.\;\,
  \const{hd}(A, \const{cons}(A, \mathit{X}\!, \mathit{Xs})) \eq \ti(A, \mathit{X})
  \mathrel\land \const{tl}(A, \const{cons}(A, \mathit{X}\!, \mathit{Xs})) \eq \mathit{Xs}$ \\
$\exists \mathit{X}\!, \mathit{Y}\!, \mathit{Xs}, \mathit{Ys}.\;\,
  \ti(\const{\www}, \mathit{X}) \eq \mathit{X} \mathrel\land \ti(\const{\www}, \mathit{Y}) \eq \mathit{Y} \mathrel\land
  \const{cons}(\const{\www}, \mathit{X}\!, \mathit{Xs}) \eq \const{cons}(\const{\www}, \mathit{Y}\!,
\mathit{Ys}) \mathrel\land {}$ \\
$\phantom{\exists \mathit{X}\!, \mathit{Y}\!, \mathit{Xs}, \mathit{Ys}.\;\,}
  (\mathit{X} \not\eq \mathit{Y} \mathrel\lor \mathit{Xs} \not\eq \mathit{Ys})$
\end{quotex}
\end{exa}

\begin{thm}[Correctness of \tags\query, \tags\qquery]\afterDot
\label{thm:correctness-of-tags-query}%
The polymorphic type tags encodings\/ \hbox{\rm\tags\query} and\/ \hbox{\rm\tags\qquery}
are correct.
\end{thm}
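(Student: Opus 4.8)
The plan is to follow the template set up in \longsect~\ref{sec-genStrat} and reduce, for $\xx \in \{\tags\query, \tags\qquery\}$, the correctness of the polymorphic encoding to that of its monomorphic counterpart $\xmono$ by verifying the hypotheses \textsc{Sound} and \textsc{Complete} of Lemma~\ref{lem-transport}. First I would record that the signatures of $\encode{\PHI}{\cmono\tinycomma\xmono}$ and $\encode{\PHI}{\xx\tinycomma\cmono}$ coincide: both are monomorphic problems over $\GType_\Sigma$ carrying the monomorphised symbols $s_{\ov{\tau}}$ introduced by $\cmono$, and the single polymorphic tag $\ti : \forall\alpha.\,\alpha\to\alpha$ added by $\xx$ monomorphises to precisely the family $\{\ti_\tau : \tau \to \tau \mid \tau\in\GType_\Sigma\}$ that $\xmono$ adds. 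Everything then hinges on comparing the two problems, which differ only in their tag placement and in a handful of extra typing axioms.

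The key structural observation is this: by Lemma~\ref{lem-calc-transport}, a subterm of polymorphic type $\sigma$ is tagged by $\xx$ exactly when $\NONMONO{\sigma} \PHI$, whereas in the monomorphised problem its instance of ground type $\tau = \sigma\vthinspace\rho$ is tagged by $\xmono$ exactly when $\NONMONO{\tau} \encode{\PHI}{\cmono}$; and $\MONO{\sigma} \PHI$ implies $\MONO{\tau} \encode{\PHI}{\cmono}$ (Lemma~\ref{lem-calc-transport}(3) then~(2)). Hence $\encode{\PHI}{\xx\tinycomma\cmono}$ arises from $\encode{\PHI}{\cmono\tinycomma\xmono}$ by inserting extra tags $\ti_\tau$ precisely at ground types of subterms with $\NONMONO{\sigma} \PHI$ yet $\MONO{\tau} \encode{\PHI}{\cmono}$---that is, for $\tau\in T_\PHI - S_\PHI$---and by adding the extra typing axioms that $\xx$ contributes (the identity equations $\ti\LAN\sigma\RAN(\mathit{X}^\sigma)\eq\mathit{X}^\sigma$ for $\sigma\in V_\PHI$, and for \tags\qquery{} also the function-typing equations and existential-witness axioms, which by Lemma~\ref{lem-calc-transport} form a superset of the corresponding \mono\tags\qquery{} axioms). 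By Lemma~\ref{lem-method}(5), $T_\PHI - S_\PHI$ is exactly the set of ground instances of $U_\PHI \subseteq V_\PHI$, so after $\cmono$ the extra typing axioms say precisely that each extra tag $\ti_\tau$ acts as the identity.

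Given this, \textsc{Complete} is immediate: any model of $\encode{\PHI}{\xx\tinycomma\cmono}$ satisfies the extra typing axioms, so $\ti_\tau$ is the identity for every $\tau\in T_\PHI - S_\PHI$; a routine induction on terms and formulae then shows that each $\encode{\phantom{i}}{\xx\tinycomma\cmono}$-translation is interpreted exactly as its $\encode{\phantom{i}}{\cmono\tinycomma\xmono}$-translation, and since the axioms of $\encode{\PHI}{\cmono\tinycomma\xmono}$ form (up to this collapse of vacuous tags) a subset of those of $\encode{\PHI}{\xx\tinycomma\cmono}$, the very same structure---type constructors untouched---is a model of $\encode{\PHI}{\cmono\tinycomma\xmono}$. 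For \textsc{Sound}, starting from a model $\NN$ of $\encode{\PHI}{\cmono\tinycomma\xmono}$, I would keep all type constructors and monomorphised symbols and redefine every tag function $\ti_\tau$ that $\xmono$ leaves unused---in particular every $\ti_\tau$ whose $\tau$ is a ground instance of $V_\PHI$, such $\tau$ being untagged by $\xmono$ again by $\MONO{\sigma} \PHI \Rightarrow \MONO{\sigma\vthinspace\rho} \encode{\PHI}{\cmono}$---to be the identity on $\SEM{\tau}^\NN$. This satisfies all the extra typing axioms of $\xx$, and the same induction shows the $\encode{\phantom{i}}{\xx\tinycomma\cmono}$-translations of $\PHI$'s formulae hold in the modified structure. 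Lemma~\ref{lem-transport} then gives equisatisfiability of $\PHI$ and $\encode{\PHI}{\xx\tinycomma\args\tinycomma\erased}$, i.e.\ correctness of \tags\query{} and \tags\qquery.

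The main obstacle I anticipate is the bookkeeping behind the structural observation: one must check, separately for \tags\query{} and \tags\qquery, that the syntactic gap between $\encode{\phantom{i}}{\xx\tinycomma\cmono}$ and $\encode{\phantom{i}}{\cmono\tinycomma\xmono}$ is \emph{exactly} the insertion of tags on $T_\PHI - S_\PHI$ together with the semantically trivial extra typing axioms---paying particular attention to the tagging of existential variables and to matching the \tags\qquery{} function-typing and existential-witness axioms against their \mono\tags\qquery{} counterparts---and that Lemma~\ref{lem-method}(5) legitimately pairs the extra axioms (ranging over $V_\PHI\supseteq U_\PHI$) with exactly those extra tags. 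Once that correspondence is nailed down, the two model transports are straightforward inductions, essentially the monomorphic completeness arguments in the proofs of Theorems~\ref{thm:correctness-of-mono-tags-query} and~\ref{thm:correctness-of-mono-guards-query}.
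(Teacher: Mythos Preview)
Your proposal is correct and follows essentially the same route as the paper: verify the hypotheses of Lemma~\ref{lem-transport} by establishing that $\encode{\PHI}{\xx\tinycomma\cmono}$ differs from $\encode{\PHI}{\cmono\tinycomma\xmono}$ only by extra tags on types in $T_\PHI - S_\PHI$ and extra identity axioms over $\GInst(V_\PHI)\supseteq T_\PHI - S_\PHI$ (via Lemma~\ref{lem-method}(5)), then for \textsc{Sound} reinterpret the unused $\ti_\tau$ as identities and for \textsc{Complete} observe that the identity axioms collapse the extra tags. The paper's argument is organised around the same structural equation you call the ``key structural observation'' and the same three supporting facts about $V_\PHI$.
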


\begin{proof}
First we discuss the case of \tags\query{}.
The following property can be routinely checked:
\begin{itemize}[label=(A)]
\item[(A)] $\encode{\PHI}{\tags\query\tinycomma\cmono} =
  \{\phi^\#  \mid \phi \in \encode{\PHI}{\cmono\tinycomma\mono\tags\query}\} \,\cup\, \{\forall \mathit{X}^\tau.\; \ti_\tau(\mathit{X}^\tau) \eq \mathit{X}^\tau \mid \tau \in
\GInst(V_\PHI)\}$
\end{itemize}
where $\GInst(V_\PHI)$ denotes the set of ground instances of types in $V_\PHI$
and $\phi^\#$ denotes the modification of $\phi$ obtained by adding,
for each $\tau \in T_\PHI - S_\PHI$, a tag $\ti_\tau$ around every term of type $\tau$.
Recall that $U_\PHI \subseteq V_\phi \subseteq \{\sigma \in \Type_\Sigma \mid \MONO{\sigma} \PHI\}$.
We have the following:
\begin{enumerate}
\item[(1)] $T_\PHI - S_\PHI \subseteq \GInst(V_\PHI)$;
\item[(2)] $\GInst(V_\PHI) \subseteq \{\sigma \in \GType_\Sigma \mid \MONO{\sigma} \PHI\}$;
\item[(3)] if $\tau \in V_\PHI$, the symbol $\ti_\tau$ does not occur in $\encode{\phantom{i}}{\cmono\tinycomma\mono\tags\query}$.
\end{enumerate}
Property (1) follows from Lemma \ref{lem-method}(5),
(2) from Lemma \ref{lem-calc-transport}(3),
and (3) from (2) and the definition of $\encode{\phantom{i}}{\mono\tags\query}$.
Now we are ready to check the conditions of Lemma~\ref{lem-transport}.

\betweenitems\noindent
\textsc{Sound}:\enskip
  Given a model $\MM$ of $\encode{\PHI}{\cmono\tinycomma\mono\tags\query}$, we modify it into a structure $\NN$ by
  reinterpreting, for all $\tau \in \GInst(V_\PHI)$, the symbols $\ti_\tau$ as identity. Thanks to (3), $\NN$ is still a model
  of $\encode{\PHI}{\cmono\tinycomma\mono\tags\query}$ and, thanks to (A) and (1), it is also a model of $\encode{\PHI}{\tags\query\tinycomma\cmono}$.

\betweenitems\noindent
\textsc{Complete}:\enskip
 Any model of $\encode{\PHI}{\tags\query\tinycomma\cmono}$ is also a model of
$\encode{\PHI}{\cmono\tinycomma\mono\tags\query}$, since, thanks to (1),
in the presence of $\{\forall \mathit{X}^\tau.\; \ti_\tau(\mathit{X}^\tau) \eq \mathit{X}^\tau \mid \tau \in
\GInst(V_\PHI)\}$ each $\phi^\#$ is equivalent to $\phi$.

\betweenitems\indent
The case of \tags\qquery{} is similar, with the following modifications:
\begin{itemize}[label=(A)]
\item[(A)] $\encode{\PHI}{\tags\query\tinycomma\cmono} =
  \{\phi^\#  \mid \phi \in \encode{\PHI}{\cmono\tinycomma\mono\tags\query}\} \,\cup\, \{\forall \mathit{X}^\tau.\; \ti_\tau(\mathit{X}^\tau) \eq \mathit{X}^\tau \mid \tau \in
\GInst(V_\PHI)\} \,\cup\, \Ax$
\end{itemize}
where $\Ax$ consists of all ground instances $\phi\,\rho$ of the additional
axioms $\forall \ov{\alpha}.\; \phi$ from Definition~\ref{def-qquery-poy} where the occurring $\ti_\tau$ is such that $\tau \in T_\PHI - S_\PHI$.
What we retain about $\Ax$ is that $\Ax$ is satisfied by any structure that interprets as identity each $\ti_\tau$ with $\tau \in T_\PHI - S_\PHI$,
and hence, by (1), we have
\begin{enumerate}
\item[(4)] $\Ax$ is satisfied by any structure that interprets as identity each $\ti_\tau$ with $\tau \in \GInst(V_\PHI)$.
\end{enumerate}
Moreover, $\phi^\#$ is modified to add tags to $\phi$ in fewer places---not to
arbitrary terms, but only to naked universal variables.
Then the proof for \tags\query{} works here too, additionally invoking (4) in the proof of soundness.
\QED
\end{proof}

\subsection{Monotonicity-Based Type Guards}
\label{ssec:monotonicity-based-type-guards-polymorphic}

Analogously to \tags\query{}, the \guards\query{} encoding is best
understood as a hybrid between traditional guards (\guards{}) and monomorphic
lightweight guards (\mono\guards\query):\ as in \guards{}, guards take the form
of a predicate $\is\LAN\t\RAN(t)$; as in \mono\guards\query{}, guards are
omitted for types that are inferred monotonic.

Once again, the main novelty concerns the typing axioms. The
\mono\guards\query{} encoding omits all typing axioms for monotonic types. In the
polymorphic case, the monotonic type $\t$ might be an instance of a more general,
potentially nonmonotonic type for which guards are generated. Our solution
is to add the typing axiom $\is\LAN\t\RAN(\mathit{X})$, which allows the
prover to discharge any guard for the monotonic type $\t$.

\begin{defi}[Lightweight Guards \guards\query]\afterDot
The encoding \guards\query{} translates polymorphic problems over
$\SIGMA = (\Type, \FF, \PP)$ to polymorphic problems over $(\Type, \FF , \PP \mathrel{\uplus} \{\is : \forall \alpha.\; \alpha \ra \bool\})$.
Its term and formula translations $\encode{\phantom{i}}{\guards\query\tinycomma\args
\tinycomma\erased}$ are defined as follows:
\newcommand\GSQXCHEAT{\kern.75ex} 
\[\begin{aligned}[t]
& \guardsqx{\forall \mathit{X} \mathbin: \t.\;\, \phi} & \,=\,
\forall \mathit{X} \mathbin: \t.\;
\begin{cases}
\guardsqx{\phi} & \!\!\text{if $\MONO\t \PHI$} \\
\is\LAN\t\RAN(\mathit{X}^\sigma) \rightarrow \guardsqx{\phi} & \!\!\text{otherwise}
\end{cases} \\
& \guardsqx{\exists \mathit{X} \mathbin: \t.\;\, \phi} & \,=\,
\exists \mathit{X} \mathbin: \t.\;
\begin{cases}
\guardsqx{\phi} & \GSQXCHEAT\!\!\text{if $\MONO\t \PHI$} \\
\is\LAN\t\RAN(\mathit{X}^\sigma) \mathrel\land \guardsqx{\phi} & \GSQXCHEAT\!\!\text{otherwise}
\end{cases}
\end{aligned}\]
The encoding adds the following typing axioms:
\[\!\begin{aligned}[t]
& \textstyle \forall\vthinspace\TVars(\sigma).\; \forall \mathit{X} \mathbin: \bar\sigma.\;\, \is\LAN\sigma\RAN(\mathit{X}^\sigma)
  && \text{for~} \sigma \in V_\PHI \\
& \textstyle \forall\bar\a.\; \forall \mathit{\bar X} \mathbin: \bar\t.\;\, \is\LAN\t\RAN(f\LAN\bar\a\RAN(\mathit{\bar X}^{\ov{\sigma}}))
  &\enskip& \text{for~}\ff : \forall\bar\a.\; \bar\t \to \t \in \mathcal{F}
  \text{~such that~} \NONMONO{\sigma} \PHI
%
\\
& \forall\vthinspace\TVars(\t).\; \exists \mathit{X} \mathbin: \t.\; \is\LAN\t\RAN(\mathit{X}^\sigma)
  && \!\begin{aligned}[t] 
        & \text{for~} \t \in \TypesOf_\PHI \text{~such that } \NONMONO{\t} \PHI \text{~ and $\t$ is not } \\[-\jot]
        & \text{an instance of the result type of some~} f \in \mathcal F 
        \end{aligned}
\end{aligned}\]
The \emph{polymorphic lightweight type guards} encoding \guards\query{}
is the composition $\encode{\phantom{i}}{\guards\query\tinycomma\args\tinycomma\erased}$.
It translates a polymorphic problem
over $\SIGMA$
into an untyped problem over $\SIGMA' = (\mathcal{F}',\Pp \mathrel{\uplus} \{\is^2\})$, where $\mathcal{F}', \Pp$ are as for
\args.
\end{defi}


The featherweight cousin is a straightforward generalisation of \guards\query{} %
along the lines of the generalisation of \mono\guards\query{} into \mono\guards\qquery.

\begin{defi}[Featherweight Guards \guards\qquery]\afterDot
The \emph{polymorphic featherweight type guards} encoding \guards\qquery{} is
identical to the lightweight encoding \guards\query{} except that the
condition ``if~$\MONO\t \PHI$'' 
in the $\forall$ case is
weakened to ``if~$\MONO\t \PHI$ or $\mathit{X}^\sigma \notin \NV(\phi)$''.
\end{defi}

\begin{exa}%
\label{ex:algebraic-lists-guards-qquery}%
The \guards\qquery{} encoding of Example~\ref{ex:algebraic-lists} follows:
\begin{quotex}
$\forall A, \mathit{Xs}.\;\, \is(A, \const{hd}(A, \mathit{Xs}))$ \\
$\forall A, \mathit{Xs}.\;\, \is(\const{list}(A), \mathit{Xs})$ \\[\betweenaxs]
$\forall A, \mathit{X}\!, \mathit{Xs}.\;\,
  \const{nil}(A) \not\eq \const{cons}(A, \mathit{X}\!, \mathit{Xs})$ \\
$\forall A, \mathit{Xs}.\;\,
  \mathit{Xs} \eq \const{nil}(A) \mathrel{\lor}
  (\exists \mathit{Y}\!, \mathit{Ys}.\;\,
  \is(A, \mathit{Y}) \mathrel\land \mathit{Xs} \eq \const{cons}(A, \mathit{Y}\!, \mathit{Ys})
  )$ \\
$\forall A, \mathit{X}\!, \mathit{Xs}.\;\,
  \is(A, \mathit{X}) \rightarrow \const{hd}(A, \const{cons}(A, \mathit{X}\!, \mathit{Xs})) \eq \mathit{X} \mathrel\land \const{tl}(A, \const{cons}(A, \mathit{X}\!, \mathit{Xs})) \eq \mathit{Xs}$ \\
\def\nocomma{\mskip-1mu} 
$\exists \mathit{X}\!,\nocomma \mathit{Y}\!,\nocomma \mathit{Xs},\nocomma \mathit{Ys}.\;\,
\is(\const{\www},\nocomma \mathit{X}) \mathrel\land \is(\const{\www},\nocomma \mathit{Y}) \mathrel\land
\const{cons}(\const{\www},\nocomma \mathit{X}\!,\nocomma \mathit{Xs}) \eq \const{cons}(\const{\www},\nocomma \mathit{Y}\!,\nocomma \mathit{Ys})
\mathrel\land (\mathit{X} \mathbin{\not\eq} \mathit{Y} \mathrel{\lor} \mathit{Xs} \mathbin{\not\eq} \mathit{Ys})$
\end{quotex}
\end{exa}

\begin{thm}[Correctness of \guards\query, \guards\qquery]\afterDot
\label{thm:correctness-of-guards-query}%
The polymorphic type guards encodings\/ \hbox{\rm\guards\query} and\/ \hbox{\rm\guards\qquery}
are correct.
\end{thm}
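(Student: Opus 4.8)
The plan is to mirror, almost line for line, the proof of Theorem~\ref{thm:correctness-of-tags-query}, replacing the tag function $\ti$ by the guard predicate $\is$ throughout. The entire argument is driven by Lemma~\ref{lem-transport}, so it suffices to establish, for $\xx \in \{\guards\query,\guards\qquery\}$, the two conditions \textsc{Sound} and \textsc{Complete} of that lemma, namely that $\encode{\PHI}{\xx\tinycomma\cmono}$ and $\encode{\PHI}{\cmono\tinycomma\xmono}$ have models with the same interpretation of type constructors in each direction. As with the tags case, the preliminaries I would carry over are: the facts (1)~$T_\PHI - S_\PHI \subseteq \GInst(V_\PHI)$ (from Lemma~\ref{lem-method}(5)), (2)~$\GInst(V_\PHI) \subseteq \{\sigma \in \GType_\Sigma \mid \MONO{\sigma}\PHI\}$ (from Lemma~\ref{lem-calc-transport}(3)), and (3)~if $\tau \in V_\PHI$ then the symbol $\is_\tau$ does not occur in $\encode{\phantom{i}}{\cmono\tinycomma\mono\guards\query}$ (from (2) and the definition of $\mono\guards\query$). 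The first move, then, is to establish the structural identity analogous to (A) in that proof:
\begin{itemize}
\item[(A)] $\encode{\PHI}{\guards\query\tinycomma\cmono} = \{\phi^\# \mid \phi \in \encode{\PHI}{\cmono\tinycomma\mono\guards\query}\} \,\cup\, \{\forall \mathit{X}^\tau.\; \is_\tau(\mathit{X}^\tau) \mid \tau \in \GInst(V_\PHI)\}$,
\end{itemize}
where $\phi^\#$ is obtained from $\phi$ by \emph{dropping} (rather than, as in tags, adding redundant) the guards $\is_\tau$ for $\tau \in T_\PHI - S_\PHI$; equivalently, $\phi$ is $\phi^\#$ with those guards reinserted, and each such guard is semantically vacuous once the extra axiom $\forall \mathit{X}^\tau.\;\is_\tau(\mathit{X}^\tau)$ is present. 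Establishing (A) is a routine unfolding of the two encoding definitions and the observation that $\mono\guards\query$ applied to the monomorphised problem guards exactly the variables of types in $S_\PHI$, whereas $\guards\query$ followed by complete monomorphisation guards those of types in $T_\PHI$.

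With (A) in place, the two conditions go through exactly as for tags. For \textsc{Sound}: given a model $\MM$ of $\encode{\PHI}{\cmono\tinycomma\mono\guards\query}$, reinterpret every $\is_\tau$ with $\tau \in \GInst(V_\PHI)$ as the everywhere-true predicate; by (3) this is still a model of $\encode{\PHI}{\cmono\tinycomma\mono\guards\query}$, and by (A) together with (1) it now also satisfies the extra axioms and the guard-free formulas $\phi^\#$, so it is a model of $\encode{\PHI}{\guards\query\tinycomma\cmono}$ with unchanged type-constructor interpretation. For \textsc{Complete}: any model of $\encode{\PHI}{\guards\query\tinycomma\cmono}$ satisfies each $\phi^\#$ and each axiom $\forall \mathit{X}^\tau.\;\is_\tau(\mathit{X}^\tau)$; in the presence of the latter, each $\phi$ is equivalent to $\phi^\#$ (by (1), every dropped guard was over a type in $\GInst(V_\PHI)$ and is thus discharged), so the same structure is a model of $\encode{\PHI}{\cmono\tinycomma\mono\guards\query}$. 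Invoking Lemma~\ref{lem-transport} then yields correctness of \guards\query.

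For \guards\qquery\ I would follow the tags template once more: (A) acquires a third summand $\Ax$, the set of all ground instances $\phi\,\rho$ of the function-symbol typing axioms $\forall\bar\a.\;\forall\mathit{\bar X}.\;\is\LAN\t\RAN(f\LAN\bar\a\RAN(\mathit{\bar X}))$ whose $\is_\tau$ has $\tau \in T_\PHI - S_\PHI$, and also $\phi^\#$ now drops guards only around naked universal variables rather than around every variable of a type in $T_\PHI - S_\PHI$. The key retained property is (4): $\Ax$ is satisfied by any structure interpreting every $\is_\tau$ with $\tau \in \GInst(V_\PHI)$ as the everywhere-true predicate (immediate from (1) and the shape of those axioms — a universally quantified atom $\is_\tau(\cdots)$). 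Then the \textsc{Sound} argument for \guards\query\ applies verbatim, additionally invoking (4) to see that the reinterpreted model still satisfies $\Ax$, and \textsc{Complete} is unchanged. I expect the main obstacle to be purely bookkeeping rather than conceptual: carefully verifying property (A) (and its $\guards\qquery$ variant with $\Ax$) by chasing the interaction between the monotonicity calculus on $\PHI$ versus on $\cmonox{\PHI}$ — in particular that the guards inserted by $\guards\query$/$\guards\qquery$ followed by $\cmono$ coincide with those from $\cmono$ followed by $\mono\guards\query$/$\mono\guards\qquery$ \emph{modulo exactly} the ground instances of $V_\PHI$, which is precisely what Lemmas~\ref{lem-method} and~\ref{lem-calc-transport} were designed to deliver. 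No genuinely new model construction is needed; the work done in Theorems~\ref{thm:correctness-of-mono-guards-query} and~\ref{thm:correctness-of-tags-query} is reused wholesale.
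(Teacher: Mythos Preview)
Your overall strategy is exactly the paper's: invoke Lemma~\ref{lem-transport}, set up a structural identity (A) relating $\encode{\PHI}{\xx\tinycomma\cmono}$ and $\encode{\PHI}{\cmono\tinycomma\xmono}$, and discharge \textsc{Sound}/\textsc{Complete} by reinterpreting the guard predicates $\is_\tau$ for $\tau\in\GInst(V_\PHI)$ as everywhere-true. However, two concrete errors would make your (A) false as written. First, the direction of $\phi^\#$ is inverted. You yourself observe that $\guards\query$ followed by $\cmono$ guards the types in $T_\PHI$, while $\cmono$ followed by $\mono\guards\query$ guards only those in $S_\PHI\subseteq T_\PHI$; hence the formula on the $\guards\query\tinycomma\cmono$ side has \emph{more} guards, so $\phi^\#$ must be obtained from $\phi$ by \emph{adding} guards $\is_\tau$ (on negative positions) for $\tau\in T_\PHI - S_\PHI$, exactly as for tags---not by dropping them. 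Your \textsc{Sound}/\textsc{Complete} arguments happen to survive the fix, since the equivalence of $\phi$ and $\phi^\#$ under the everywhere-true interpretation is symmetric, but the stated (A) is simply not true.

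Second, unlike $\tags\query$, the encoding $\guards\query$ already carries function-typing axioms $\forall\bar\a.\;\forall\bar X.\;\is\LAN\t\RAN(f\LAN\bar\a\RAN(\bar X))$ and inhabitation axioms for every nonmonotonic result type. After monomorphisation these generate axioms for all ground instances in $T_\PHI$, whereas $\cmono\tinycomma\mono\guards\query$ generates them only for $S_\PHI$. The surplus---axioms whose $\is_\tau$ has $\tau\in T_\PHI - S_\PHI$---must appear as a third summand $\Ax$ already in the $\guards\query$ case, not only for $\guards\qquery$. (This is why the paper says both $\guards\query$ and $\guards\qquery$ follow the pattern of $\tags\qquery$, not $\tags\query$.) Your property~(4) handles $\Ax$ correctly once you include it; the gap is only that you omitted it from the lightweight case.
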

\begin{proof}
The argument is very similar to that for tags (Theorem
\ref{thm:correctness-of-tags-query}), with the following modifications:
\begin{itemize}[label=(A)]
\item[(A)] $\encode{\PHI}{\tags\query\tinycomma\cmono} =
  \{\phi^\#  \mid \phi \in \encode{\PHI}{\cmono\tinycomma\mono\tags\query}\} \,\cup\, \{\forall \mathit{X}^\tau.\; \is_\tau(\mathit{X}^\tau) \mid \tau \in
\GInst(V_\PHI)\} \,\cup\, \Ax$
\end{itemize}
where $\Ax$ consists of a set of formulae which is satisfied by each structure that interprets as everywhere-true
each $\is_\tau$ with $\tau \in \GInst(V_\PHI)$ and
$\phi^\#$ denotes the modification of $\phi$ obtained by adding,
for each $\tau \in T_\PHI - S_\PHI$, a guard $\is_\tau$ on several negative positions in $\phi$.

Apart from this, the proofs for \guards\query{} and \guards\qquery{} are identical to that for
\tags\qquery{}, except that instead of interpreting tags as identity we interpret guards as everywhere-true.
\QED
\end{proof}


\section{Implementation}
\label{sec:implementation}

Our research on polymorphic type encodings was driven by Sledgehammer, a
component of Isabelle\slash HOL that harnesses first-order automatic theorem provers
to discharge interactive proof obligations. The tool heuristically selects
hundreds of background facts, translates them to untyped or monomorphic
first-order logic, invokes the external provers in parallel, and reconstructs
machine-generated proofs in Isabelle.

All the encodings presented in this \Paper\ except for the complete monomorphisation
translation from Section~\ref{subsec:complete-mono} (which plays only an auxiliary theoretical role in our development)
enjoy the desirable property that if the source signature and problem are finite,
then the target signature and problem are also finite.
All of these encodings, including the traditional ones, are implemented in Sledgehammer and can be used to target
external first-order provers. The rest of this section considers
implementation issues in more detail.

\subsection{Heuristic Monomorphisation Algorithm}
\label{ssec:heuristic-monomorphisation-algorithm}

The monomorphisation algorithm implemented in Sledgehammer translates a
polymorphic problem into a monomorphic problem by heuristically
instantiating type variables. It involves three stages:
\begin{enumerate}[label=\arabic*.]
\item[1.] Separate the monomorphic and the polymorphic formulae, and collect all
symbols occurring in the monomorphic formulae (the \LQ mono-symbols\RQ).

\betweenitems

\item[2.] For each polymorphic axiom, stepwise refine a set of substitutions,
starting from the singleton set containing only the empty substitution, by
matching known mono-symbols against their polymorphic counterparts in the axiom.
So long as new mono-symbols emerge, collect them and repeat this stage.

\betweenitems

\item[3.] Apply the substitutions to the corresponding polymorphic
formulae. Only keep fully monomorphic formulae.
\end{enumerate}

\noindent To ensure termination, the iterations performed in stage~2
are limited to a configurable number $K$.
%
To curb the exponential growth, the algorithm also enforce an upper bound
$\mathrm{\Delta}$ on the number of new formulae. Sledgehammer operates with $K =
3$ and $\mathrm{\Delta} = 200$ by default, so that a problem with 500
formulae comprises at most 700~formulae after monomorphisation.
Experiments found these values suitable.
Given formulae about $\www{}$ and
$\mathit{list}(\a)$, the third iteration already generates
$\mathit{list}(\mathit{list}(\mathit{list}(\www)))$ instances;
adding yet another layer of $\mathit{list}$ is unlikely to help.
Increasing $\mathrm{\Delta}$ sometimes helps solve more problems, but its
potential for clutter is real.

\subsection{Extension to Higher-Order Logic}
\label{ssec:extension-to-higher-order-logic}

Isabelle/HOL's logic, polymorphic higher-order logic with axiomatic
type classes \cite{wenzel-1997}, is not the same as the polymorphic first-order
logic considered in this \Paper. Sledgehammer's translation is a three-step process, where
the first and last step may be omitted, depending on whether monomorphisation is
desired and whether the target prover supports monomorphic types:
\begin{enumerate}
\item[1.] \emph{Optionally monomorphise the problem.}
\betweenitems
\item[2.] \emph{Eliminate the higher-order constructs} \cite[\S2.1]{meng-paulson-2008-trans}.
$\lambda$-abstractions are rewritten to \const{SK} combinators
or to supercombinators ($\lambda$-lifting).
Functions are passed varying numbers of arguments via an
apply operator
$\const{hAPP} : \forall\a, \b.\;\mathit{fun}(\a,\b) \times \a
\to \b$ (where $\mathit{fun}$ is uninterpreted). Boolean 
terms are converted to formulae using a unary predicate
$\const{hBOOL} : \mathit{bool} \to \bool$ (where $\mathit{bool}$ is uninterpreted).
%
\betweenitems
\item[3.] \emph{Encode the type information.}
Polymorphic types are encoded using the techniques
described in this \Paper. Type classes are essentially sets of types;
they are encoded as polymorphic predicates $\forall\a.\>\bool$
(where $\a$ is a phantom type variable, Definition~\ref{def:phantom-type-argument}).
For example, a predicate $\const{linorder} : \forall\a.\>\bool$ could be used to
restrict the axioms specifying that $\const{less\_eq} : \forall\a.\;
\a\times\a\to\bool$ is a linear order to those types that satisfy the
\const{linorder} predicate (cf.\ Example~\ref{ex:linorder}).
The type class hierarchy is expressible as Horn
clauses \cite[\S2.3]{meng-paulson-2008-trans}.
\end{enumerate}\smallskip

\noindent The symbol \hAPP{} would hugely burden problems if it were introduced
systematically for all arguments to functions. To reduce clutter, Sledgehammer
computes the minimum arity~$n$ needed for each symbol and passes the first
$n$~arguments directly, falling back on \hAPP{} for additional arguments. In
general, more arguments can be passed directly if monomorphisation is performed
before \hAPP{} is introduced, because each monomorphic instance of a polymorphic
symbol is considered individually. Similar observations can be made for
\hBOOL{}.

\subsection{Infinite Types and Constructors}
\label{ssec:infinite-types-and-constructors}

The monotonicity calculus $\rhd$ is parameterised by a set $\Inf(\PHI)$ of
infinite types. One could employ an approach similar to that implemented in
Infinox \cite{claessen-lilliestrom-2011} to automatically infer finite
unsatisfiability of types. This tool relies on various proof principles to show
that a set of untyped first-order formulae only has models with infinite
domains. For example, it can infer that $\mathit{list\IUS}\www$ is infinite in
Example~\ref{ex:algebraic-lists-monomorphised} because
$\const{cons}_{\vvthinspace\www}$ is injective in its second argument but not
surjective.
However, in a proof assistant such as Isabelle, it is simpler to exploit
metainformation available through introspection. Isabelle's datatypes are
registered with their constructors; if some of them are recursive, or take an
argument of an infinite type, the datatype must be infinite and hence monotonic.

More specifically, the monotonicity inference is run on the entire problem and
maintains two finite sets of polymorphic types:\ the surely infinite types $J$
and the possibly nonmonotonic types $N$. Every type of a naked
variable in the problem is tested for infinity. If the test succeeds, the type
is inserted into $J$; otherwise, it is inserted into $N$. Simplifications are
performed:\ there no need to insert $\t$ to $J$ or $N$ if it is an instance
of a type already in the set; when inserting $\t$ to a set, it is safe to remove
any type in the set that is an instance of~$\t$. The monotonicity check then
becomes
\[\MONO{\t} \PHI \;\Longleftrightarrow\;
 (\exists\u \in J.\; \exists\rho.\; \t = \u\vthinspace\rho) \,\mathrel\lor\,
 (\forall\u \in N.\; \nexists\rho.\; \t\vthinspace\rho = \u\vthinspace\rho)\]
%

\subsection{Proof Reconstruction}
\label{ssec:proof-reconstruction}

To guard against bugs in the external provers, Sledgehammer reconstructs
machine-generated proofs in Isabelle. This is usually accomplished by the
\textit{metis} proof method \cite{paulson-susanto-2007},
supplying it with the short list of facts referenced in the proof found by the
prover. The proof method is based on the Metis prover \cite{hurd-2003}, a
complete resolution prover for untyped first-order logic.
The \textit{metis} call is all that remains from the Sledgehammer invocation in
the Isabelle theory, which can then be replayed without external provers. Given
only a handful of facts, \textit{metis} usually succeeds within milliseconds.

Prior to our work, a large share of the reconstruction failures were caused by
type-unsound proofs found by the external provers, due to the use of the unsound
encoding~\args{} \cite[\S4.1]{boehme-nipkow-2010}. We now replaced the internals
of Sledgehammer and \textit{metis} so that they use a translation module
supporting all the type encodings described in this \Paper.

Nonetheless, despite the typing information, individual inferences in Metis can be
ill-typed when types are reintroduced, causing the \textit{metis} proof method
to fail. There are two main failure scenarios.

First, the prover may in principle instantiate variables with ``ill-typed''
terms at any point in the proof. Fortunately, this hardly ever arises in
practice, because like other resolution provers Metis heavily restricts
paramodulation from and into variables \cite{
bachmair-et-al-1995}.

An issue that is more likely to plague users concerns
the infinite types $\Inf(\PHI)$.
In the theoretical part of the paper, we required infinity to be a consequence
of the problem $\PHI$. The implementation is less rigorous; it will happily
treat types that are known to be infinite in the Isabelle background theories
even if $\PHI$ itself does not imply infinity of the types.
For example, assuming \textit{nat} is known to be infinite, the implementation of
the monotonicity-based encodings
will not introduce any protectors around the naked variables $M$ and $N$ when
translating the problem
\[\const{on} \not\eq \const{off}\typ{state} \mathrel\land (\forall \mathit{X}\!, \mathit{Y} \mathbin: \mathit{nat}.\; \mathit{X} \eq \mathit{Y})\]
(where the second conjunct is presumably the negation of a conjecture
stating that there exist two distinct natural numbers).
That problem is satisfiable on its own but unsatisfiable with respect to the
background theory.
Untyped provers will instantiate $\mathit{X}$ and $\mathit{Y}$ with \const{on}
and \const{off} to derive a contradiction; and no ``type-sound'' proof is
possible unless we also provide characteristic theorems for \textit{nat}. In
general, we would need to provide infinity axioms for all types in $\Inf(\PHI)$
to make the encoding sound irrespective of the background theory; for example:
\begin{align*}
& \forall N \mathbin: \textit{nat}.\;\, \const{zero} \not\eq \const{suc}(N) &
& \forall M, N \mathbin: \textit{nat}.\;\, \const{suc}(M) \eq \const{suc}(N) \rightarrow M \eq N
\end{align*}
Although this now makes a sound proof possible (by instantiating $\mathit{X}$ and $\mathit{Y}$
with $\const{zero}$ and $\const{suc}(\const{zero})$), it does not prevent the
prover from discovering the spurious proof with \const{on} and \const{off},
which cannot be reconstructed by \textit{metis}.

\leftOut{
A similar issue affects the constructors $\Ctor(\PHI)$. Let
$\const{c}, \const{d} : \forall\a.\; k(\a)$
be among the constructors for $\mathit k$. The encodings based on \args$^\ctor$
will translate the satisfiable problem
\[\const{c}\LAN\mathit a\RAN \not\eq \const{d}\LAN\mathit a\RAN \mathrel\land \const{c}\LAN\mathit b\RAN \eq \const{d}\LAN\mathit b\RAN\]
into an unsatisfiable one:\ $\const{c} \not\eq \const{d} \mathrel\land \const{c} \eq \const{d}$.
In general, we would need to provide distinctness and injectivity axioms
for all constructors in $\Ctor(\PHI)$ to make the encoding
sound irrespective of the background theory; for example:
$\forall\a.\; \const{c}\LAN\a\RAN \not\eq \const{d}\LAN\a\RAN$.
}

\leftOut{
We stress that the issue above does not indicate unsoundness in
our encodings. Rather, we use metainformation from Isabelle 
to guide the translation;
the translation is then sound assuming that the metainformation
holds. If we do not encode that metainformation in the typed problem,
it might have a model where the metainformation
is \relax{false}, while the translation might exploit the
metainformation and be unsatisfiable, as in the
examples above, causing the proof reconstruction to fail.
}

Although the above scenarios rarely occur in practice, it would be more
satisfactory if proof reconstruction were always possible. A solution would be to
connect our formalised soundness proofs with a verified checker for untyped
first-order proofs. This remains for future work.

\section{Evaluation}
\label{sec:evaluation}

To evaluate the type encodings described in this \Paper, we put together
two sets of 1000 
polymorphic first-order problems originating from 10
existing Isabelle 
theories, translated with Sledgehammer's help (100 problems per theory).%
\footnote{The TPTP benchmark suite \cite{sutcliffe-tptp}, which is customarily used for
evaluating theorem provers, has just begun collecting polymorphic
(TFF1) problems.} 
Nine of the theories are the same as in a previous evaluation
\cite{blanchette-et-al-2013-smt}; the tenth one is an optimality proof for
Huffman's algorithm. Our test data are publicly available \cite{our-data}.

The problems in the first benchmark set include about 50
heuristically selected facts (before monomorphisation);
that number is increased to 500 for the second
set, to reveal how well the encodings scale with the problem size.

We evaluated each type encoding with five modern automatic theorem provers:\
the resolution provers E~1.8 \cite{schulz-2004},
SPASS 3.8ds \cite{blanchette-et-al-2012-spass},
and Vampire 3.0 (revision 1803) \cite{riazanov-voronkov-2002}
and the SMT solvers Alt-Ergo 0.95.2 \cite{bobot-et-al-2008}
and Z3 4.3.2 (revision 944dfee008) \cite{de-moura-bjoerner-2008}.
To make the evaluation more informative, we also
tested the provers' native support for types where it is available;
it is referred to as \kern.05ex\mono\native\kern.05ex{} (monomorphic)
and \native{} (polymorphic). Only Alt-Ergo supports polymorphic types natively.

Each prover was invoked with
the set of options we had previously determined worked best for Sledgehammer.%
\footnote{The setup for E was suggested by Stephan Schulz and includes the
little known \LQ symbol offset\RQ{} weight function. We ran Alt-Ergo with the
default setup, SPASS in Isabelle mode, Vampire in CASC mode, and Z3
through the \texttt{z3\_tptp} wrapper but otherwise with the default setup.}
The provers were granted 15 seconds of CPU time per problem
on one core of a 3.06~GHz Dual-Core Intel Xeon processor.
Most proofs were found within a few seconds; a higher time limit would have had
a restricted impact on the success rate \cite[\S4]{boehme-nipkow-2010}.
To avoid giving the unsound encodings (\erased{} and \args{}) 
an
unfair advantage, for these proof search was followed by a certification phase
that attempted to re-find the proof using a combination of sound encodings,
based on its referenced facts. This phase slightly penalises the unsound
encodings by rejecting a few sound proofs, but such is the price of
unsoundness in practice.

\newbox\hackBoxA
\newcommand\Z{\kern.05em}
\newcommand\Oh{\phantom{0}}
\newcommand\N{%
\def\negdot{\setbox\hackBoxA=\hbox{\hphantom{.}}\kern-\wd\hackBoxA}%
\phantom{0}\negdot}

\newcommand\headersolved{%
 &
 & \multicolumn{1}{@{}c}{\erased}
 & \multicolumn{1}{@{}c}{\args}
 & \multicolumn{1}{@{}c}{\tags}
 & \multicolumn{1}{@{}c}{\tags\query}
 & \multicolumn{1}{@{}c}{\tags\qquery}
 & \multicolumn{1}{@{}c}{\tags\at}
 & \multicolumn{1}{@{}c}{\guards}
 & \multicolumn{1}{@{}c}{\guards\query}
 & \multicolumn{1}{@{}c}{\guards\qquery}
 & \multicolumn{1}{@{}c}{\guards\at}
 & \multicolumn{1}{@{}c}{\native}
 \\}

\newcommand\MU[1]{#1 & -- & --}

\begin{figure}[t!]
\begin{minipage}{\textwidth}%
\centerline{\begin{tabular}\tabularStuff
\headersolved
\midrule
{Alt-Ergo}\zooh && 268 
& 275 & 243 & 218 & 296 & 272 & 233 & 291 & 295 & 259 & \relax 301 \\[-.15ex]
    & \monom & 293 & --\, 
& 245 & 293 & 301 & --\, & 296 & 303 & \win 304 & --\, & 302 \\
{E}\zooh && 319 & 330 
& 322 & 311 & 325 & 269 & 268 & 325 & \relax 336 & 288 & --\, \\[-.15ex]
    & \monom & 338 & --\, 
& 337 & \win 353 & 343 & --\, & 335 & 351 & 352 & --\, & --\, \\
{SPASS}\zooh && 289 & 316 
& 290 & 275 & \relax 319 & 196 & 223 & 302 & 314 & 244 & --\, \\[-.15ex]
    & \monom & 322 & --\, 
& 322 & 328 & 332 & --\, & 317 & 330 & 337 & --\, & \win 349 \\
{Vampire}\zooh && 328 & \relax 335 
& 291 & 304 & 320 & 243 & 281 & 320 & 328 & 301 & --\, \\[-.15ex]
    & \monom & 340 & --\,
& 331 & 355 & 351 & --\, & 329 & 356 & 354 & --\, & \win 370 \\
{Z3}\zooh && 295 & \relax 323 
& 299 & 245 & 317 & 311 & 289 & 301 & 317 & 295 & --\, \\[-.15ex]
    & \monom & 326 & --\,
& 307 & 320 & 341 & --\, & 342 & \win 344 & \win 344 & --\, & 343 \\
\end{tabular}}
\caption{\vthinspace Number of solved problems with 50~facts}
\label{fig:number-of-solved-problems-with-50-facts}
\end{minipage}

\betweenfigs

\begin{minipage}{\textwidth}%
\centerline{\begin{tabular}{\tabularStuff}
\headersolved
\midrule
{Alt-Ergo}\zooh && 268 & 218 & 183 & 223 & \relax 273 & 216 & 157 & 240 & \relax 273 & 190 & 260 \\[-.15ex]
    & \monom & 285 & --\, & 209 & 302 & 311 & --\, & 251 & 311 & 313 & --\, & \win 318 \\
{E}\zooh && 170 & \relax 355 & 261 & 271 & 353 & 193 & 205 & 345 & \relax 355 & 237 & --\, \\[-.15ex]
    & \monom & 376 & --\, & 331 & 388 & \win 393 & --\, & 316 & 388 & 390 & --\, & --\, \\
{SPASS}\zooh && 136 & \relax 321 & 289 & 259 & 299 & 154 & 163 & 278 & 296 & 194 & --\, \\[-.15ex]
    & \monom & 333 & --\, & 287 & 323 & 333 & --\, & 229 & 327 & 334 & --\, & \win 343 \\
{Vampire}\zooh && 318 & 396 & 164 & 286 & 365 & 230 & 219 & 312 & 349 & 259 & --\, \\[-.15ex]
    & \monom & 406 & --\, & 231 & 384 & 406 & --\, & 314 & 406 & 405 & --\, & \win 440 \\
{Z3}\zooh && 247 & 358 & 253 & 241 & 362 & 273 & 213 & 305 & \relax 363 & 270 & --\, \\[-.15ex]
    & \monom & 363 & --\, & 260 & 358 & 369 & --\, & 349 & \win 370 & 369 & --\, & \win 370 \\
\end{tabular}}
\caption{\vthinspace Number of solved problems with 500~facts}
\label{fig:number-of-solved-problems-with-500-facts}
\end{minipage}
\end{figure}

\begin{figure}
\begin{minipage}{\textwidth}%
\centerline{\begin{tabular}{@{\kern.7em}l@{\kern.7em}r@{\kern1.5em}r@{\kern.7em}r@{\kern1em}r@{\kern.7em}r@{\kern.7em}r@{\kern.7em}r@{\kern1em}r@{\kern.7em}r@{\kern.7em}r@{\kern.7em}r@{\kern.7em}}
&
  & \multicolumn{1}{c@{}}{\erased}
  & \multicolumn{1}{c@{}}{\args}
  & \multicolumn{1}{c@{}}{\tags}
  & \multicolumn{1}{c@{}}{\tags\query}
  & \multicolumn{1}{c@{}}{\tags\qquery}
  & \multicolumn{1}{c@{}}{\tags\at}
  & \multicolumn{1}{c@{}}{\guards}
  & \multicolumn{1}{c@{}}{\guards\query}
  & \multicolumn{1}{c@{}}{\guards\qquery}
  & \multicolumn{1}{c@{}}{\guards\at}
  \\
\midrule
{Clauses}\zooh &&
  89 & 99 & 100 & 108 & 140 & 167 & 166 & 139 & 139 & 166 \\[-.15ex]
& \monom &
  125 & --\, & 127 & 127 & 141 & --\, & 242 & 141 & 141 & --\, \\
{{Literals per clause~~}\zooh} &&
  2.3 & 2.4 & 2.4 & 2.3 & 2.2 & 2.5 & 4.3 & 3.2 & 2.6 & 3.8 \\[-.15ex]
& \monom &
  2.3 & --\, & 2.3 & 2.3 & 2.2 & --\, & 4.4 & 2.5 & 2.4 & --\, \\
{{Symbols per atom}\zooh} &&
  6.3 & 8.0 & 18.3 & 16.0 & 10.3 & 9.9 & 5.7 & 8.0 & 8.6 & 5.7 \\[-.15ex]
& \monom &
  6.2 & --\, & 10.6 & 7.0 & 6.2 & --\, & 4.3 & 5.5 & 5.7 & --\, \\
{{Symbols}\zooh} &&
  1276 & 1870 & 4339 & 3924 & 3235 & 4070 & 4051 & 3609 & 3103 & 3610 \\[-.15ex]
& \monom &
  1757 & --\, & 3060 & 2040 & 1935 & --\, & 4548 & 1951 & 1904 & --\,
\end{tabular}}
\caption{\vthinspace Average size of clausified problems with 50~facts}%
\label{fig:average-size-of-clausified-problems-with-50-facts}
\end{minipage}
\end{figure}

Figures \ref{fig:number-of-solved-problems-with-50-facts} and
\ref{fig:number-of-solved-problems-with-500-facts} give,
for each combination of
prover and encoding,
the number of solved problems from each problem set.
Rows marked with \monom{} concern the monomorphic encodings. The encodings
\mono\args{},
\mono\tags\at, and \mono\guards\at{} are omitted;
the first two coincide with \mono\erased{},
whereas \mono\tags\at{} and \mono\guards\at{}
are identical to degenerate versions of
\mono\tags\qquery{} and \mono\guards\qquery{} that treat all types as
possibly nonmonotonic.
We observe the following:%
\begin{itemize}
\item
Among the encodings to untyped first-order logic, the monomorphic
monotonicity-based encodings (especially \mono\guards\qquery{} but also
\mono\tags\qquery, \mono\guards\query, and \mono\tags\query) performed best
overall. Their performance is close to that of the provers'
native support for simple types~(\kern.05ex\mono\native\kern.05ex{}).
Polymorphic encodings lag behind; this is likely due in part to
the synergy between the monomorphiser and the translation of higher-order
constructs (cf.\ \shortsect~\ref{ssec:extension-to-higher-order-logic}).

\betweenitems

\item
Among the polymorphic encodings, our novel cover-based and monotonicity-based encodings
(\tags\at{}, \tags\query{}, \tags\qquery{}, \guards\at{}, \guards\query{}, and \guards\qquery{}),
with the exception of \tags\at{}, constitute a substantial improvement over the traditional
sound schemes (\tags{} and \guards{}).

\betweenitems

\item As suggested in the literature, there is no clear winner between tags and
guards. We expected monomorphic guards to be especially effective with SPASS,
since they are internally mapped to soft sorts (an optimised representation of
unary predicates \cite{weidenbach-2001}), but this is not corroborated by the data.

\betweenitems

\item
Despite the proof reconstruction phase, the unsound encoding \args{}
achieved similar results to the sound polymorphic encodings.
In contrast, its monomorphic cousin
\mono\erased{} is generally no match for the sound monomorphic schemes.

\betweenitems

\item
Oddly, the polymorphic prover Alt-Ergo performs significantly better on monomorphised problems than
on the corresponding polymorphic ones. This raises serious doubts about the quality of the prover's
heuristics for instantiating type variables.
\end{itemize}%

\noindent For the first benchmark set, Figure
\ref{fig:average-size-of-clausified-problems-with-50-facts} presents the
average number of clauses, literals per clause, symbols per atom, and symbols
for clausified problems (using E's clausifier), to give an idea of each
encoding's overhead. The striking point is the lightness of the monomorphic
encodings, as witnessed by the number of symbols.
Because monomorphisation generates several copies of the same formulae, we could have
expected it to lead to larger problems, but this underestimates the cost of
encoding types as terms in the polymorphic encodings.%
\footnote{The increase visible in the \erased{} column, from 89 to 125
clauses and from 1276 to 1757 symbols, is due to the multiple copies arising
from monomorphisation. Even though these become identical after type erasure,
they are counted as separate in the statistics, and it is up to the automatic
provers to notice that they are the same.}
The strong correlation between the success rates in
Figure~\ref{fig:number-of-solved-problems-with-500-facts} and the
average number of symbols in
Figure~\ref{fig:average-size-of-clausified-problems-with-50-facts}
confirms the expectation that clutter (whether type arguments, tags, or guards)
slows down automatic provers.

%
%
%

Independently of these empirical results, the new type encodings
made an impact at the 2012 edition of CASC, the annual
automatic prover competition \cite{sutcliffe-2012-casc-j6}.
Isabelle%
's automatic proof tools, including Sledgehammer, compete
against the automatic provers LEO-II, Satallax, and TPS
in the higher-order division. Largely thanks to the new schemes (but also to
improvements in the underlying first-order provers), Isabelle moved from the
third place it had occupied since 2009 to the first place.

\section{Related Work}
\label{sec:related-work}


The earliest descriptions of type tags and type guards we are aware of are due
to Enderton \cite[\S4.3]{enderton-1972} and Stickel \cite[p.~99]{stickel-1986}.
Wick and McCune \cite[\S4]{wick-mccune-1989}
compare type arguments, tags, and guards in a monomorphic setting. Type
arguments are 
described by Meng and Paulson
\cite{meng-paulson-2008-trans}, who also consider full type erasure and
polymorphic type tags and present a translation of axiomatic type classes.
%
As part of the MPTP project,
Urban \cite{urban-2006} extended the untyped TPTP FOF syntax with dependent
types to accommodate Mizar and designed translations to plain FOF\@.

The intermediate verification language and tool Boogie 2 \cite{leino-ruemmer-2010}
supports a restricted form of higher-rank polymorphism (with polymorphic maps), and
Why3 \cite{bobot-et-al-2011}
provides rank-1 polymorphism. Both define translations to a monomorphic logic and
rely on proxies to handle interpreted types
\cite{couchot-lescuyer-2007,leino-ruemmer-2010,bobot-paskevich-2011}.
One of the Boogie translations \cite[\S3.1]{leino-ruemmer-2010}
uses SMT triggers to prevent ill-typed instantiations
in conjunction with type arguments; however, this approach is risky in the
absence of a semantics for triggers.
Bouillaguet et al.~\cite[\S4]{bouillaguet-et-al-2007} showed that full type
erasure is sound if all types can be assumed to have the same cardinality and
exploit this in the verification system Jahob.

An alternative to encoding polymorphic types
or monomorphising them away is to
support them natively in the prover.
This is ubiquitous in interactive theorem provers, but perhaps
the only production-quality automatic prover that supports polymorphism is Alt-Ergo
\cite{bobot-et-al-2008}.

Blanchette and Krauss \cite{blanchette-krauss-2011} studied monotonicity inferences
for higher-order logic without polymorphism. Claessen et al.\ \cite{claessen-et-al-2011}
were first to apply them to type erasure.

\section{Conclusion}
\label{sec:conclusion}

This \Paper{} introduced a family of translations from polymorphic into untyped
first-order logic, with a focus on efficiency. Our monotonicity-based
encodings soundly erase all
types that are inferred monotonic, as well as most occurrences of the
remaining types. The best translations outperform the traditional
encoding schemes.

We implemented the new translations in the Sledgehammer tool for Isabelle\slash
HOL,
 thereby addressing a recurring user complaint. Although Isabelle certifies
external proofs, unsound proofs are annoying and often conceal sound proofs.
The same translation module forms the core of Isabelle's TPTP exporter
tool, which makes entire theorem libraries available to first-order reasoners.
Our refinements to the monomorphic case have made their way into
the Monotonox translator \cite{claessen-et-al-2011}. Applications such as Boogie
\cite{leino-ruemmer-2010}, LEO-II \cite{benzmueller-et-al-2008},
and Why3 \cite{bobot-et-al-2011} also stand to gain from lighter encodings.



The TPTP family recently welcomed the addition of TFF1
\cite{blanchette-paskevich-2013}, an extension of the monomorphic TFF0 logic
with rank-1 polymorphism. Equipped with a concrete
syntax and translation tools, we
can turn any popular automatic theorem prover into an efficient polymorphic
prover. Translating the untyped proof back into a typed proof is usually
straightforward, but there are important corner
cases that call for more research.
It should also be possible to extend our
approach to interpreted arithmetic.

From both a conceptual and an implementation point of view, the
encodings are all instances of a
general framework, in which mostly orthogonal features can be combined in
various ways. Defining such a large number of encodings makes it possible to select the
most appropriate scheme for each automatic prover, based on empirical evidence. In fact,
using strategy scheduling or parallelism, it is advantageous to have each prover employ a
combination of encodings with complementary strengths.




\begin{acknowledgements}
Koen Claessen and Tobias Nipkow made this collaboration possible.
Lukas Bulwahn, Peter Lammich, Rustan Leino, Tobias
Nipkow, Nir Piterman, Alexander Steen, Mark Summerfield, Tjark Weber, and several anonymous reviewers suggested
dozens of textual improvements; in particular, the three reviewers of this extended,
journal version accomplished a remarkable work, going well beyond the call of duty.
Blanchette was partly supported by the Deutsche
Forschungs\-gemein\-schaft (DFG) projects \relax{Quis Custodiet} and
\relax{Hardening the Hammer} (grants NI~491\slash 11-2 and NI~491\slash 14-1).
Popescu was partly supported by the DFG project
\relax{Security Type Systems and Deduction} (grant NI~491\slash 13-2) as part
of the program \relax{Reliably Secure Software Systems} (RS\textsuperscript{3},
priority program 1496). The authors are listed alphabetically.
\end{acknowledgements}

\bibliography{bib}{}


    \insert\copyins{\hsize.57\textwidth
\vbox to 0pt{\vskip12 pt%
      \fontsize{6}{7 pt}\normalfont\upshape
      \everypar{}%
      \noindent\fontencoding{T1}%
  \textsf{This work is licensed under the Creative Commons
  Attribution-NoDerivs License. To view a copy of this license, visit
  \texttt{http://creativecommons.org/licenses/by-nd/2.0/} or send a
  letter to Creative Commons, 171 Second St, Suite 300, San Francisco,
    CA 94105, USA, or Eisenacher Strasse 2, 10777 Berlin, Germany}\vss}
      \par
      }%

\end{document}